\newtheorem{theorem}{Theorem}[section]
\newtheorem{definition}[theorem]{Definition}
\newtheorem{lemma}[theorem]{Lemma}
\newtheorem{example}[theorem]{Example}
\newtheorem{proposition}[theorem]{Proposition}
\newtheorem{remark}[theorem]{Remark}
\newcommand\HypI[1]{\textbf{HypI}(#1)}
\newcommand\MdaCospans{\textbf{MHypI}(\Sigma)}
\newcommand{\Ecospans}{{\catname{EHypI(\Sigma)}}}
\newcommand{\MdaEcospans}{{\catname{MEHypI(\Sigma)}}}
\newcommand{\WellTypedMdaEcospans}{{\catname{MEHypI(\Sigma)}}}
\newcommand{\hashtag}{{\#}}
\newcommand{\consistency}{{\smile}}
\newcommand{\missing}[1]{{\color{red}\bfseries [TODO]}}
\newcommand{\defTodo}[2]{%
  \expandafter\newcommand\csname #1\endcsname[1]{%
    \todo[linecolor=#2,backgroundcolor=#2!25,bordercolor=#2,inline,size=\tiny]{\textbf{#1}: ##1}}}
\newcommand{\defTODO}[2]{%
  \expandafter\newcommand\csname #1\endcsname[1]{%
    \todo[linecolor=#2,backgroundcolor=#2!25,bordercolor=#2,inline,size=\tiny,caption={\textbf{(#1 LONG TODO)}}]{##1}}}
\newcommand{\AlekseiColor}{RoyalBlue}
\newcommand{\catname}[1]{\mathbf{#1}}
\newcommand\id{\textsf{id}}
\newcommand\sym{\textsf{sym}}
\tikzstyle{termbox}=[draw=term, fill={term!10}, rounded corners, minimum size=20pt]
\tikzstyle{tallbox}=[draw=term, fill={term!10}, rounded corners, minimum width=20pt, minimum height=40pt]
\tikzstyle{termpic}=[x=1pt, y=1pt, inner sep=0pt, outer sep=0pt, thick]
\tikzstyle{box}=[shape=rectangle, text height=1.5ex, text depth=0.25ex, yshift=0.5mm, fill=white, draw=black, minimum height=12.5mm, yshift=-0.5mm, minimum width=7.5mm, font={\small}]
\tikzstyle{Z dot}=[inner sep=0mm, minimum size=2mm, shape=circle, draw=black, fill={rgb,255: red,160; green,255; blue,160}]
\tikzstyle{Z phase dot}=[minimum size=5mm, font={\footnotesize\boldmath}, shape=rectangle, rounded corners=2mm, inner sep=0.2mm, outer sep=-2mm, scale=0.8, tikzit shape=circle, draw=black, fill={rgb,255: red,160; green,255; blue,160}, tikzit draw=blue]
\tikzstyle{X dot}=[Z dot, shape=circle, draw=black, fill={rgb,255: red,220; green,0; blue,0}]
\tikzstyle{X phase dot}=[Z phase dot, tikzit shape=circle, tikzit draw=blue, fill={rgb,255: red,220; green,0; blue,0}, font={\footnotesize\color{white}\boldmath}]
\tikzstyle{hadamard}=[fill=yellow, draw=black, shape=rectangle, inner sep=0.6mm, minimum height=1.5mm, minimum width=1.5mm]
\tikzstyle{small hadamard}=[hadamard]
\tikzstyle{vertex}=[inner sep=0mm, minimum size=3pt, shape=circle, draw=black, fill=black]
\tikzstyle{vertex set}=[inner sep=0mm, minimum size=1mm, shape=circle, draw=black, fill=white, font={\footnotesize\boldmath}]
\tikzstyle{new style 0}=[draw=term, fill={term!10}, rounded corners, minimum width=25pt, minimum height=30pt]
\tikzstyle{new style 2}=[draw=term, fill={term!10}, rounded corners, minimum width=25pt, minimum height=30pt]
\tikzstyle{round box}=[fill=white, draw=black, shape=rectangle, rounded corners=5pt, font={\small}, minimum height=6mm, minimum width=6mm]
\tikzstyle{e-box}=[fill=white, draw=black, shape=rectangle, minimum width=20mm, minimum height=20mm]
\tikzstyle{empty diag string}=[fill=white, draw={rgb,255: red,165; green,165; blue,165}, shape=rectangle, minimum size=1.2 cm, dashed, thick]
\tikzstyle{new edge style 0}=[-, draw=black, line width=0.5pt]
\tikzstyle{blue 0}=[-, draw=blue, line width=0.8pt]
\tikzstyle{red 0}=[-, draw=red, line width=0.8pt]
\tikzstyle{black dash}=[-, color=black, dashed, dash pattern=on 1.5pt off 1.5pt, draw=black, line width=0.8pt]
\tikzstyle{red dash}=[-, color=red, dashed, dash pattern=on 1pt off 1.5pt, draw=red, line width=0.8pt, line cap=round]
\tikzstyle{blue dash}=[-, color=blue, dashed, dash pattern=on 1pt off 1.5pt, draw=blue, line width=0.8pt, line cap=round]
\tikzstyle{red term dash}=[-, dotted, draw={rgb,255: red,210; green,0; blue,0}, dash pattern=on 1pt off 2pt, line width=0.8pt, line cap=round]
\tikzstyle{brace edge}=[-, tikzit draw=blue, decorate, decoration={brace,amplitude=1mm,raise=-1mm}, line width=0.8pt]
\tikzstyle{gray}=[-, draw={rgb,255: red,191; green,191; blue,191}]
\tikzstyle{arrow}=[<-, draw={rgb,255: red,128; green,128; blue,128}]
\tikzstyle{double-arrow}=[draw={rgb,255: red,128; green,128; blue,128}, <->]
\tikzstyle{thick line}=[-, line width=0.8pt]
\tikzstyle{new edge style 1}=[draw=black, fill=none, ->, >=latex, line width=0.5pt]
\tikzstyle{new edge style 2}=[-, fill=gray]
\tikzstyle{grey identity}=[-, draw={rgb,255: red,191; green,191; blue,191}, color=gray, dashed, dash pattern=on 1.5pt off 1.5pt, draw=black, line width=0.8pt]
\tikzstyle{lambda_unit_string}=[-, draw={rgb,255: red,0; green,0; blue,0},fill={rgb,255: red,238; green,238; blue,255}, thick]
\tikzstyle{hedge}=[fill=white, draw=black, shape=rectangle, rounded corners=2mm, inner sep=0.2mm, outer sep=-2mm, scale=0.8, minimum height=8mm, minimum width=8mm, tikzit category=hypergraph]
\tikzstyle{hedge interface}=[fill=white, draw=black, shape=rectangle, rounded corners=1.5mm, inner sep=0.2mm, outer sep=-2mm, scale=0.8, minimum height=5mm, minimum width=5mm, tikzit category=hypergraph]
\tikzstyle{hedge blue}=[hedge, fill={rgb,255: red,102; green,204; blue,255}, draw=black, shape=rectangle, tikzit category=hypergraph]
\tikzstyle{node}=[fill=black, draw=black, shape=circle, minimum size=1.5mm, inner sep=0mm, tikzit category=hypergraph]
\tikzstyle{red node}=[fill=red, draw=black, shape=circle, minimum size=1.5mm, inner sep=0mm, tikzit category=hypergraph]
\tikzstyle{blue node}=[fill=blue, draw=black, shape=circle, minimum size=1.5mm, inner sep=0mm, tikzit category=hypergraph]
\tikzstyle{green node}=[fill={rgb,255: red,33; green,140; blue,33}, draw=black, shape=circle, minimum size=1.5mm, inner sep=0mm, tikzit category=hypergraph]
\tikzstyle{node highlight}=[fill=black, draw=blue, thick, shape=circle, minimum size=1.5mm, inner sep=0mm, tikzit category=hypergraph]
\tikzstyle{red node highlight}=[fill=red, draw=blue, thick, shape=circle, minimum size=1.5mm, inner sep=0mm, tikzit category=hypergraph]
\tikzstyle{yellow hedge}=[hedge, fill=yellow, draw=black, shape=rectangle, tikzit category=hypergraph]
\tikzstyle{green hedge}=[hedge, fill=green, draw=black, shape=rectangle, tikzit category=hypergraph]
\tikzstyle{small box}=[fill=white, draw=black, shape=rectangle, minimum height=6mm, minimum width=6mm, tikzit category=string diagram]
\tikzstyle{vsmall box}=[fill=black, draw=black, shape=rectangle, minimum height=4mm, minimum width=1mm, tikzit category=string diagram, inner sep=0]
\tikzstyle{medium box}=[fill=white, draw=black, shape=rectangle, minimum height=11mm, minimum width=6mm, tikzit category=string diagram]
\tikzstyle{semilarge box}=[fill=white, draw=black, shape=rectangle, minimum height=16mm, minimum width=6mm, tikzit category=string diagram]
\tikzstyle{large box}=[fill=white, draw=black, shape=rectangle, minimum height=21mm, minimum width=6mm, tikzit category=string diagram]
\tikzstyle{black dot}=[fill=black, draw=black, shape=circle, minimum size=2mm, inner sep=0mm, tikzit category=string diagram]
\tikzstyle{white dot}=[fill=white, draw=black, shape=circle, minimum size=2mm, inner sep=0mm, tikzit category=string diagram]
\tikzstyle{red dot}=[fill=red, draw=black, shape=circle, minimum size=2mm, inner sep=0mm, tikzit category=string diagram]
\tikzstyle{wlabel}=[fill=none, draw=none, shape=rectangle, tikzit category=string diagram, font={\footnotesize}, inner sep=0pt, tikzit fill={rgb,255: red,102; green,204; blue,255}, tikzit draw={rgb,255: red,102; green,204; blue,255}, yshift=0.3mm]
\tikzstyle{BRchange}=[draw=black, shape=diamond, tikzit shape=circle, tikzit fill={rgb,255: red,96; green,0; blue,0}, diamond split part fill={black,red}, inner sep=-5mm, minimum width=2.7mm, minimum height=1.7mm]
\tikzstyle{RBchange}=[draw=black, shape=diamond, tikzit shape=circle, tikzit fill={rgb,255: red,165; green,0; blue,0}, diamond split part fill={red,black}, inner sep=0, minimum width=2.7mm, minimum height=1.7mm]
\tikzstyle{dummy}=[fill=none, draw=none, shape=circle, font={\small}, inner sep=1pt, tikzit draw=blue, tikzit fill=white]
\tikzstyle{node label}=[fill=none, draw=none, shape=rectangle, tikzit fill=cyan, tikzit draw=cyan, font={\scriptsize}, tikzit shape=circle, inner sep=0pt]
\tikzstyle{empty diag}=[fill=white, draw={rgb,255: red,165; green,165; blue,165}, shape=rectangle, minimum size=1.2 cm, dashed, thick]
\tikzstyle{large horizontal box}=[fill=white, draw=black, shape=rectangle, minimum height=6mm, minimum width=21mm, tikzit category=string diagram]
\tikzstyle{medium horizontal box}=[fill=white, draw=black, shape=rectangle, minimum height=6mm, minimum width=11mm, tikzit category=string diagram]
\tikzstyle{semilarge horizontal box}=[fill=white, draw=black, shape=rectangle, minimum height=6mm, minimum width=16mm, tikzit category=string diagram]
\tikzstyle{e-box-interface}=[inner sep=0mm, minimum size=1mm, shape=circle, draw=black, fill=white, font={\footnotesize\boldmath}]
\tikzstyle{very large horizontal box}=[fill=white, draw=black, shape=rectangle, minimum height=6mm, minimum width=41mm, tikzit category=string diagram]
\tikzstyle{dashed edge}=[-, dashed, very thick]
\tikzstyle{alt sort}=[-, dashed, dash pattern=on 2pt off 0.5pt, thick, draw=red]
\tikzstyle{diredge}=[->, >={Latex[length=1.5mm]}]
\tikzstyle{diredge highlight}=[->, >={Latex[length=1.5mm]}, draw=blue, thick]
\tikzstyle{boundary frame}=[-, draw={rgb,255: red,170; green,170; blue,255}, dashed, fill={rgb,255: red,238; green,238; blue,255}, thick, dash pattern=on 2pt off 0.5pt]
\tikzstyle{graph frame}=[-, draw={rgb,255: red,191; green,191; blue,191}, dashed, fill={rgb,255: red,238; green,238; blue,238}, thick, dash pattern=on 2pt off 0.5pt]
\tikzstyle{def sort}=[-]
\tikzstyle{component}=[-, draw=red, thick]
\tikzstyle{map edge}=[{|->}, >=latex, shorten <=0.5mm, shorten >=0.5mm]
\tikzstyle{hypergraph map edge}=[{|->}, draw=red, shorten <=1mm, shorten >=1mm]
\tikzstyle{cdedge}=[->]
\tikzstyle{big cdedge}=[->, very thick, >=latex]
\tikzstyle{pointer edge}=[->, draw=gray, thick]
\tikzstyle{vertical_delimiter}=[-, dashed, very thick, fill=none, draw={rgb,255: red,170; green,170; blue,255}]
\tikzstyle{e_hyperedge}=[-, draw=black, dashed, fill={rgb,255: red,255; green,255; blue,255}, thick, dash pattern=on 3pt off 1.5pt]
\tikzstyle{lambda box}=[-, draw=black, fill={rgb,255: red,255; green,255; blue,255}, thick]
\tikzstyle{lambda_unit}=[-, draw={rgb,255: red,0; green,0; blue,0},fill={rgb,255: red,238; green,238; blue,255}, thick]
\newcommand{\linebreakand}{%
  \end{@IEEEauthorhalign}
  \hfill\mbox{}\par
  \mbox{}\hfill\begin{@IEEEauthorhalign}
}
\def\BibTeX{{\rm B\kern-.05em{\sc i\kern-.025em b}\kern-.08em
    T\kern-.1667em\lower.7ex\hbox{E}\kern-.125emX}}
\begin{document}

\title{Equivalence Hypergraphs: DPO Rewriting for Monoidal E-Graphs}


\author{\IEEEauthorblockN{\href{https://orcid.org/0000-0003-2724-5211}{Aleksei Tiurin}}
\IEEEauthorblockA{\textit{University of Birmingham}\\
United Kingdom\\
axt257@student.bham.ac.uk}
\and
\IEEEauthorblockN{\href{https://orcid.org/0000-0003-1708-3554}{Chris Barrett}}
\IEEEauthorblockA{\textit{University of Birmingham}\\
United Kingdom\\
c.r.barrett.1@bham.ac.uk
}
\and
\IEEEauthorblockN{\href{https://orcid.org/0000-0002-4003-8893}{Dan R. Ghica}}
\IEEEauthorblockA{\textit{Huawei Central Software Institute}\\ 
\textit{and}\\ 
\textit{University of Birmingham}\\
United Kingdom\\
d.r.ghica@bham.ac.uk}
\linebreakand 
\IEEEauthorblockN{\href{https://orcid.org/0000-0002-4783-9757}{Nick Hu}}
\IEEEauthorblockA{\textit{University of Birmingham}\\
United Kingdom\\
n.hu@bham.ac.uk}
}

\maketitle

\begin{abstract}
The technique of \emph{equality saturation}, which equips graphs with an equivalence relation, has proven effective for program optimisation.
We give a categorical semantics to these structures, called \emph{e-graphs}, in terms of Cartesian categories enriched over the category of semilattices.
This approach generalises to monoidal categories, which opens the door to new applications of e-graph techniques, from algebraic to monoidal theories.
Finally, we present a sound and complete combinatorial representation of morphisms in such a category,  based on a generalisation of hypergraphs which we call \emph{e-hypergraphs}.
They have the usual advantage that many of their structural equations are absorbed into a general notion of isomorphism.
This new principled approach to e-graphs enables double-pushout (DPO) rewriting for these structures, which constitutes the main contribution of this paper.
   
\end{abstract}


\section{Introduction}\label{sec:introduction}

Rewrite-driven program optimisation consists of the application of a sequence of semantics-preserving rewrites which may improve the cost of execution, \emph{e.g.}, in terms of running time or memory.
However, applying some rewrites may enable or disable the application of subsequent rewrites, so the choice of application order significantly impacts the quality of the resulting optimisation.  
This long-standing issue is known as the \textit{phase-ordering problem}, with a \textit{phase} referring to a particular set of rewrites.
The optimal solution to this problem is often intractably hard to compute, so typical approaches in practice use heuristics to determine a pseudooptimal ordering instead.

A recently proposed alternative solution is \textit{equality saturation} 
\cite{10.1145/1594834.1480915}: instead of maintaining a \textit{single},  putatively optimised program which is rewritten \textit{destructively} (\textit{i.e.}, at each step, forgetting the part of the program that is replaced by the rewrite), a \textit{set} of equivalent programs is maintained instead, where each rewrite step \textit{non-destructively} grows the set.  
Upon reaching a fixed point (\textit{saturation}),  a \textit{globally} optimal program can then be extracted from this set.
A na\"ive approach for this is unfeasible, since the size of this set grows exponentially with the number of rewrites. 
\textit{Equality graphs (e-graphs)} \cite{EggPaper} are a data structure that can represent this set compactly, thus making this technique computationally tractable in practical applications.

Although equality saturation is already a state-of-the-art algorithmic optimisation technique, there is an opportunity to better understand its mathematical foundations and the foundations of e-graphs in particular.
We address this by giving a categorical axiomatisation for e-graphs and their rewriting.
Considering programs as represented by terms of an arbitrary algebraic theory, we demonstrate a correspondence between e-graphs and morphisms of Cartesian categories enriched over the category of semilattices (\textit{semilattice-enriched}).
Our approach is guided by the established methodology of creating a correspondence between string diagrams and graph rewriting, for which there are existing precedents in the literature~\cite{bonchi_string_2022-1,bonchi_string_2022-2,bonchi_string_2022-3,fscd,ghica-zanassi2023string}.
The key ingredient of our categorical approach is the semilattice-enrichment.
By using the binary operator of a given semilattice we can operate with (non-empty) \textit{sets} of terms (string diagrams, morphisms); the rest of the paper is about working out the mathematical implications of introducing this enrichment to string diagrams and their concrete combinatorial representation.

\emph{Monoidal} category theory has seen applications in digital~\cite{ghica_compositional_2023} and quantum circuits~\cite{coecke_interacting_2011,ZX}, functional programs~\cite{ghica-zanassi2023string}, and computational linguistics~\cite{wazni_quantum_2022,coecke_lambek_2013}.
These topics all have \emph{computation} at their core, where a typical goal is to transform one expression into a \enquote{better} version of itself, with respect to some sound theory of rewriting, analogously to program optimisation.
Our work does not presuppose any particular monoidal theory, but rather provides a general categorical framework, which we hope will serve as a foundation for a principled approach to efficient rewriting theories in these settings.

Finally, we give a concrete combinatorial representation of string diagrams for semilattice-enriched categories in terms of (hierarchical)  hypergraphs, which we call \textit{e-hypergraphs} or monoidal e-graphs, generalising e-graphs for monoidal theories.
Additionally, we also give a specification of e-hypergraph rewriting via a suitable extension of the double pushout (DPO) rewriting framework 
\cite{dpo, bonchi_string_2022-1,bonchi_string_2022-2,bonchi_string_2022-3},  proving the representation \textit{sound and complete} with respect to our categorical semantics.  
As a corollary, this also provides a formalisation of the rewriting theory of e-graphs~\cite{EggPaper}.

\subsection{E-graphs}

\ifdefined \ONECOLUMN
\begin{figure}
	\[
		\scalebox{0.55}{
		\tikzfig{figures/egraph-translation}
		}
	\]
	\caption{Example translation of acyclic e-graphs into string diagrams for semilattice-enriched Cartesian categories. }
	\label{fig:e-graph-example}
	\end{figure}
\else
\begin{figure*}
\[
	\hspace{1.3cm}
    \scalebox{0.65}{
    \tikzfig{figures/egraph-translation}
    }
\]
\captionsetup{skip=0pt, belowskip=-5mm}
\caption{Example translation of acyclic e-graphs into string diagrams for semilattice-enriched Cartesian categories. }
\label{fig:e-graph-example}
\end{figure*}
\fi

E-graphs are a data structure which can efficiently represent many equivalent terms of an algebraic theory simultaneously.
They generalise the typical \emph{directed acyclic graph} (DAG) representation of terms to include equivalence classes of subterms (subgraphs).
The key concepts of e-graphs can be intuitively grasped from some simple examples, as shown in Figure~\ref{fig:e-graph-example}.
This example, adapted from~\cite{EggPaper}, illustrates how an optimiser using e-graphs may non-destructively apply a set of rewrites to initial term $(a * 2) / 2$.

Each column represents a term (or term rewrite rule), the conventional e-graph representation, and the equivalent string diagram representation: an e-hypergraph. 
The initial term, $(a*2)/2$ (which corresponds to subfigure (a)), is already represented efficiently in the e-graph by sharing the node 2.
The string diagram version is slightly different to the conventional e-graph version, as it represents the sharing of the node 2 explicitly, using a sharing (contraction) node
\scalebox{0.3}{
	\begin{tikzpicture}
		\begin{pgfonlayer}{nodelayer}
			\node [style=vertex] (0) at (1.5, 3.5) {};
			\node [style=none] (1) at (1.5, 3) {};
			\node [style=none] (2) at (1, 4) {};
			\node [style=none] (3) at (2, 4) {};
		\end{pgfonlayer}
		\begin{pgfonlayer}{edgelayer}
			\draw (0) to (1.center);
			\draw [bend right=45] (0) to (3.center);
			\draw [bend left=45] (0) to (2.center);
		\end{pgfonlayer}
	\end{tikzpicture}
}
. 

Nodes are encapsulated in dashed boxes indicating equivalence classes of nodes, or, more generally, of subgraphs rooted at these boxes.
Initially, each equivalence class has a single node (subgraph).
The first rewrite rule, replacing multiplication by the more efficient \emph{shift-left} operator, creates the first non-unitary (containing more than one node or subgraph) equivalence class, which includes the multiplication ($*$) and shift-left ($<\!\!<$) operator nodes, with a new sharing: that of node $a$.
This second e-graph illustrates the other difference, besides sharing, between conventional e-graphs and e-hypergraphs: in the former, edges connect directly to nodes inside the equivalence class, whereas in the latter, edges connect to the equivalence class itself.
Explicit discarding (delete) nodes, depicted with black dots with a single wire inside the dashed boxes 
\scalebox{0.4}{
	\begin{tikzpicture}
		\begin{pgfonlayer}{nodelayer}
			\node [style=vertex] (0) at (1.5, 3.5) {};
			\node [style=none] (1) at (1.5, 3) {};
		\end{pgfonlayer}
		\begin{pgfonlayer}{edgelayer}
			\draw (0) to (1.center);
		\end{pgfonlayer}
	\end{tikzpicture}
}, indicate which of the class-level edges are connected to which nodes inside the class.

The other steps, (c) to (d), represent further elaborations of the e-graph, or e-hypergraph, via the application of more rules.
If we were to apply another rule, $x * 1 \to x$, we would identify the e-class (dashed box) containing $a$ with the e-class containing $a * 1$ (the topmost dashed box) which would require a cycle to be introduced: the left argument of node $*$ would be the same e-class that $*$ belongs to.
Then, an optimal term $a$ could be extracted from the e-graph by traversing all the e-classes, where its optimality is justified by its minimal size as a graph.
To support such cycles we need a richer categorical structure for our string diagrams.
This is obtained by combining this work with the results of~\cite{ghica_rewriting_2023}, and for simplicity we will limit ourselves to the case of acyclic e-graphs, although this is not a limitation of the general theory.
Note that while the transformations of the e-graph are guided by the algorithm, the transformations of the string diagrammatic version are guided by the equations of the underlying theory.
A detailed step-by-step application of the equations for the rewrite rules from $(a)$ to $(b)$ and $(b)$ to $(c)$ is given in Figures~\ref{fig:e-graph-example-a-b} and~\ref{fig:e-graph-example-b-c}, respectively.
The equations applied are the ones from Figure~\ref{fig:string-equations} along with the naturality of delete and copy.
Note that the degree of sharing in the eventual string diagram is comparable to the degree of sharing in the corresponding e-graph.

\subsection{Semilattice Enriched Categories}

\begin{figure}
\[
	\scalebox{0.65}{
	\tikzfig{figures/egraph-strings}
	}
\]
\captionsetup{skip=0pt, belowskip=-4ex}
\caption{String diagrams for semilattice-enriched symmetric monoidal categories.}
\label{fig:egraph-strings}
\end{figure}

Recall that there is a correspondence between DAG representations of algebraic terms and  string-diagrams for morphisms of a suitable Cartesian category (see \textit{e.g.}~\cite{ghica-zanassi2023string}).
It makes perspicuous the correspondence between (acyclic) e-graphs and morphisms of a semilattice-enriched Cartesian category, taken as string diagrams.
We extend the usual vocabulary of string diagrams for symmetric monoidal categories with an additional generator:
\[
\infer{f_1 + \ldots + f_n: A \to B}{\{f_i: A \to B\}_{i \in \{1,\ldots,n\}}}
\]
Taking formal (non-empty) joins of morphisms in this way is used to model the equivalence class structure of e-graphs.

String diagrams are read from bottom-to-top, and we implicitly consider additional generators for the duplication and deletion transformations of a (semilattice-enriched) Cartesian category, and we call the string diagrams for this enriched setting \emph{enriched string diagrams}.
Note that in the Cartesian case we can restrict generating morphisms $c$ to have a single output,  without loss of generality,  by applying the universal properties of the Cartesian product.  
This means the (informal) translation given next is fully general for acyclic e-graphs. 

The translation from e-graphs to enriched string diagrams is illustrated informally in Figure~\ref{fig:e-graph-to-string},
noting how the typing constraints of the $+$ constructor are satisfied in the image of the translation by discarding, in each component, all unnecessary inputs.
Examples of this translation are given in the second row of Figure \ref{fig:e-graph-example}.

\begin{figure}
\[
    \tikzfig{figures/egraph-translation-1}
\]
\captionsetup{skip=0pt}
\caption{E-graph e-class translation to string diagram.}
\label{fig:e-graph-to-string}
\end{figure}

The motivating example of our work, e-graphs, requires the Cartesian structure to express arbitrary sharing (see Figure~\ref{fig:e-graph-example}).
However, we will see how all the core of the theory of e-hypergraph rewriting only requires the monoidal structure. 
Thus, we can readily generalise e-graphs from algebraic to monoidal theories, giving rise to a  host of new potential applications, which we briefly outline in the conclusion of this paper.   

\subsection{Combinatorial Representation of Enriched String Diagrams}

In order to \textit{implement} generalised e-graphs, rewriting must be performed on concrete representations of the corresponding string diagrams.  
String diagrams can be considered equivalently as either topological objects (\textit{i.e.}, taken modulo ``connectivity'') or as a 2-dimensional syntax quotiented by the equations of a \textit{symmetric monoidal category (SMC)}.
For example, we have the following equivalences between diagrams:
\[
	\scalebox{0.6}{
	\tikzfig{figures/interchange}
	}
\]
The latter representation makes string diagrams unamenable to efficient implementation due to the difficulty of calculating the quotient.
\emph{Hypergraphs} are a combinatorial representation designed to address this issue~\cite{bonchi_string_2022-1,bonchi_string_2022-2,bonchi_string_2022-3}.
Here, the generators $c_i$ become hyper-edges,  and wires become vertices.  
Thus, the expected quotient becomes simply hypergraph isomorphism. 
With appropriate restrictions on the form these hypergraphs can take, this approach can be used to \textit{characterise} the free symmetric monoidal category generated by some~$c_i$. 

We are interested not only in the free SMC over a set of generators, but also in \textit{(symmetric) monoidal theories} with extra equations, such as the rewrite rules (b)--(d) of Figure \ref{fig:e-graph-example}. 
These can be seen as equations between string diagrams, and thus rewrites of their hypergraph representations.  
Because the generating equations can be applied in any context, we are led to the notion of \textit{(hyper)graph rewriting}: given an equation $l=r$, we require some way to identify (a sub-hypergraph corresponding to) $l$ within a hypergraph and replace it with (a sub-hypergraph corresponding to) $r$.
The standard methodology to giving specifications of such graph rewriting, known as \textit{double pushout (DPO) rewriting} \cite{dpo, bonchi_string_2022-1}, is still applicable.
However, these concepts must be now generalised from hypergraphs to \textit{e-hypergraphs} --- hypergraphs with two additional relations denoting the hierarchical structure introduced by so-called \textit{e-boxes}: the generator for semilattice enrichment, and the separation of the components of each e-box.
The technical aspects of finding the correct definitions are complex, and constitutes the main body of this paper. 

\subsection{Soundness and Completeness}
The main technical result of the paper is a proof of soundness and completeness of this representation with respect to morphisms of a semilattice-enriched symmetric monoidal categories and their respective rewriting theory.
This extends the results of \cite{bonchi_string_2022-2} for plain SMCs. 
While the structural equalities of SMCs are factored out in the representation,  those arising from the enrichment (see the equations of Figure \ref{fig:string-equations}) are not, and should not be. 
They represent the un/sharing of subdiagrams with respect to the e-box structure,  which is precisely what allows for the compact representation of equivalence classes.  
Instead, we consider DPO-rewrites implementing both the structural equalities for enrichment (which involve the e-box structure) and the equalities arising from the generating monoidal theory (which do not).  

Precisely,  our soundness and completeness result is the statement that morphisms in an appropriate free semilattice-enriched SMC are equal  if and only if there exists a sequence of DPO-rewrites --- each induced by a structural equality or the monoidal theory --- between their combinatorial representations. 
In the particular case of a monoidal signature including natural copy and delete maps,  we thus recover a sound and complete representation of semilattice-enriched Cartesian categories,  with the translation described earlier in Figure~\ref{fig:e-graph-to-string} justifying our claim of developing a mathematical theory of e-graphs. 

%

\subsection{Related Work}

Although e-graphs were first developed in the 80s \cite{nelson1980techniques}, there has recently been an explosion of interest in them for the purpose of building program optimisers and synthesisers, especially due to recent work on the equality saturation technique \cite{10.1145/1594834.1480915, griggio_proceedings_2022, EggPaper,flatt_small_2022}.  This includes interest in various extensions of the e-graph formalism, for example to account for conditional rewriting \cite{singher2023colored},  to reason about rewriting for the $\lambda$-calculus \cite{koehler2022sketchguided},  and to combine techniques of e-graphs and database queries (``relational e-matching'') \cite{zhang_relational_2022}. 
We are hopeful that our novel perspective on e-graphs can be extended to give uniform foundations to these investigations.  
We elaborate on some potential applications in the \nameref{sec:conclusion}.

Our use of string diagrams is also central to the intuition behind our work \cite{Selinger_2010, joyal_geometry_1991}.  
There is a breadth of work on similar string diagrammatic formalisms,  including hierarchical string diagrams \cite{ghica-zanassi2023string} and functorial boxes \cite{mellies_functorial_2006},  proof nets for compact closed categories with biproducts \cite{duncan_generalised_2009}, and recent work on tape diagrams for rig categories with biproducts \cite{bonchi_tape_nodate}. 
Hierarchical (hyper-)graphs are also a well-studied area of research \cite{plump:hierarchical-graphs, montanari:gs-lambda, palacz:hierarchical-transform, Gaducci:hierarchical-graphs, Ghica:hierarchical}. 
Our work builds heavily on string diagram rewrite theory as developed in \cite{bonchi_string_2022-1,bonchi_string_2022-2, bonchi_string_2022-3}. 

\section{Categorical Semantics of E-Graphs}
In this section,  we introduce some preliminaries on semilattice-enriched symmetric monoidal categories generated by monoidal theories,  and the string diagrammatic formalism we will use to represent them.  

Given a category $\mathbb{C}$  with objects $A,B \in \mathbb{C}$ we denote by $\mathbb{C}(A,B)$ the corresponding hom-set.  
We write the identity morphism on $A$ as $\id_A$.  
We commonly write $f;g$ for composition in diagrammatic order.  
Composition in the usual order is written $g \circ f$.  
We denote the tensor product of an SMC $\mathbb{C}$ by $\otimes$,  its unit by $I$, and its symmetry natural transformation as $\sym$ \cite{maclane}.  
We adopt the convention that $\otimes$ binds more tightly than $(;\!)$.  
We elide all associativity and unit isomorphisms associated with monoidal categories,  and often omit subscripts on identities and natural transformations where it can be inferred.  
We denote by $\catname{SLat}$ the closed monoidal category of semilattices which is defined in section~\ref{sec:appendix:slat} of the Appendix.
Given two adjoint functors $F : \mathbb{A} \to \mathbb{B}$ and $G : \mathbb{B} \to \mathbb{A}$ with $F$ being left-adjoint, we write $F \dashv G : \mathbb{B} \to \mathbb{A}$ for this adjunction.

\subsection{Symmetric Monoidal Theories and PROPs.}

A presentation of an algebraic theory is traditionally given by a \textit{signature} of $n$-ary operations and a set of \textit{equations}---formally,  pairs of terms freely generated over the signature which are identified.  We are interested here in symmetric monoidal theories, which are generalisations of algebraic theories where the operations of the signature may have arbitrary \textit{co-arities}.  
First,  we generalise the notion of a signature. 
\begin{definition}[Monoidal signature]
A \textit{(monoidal) signature} $\Sigma$ is given by a set of \textit{generators} $c: n \to m$,  with \textit{arity} $n$ and \textit{co-arity} $m$.  
\end{definition}


A \textit{symmetric monoidal theory} is then defined using $\Sigma$-terms.
A set of such terms is the set obtained by taking well-typed combinations of $c \in \Sigma, \textsf{id}_{I}, \textsf{id}_{1}, \text{sym}_{1,1}, (;), \text{ and } \otimes$.
A categorical presentation of a set of $\Sigma$-terms is given by a freely generated products and permutations category.

\begin{definition}[Products and permutations category]
A \textit{products and permutations category (PROP)} is a strict symmetric monoidal category with natural numbers as objects,  and such that $n \otimes m = n+m$.  
A \textit{PROP functor} is a strict SMC-functor which is additionally identity-on-objects.
We denote the free \textit{products and permutations category} over a monoidal signature $\Sigma$ by $\catname{S}(\Sigma)$.
Such PROPs and functors between them define the category $\catname{PROP}$ and, in particular, $\catname{S}(\Sigma) \in \catname{PROP}$.
\end{definition}
The free category $\textbf{S}(\Sigma)$ can be syntactically constructed by quotienting the set of $\Sigma$-terms by the axioms of a symmetric monoidal category.
Thus, $\textbf{S}$ stands for \textit{syntactic} as another way of saying free.

An equation associated with a monoidal signature is a pair of parallel $\Sigma$-terms, leading to the following definition of a symmetric monoidal theory. 
\begin{definition}[Symmetric Monoidal Theory (Definition 2.1~\cite{bonchi_string_2022-1})]
A \textit{presentation of a symmetric monoidal theory} $(\Sigma, \mathcal{E})$ is given by a pair of a monoidal signature $\Sigma$ and a set $\mathcal{E}$ of pairs of parallel $\Sigma$-terms $f,g: n \to m$.
A \textit{symmetric monoidal theory} $\textbf{SMT}(\Sigma,\mathcal{E})$ generated by a presentation $(\Sigma, \mathcal{E})$ is given by a set of $\Sigma$-terms quotiented by the least congruence including $\mathcal{E}$.
\end{definition}

\begin{definition}(Definition 2.4~\cite{bonchi_string_2022-1})
Let $\catname{S}(\Sigma, \mathcal{E})$ a free PROP presented by $\catname{SMT}(\Sigma, \mathcal{E})$.
That is, it is given by $\catname{S}(\Sigma)$ additionally quotiented by the least congruence including $\mathcal{E}$.
\end{definition}

It will often be convenient to give presentations of SMTs in terms of string diagrams.
\begin{example}
The SMT of \textit{commutative comonoids} is given by the following generators, ${\Delta, !}$, depicted in string diagrammatic form:
\[
	\scalebox{0.8}{
  	 \tikzfig{figures/Cartesian-equipment}
	}
\]
together with the following associativity, commutativity and unitality equations, given in terms of equations of string diagrams: 
\[
	\scalebox{0.6}{
	\tikzfig{figures/Cartesian-theory}	
	}
\]
A Cartesian SMT is given by a set of generators $\Sigma_C$ which additionally contains the generators of commutative comonoids, together with the equations $\mathcal{E}$ that include the SMT of commutative comonoids, and the following additional naturality of copy-delete equations; for every $c \in \Sigma_C$:
\[
	\scalebox{0.6}{
	\tikzfig{figures/Cartesian-naturality}
	}
\]
By Fox's Theorem~\cite{fox},  $\textbf{S}(\Sigma_C, \mathcal{E}_C)$ is equivalent to the free Cartesian category over $\Sigma_C$. 
\end{example}

\subsection{Free Semilattice Enrichment: Formal Joins}

We will use enrichment to axiomatize the e-box (e-class) structure as the ``join" of two morphisms $f,g: A \to B$ as $f + g: A \to B$. 
More precisely our hom-sets will be given the structure of a semilattice.
Generally,  we may define a category \textit{enriched} in any other (closed) monoidal category $M$, but we restrict ourselves to the concrete definition required for our application, where $M = \catname{SLat}$.

\begin{definition}[Semilattice-enriched category]
A semilattice-enriched category $\mathbb{C}$ is defined by the following data
\begin{itemize}
	\item a set of objects $\mathbb{C}$
	\item for every pair of objects $A,B \in \mathbb{C}$ --- a hom-object (or, hom-semilattice) $\mathbb{C}(A,B) \in \catname{SLat}$
	\item for every triple of objects $A,B,C \in \mathbb{C}$ --- a composition morphism
	\[
		\circ: \mathbb{C}(B,C) \otimes \mathbb{C}(A,B) \to \mathbb{C}(A,C)
	\]
	\item for every object $A \in C$ --- a unit morphism
	\[
	u_{A} : I \to \mathbb{C}(A,A), \text{ where $I$ is the monoidal unit for $\catname{SLat}$}
	\]
\end{itemize}
such that particular coherence diagrams commute~\cite{Borceux_1994}.
We will also call such categories $\catname{SLat}$-categories.
\end{definition}

Intuitively such enrichment equips every hom-set with a commutative, associative and idempotent operator $+$ that is respected by the composition in the following sense
\[
	f ; (g+h) = f;g + f;h \qquad (f+g) ; h = f;h + g;h
\]
for all appropriately typed morphisms. 
We take $(;\!)$ to bind more tightly than $+$.

In particular,  that we do not require semilattices to have a unit for $+$, \textit{i.e.,} we work with \textit{unbounded} semilattices.

\begin{definition}[$\catname{SLat}$-functor]
	Let $\mathbb{C}$ and $\mathbb{D}$ be two $\catname{SLat}$-categories.
	An $\catname{SLat}$-functor $F : \mathbb{C} \to \mathbb{D}$ is defined by the following data.
	\begin{enumerate}
	  \item A mapping $F : \mathbb{C} \to \mathbb{D}$
	  \item An object-indexed family of morphisms in $\catname{SLat}$ $F_{A,B} : \mathbb{C}(A,B) \to \mathbb{D}(FA,FB)$
	\end{enumerate}
	such that certain coherence diagrams commute~\cite{Borceux_1994}.
	$F_{A,B}$ being a morphism in $\catname{SLat}$ has a property $F_{A,B}(f + g) = F_{A,B}(f) + F_{A,B}(g)$.
	An $\catname{SLat}$-functor $F$ defines an equivalence between two $\catname{SLat}$-categories if it is \textit{full}, \textit{faithful}, and \textit{essentially surjective}~\cite{Kelly2022BASICCO}.
\end{definition}

\begin{definition}[$\catname{SLat}$-SMC]\label{def:enriched-prop}
A \textit{semilattice-enriched strict SMC}  $\mathbb{C}$ is given by an $\catname{SLat}$-category $\mathbb{C}$ as above that additionally has
\begin{itemize}
\item a unit object $I_{C} \in \mathbb{C}$
\item for every pair of objects $A,B \in \mathbb{C}$ --- an object $A \otimes B \in \mathbb{C}$
\item for all $A,B,C,D$ --- a tensor morphism $ - \otimes_{C} - : \mathbb{C}(A,C) \otimes \mathbb{C}(B,D) \to \mathbb{C}(A \otimes B, C \otimes D)$
\end{itemize}
such that particular coherence diagrams commute~\cite{enriched_monoidal}.
\end{definition}
The latter morphism being in $\catname{SLat}$ implies the following equations that $\otimes$ satisfy
\[
f \otimes (g+h) = f \otimes g + f \otimes h \qquad (f+g) \otimes h = f \otimes h + g \otimes h
\]

Again, we take $\otimes$ to bind more tightly than $+$.

\begin{proposition}(A specialized case of Proposition 6.4.3~\cite{Borceux_1994}. See also Appendix~\ref{sec:appendix:slat})
There is a 2-adjunction 
\vspace{-1mm}
\[
\mathcal{F} \dashv \mathcal{U} : \catname{SLat}\text{--}\catname{Cat} \to \catname{Cat}
\vspace{-1mm}
\]
that is induced by a usual free-forgetful adjunction 

\vspace{-1mm}
\[
F \dashv U : \catname{SLat} \to \catname{Set}
\vspace{-1mm}
\]
\end{proposition}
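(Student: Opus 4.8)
The plan is to recognise this statement as an instance of the general \emph{change-of-base} principle for enriched categories, specialising the cited Proposition~6.4.3 of~\cite{Borceux_1994}. Recall that ordinary categories are exactly $\catname{Set}$-enriched categories, so $\catname{Cat} = \catname{Set}\text{-}\catname{Cat}$, and the claimed 2-adjunction $\mathcal{F} \dashv \mathcal{U}$ is to be obtained by lifting the base adjunction $F \dashv U$ along this identification. The general theorem says that any \emph{monoidal adjunction} between the base monoidal categories induces a 2-adjunction between the corresponding 2-categories of enriched categories; thus the whole proof reduces to verifying that $F \dashv U : \catname{SLat} \to \catname{Set}$ is monoidal, where $\catname{Set}$ carries its Cartesian monoidal structure and $\catname{SLat}$ the closed monoidal structure defined in Appendix~\ref{sec:appendix:slat}.

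The key step is therefore to upgrade $F \dashv U$ to a monoidal adjunction. Since a monoidal adjunction is equivalently an adjunction whose left adjoint is strong monoidal, it suffices to show that the free-semilattice functor $F$ is strong monoidal, i.e.\ that $F(1) \cong I$ and $F(X \times Y) \cong F(X) \otimes F(Y)$ naturally. I would obtain this from the fact that the free-semilattice monad on $\catname{Set}$ (sending a set to its nonempty finite subsets, since we work with unbounded semilattices) is a \emph{commutative} (monoidal) monad on the Cartesian category $\catname{Set}$: by Kock's theorem the Eilenberg--Moore category $\catname{SLat}$ then inherits a symmetric monoidal structure classifying bilinear maps, which is precisely the tensor product of Appendix~\ref{sec:appendix:slat}, and for which $F$ is strong monoidal. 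Dually, the right adjoint $U$ acquires a lax monoidal structure from the universal bilinear map $U A \times U B \to U(A \otimes B)$ together with $1 \to U(I)$, and one checks that the unit and counit of $F \dashv U$ are monoidal natural transformations, so that the adjunction is monoidal.

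With the monoidal adjunction in hand, I would invoke change-of-base 2-functoriality: a lax monoidal functor induces a 2-functor between enriched-category 2-categories by applying it to hom-objects, using the lax structure to transport composition and units. Applying this to $U$ yields $\mathcal{U} : \catname{SLat}\text{-}\catname{Cat} \to \catname{Cat}$, which sends a semilattice-enriched category to its underlying ordinary category (forgetting the $+$ on each hom-semilattice), while applying it to $F$ yields $\mathcal{F} : \catname{Cat} \to \catname{SLat}\text{-}\catname{Cat}$, which freely enriches. Proposition~6.4.3 of~\cite{Borceux_1994} then guarantees that the monoidality of $F \dashv U$ lifts to the desired 2-adjunction $\mathcal{F} \dashv \mathcal{U}$, with unit and counit obtained by whiskering those of the base adjunction.

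The main obstacle is the strong monoidality of $F$, or equivalently the identification of the tensor of semilattices from Appendix~\ref{sec:appendix:slat} with the bilinear tensor induced by the commutative monad structure: establishing $F(X \times Y) \cong F(X) \otimes F(Y)$ requires matching the universal property of the free semilattice on a product against that of the tensor product, i.e.\ that bilinear maps out of $F(X) \otimes F(Y)$ correspond to ordinary maps out of $X \times Y$. Once this identification and the monoidal-monad coherences are settled, the remaining 2-categorical bookkeeping --- that the induced 2-functors preserve the adjunction coherence --- is exactly the content of the cited general proposition and requires no further work.
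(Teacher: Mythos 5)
Your proposal is correct and follows the same overall skeleton as the paper: both reduce the statement to exhibiting $F \dashv U$ as a monoidal adjunction (equivalently, exhibiting the left adjoint $F$ as strong monoidal) and then invoke the cited change-of-base result of Borceux to lift it to the 2-adjunction between $\catname{Cat}$ and $\catname{SLat}\text{--}\catname{Cat}$. Where you diverge is in how the key lemma is verified. The paper's appendix works concretely: it constructs $F(A) = \coprod_{A} I_{\catname{SLat}}$, obtains the adjunction from the chain $\catname{SLat}(\coprod_{A} I_{\catname{SLat}}, B) \cong \prod_{A}\catname{SLat}(I_{\catname{SLat}},B) \cong \catname{Set}(A,U(B))$, and proves $F(A \times B) \cong F(A) \otimes F(B)$ by a direct calculation pushing coproducts through the tensor, using only that $- \otimes X$ and $X \otimes -$ are left adjoints (which follows from the monoidal closedness of $\catname{SLat}$ established just before). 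You instead establish strong monoidality abstractly, by recognizing $\catname{SLat}$ as the Eilenberg--Moore category of the commutative nonempty-finite-powerset monad and invoking Kock's theorem, then identifying the resulting bilinear tensor with the tensor defined in the appendix. Your route buys generality --- it applies verbatim to any commutative monad --- and explains conceptually why the tensor classifies bilinear maps; the paper's route is more elementary and self-contained, requiring nothing beyond the explicit free construction and closedness. The one obligation specific to your route, which you correctly flag, is matching the two tensors; since both satisfy the universal property of classifying bimorphisms out of $S_1 \times S_2$, this identification is immediate, so the proposal stands as a complete argument.
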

In particular, 2-functor $\mathcal{F}$ turns every category $\mathbb{C'} \in \catname{Cat}$ into a free $\catname{SLat}$-category $\mathbb{C} \in \catname{SLat}\text{--}\catname{Cat}$ by making every hom-set of $\mathbb{C'}$ into a free semilattice on this set.
\begin{definition}
We define $\catname{S}(\Sigma)^{+}$ and $\catname{S}(\Sigma, \mathcal{E})^{+}$ to be $\catname{SLat}$-SMCs obtained as $\mathcal{F}(\catname{S}(\Sigma))$ and $\mathcal{F}(\catname{S}(\Sigma, \mathcal{E}))$ respectively.
Both of these $\catname{SLat}$-categories are in $\catname{PROP}^{+}$ which we define to be the image of $\catname{PROP}$ via the same adjunction.
\end{definition}
Equivalently, they can be constructed from taking (typed) $\Sigma^+$-terms  -- namely, those freely constructed from generators $c \in \Sigma$, $\textsf{id}_1$, $\sym_{1,1}$, $(;\!)$ and $\otimes$ (and $f+g$, respectively) -- quotiented by the axioms of an enriched symmetric monoidal category (and the least congruence including $\mathcal{E}$ in the case of $\catname{S}(\Sigma, \mathcal{E})^{+}$).

To aid reasoning, we introduce a new language of string diagrams for $\catname{SLat}$-SMCs, using a hierarchical ``box'' structure to capture the join operation on morphisms.  This is  used in the translation of equivalence classes from the e-graph to the string diagrammatic setting. 

Figure \ref{fig:egraph-strings} displays the generators of this language which is the usual string diagrammatic syntax~\cite{Selinger_2010} apart from the last component which is our notation for $+$; the first component denotes an empty diagram. 
Figure \ref{fig:string-equations} displays the additional equations which these diagrams satisfy, in addition to the standard SMC equations. 
The first four equations are those displayed in Definition \ref{def:enriched-prop},  while the final four axiomatize $+$ as an \textit{n-ary} associative, commutative and idempotent operation.  We overload the binary notation $+$ for our $n$-ary notation.
Notably, similar diagrammatic languages using boxes to express choice have been used before~\cite{duncan_generalised_2009}. 

\ifdefined \ONECOLUMN
\begin{figure}
	\[  
		\scalebox{0.6}{
		\tikzfig{figures/egraph-strings-equations}
		}
	\]
	\caption{Equations for a  semilattice-enriched symmetric monoidal category}
	\label{fig:string-equations}
	\end{figure}
\else
\begin{figure*}
\[  
    \scalebox{0.75}{
	\tikzfig{figures/egraph-strings-equations}
    }
\]
\captionsetup{skip=0pt}
\caption{Equations for a  semilattice-enriched symmetric monoidal category}
\label{fig:string-equations}
\vspace{-8mm}
\end{figure*}
\fi

\section{String Diagram Rewrite Theory}\label{sec:combinatorial-semantics}

In this section,  we recall the fundamental definitions and results of string diagram rewrite theory for symmetric monoidal categories,  recapitulating the correspondence between string diagram rewriting and an appropriate notion of double pushout (DPO) rewriting of certain hypergraphs,  as established in~\cite{bonchi_string_2022-1, bonchi_string_2022-2}.  

First,  we fix some notation.  Let $(-)^*$ be the free monoid monad over $\catname{Set}$.
We extend this to the case of relations $R \subseteq V \times V$,  so that we denote by $R^{*} \subseteq V^* \times V^*$ the element-wise extension of $R$ to a relation on ordered sequences. 
We will sometimes use functional notation for relations,  so that given $R \subseteq V \times W$ and $v \in V$,  $R(v) \subseteq W$. 
We elide associativity and unit isomorphisms associated with coproducts $(+)$,  and denote by $\iota_j: X_{j} \rightarrow X_{1} + \ldots X_{j} + \ldots X_{n}$ the $j^{\text{th}}$ injection,  and $[f,g]: A_1 + A_2 \to B$ the co-pairing of two morphisms $f: A_1 \to B$ and $g:A_2 \to B$. 
We denote by $A +_{f,g} B$ the pushout of the span $A \xleftarrow{f} C \xrightarrow{g} B$.
We will also refer to $A,B$ as \textit{feet} of the cospan, and to $C$ as \textit{carrier}.

\subsection{Hypergraphs}


In ~\cite{bonchi_string_2022-1}, hypergraphs \textit{with interfaces} and their DPO rewriting are shown to be sound and complete for categorical presentations of SMTs, including Frobenius algebras.
Hypergraphs extend graphs by allowing edges with multiple sources and targets. 
Interpreting string diagrams as hypergraphs maps morphisms to edges and wires to vertices, while the dangling wires at the diagram's top and bottom give rise to \textit{input} and \textit{output} interfaces for a hypergraph.
Subsequent work \cite{bonchi_string_2022-2} refines this to \textit{monogamous} directed acyclic hypergraphs, establishing a correspondence between symmetric monoidal string diagrams (without Frobenius algebra) and the category of (cospans of) such hypergraphs.

\begin{definition}[Category of hypergraphs]\label{def:hypergraph}
A \emph{hypergraph $\mathcal{G}$ over a signature $\Sigma$} is a tuple $(V,E,s,t,l)$,  where $V$ is a finite set of vertices, $E$ is a finite set of edges, $s : E \to V^{*}$ is a source function, $t : E \to V^{*}$ is a target function,  and $l : E \to \Sigma$ is a labelling function that assigns each edge a generator from monoidal signature $\Sigma$.  The labelling function must respect the typing of the generator: for all edges $e$,  $|s(e)| = m$ and $|t(e)| = n$,  where $l(e) : m \to n$.  We call a hypergraph \textit{discrete} if its set of edges is empty.   
A \emph{hypergraph homomorphism} $\phi: \mathcal{F} \to \mathcal{G}$ is given by a pair of functions $\phi_V : V_{\mathcal{F}} \to V_{\mathcal{G}}, \phi_E : E_{\mathcal{F}} \to E_{\mathcal{G}}$ such that the following hold. 
\begin{enumerate}
    \vspace{-1.25mm}
    \item $\phi_V^*(s_{\mathcal{F}}(e)) = s_{\mathcal{G}}(\phi_E(e))$
    \item $\phi_V^*(t_{\mathcal{F}}(e)) = t_{\mathcal{G}}(\phi_E(e))$
    \item $l_{\mathcal{F}}(e) = l_{\mathcal{G}}(\phi_E(e))$
    \vspace{-0.5mm}
\end{enumerate}
We denote by $\catname{Hyp(\Sigma)}$ the category of hypergraphs over $\Sigma$ and hypergraph homomorphisms. 
\end{definition}
Note that $\catname{Hyp(\Sigma)}$ has all finite colimits and, in particular,  the coproduct is given by the disjoint union of hypergraphs.
The initial object is given by the empty hypergraph.  

\begin{definition}[Symmetric monoidal category of cospans]
Let $\mathbb{C}$ be a category with all finite colimits.  A cospan from $X$ to $Y$ is a pair of arrows $X \xrightarrow{} A \xleftarrow{} Y$  in $\mathbb{C}$.  This \textit{category of cospans of $\mathbb{C}$},  denoted $\catname{Csp(\mathbb{C})}$,  has as objects those of $\mathbb{C}$ and as arrows isomorphism classes of cospans.
Composition of cospans $X \xrightarrow{f} A \xleftarrow{g} Y$ and $Y \xrightarrow{h} B \xleftarrow{i} Z$ is given by pushout: $X \xrightarrow{f} A +_{g,h} B \xleftarrow{i} Z$,  with identity given by the cospan of identities.  
Its monoidal product is given by the coproduct of $\mathbb{C}$ as follows: 
\ifdefined \ONECOLUMN
\[
    (A \xrightarrow{f} \mathcal{G} \xleftarrow{g} B) \otimes (A' \xrightarrow{f'} \mathcal{G'} \xleftarrow{g'} B') \coloneq A + A' \xrightarrow{f+f'} \mathcal{G} + \mathcal{G'} \xleftarrow{g+g'} B + B'  ,
\]
\else
\vspace{-2mm}
\begin{multline*}
    (A \xrightarrow{f} \mathcal{G} \xleftarrow{g} B) \otimes (A' \xrightarrow{f'} \mathcal{G'} \xleftarrow{g'} B') \coloneq\\ A + A' \xrightarrow{f+f'} \mathcal{G} + \mathcal{G'} \xleftarrow{g+g'} B + B'  , 
    \vspace{-1.25cm}
\end{multline*}
\fi
and its unit by the initial object of $\mathbb{C}$.
Symmetry is inherited from $\mathbb{C}$ as the coproduct~\cite{MonoidalCoproduct}.
\end{definition}

Then, morphisms of $\catname{S}(\Sigma)$ can be interpreted as particular cospans of discrete hypergraphs,  which we call hypergraphs with interfaces~\cite{bonchi_string_2022-2}.
\begin{definition}[Hypergraphs with interfaces]
\label{def:cspd}
Define the category of \emph{hypergraphs with interfaces} $\catname{HypI(\Sigma)}$ as the full sub-category of $\catname{Csp(Hyp(\Sigma))}$ with discrete \textit{ordered} hypergraphs as objects.
\end{definition}

In particular, this means that any morphism in $\HypI{\Sigma}$ is of the form $n \xrightarrow{f} \mathcal{G} \xleftarrow{g} m$,
where $n,\;m$ are discrete ordered hypergraphs and $\mathcal{G}$ is an arbitrary hypergraph.  The morphisms $f$ and $g$ into $\mathcal{G}$ will specify which vertices are to be considered as inputs and outputs,  respectively.
The ordering of the interface associated to a hypergraph is used to differentiate between the interpretations of morphisms $f \otimes g$ and $g \otimes f$.

\begin{definition}[Degree,  path,  acyclic] 
The \emph{in-degree} of a vertex $v$ in a hypergraph $\mathcal{G}$ is the number of hyperedges $e \in {E_\mathcal{G}}$ such that $v \in t(e)$.  
Similarly, the out-degree of $v$ is the number of edges $e \in E_\mathcal{G}$ such that $v \in s(e)$.
A \emph{path} from $e_0$ to $e_{n-1}$ of length $n$ in a hypergraph $\mathcal{G}$ is an ordered sequence of edges $[e_0, e_1, \ldots, e_{n-1}] \in E_\mathcal{G}^*$ such that for all $i < n - 1$ some vertex $v \in t(e_i)$ is also in $s(e_{i+1})$.
A hypergraph is \emph{directed acyclic} if it contains no paths of length $n > 0$ that start and end in the same edge.
\end{definition}

In order to make the interpretation complete with respect to $\catname{S}(\Sigma)$,  it is necessary to restrict our attention to the following subset of hypergraphs with interfaces. 
\begin{definition}[Category of MDA Hypergraphs with Interfaces~\cite{bonchi_string_2022-2}]
\label{def:monogamy_hyp}
We call a cospan $n \xrightarrow{f} \mathcal{G} \xleftarrow{g} m$ in $\HypI{\Sigma}$ \emph{monogamous directed acyclic (MDA)} if:
\begin{enumerate}
    \item $\mathcal{G}$ is a directed acyclic hypergraph;
    \item $f$ and $g$ are monomorphisms;
    \item the in-degree and out-degree of every vertex is at most $1$;
    \item vertices with in-degree $0$ are precisely the image of $f$; and
    \item vertices with out-degree $0$ are precisely the image of $g$.
\end{enumerate}
We denote by $\MdaCospans$ the subcategory of $\HypI{\Sigma}$ of MDA cospans.
\end{definition}
Note that $\MdaCospans$ remains a symmetric monoidal category with the same equipment as before.  We now recall from
\cite{bonchi_string_2022-2} the following theorem,  which says the category $\MdaCospans$ is in fact \textit{equivalent} to $\catname{S}(\Sigma)$. 
\begin{theorem}[Corollary 26 of~\cite{bonchi_string_2022-2}]\label{thm:prop-equiv}
    $\catname{S}(\Sigma) \cong \MdaCospans$
\end{theorem}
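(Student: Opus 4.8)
The plan is to exhibit a strict symmetric monoidal functor $\sem{-} : \catname{S}(\Sigma) \to \MdaCospans$ that is identity-on-objects, and then prove it is full and faithful. Since both categories are PROPs with the natural numbers as objects, essential surjectivity (indeed bijectivity) on objects is immediate, so all the real content concerns the hom-sets.

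First I would define $\sem{-}$ by exploiting the freeness of $\catname{S}(\Sigma)$: it suffices to fix the images of the generators and check they are legal MDA cospans. I would send each generator $c : n \to m$ to the cospan whose carrier is the hypergraph with a single $c$-labelled edge, $n$ input vertices and $m$ output vertices, with the evident interface maps; $\id_1$ to the single-vertex cospan; and $\sym_{1,1}$ to the crossing cospan on two vertices. Composition is interpreted by cospan composition (pushout) and $\otimes$ by the coproduct, exactly as in the definition of the category of cospans. Because $\catname{S}(\Sigma)$ is freely generated by $\Sigma$ as a strict SMC, this data extends uniquely to a strict SMC-functor, and one checks routinely that each generating cospan is monogamous and directed acyclic, and that these conditions are preserved by pushout along interfaces and by coproduct, so that every composite again lands in $\MdaCospans$.

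For fullness I would argue that every MDA cospan lies in the image by induction on the number of edges of the carrier. The discrete (edge-free) case is a permutation of wires: monogamy forces the interface maps to be bijections onto the vertex set, and the two orderings determine a unique permutation, which is a composite of $\sym$ and $\id$ and hence in the image. For the inductive step I would use the directed-acyclic condition to topologically sort the edges and peel off a source-minimal edge $e$ as a single layer $\id \otimes c \otimes \id$ with $c = l(e)$, framed by the permutation routing the current output boundary to the sources of $e$; removing $e$ and promoting its targets to the new output interface yields a smaller MDA cospan, which is in the image by induction, and recomposing gives the desired preimage. Monogamy is precisely what guarantees that these layers glue along identity/permutation interfaces without clashes.

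The hard part will be faithfulness: showing that if $\sem{d_1}$ and $\sem{d_2}$ are isomorphic cospans then $d_1 = d_2$ already holds in $\catname{S}(\Sigma)$, i.e. modulo the SMC axioms alone. My approach would be to introduce a layered normal form for $\Sigma$-terms --- a composite of layers each of the form $\pi \mathbin{;} (\id \otimes c \otimes \id)$, with $\pi$ a permutation built from $\sym$ and $\id$ --- and prove two things. First, every $\Sigma$-term is equal in $\catname{S}(\Sigma)$ to such a normal form; this is where the full strength of the SMC equations (interchange, naturality of symmetry, and the coherence isomorphisms) is consumed, and it is the most delicate bookkeeping. Second, $\sem{-}$ is injective on normal forms: the labelled edges of the carrier recover the sequence of generators layer by layer, while the DAG structure together with monogamy pins down the connecting permutations uniquely, so two normal forms with isomorphic interpretations must be syntactically identical. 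Combining these, an isomorphism of interpretations forces equality of normal forms and hence equality in $\catname{S}(\Sigma)$. I expect the normalisation step, rather than the injectivity-on-normal-forms step, to be the principal obstacle, since it is there that one must verify no connectivity information is lost or gained under the SMC equations.
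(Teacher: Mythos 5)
A structural remark first: the paper does not prove Theorem~\ref{thm:prop-equiv} at all --- it is imported verbatim as Corollary~26 of~\cite{bonchi_string_2022-2}, so there is no in-paper argument to compare against; your proposal has to be judged on its own merits and against the proof in the cited work. Your construction of the interpretation functor and your fullness argument are sound and essentially match the standard development: closure of MDA cospans under pushout-composition and coproduct, the discrete case collapsing to permutations by monogamy, and the inductive peeling of a source-minimal edge into a layer $\pi;(\id \otimes c \otimes \id)$ is exactly the decomposition used in the literature to characterise the image.

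The genuine gap is in faithfulness, and it sits not where you predict (normalisation) but in the claim that $\sem{-}$ is \emph{injective on layered normal forms}. That claim is false. Take generators $a,b : 1 \to 1$ in $\Sigma$. Then $(a \otimes \id_1);(\id_1 \otimes b)$ and $(\id_1 \otimes b);(a \otimes \id_1)$ are two syntactically distinct layered normal forms (both equal to $a \otimes b$ by the interchange law), and their interpretations are isomorphic cospans: two parallel edges labelled $a$ and $b$. The underlying problem is that an abstract hypergraph carries no ordering on its edge set --- a DAG with incomparable edges admits many topological sorts --- so ``the labelled edges of the carrier recover the sequence of generators layer by layer'' is precisely what fails. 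What you actually need is the lemma that any two terms read off an MDA cospan, under \emph{any} choice of peeling order, are equal modulo the SMC axioms: a diamond-style argument in which transposing two independent layers is absorbed by interchange and naturality of symmetry. That lemma is the real content of faithfulness, and the factorisation into ``normalise, then injectivity on normal forms'' cannot avoid it; either prove it directly, or make the normal form genuinely canonical, at which point the same coherence argument reappears inside the normalisation step. For comparison, the cited work does not argue via normal forms at all: faithfulness there is inherited from the Part~I correspondence $\catname{S}(\Sigma + \mathrm{Frob}) \cong \catname{Csp}(\catname{Hyp}(\Sigma))$ (established via Lack's PROP-composition technique) together with faithfulness of the embedding $\catname{S}(\Sigma) \to \catname{S}(\Sigma + \mathrm{Frob})$, and only fullness onto MDA cospans is proven by the kind of inductive decomposition you describe.
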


\subsection{DPOI-Rewriting for Hypergraphs with Interfaces}
The correspondence from the previous section absorbs the symmetric monoidal equations of $\catname{S}(\Sigma, \mathcal{E})$ into the hypergraph representation. 
The equations $\mathcal{E}$ become rewrite rules on string diagrams, modelled as DPO rewrites on hypergraphs with interfaces.
This section reviews standard DPO rewriting for hypergraphs and the \textit{convex} rewriting, which is sound and complete for arbitrary SMTs.

DPO rewriting is a standard categorical approach to graph transformations~\cite{dpo}.
The intuitive notion of (hyper-)graph rewriting is  as expected: given a rewrite rule $\mathcal L \rightsquigarrow \mathcal R$ and a graph $\mathcal{G}$,  we expect to find an occurrence of $\mathcal L$ within $\mathcal{G}$,  remove it,  and then plug $\mathcal R$ into the hole to recover the rewritten graph $\mathcal{H}$.  In DPO rewriting,  such a rewrite rule is formalised as a span $\mathcal L \xleftarrow{} \mathcal K \xrightarrow{} \mathcal R$ in some appropriate category of graphs,  where $\mathcal K$ is some invariant that should be maintained during rewriting.
A typical DPO square is depicted in Figure~\ref{fig:dpo_dpoi}~(left).



The morphism $\mathcal L \to \mathcal G$ is called a \textit{match}, defined when a corresponding \textit{pushout complement} exists, comprising morphisms $\mathcal K \to \mathcal{L}^\bot \to \mathcal G$ completing the top half of the pushout square $\mathcal K \to \mathcal L \to \mathcal G$. 
For monomorphic rewrite rules, constructing $\mathcal{L}^\bot$ involves identifying $\mathcal L$ in $\mathcal G$ and removing elements with no pre-image in $\mathcal K$. 
The final graph $\mathcal H$ is then obtained by inserting into the resulting ``hole'' the elements of $\mathcal R$ without pre-image in $\mathcal K$, resulting in two pushouts: one for deletion and one for insertion. 
Thus, we say that $\mathcal G$ is rewritten into $\mathcal H$ via $\mathcal L \xleftarrow{} \mathcal K \xrightarrow{} \mathcal R$.

To express the rewriting of (hyper-)graphs with interfaces,  \textit{i.e.},  the ones that represent morphisms in $\HypI{\Sigma}$, the DPO formalism has been extended to double-pushout rewriting with \textit{interfaces}  (DPOI) as shown in Figure~\ref{fig:dpo_dpoi}~(right).
In the case of $ \catname{Hyp}(\Sigma)$,  we wish rewrite rules to be pairs of discrete cospans of hypergraphs with matching interfaces: $n \xrightarrow{} \mathcal L \xleftarrow{} m$, $n \xrightarrow{} \mathcal R \xleftarrow{} m$.  
By noting that a cospan $0 \xrightarrow{} \mathcal L \xleftarrow{[f,g]} n + m$ encodes the same data as a cospan $n \xrightarrow{f} \mathcal L \xleftarrow{g} m$ (Remark 3.14~\cite{bonchi_string_2022-1}) we can formulate a rewrite rule as a single span $\mathcal L \xleftarrow{} n+m \xrightarrow{} \mathcal R$,  and we can do similarly for the discrete cospans giving the interface to $\mathcal G$ and $\mathcal H$.  
We now make this identification throughout without further mention.  
Thus, we reach the following variant of DPO rewriting.
\begin{definition}[DPOI rewriting]\label{def:dpoi}
Given a span of morphisms $\mathcal G \xleftarrow{} n+m \xrightarrow{} \mathcal H$ in $\textbf{Hyp}(\Sigma)$,  we say \textit{$\mathcal G$ rewrites to $\mathcal H$ with interface $n+m$ via rewrite rule $\mathcal L \xleftarrow{} i+j \xrightarrow{} \mathcal R$} if there exists an object $\mathcal{L}^{\bot}$ and morphisms which complete the right commutative diagram in Figure~\ref{fig:dpo_dpoi} such that the two marked squares are pushouts.
\begin{figure}[t!]
    \begin{subfigure}[T]{0.35\columnwidth}
    \[
                \begin{tikzcd}
                    {\mathcal{L}} & {\mathcal{K}} & {\mathcal{R}} \\
                    {\mathcal{G}} & {\mathcal{L}^{\bot}} & {\mathcal{H}}
                    \arrow["f"', from=1-1, to=2-1]
                    \arrow[from=1-2, to=1-1]
                    \arrow[from=1-2, to=1-3]
                    \arrow[from=1-2, to=2-2]
                    \arrow[from=1-3, to=2-3]
                    \arrow["{\lrcorner}"{rotate=90, xshift=-0.25em, yshift=0.9em,font = \Large}, draw=none, from=2-1, to=1-2]
                    \arrow[from=2-2, to=2-1]
                    \arrow[from=2-2, to=2-3]
                    \arrow["{\lrcorner}"{rotate=180,xshift=-0.5em, yshift=2.35em,font = \Large}, draw=none, from=2-3, to=1-2]
\end{tikzcd}
    \]
    \end{subfigure}
    \hspace{0.15\columnwidth}
    \begin{subfigure}[T]{0.35\columnwidth}
        \[
    \begin{tikzcd}
        {\mathcal{L}} & {i+j} & {\mathcal{R}} \\
        {\mathcal{G}} & {\mathcal{L}^{\bot}} & {\mathcal{H}} \\
        & {n+m}
        \arrow["f"', from=1-1, to=2-1]
        \arrow[from=1-2, to=1-1]
        \arrow[from=1-2, to=1-3]
        \arrow[from=1-2, to=2-2]
        \arrow[from=1-3, to=2-3]
        \arrow["{\lrcorner}"{rotate=90,xshift=-0.25em, yshift=1em,font = \Large}, draw=none, from=2-1, to=1-2]
        \arrow[from=2-2, to=2-1]
        \arrow[from=2-2, to=2-3]
        \arrow["{\lrcorner}"{rotate=180,xshift=-0.9em, yshift=2.25em,font = \Large}, draw=none, from=2-3, to=1-2]
        \arrow[from=3-2, to=2-1]
        \arrow[from=3-2, to=2-2]
        \arrow[from=3-2, to=2-3]
    \end{tikzcd}
\]
    \end{subfigure}
    \caption{DPO and DPOI squares}
    \label{fig:dpo_dpoi}
\end{figure}

\end{definition}
\vspace{-1ex}
We need to restrict the notions of pushout complement and of match in order to preserve monogamy and directed acyclicity.  
\vspace{-1ex}
\begin{definition}[Boundary complement]
\label{def:boundary_original}
For MDA cospans $i \xrightarrow{a_1} L \xleftarrow{a_2} j$ and $n \xrightarrow{b_1} \mathcal G \xleftarrow{b_2} m$
and mono $f : \mathcal L \to \mathcal G$, a pushout complement $i + j \to \mathcal{L}^\bot \to \mathcal G$ as depicted in the square below
\[
    \scalebox{0.8}{
        \tikzfig{figures/DPOI_boundary_complement}
    }
\]
is called a boundary complement if $[c_1, c_2]$ is mono and there exist $d_1 : n \to \mathcal{L}^{\bot}$ and $d_2 : m \to \mathcal{L}^{\bot}$ making the above triangle commute and such that
\[
    n + j \xrightarrow{[d_1,c_2]} \mathcal{L}^{\bot} \xleftarrow{[d_2,c_1]} m + i
\]
is an MDA cospan. 
\end{definition}
Intuitively,  requiring boundary complements ensures that inputs are glued exclusively to outputs and vice-versa in the two pushout squares.
Yet, this restriction is not enough to make DPOI rewriting complete for $\catname{S}(\Sigma,\mathcal{E})$: there might be cospans with carriers $\mathcal{G}$ and $\mathcal{H}$ such that one can be rewritten into another but their corresponding morphisms are not equal modulo $\mathcal{E}$~\cite{bonchi_string_2022-2}.
To make the DPOI rewriting complete, we need to add a restriction on the image of the match.

\begin{definition}[Convex match]
We call a subgraph $\mathcal{H}$ of a hypergraph $\mathcal{G}$ a \emph{convex subgraph} if for all $v_i, v_j \in V_{\mathcal{H}}$ every path from $v_i$ to $v_j$ is also in $\mathcal{H}$.
We call a match $f : \mathcal L \to \mathcal G$ a \emph{convex match} if it is mono and its image is a convex subgraph of $\mathcal G$. 
   
\end{definition}

\begin{definition}[Convex DPOI rewriting]
\label{def:convex_dpo}
Let $\mathfrak{R}$ be a set of DPOI rewrite rules. 
Then, given $\mathcal G \xleftarrow{} n+m$ and $\mathcal H \xleftarrow{} n + m$ in $\catname{Hyp(\Sigma)}$, $\mathcal G$ rewrites convexly into $\mathcal H$ with interface $n + m$ --- notation $(\mathcal G \xleftarrow{} n + m ) \Rrightarrow_{\mathfrak{R}}  (\mathcal H \xleftarrow{} n + m )$ --- if there exist rule $\mathcal L \xleftarrow{} i + j \xrightarrow{} \mathcal R$ in $\mathfrak{R}$ and object $\mathcal{L}^{\bot}$ and cospan arrows $i+j \xrightarrow{} \mathcal{L}^{\bot} \xleftarrow{} n+m$ such that the DPOI diagram of Definition \ref{def:dpoi} commutes and its marked squares are pushouts 
and the following conditions hold
\begin{itemize}
    \item $f : L \to \mathcal G$ is a convex match;
    \item $i + j \to \mathcal{L}^{\bot} \to \mathcal G$ is a boundary complement in the leftmost pushout.
\end{itemize}
\end{definition}
It is then proved in~\cite{bonchi_string_2022-2} that this notion of DPOI rewriting is sound and complete for $\catname{S}(\Sigma, \mathcal{E})$: for any morphisms $f$ and $g$ in $\catname{S}(\Sigma)$,  we have that $f = g$ modulo $\mathcal{E}$ if and only if there exist a sequence of DPO rewrites (induced by $\mathcal{E}$) between the representations of $f$ and $g$ as (cospans of) hypergraphs.
As SMC equations get absorbed by the representation, it makes this approach computationally appealing as they do not need to be represented explicitly.

\section{E-Hypergraphs}\label{sec:e-hypergraphs}

Hierarchical (hyper-)graphs are a well-studied area of research \cite{plump:hierarchical-graphs, montanari:gs-lambda, palacz:hierarchical-transform, Gaducci:hierarchical-graphs, Ghica:hierarchical}.  
In this section,  we follow analogous steps to those used to achieve a combinatorial representation of $\catname{S}(\Sigma)$ by $\mathbf{MHypI}(\Sigma)$,  but in the semilattice-enriched setting.  
First we define an \textit{e-hypergraph},  which supports a hierarchical notion of ``e-box'' with distinct components. 
Then,  we extend this notion with interfaces via a generalisation of the cospan construction.  
Finally,  we restrict the morphisms under consideration to those satisfying the MDA condition,  as in the previous section.  

\begin{remark}
Here,  and in the rest of the paper,  we will restrict to dealing with signatures $\Sigma$ for which the corresponding family of $\Sigma$-terms does not contain terms of type $0 \to 0$ (apart from the monoidal unit).
This is a restriction we make in order to avoid complications to the technical development of DPOI rewriting for e-hypergraphs given in the subsequent section.
A discussion of a general case can be found in section~\ref{sec:dpo-fix} of Appendix.
\end{remark}

In the following,  we elide the obvious injections of $V$ and $E$ into $V+E$.

\begin{definition}[E-hypergraphs]
\label{def:e-homo}    
An \emph{e-hypergraph $\mathcal{G}$ over a signature $\Sigma$} is a tuple $(V,E,s,t,l,<,\consistency)$,  where $(V,E,s,t)$ is an unlabelled hypergraph;  $l :E \to \Sigma + 1$ is a labelling function modified to include an extra value $\bot$ in its codomain; $<$ is a strict partial order on $V + E$ called the \textit{child relation}, for which we introduce the following additional notation.
For $x \in V + E$,  the set of \emph{parents} (or, \emph{predecessors}) of $x$ is denoted as $[x) = \{x' ~|~ x' < x\; \}$.
We will denote the immediate parent (predecessor) $e$ of $x$ by writing $e <^{\mu} x$, and likewise $x$ is then called an immediate \emph{child} (or, \emph{successor}).
We call edges $e$ such that $l(e) = \bot$ \textit{hierarchical},  and edges $e$ and vertices $v$ such that $[e) = \varnothing$ and $[v) = \varnothing$ \emph{top-level}.
An edge $e$ is called maximal if $\{x' | e < x'\} = \varnothing$.
We require that the child relation satisfies the following conditions:
\begin{enumerate}
	\item each parent set contains exclusively hierarchical edges;
	\item each $x$ has at most one immediate parent;
	\item for all $e$ such that $e$ is maximal, $l(e) \not = \bot$;
	\item if $v \in s(e)$ then $e' <^{\mu} e$ iff $e' <^{\mu} v$ and similarly if $v \in t(e)$.
\end{enumerate}
Finally, $\consistency$ is a \textit{consistency relation} which is given by the union of a family of equivalence relations $\consistency_p$ on each set $\{x \in V + E ~|~ p <^\mu x\}$ of elements which share the same parent where each relation is also closed under connectivity, \textit{i.e.}, if $v \in s(e)$ or $v \in t(e)$ such that $p <^{\mu}(v)$ and $p <^{\mu}(e)$ then $v \consistency_{p} e$.
We require that $\consistency_{p} \not = (E_{p} + V_{p}) \times (E_{p} + V_{p})$ where $V_{p} = \{ v ~ | ~ p <^{\mu} v\}$ and $E_{p} = \{ e ~ | ~ p <^{\mu} e\}$.
Given an e-hypergraph $\mathcal{G}$ we can consider a corresponding \textit{underlying} hypergraph $\mathcal{G}'$ by forgetting $<$ and $\consistency$.
\label{def:consistency_properties}
\end{definition}
Intuitively,  the child relation equips a hypergraph with a hierarchical structure: certain edges,  labelled with $\bot$,  are meant to interpret the dashed e-boxes of string diagrams.  
These edges can be designated as ``parents" to sub-e-hypergraphs,  which interpret the various components of the e-boxes.  
These components are distinguished by the consistency relation, which partitions the maximal sub-e-hypergraph below the hierarchical edge.
An example of an e-hypergraph can be seen below in Figure~\ref{fig:e-cospan-example} (within a grey shaded region).
The e-hypergraph is given by $E = \{e_1, e_2\}, V = \{v_1, \ldots, w_4 \}$, $s$,$t$, $l = \{e_1 \mapsto \bot, e_2 \mapsto f\}$ and
\[
\begin{array}{ccc}
    e_1 <^{\mu} v_3 & v_3 \consistency v_4 & v_3 \not \consistency v_5\\
    \ldots & \ldots & \ldots\\
    e_1 <^{\mu} w_4 & w_3 \consistency w_4 & v_4 \not \consistency w_4
\end{array}
\]
A hierarchical edge is depicted with a dashed box.
The partition induced by $\consistency_{e_1}$ is depicted with a dashed vertical line.
The condition $\consistency_{p} \not = (E_{p} + V_{p}) \times (E_{p} + V_{p})$, in particular, does not allow hierarchical edges with no delimiting lines.
Finally, $e_1$ is the maximal edge and hence labelled.
\begin{definition}[E-hypergraph homomorphism]
An \textit{e-hypergraph homomorphism} $\phi: \mathcal{F} \to \mathcal{G}$ between e-hypergraphs $\mathcal{F},\mathcal{G}$ is a pair of functions $\phi_V : V_{\mathcal{F}} \to V_{\mathcal{G}}, \phi_E : E_{\mathcal{F}} \to E_{\mathcal{G}}$ such that
\begin{enumerate}
    \item $\phi$ is a hypergraph homomorphism
    \item for $v \in V_{\mathcal{F}}$ and $e \in E_{\mathcal{F}}$
	    \[
        \text{if } e <_{\mathcal{F}}^{\mu} v \text{ then } \phi(e) <_{\mathcal{G}}^{\mu} \phi(v)
        \]
        and for $e_1, e_2 \in E_{\mathcal{F}}$
        \[
        \text{if } e_1 <_{\mathcal{F}}^{\mu} e_2 \text{ then } \phi(e_1) <_{\mathcal{G}}^{\mu} \phi(e_2)
        \]
        \item for all $x_1, x_2 \in \mathcal{F}$,  $x_1 ~\consistency_{\mathcal{F}}~ x_2$ implies $\phi(x_1) ~\consistency_{\mathcal{G}}~ \phi(x_2)$. 
\end{enumerate}
\end{definition}
The conditions on e-hypergraph homomorphisms require to preserve immediate parents and the consistency relation.  
\begin{definition}[Category of e-hypergraphs]
The \emph{category of e-hypergraphs},  denoted  $\catname{EHyp(\Sigma)}$,  has e-hypergraphs as objects and e-hypergraph homomorphisms as morphisms.  
\end{definition}

The category of e-hypergraphs has coproducts,  given by the disjoint union of e-hypergraphs,  and an initial object given by the empty e-hypergraph.  
A concrete description of the pushout of two morphisms in this category is given in Appendix~\ref{sec:appendix:pushout}.
Generally,  the pushout of two e-hypergraph homomorphisms need not exist,  but we prove that it does in all scenarios that we are interested in; most importantly, the pushout for the composition of two cospans with discrete feet always exists.


When modelling enriched string diagrams using e-hypergraphs,  we must keep track of more than simply the ultimate external inputs and outputs of the diagram: each internal e-box has its own inputs and outputs,  which also must be tracked.  
Thus,  we introduce an \textit{extended} cospan construction.  


We use the notation $n \setminus m$ to denote the discrete e-hypergraph with $n - m$ vertices,  in particular with the vertices of $m$ removed from $n$ when vertices of $m$ is a sub-e-hypergraph of $n$.
\begin{definition}[Category of e-hypergraphs with interfaces]
    The category of \textit{e-hypergraphs with extended interfaces} $\Ecospans$ has discrete \textit{ordered} e-hypergraphs as objects,  with hom-sets $\Ecospans(n,m)$ consisting of isomorphism classes of \textit{extended cospans}, defined as follows:  
    \[
    n \xrightarrow{f_{ext}} n' \xrightarrow{f_{int}} \mathcal{G} \xleftarrow{g_{int}} m' \xleftarrow{g_{ext}} m
    \]
    where $\mathcal{G}$ is an e-hypergraph,  and $n, n', m, m'$ are discrete ordered e-hypergraphs,  $f_{ext},g_{ext}$ are monomorphisms in $\catname{EHyp(\Sigma)}$,  and the image of $f_{ext};f_{int}$ and of $g_{ext};g_{int}$ consist exclusively of top-level vertices,  and such that vertices in the strictly internal interface (defined below) are not top-level.  
    We will sometimes write $\mathcal G$ to denote the extended cospan,  where it is clear from context $\mathcal G$ is equipped with extended interfaces. 

\end{definition}
We will call $n$  \textit{external input interfaces} and $n'$ \textit{internal input interfaces}.
We call $n' \setminus f_{ext}(n)$  the \textit{strictly internal input interfaces}.  
We do analogously for the \emph{output interfaces},  with respect to $m$,  $m'$ and $m' \setminus g_{ext}(m)$.  
We occasionally conflate $f_{ext}$ with $f_{ext};f_{int}$ when it is clear from context,  and also conflate $n$ and $m$ with their images in $n'$ and $m'$,  and likewise $m, m, n', m'$ with their images in $\mathcal{G}$. 
Given an edge $e \in E_\mathcal{G}$ such that $l(e) = \bot$,  we call the \textit{inputs of $e$} the intersection of the strictly internal input interface of $\mathcal{G}$ with the immediate successors of $e$,  and analogously for the \emph{outputs of $e$}.  


\begin{figure}
    \[
    \scalebox{0.5}{
        \tikzfig{figures/extended_cospan_example}
    }    
    \]
    \captionsetup{belowskip=-3ex, skip=0pt}
    \caption{Morphism of $\Ecospans$}
    \label{fig:e-cospan-example}
\end{figure}

\begin{definition}
\label{def:iso}

Consider two extended cospans below 
\[
\scalebox{0.75}{
    \tikzfig{figures/isomorphic_e_cospans}
}
\]
and consider a relation 
\[
R = \{x R y \text{ if } f_{int}(x) \consistency f_{int}(y) \text{ for } x,y \in n'\}
\]
and let $S$ be its reflexive closure.
The latter partitions $n'$ into non-empty subsets $\{p_{j}\}_{j=1}^{k}$. 
We get an analogous partition for $m'$, $n''$, $m''$.
Two extended cospans are isomorphic if there exist isomorphisms $\alpha$, $\beta$ and $\gamma$ making the above diagram commute and such that $\alpha$ and $\gamma$ preserve order within each $p_i$.
\end{definition}
An example of an extended cospan can be seen in Figure~\ref{fig:e-cospan-example}.
The partition of the interfaces is given with colours.
The labels suggest how $f_{int}, f_{ext}$ (respectively, $g_{int}, g_{ext}$) act on vertices, with the ordering of vertices in the interfaces given left to right.
Intuitively, such notion of isomorphism suggests that the position of the image of $n$ within $n'$ relative to the strict internal interface does not matter, for example, vertices
\begin{tikzpicture}
	\begin{pgfonlayer}{nodelayer}
		\node [style=red node, label={above:$v_{3}$}] (0) at (-0.5, 3.5) {};
		\node [style=red node, label={above:$v_4$}] (1) at (0, 3.5) {};
	\end{pgfonlayer}
\end{tikzpicture}
can be swapped with vertices
\begin{tikzpicture}
	\begin{pgfonlayer}{nodelayer}
		\node [style=blue node, label={above:$v_{5}$}] (0) at (-0.5, 3.5) {};
		\node [style=blue node, label={above:$v_6$}] (1) at (0, 3.5) {};
	\end{pgfonlayer}
\end{tikzpicture}
in the input interface yielding an isomorphic cospan.
However, we may not swap
\begin{tikzpicture}
	\begin{pgfonlayer}{nodelayer}
		\node [style=red node, label={above:$v_{3}$}] (0) at (-0.5, 3.5) {};
		\node [style=red node, label={above:$v_4$}] (1) at (0, 3.5) {};
	\end{pgfonlayer}
\end{tikzpicture}
with each other --- this would yield a non-isomorphic cospan.
Further examples of isomorphic and non-isomorphic extended cospans can be found in Appendix~\ref{sec:appendix:iso}.


Composition of two morphisms 
\begin{align*}
	n \xrightarrow{f_{ext}} n' \xrightarrow{f_{int}} &\mathcal{F} \xleftarrow{f'_{int}} m' \xleftarrow{f'_{ext}} m\\
	m \xrightarrow{g_{ext}} m'\!' \xrightarrow{g_{int}} &\mathcal{G} \xleftarrow{g'_{int}} k' \xleftarrow{g'_{ext}} k
\end{align*} is computed in two stages.
First, $\mathcal{H}$ is computed as the result of the pushout square shown below: 
\[
\trimbox{0cm 0cm 0cm 0.75cm}{
\adjustbox{scale=0.9,center}
{\tikzfig{figures/pushout_e_cospans}}
}
\]
then,  the result of composition is defined as follows:
\[
n \xrightarrow{f_{ext};\iota_1} n' + (m'' \setminus m) \xrightarrow{h_{1}} \mathcal{H} \xleftarrow{h_2} k' + (m' \setminus m) \xleftarrow{g'_{ext};\iota_1} k
\]
where $h_i$ are defined as follows,  using $|$ to denote the restriction of a function on a discrete e-hypergraph. 
\[
    h_1 = [ f_{int};p_1, ~(g_{int};p_2)|_{m'' \setminus m} ]
\;
    h_2 = [ g'_{int};p_2, ~(f'_{int};p_1)|_{m' \setminus m} ]
\]
The identity of composition is given by the obvious extended cospan of identities. 
$\Ecospans$ inherits a symmetric monoidal structure from the coproduct (and initial object) structure of $\catname{EHyp({\Sigma}})$,  analogously to Definition~\ref{def:cspd}.

As in the standard cospan construction,  it is necessary to consider isomorphism classes of cospans since composition (defined by pushout) is associative only up-to isomorphism (since pushouts are unique only up-to isomorphism). 
We additionally introduce a notion of \textit{well-typing} of e-hypergraphs,  which ensures each component of an e-box has the same number of inputs (outputs) as every other component. 


\begin{definition}[Category of well-typed MDA e-hypergraphs with interfaces]
We call an extended cospan in $\Ecospans$,  as below,   \textit{monogamous directed acyclic (MDA)} if
\[
n \xrightarrow{f_{ext}} n' \xrightarrow{f_{int}} \mathcal{G} \xleftarrow{g_{int}} m' \xleftarrow{g_{ext}} m
\]
\begin{enumerate}
        \item underlying hypergraph of $\mathcal{G}$ is directed acyclic; 
        \item $f_{int}$, $g_{int}$ are monomorphisms; 
        \item in-degree and out-degree of every vertex is at most 1; 
        \item vertices with in-degree $0$ are precisely the image of $f_{int}$; 
        \item vertices with out-degree $0$ are precisely the image of $g_{int}$; 
\end{enumerate}
\label{def:monogamy_ehyp} 

Further,  consider the sets of input and output vertices,  $I$ and $O$ respectively,  of a hierarchical edge $e$ of an e-hypergraph $\mathcal{G}$. 
The consistency relation of $\mathcal{G}$ partitions $I$ and $O$.  
We call $e$ \emph{well-typed} if the size of each subset in the partition of $I$ is $|s(e)|$ and the size of each partition of $O$ is $|t(e)|$.
We will call an MDA extended-cospan in $\MdaEcospans$ \emph{well-typed} if all hierarchical edges in its carrier are well-typed.
We denote by $\MdaEcospans$ the subcategory of $\Ecospans$ consisting of well-typed MDA extended-cospans. 
\end{definition}


Unlike in the previous section,  in order to give an equivalence with $\catname{S}^+(\Sigma)$,  we must first develop the theory of DPOI rewriting for e-hypergraphs.
This is because the equations for semilattice enrichment are not subsumeable by the e-hypergraph representation, apart from the commutativity equation which is absorbed by the definition of isomorphic cospans.
Instead,  we will treat them via rewriting,  as we do with equations of the signature.  


\section{DPOI-Rewriting for E-Hypergraphs}

Because extended cospans have a more general notion of interface,  including \textit{internal} vertices,  DPOI rewriting as presented in Section \ref{sec:combinatorial-semantics} needs some adjustments.

We do not expect internal interfaces to be preserved during rewriting: for example,  when the semilattice equations are considered as rewrites. 
Thus,  we wish for a rewrite rule to be a pair of \textit{extended} cospans of e-hypergraphs with matching \textit{external} (but not necessarily internal) interfaces,  as follows: 
\[
    n \xrightarrow{} n' \xrightarrow{} \mathcal{L} \xleftarrow{} m' \xleftarrow{} m,
\qquad
    n \xrightarrow{} n'' \xrightarrow{} \mathcal{R} \xleftarrow{} m'' \xleftarrow{} m.
\] 
Analogously to Section \ref{sec:combinatorial-semantics}, observing that the following extended cospans express the same data:
\[
    0 \xrightarrow{} 0 \xrightarrow{} \mathcal{L} \xleftarrow{} n'' + m'' \xleftarrow{} n + m
\]
\[
    0 \xrightarrow{} 0 \xrightarrow{} \mathcal{R} \xleftarrow{} n' + m' \xleftarrow{} n + m
\] allows us to encode rewrite rules as extended cospans of the following form:
\[
\mathcal{L} \xleftarrow{} n' + m' \xleftarrow{} n + m \xrightarrow{} n'' + m'' \xrightarrow{} \mathcal{R},
\] which will fit into the DPO formalism. We make this identification throughout,  without further mention. 

Before introducing our definition of \textit{extended} DPOI rewriting,  we must modify the definition of boundary complement to guarantee that rewriting yields a monogamous directed acyclic e-hypergraph. 
First,  we introduce the  notion of a \textit{down-closed} graph,  which will be necessary for the subsequent development. 
\begin{definition}[Down-closed subgraph]
    We call  a sub-e-hypergraph $\mathcal G $ of $\mathcal H$ \emph{down-closed} if for all $e \in E_{\mathcal{G}}$,   all children of $e$ are also in $\mathcal{G}$.
\end{definition}    

\begin{definition}[Extended boundary complement]
\label{def:boundary_new}


For MDA cospans 
\[
    i \xrightarrow{} i' \xrightarrow{} \mathcal{L} \xleftarrow{} j' \xleftarrow{} j
\quad\text{ and }\quad
    n \xrightarrow{} n' \xrightarrow{} \mathcal{G} \xleftarrow{} k' \xleftarrow{} k
\] and mono $m : \mathcal{L} \to \mathcal{G}$ in $\catname{EHyp(\Sigma)}$, a pushout complement $i + j \to \mathcal{L}^{\bot} \to \mathcal{G}$
as depicted in the square below
\[
\scalebox{0.75}{

    \tikzfig{figures/DPOI_pushout_complement}
}
\]
is a \textit{boundary complement} if
\begin{enumerate} 
    \item $m(\mathcal L)$ is a convex down-closed e-hypergraph;
    \item $[c_1,c_2]$ is mono;
    \item for all $v,w$ in $\mathcal{G}$ in the image of $i + j$,  $v$ and $w$ share the same set of parents, and either $v,w$ are top-level or else $v \consistency w$;
    \item for all $v,w$ in $\mathcal{L}^{\bot}$ in the image of $i + j$,  $v$ and $w$ share the same set of parents, and either $v,w$ are top-level or else $v \consistency w$;
    \item there exist $d_1 : n \to \mathcal{L}^\bot$ and $d_2 : k \to \mathcal{L}^\bot$ making the above triangle commute; and
    \item if the image of $\mathcal{L}$'s external interfaces under $m$ consists exclusively of top-level vertices of $\mathcal{G}$ then there exists a \textit{well-typed} extended MDA cospan
\tikzcdset{row sep/mysize/.initial=0.25ex}.
\[
\hspace{-2em}
\begin{tikzcd}[cramped, row sep=mysize, column sep = tiny]
	{n + j} &&&&&& {k + i} \\
	& {n' \setminus (i' \setminus i) + j} & \vspace{16em} & {\mathcal{L}^{\bot}} & \vspace{16em} & {k' \setminus (j' \setminus j) + i}
    \arrow["{f_1 + id_{j}}"{xshift=-0.7ex}, from=1-1, to=2-2]
	\arrow["{f_2 + id_{i}}"'{xshift=1ex}, from=1-7, to=2-6]
	\arrow["{[g_1,c_2]}"{yshift=0.5ex}, from=2-2, to=2-4]
	\arrow["{[g_2,c_1]}"'{yshift=0.5ex}, from=2-6, to=2-4]
\end{tikzcd}
\]
    \item if the image of $\mathcal{L}$'s external interfaces under $m$ consists exclusively of not top-level vertices of $\mathcal{G}$ then there exists a \textit{not necessarily well-typed} extended MDA cospan
    \[
    \hspace{-2em} n \xrightarrow{f_1} n' \setminus (i' \setminus i) + j \xrightarrow{[g_1,c_2]} \mathcal{L}^{\bot} \xleftarrow{[g_2,c_1]} k' \setminus (j' \setminus j) + i \xleftarrow{f_2} k
    \]
\end{enumerate}
where $f_i$ and $g_i$ are defined as follows.  
The strictly internal interface $i' \setminus i$ of $\mathcal L$ is mapped to the internal interface $n'$ of $\mathcal G$,  since $\mathcal L$ is a down-closed subgraph of $\mathcal G$,  inducing an identification of $i' \setminus i$ in  $n'$. 
Then map $f_1$ is given by $g_{ext}$ with its codomain restricted to $n' \setminus (i' \setminus i)$.\footnote{Noting the image of $g_{ext}$ indeed lies within $n' \setminus (i' \setminus i)$}  The map $g_1$ is derived from the restriction of $g_{int}$ to type $n' \setminus (i' \setminus i) \to \mathcal G$ by further observing that $\mathcal L^\bot$ can be identified within $\mathcal G$ --- and in particular has the internal interface $n'$ of $\mathcal G$ minus $i' \setminus i$.  
The maps $f_2$ and $g_2$ are defined similarly.
%
\end{definition}

First, note that conditions (6) and (7) are disjoint.
Intuitively, the above definition means that when the occurrence of $\mathcal{L}$ within $\mathcal{G}$ is top-level (premise for condition (6)) then $\mathcal{L}$'s external interfaces become the part of $\mathcal{L}^{\bot}$'s external interfaces (as in the case of Definition~\ref{def:dpoi}), and when the occurrence is nested (premise of condition (7)), those interfaces become the part of $\mathcal{L}^{\bot}$'s strict internal interfaces: one can think about the latter as applying Definition~\ref{def:dpoi} in a nested context.
In both cases, when removing the occurrence of $\mathcal{L}$ from $\mathcal{G}$, $\mathcal{L}$'s strict internal interfaces are removed from strict internal interfaces of $\mathcal{G}$ to form internal interfaces of the corresponding MDA cospans above.
The morphisms are then constructed correspondingly to require that inputs (respectively, outputs) of $\mathcal{R}$ are glued to the outputs (respectively, inputs) of $\mathcal{L}^{\bot}$.

%


Note that,  in the above definition,  the extended cospan 
    \[
    \vspace{-1mm}
    n \xrightarrow{} n' \setminus (i' \setminus i) + j \xrightarrow{[g_1,c_2]} \mathcal{L}^{\bot} \xleftarrow{[g_2,c_1]} k' \setminus (j' \setminus j) + i \xleftarrow{} k
    \vspace{-0.5mm}
    \]
 is not necessarily well-typed.
This is because its input internal interface may contain vertices that previously were a part of the output internal interface and vice versa.
Examples and non-examples of a boundary complement can be found in Appendix~\ref{sec:appendix:iso}.

\begin{proposition}
\label{prop:boundary_unique}
    The boundary complement in~\ref{def:boundary_new} when it exists is unique.
\end{proposition}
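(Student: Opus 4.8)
The plan is to prove uniqueness by first reducing to the already-understood case of plain hypergraphs and then showing that the hierarchical data ($<$ and $\consistency$) of a boundary complement is entirely forced. Suppose $\mathcal{L}^{\bot}_1$ and $\mathcal{L}^{\bot}_2$ are two boundary complements for the same match $m:\mathcal{L}\to\mathcal{G}$ as in Definition~\ref{def:boundary_new}; I will construct an e-hypergraph isomorphism between them that commutes with the interface maps and with the legs into $\mathcal{G}$.

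First I would pass to underlying hypergraphs. By the explicit construction of pushouts in $\catname{EHyp(\Sigma)}$ (Appendix~\ref{sec:appendix:pushout}), the forgetful functor $\catname{EHyp(\Sigma)}\to\catname{Hyp(\Sigma)}$ sends the two pushout squares witnessing $\mathcal{L}^{\bot}_1$ and $\mathcal{L}^{\bot}_2$ to pushout squares in $\catname{Hyp(\Sigma)}$. The interface map $i+j\to\mathcal{L}$ is mono (being a composite of the mono external and internal interface maps of the MDA rule-cospan), and $\catname{Hyp(\Sigma)}$ is adhesive \cite{bonchi_string_2022-1}, so the pushout complement of a mono along the mono $m$ is unique up to isomorphism and its complement leg $\ell_k:\mathcal{L}^{\bot}_k\to\mathcal{G}$ is itself mono. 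Concretely, the underlying hypergraph of each $\mathcal{L}^{\bot}_k$ is $\mathcal{G}$ with the image of the interior of $\mathcal{L}$ (everything outside $i+j$) deleted; condition~(1), that $m(\mathcal{L})$ is convex and down-closed, guarantees this deletion removes whole hierarchical sub-e-hypergraphs and hence yields a well-defined e-hypergraph. This produces an isomorphism $\theta$ of underlying hypergraphs with $\ell_1=\ell_2\circ\theta$, compatible with the interface maps.

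It remains to check that $\theta$ preserves and reflects the child and consistency relations, i.e. that it is an isomorphism in $\catname{EHyp(\Sigma)}$. Since each $\ell_k$ is a monomorphism of e-hypergraphs, both $<_{\mathcal{L}^{\bot}_k}$ and $\consistency_{\mathcal{L}^{\bot}_k}$ are contained in the relations pulled back from $\mathcal{G}$ along $\ell_k$. For the reverse inclusions I would use the pushout description again: in $\mathcal{G}=\mathcal{L}+_{i+j}\mathcal{L}^{\bot}_k$ the relations $<_{\mathcal{G}}$ and $\consistency_{\mathcal{G}}$ are generated by the images of the component relations identified along $i+j$. Hence for elements lying in the image of $\mathcal{L}^{\bot}_k$ but outside the deleted interior, any relation holding in $\mathcal{G}$ must already originate in $\mathcal{L}^{\bot}_k$, since a relation contributed solely by $\mathcal{L}$ can only involve interface elements, where it is shared by $\mathcal{L}^{\bot}_k$ through $i+j$. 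Thus $<_{\mathcal{L}^{\bot}_k}$ and $\consistency_{\mathcal{L}^{\bot}_k}$ are exactly the restrictions of $<_{\mathcal{G}}$ and $\consistency_{\mathcal{G}}$, which are independent of $k$, so $\theta$ transports one structure onto the other and is an e-hypergraph isomorphism.

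The main obstacle is the reverse inclusion for the consistency relation. Unlike the child relation, $\consistency$ is a union of connectivity-closed equivalence relations, so I must verify that the partition it induces on the shared interface agrees for the two complements and is faithfully recorded inside each $\mathcal{L}^{\bot}_k$; this is exactly what boundary-complement conditions~(3) and~(4) enforce, by requiring interface vertices glued in the two pushouts to share parents and to be consistency-related (or else top-level). Making precise that the pushout relations are \emph{generated} rather than merely \emph{contained}, via the appendix construction, is the technical crux; once that is established the uniqueness follows.
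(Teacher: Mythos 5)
Your first step is fine, and is in fact a tidier route than the paper's: the paper establishes uniqueness of the underlying vertex/edge sets and of the source/target maps by hand (using monicity of $m$ and $[c_1,c_2]$ together with conditions (6)/(7) of Definition~\ref{def:boundary_new}), whereas you obtain the same data from uniqueness of pushout complements along monos in an adhesive category. The genuine gap is in the step you yourself flag as the \enquote{technical crux}: showing that $<_{\mathcal{L}^{\bot}_k}$ and $\consistency_{\mathcal{L}^{\bot}_k}$ are \emph{exactly} the restrictions of the relations of $\mathcal{G}$. Your justification --- that a relation of $\mathcal{G}$ between elements in the image of $\mathcal{L}^{\bot}_k$ \enquote{must already originate in $\mathcal{L}^{\bot}_k$, since a relation contributed solely by $\mathcal{L}$ can only involve interface elements} --- does not engage with how pushouts in $\catname{EHyp(\Sigma)}$ are actually formed. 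Because every e-hypergraph must satisfy the connectivity axiom (condition (4) of Definition~\ref{def:e-homo}: elements incident to a common edge share their immediate parent), the pushout construction of Appendix~\ref{sec:appendix:pushout} does not merely take the union of the images of the component relations; it \emph{closes} them under connectivity, and this closure can a priori assign a parent (or a consistency class) to an element whose preimage in $\mathcal{L}^{\bot}_k$ is top-level, once that element becomes connected, through the glued copy of $\mathcal{L}$, to something that has one. Since e-hypergraph homomorphisms preserve but do not reflect immediate parents and consistency, monicity of the leg $\mathcal{L}^{\bot}_k \to \mathcal{G}$ only gives you one inclusion; the reverse inclusion is essentially the whole content of the proposition, and your proposal asserts it rather than proves it. Ruling the closure effects out does use conditions (1), (3), (4) as you suggest, but that deduction is the proof, not a remark.

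For comparison, the paper closes this gap by a different technique from the \enquote{generation} lemma you defer to: it argues directly from the universal property. Supposing two complements disagree --- say some $v$ is top-level in one complement while $[g_V(v)) \neq \varnothing$ in $\mathcal{G}$, or two elements are consistent in one and not the other --- it constructs a modified codomain $\mathcal{G}'$ from $\mathcal{G}$ by deleting the offending parents (respectively, splitting the offending consistency class), then checks, using down-closure of $m(\mathcal{L})$, conditions (3)/(4), and the fact that the image of $i+j$ in $\mathcal{L}$ is top-level, that both $\mathcal{L}$ and $\mathcal{L}^{\bot}$ still map to $\mathcal{G}'$ compatibly. The pushout property then yields a mediating morphism $\mathcal{G} \to \mathcal{G}'$, which cannot exist because homomorphisms preserve immediate parents and consistency --- a contradiction. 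To complete your outline you would need either to reproduce this universal-property argument, or to prove, as a lemma about the explicit construction in Theorem~\ref{th:existence_of_pushouts}, that every parent/consistency pair of the pushout descends from a pair in one of the two components under the boundary-complement hypotheses; either way, the work you have omitted is the heart of the proposition.
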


We are now ready to define \textit{convex extended DPOI (EDPOI) rewriting} for $\catname{EHyp({\Sigma})}$.  
It is analogous to Definition~\ref{def:convex_dpo},  except we must construct internal interfaces explicitly.
More precisely, when removing the occurrence of $\mathcal{L}$ from $\mathcal{G}$, \emph{i.e.}, by computing the pushout complement, the internal interfaces of the resulting e-hypergraph should be modified to \textit{exclude} the vertices corresponding to $\mathcal{L}$'s strictly internal interfaces (since the vertices they map to have been removed). 
Then, when gluing $\mathcal{R}$ into the hole,  the internal interfaces of the resulting e-hypergraph should be modified to \textit{include} the strictly internal interfaces of $\mathcal{R}$ (since new internal interfaces for them to map to have been added).

\begin{definition}[Convex EDPOI rewriting]
\label{def:dpoi-e}
Given an extended span of morphisms 
\[
    \mathcal{G} \xleftarrow{} n' + k' \xleftarrow{} n + k \xrightarrow{} n'' + k'' \xrightarrow{} \mathcal{H}
\]
in $\catname{EHyp(\Sigma)}$, we say $\mathcal{G}$ \textit{rewrites (convexly) to} $\mathcal{H}$ (\textit{with external interface} $n + k$ and \textit{taking internal interface} $n'+k'$ \textit{to} $n'' + k''$) --- denoted by $\mathcal{G} \Rrightarrow \mathcal H$  --- \textit{via a rewrite rule} 
\[
    \mathcal{L} \xleftarrow{} i' + j' \xleftarrow{} i + j \xrightarrow{} i'' + j'' \xrightarrow{} \mathcal{R}
\] 
if there exists an object $\mathcal{C}$ and morphisms which complete the following commutative diagram 
\[
 \scalebox{0.75}{
    \tikzfig{figures/DPOI_square}
 }
\]
such that the two marked squares are pushouts and the following conditions hold:
    \begin{enumerate}
        \label{dpoi-e:assumptions}
        \item $i + j \to \mathcal{C} \to \mathcal{G}$ is a boundary complement;
	\item the internal interfaces of $\mathcal H$ are such that:
	\[
		n'' = n' \setminus (i' \setminus i) + (i'' \setminus i) \qquad k'' = k' \setminus (j' \setminus j) +  (j'' \setminus j)\\
	\]
	\item the map $f_1 = [g_1,h_1]: n'' \to \mathcal H$ in the diagram above consists of $g_1$ as defined in Definition \ref{def:boundary_new} of boundary complement,  and $h_1$ which is the restriction of the composite $i'' \to \mathcal R \to \mathcal H$ to $i'' \setminus i$,  and similarly for the map $f_2: k'' \to \mathcal H$.  
    \end{enumerate}
\end{definition}
Given  a set $\mathfrak{R}$ of EDPOI rewrite rules and e-hypergraphs with extended interfaces $\mathcal G$ and $\mathcal H$,  we write $\mathcal G \Rrightarrow_\mathfrak{R} \mathcal H$ if there exists a EDPOI rewrite in $\mathfrak R$ such that via it $\mathcal G$ rewrites to $\mathcal H$,  and we write $\mathcal G \Rrightarrow^*_\mathfrak{R} \mathcal H$ for reflexive, symmetric, and transitive closure of $\Rrightarrow{}$.
An example of rewriting of extended cospans can be found in Appendix~\ref{sec:appendix:iso}.

With our modified definition of extended DPOI rewriting in hand,  we proceed to prove that our combinatorial representation of $\catname{SLat}$-SMCs is sound and complete in an appropriate sense.

   
%
%
%
%
%
%
%
%

\section{Soundness and Completeness}\label{sec:soundness-and-completeness}

This section contains the main technical results of the paper,  building on those of \cite{bonchi_string_2022-2}.
We give the interpretation of $\catname{S}(\Sigma)^{+}$ terms in $\WellTypedMdaEcospans$ by extending the interpretation of $\catname{S}(\Sigma)$ in $\MdaCospans$ (see Appendix) by first endowing each hom-set of cospans in $\WellTypedMdaEcospans$ with a semilattice structure turning it into an $\catname{SLat}$-category.
Detailed proofs of the statements in this section can be found in Appendix~\ref{proof:appendix:soundness}.


\begin{definition}
We define the join $+$ of two extended cospans in $\WellTypedMdaEcospans(n,m)$ as in Figure~\ref{fig:A+B}
by introducing a hierarchical edge with $n$ inputs and $k$ outputs that has both carriers of the two cospans as its successors.
The two carriers become two different consistent components of a newly introduced hierarchical edge such that the interface morphisms are preserved.

\end{definition}

To make this newly defined $+$ obey the laws of a semilattice-enriched SMC we will quotient $\WellTypedMdaEcospans$ by certain EDPOI structural rewrite rules.
We define the appropriate notion of quotient by EDPOI rewrites next.  

\begin{definition}[Quotient by rewrites]  
Given a set of EDPOI rewrite rules $\mathcal{E}$,  we denote by $\WellTypedMdaEcospans/\mathcal{E}$ the category $\WellTypedMdaEcospans$ quotiented by the following relation.
\vspace{-2mm}
\[
	f \sim g \quad \text{if} \quad f \Rrightarrow^{*}_{\mathcal{E}} g ~ . 
\vspace{-1mm}
\]
\end{definition}

\begin{figure}
    \[
    \scalebox{0.55}{
    \tikzfig{figures/f_plus_g_new}
    }
    \]
    \captionsetup{belowskip=-3.5ex}
    \caption{$+$ of two morphisms in $\WellTypedMdaEcospans$}
    \label{fig:A+B}
\end{figure}


\begin{definition}
	Let $\mathcal{S}$ be the set of EDPOI rewrite rules (or rather schema rules) that correspond to string diagrammatic equations in Figure~\ref{fig:string-equations}.
\end{definition}

For example, the first equations would yield two EDPOI schema rules (one for each reading the span from left to right and from right to left):

\[
\adjustbox{scale=0.4}{
	\tikzfig{figures/semilattice_rule_1}~.
}
\]
Note that we do not need to introduce rewrite rules for commutativity equations as these are absorbed by our notion of isomorphism in Definition~\ref{def:iso}.
Then we have the following result. 
\begin{proposition}[Soundness]
\label{prop:soundness}
The category $\WellTypedMdaEcospans/{\mathcal{S}}$ is a semilattice-enriched PROP. 
\end{proposition}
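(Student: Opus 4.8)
The plan is to verify the three defining features of a semilattice-enriched PROP in turn: that $\WellTypedMdaEcospans/\mathcal{S}$ is a well-defined strict symmetric monoidal category with the natural numbers as objects and $n \otimes m = n+m$; that each hom-set carries a semilattice structure under the join $+$; and that composition and tensor are $\catname{SLat}$-morphisms, i.e.\ that they distribute over $+$. Since $\WellTypedMdaEcospans$ is already symmetric monoidal with discrete ordered e-hypergraphs (equivalently, natural numbers) as objects, inheriting $n \otimes m = n+m$ from the coproduct of $\catname{EHyp}(\Sigma)$, the bulk of the work is to check that the quotient by $\mathcal{S}$ is well-behaved and that the enrichment axioms of Figure~\ref{fig:string-equations} hold after quotienting.

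The crucial preliminary step is to show that the relation $\sim$ generated by $\Rrightarrow^{*}_{\mathcal{S}}$ is a \emph{congruence} with respect to $(;\!)$, $\otimes$, and the join $+$; only then does the quotient inherit a monoidal structure and does $+$ descend to the hom-sets of the quotient. I would establish this by showing that any EDPOI rewrite can be performed in context: given $\mathcal{G} \Rrightarrow_{\mathcal{S}} \mathcal{H}$ and an arbitrary cospan to pre/post-compose or tensor with, the same rule applies to the composite and produces the corresponding composite of $\mathcal{H}$. The mechanism enabling this is precisely the boundary-complement construction of Definition~\ref{def:boundary_new}, whose clauses (6) and (7) split into the cases where the match is top-level or nested inside an e-box; in either case the surrounding diagram is left untouched and reattached via the glued interfaces. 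Compatibility with $+$ follows because the join inserts the two carriers as consistent components of a single hierarchical edge, so a rewrite confined to one component leaves the other component, and the interface maps, intact.

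Granting the congruence property, the semilattice and enrichment axioms are read off directly from Figure~\ref{fig:string-equations}, which is exactly the family of equations that $\mathcal{S}$ is built from. Commutativity of $+$ requires no rewrite: it is absorbed into the notion of isomorphism of extended cospans (Definition~\ref{def:iso}), since permuting the consistent components of the introduced hierarchical edge yields an isomorphic, hence equal, cospan. Associativity and idempotency of $+$ are the remaining equations of the final four, and hold in the quotient because they are instances of rules in $\mathcal{S}$. Likewise the four distributivity laws $f;(g+h) = f;g+f;h$, $(f+g);h = f;h+g;h$, $f \otimes (g+h) = f \otimes g + f \otimes h$, and $(f+g) \otimes h = f \otimes h + g \otimes h$ of Definition~\ref{def:enriched-prop} are among the schema rules of $\mathcal{S}$, so they too hold after quotienting; these are precisely the conditions making $(;\!)$ and $\otimes$ morphisms of $\catname{SLat}$. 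Assembling these facts, each hom-set becomes an (unbounded) semilattice respected by the structural morphisms, yielding the enriched PROP.

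I expect the congruence property --- specifically, showing that each schema equation of Figure~\ref{fig:string-equations} is realizable as a genuine EDPOI rewrite \emph{in arbitrary context} while preserving the well-typed MDA conditions --- to be the main obstacle. One must check that instantiating a schema rule against a convex, down-closed match admits a boundary complement (unique, by Proposition~\ref{prop:boundary_unique}), and that the resulting carrier $\mathcal{H}$ is again a well-typed MDA e-hypergraph, so that the rewrite stays within $\WellTypedMdaEcospans$. The nested case (clause (7) of Definition~\ref{def:boundary_new}), where the rewrite occurs strictly inside an e-box and may temporarily yield a not-necessarily-well-typed internal interface, will require the most care to reconcile with the global well-typing demanded of morphisms, and is where I would concentrate the detailed verification.
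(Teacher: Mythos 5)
Your proposal is correct and follows essentially the same route the paper takes (the paper leaves this proof largely implicit): the PROP structure is inherited from the cospan construction, the distributivity and associativity/idempotence equations hold in the quotient precisely because they are the schema rules of $\mathcal{S}$, and commutativity is absorbed by the isomorphism of extended cospans of Definition~\ref{def:iso}. Your identification of the congruence property of $\Rrightarrow^{*}_{\mathcal{S}}$ as the real technical content is exactly right --- in the paper this burden is discharged by the context-induction and boundary-complement arguments (via Lemma~\ref{lemma:decomposition} and Proposition~\ref{prop:boundary_unique}) that appear in the proof of Theorem~\ref{thm:full-completeness}.
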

Note that the symmetric monoidal equations are absorbed in the hypergraph representation,  but the semilattice equations are instead covered by the fact we have quotiented our representation by certain rewrites. 
Ultimately, we want to give interpretation of morphisms in $\catname{S}^{+}(\Sigma)$ in $\WellTypedMdaEcospans / \mathcal{S}$ for which we will need a few lemmas.




\begin{lemma}
\label{lemma:2functor_equiv}
We have the following equivalence obtained by applying the free enrichment 2-functor $\mathcal{F}$ to $\catname{S}(\Sigma) \cong \MdaCospans$
\[
\catname{S}(\Sigma)^{+} \cong \MdaCospans^{+}~.
\]
\end{lemma}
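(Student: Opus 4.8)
The plan is to recognise the statement as a direct instance of the general principle that \emph{$2$-functors preserve equivalences}, applied to the free-enrichment $2$-functor $\mathcal{F}$. By definition of the $(-)^+$ operation we have $\catname{S}(\Sigma)^+ = \mathcal{F}(\catname{S}(\Sigma))$, and reading the superscript uniformly, $\MdaCospans^+ = \mathcal{F}(\MdaCospans)$. Hence it suffices to show that $\mathcal{F}$ carries the equivalence of Theorem~\ref{thm:prop-equiv} to an equivalence in $\catname{PROP}^+$.

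First I would unpack Theorem~\ref{thm:prop-equiv} into explicit data: a pair of PROP functors $S : \catname{S}(\Sigma) \to \MdaCospans$ and $T : \MdaCospans \to \catname{S}(\Sigma)$, together with monoidal natural isomorphisms $\id \Rightarrow TS$ and $ST \Rightarrow \id$. Because $\mathcal{F}$ is a $2$-functor it preserves $1$-cell composition, identities, and invertibility of $2$-cells; applying it therefore produces $1$-cells $\mathcal{F}(S), \mathcal{F}(T)$ together with natural isomorphisms $\id = \mathcal{F}(\id) \Rightarrow \mathcal{F}(T)\mathcal{F}(S)$ and $\mathcal{F}(S)\mathcal{F}(T) \Rightarrow \mathcal{F}(\id) = \id$, exhibiting $\mathcal{F}(S)$ and $\mathcal{F}(T)$ as mutually quasi-inverse. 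This already yields $\catname{S}(\Sigma)^+ \cong \MdaCospans^+$ as $\catname{SLat}$-categories.

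The step requiring care---and the one I expect to be the main obstacle---is upgrading this to an equivalence in $\catname{PROP}^+$ rather than merely in $\catname{SLat}\text{--}\catname{Cat}$. For this I would invoke the fact, implicit in defining $\catname{PROP}^+$ as the image of $\catname{PROP}$ under the adjunction, that $\mathcal{F}$ restricts to a $2$-functor $\catname{PROP} \to \catname{PROP}^+$: the free-semilattice construction on hom-sets is compatible with $\otimes$ precisely because the tensor is bilinear over the free joins (the distributivity laws following Definition~\ref{def:enriched-prop}), and identity-on-objects PROP functors are sent to identity-on-objects $\catname{SLat}$-PROP functors. Granting this, $\mathcal{F}(S)$ and $\mathcal{F}(T)$ are $\catname{SLat}$-PROP functors and the two natural isomorphisms above are monoidal, so the equivalence indeed lives in $\catname{PROP}^+$, as required. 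I would also note that should Theorem~\ref{thm:prop-equiv} in fact be an \emph{isomorphism} of PROPs (both categories being identity-on-objects over the same objects), the $2$-cell bookkeeping becomes unnecessary, since $\mathcal{F}$ preserves isomorphisms of objects outright.
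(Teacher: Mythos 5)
Your proposal takes essentially the same route as the paper: the paper's entire proof of this lemma is precisely to apply the free-enrichment 2-functor $\mathcal{F}$ of the 2-adjunction $\mathcal{F} \dashv \mathcal{U} : \catname{SLat}\text{--}\catname{Cat} \to \catname{Cat}$ to the equivalence $\catname{S}(\Sigma) \cong \MdaCospans$ of Theorem~\ref{thm:prop-equiv}, using that 2-functors preserve (internal) equivalences. Your extra bookkeeping --- checking that the result lands in $\catname{PROP}^{+}$ (which holds by the paper's definition of $\catname{PROP}^{+}$ as the image of $\catname{PROP}$ under this adjunction, with monoidal compatibility coming from $F : \catname{Set} \to \catname{SLat}$ being strong monoidal) and observing that the underlying equivalence is identity-on-objects --- simply makes explicit details the paper leaves implicit.
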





\begin{lemma}
\label{lemma:cospans_plus_equiv}
We have the following equivalence of $\catname{SLat}$-categories
\vspace{-2mm}
\[
\MdaCospans^{+} \cong \MdaEcospans/\mathcal{S}~.
\vspace{-1mm}
\]
\end{lemma}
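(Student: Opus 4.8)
The plan is to construct an $\catname{SLat}$-functor $\Phi \colon \MdaCospans^{+} \to \MdaEcospans/\mathcal{S}$ and show it is full, faithful, and essentially surjective, which by the equivalence criterion stated in the definition of an $\catname{SLat}$-functor suffices. I would build $\Phi$ abstractly from the free--forgetful 2-adjunction $\mathcal{F} \dashv \mathcal{U}$. By the Soundness result (Proposition~\ref{prop:soundness}) the target $\MdaEcospans/\mathcal{S}$ is a semilattice-enriched PROP, hence an $\catname{SLat}$-category. There is an evident ordinary functor $J \colon \MdaCospans \to \mathcal{U}(\MdaEcospans/\mathcal{S})$ sending each MDA cospan to its representation as an \emph{e-box-free} extended cospan (empty child and consistency relations, so $n' = n$ and $m' = m$); functoriality is immediate since composition of e-box-free extended cospans reduces to the ordinary cospan pushout. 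Transposing $J$ across the adjunction yields the desired $\catname{SLat}$-functor $\Phi = \widehat{J} \colon \mathcal{F}(\MdaCospans) = \MdaCospans^{+} \to \MdaEcospans/\mathcal{S}$, which automatically preserves joins. Concretely, $\Phi$ sends a formal join $c_1 + \cdots + c_k$ to the extended cospan of Figure~\ref{fig:A+B}, namely a single top-level hierarchical e-box whose consistent components are the e-box-free carriers of the $c_i$.

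Both categories have the natural numbers as objects and $\Phi$ is identity-on-objects, so essential surjectivity is immediate. For fullness I would prove a normal-form theorem for $\mathcal{S}$: every well-typed MDA extended cospan is $\mathcal{S}$-equivalent to a \emph{flat} join, that is, to a single top-level e-box all of whose consistent components are e-box-free. The argument proceeds by induction on e-box nesting depth and total edge count, repeatedly applying the structural rewrites of $\mathcal{S}$ --- the distributivity rules push e-boxes outward through composition and tensor, the associativity rule collapses nested e-boxes into one top-level box, and the idempotence rule deletes duplicated components. Once in flat form, each consistent component is an e-box-free well-typed MDA cospan, hence by Theorem~\ref{thm:prop-equiv} the image under $J$ of a morphism of $\MdaCospans$, so the whole extended cospan is $\Phi$ of the corresponding formal join.

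For faithfulness I would argue that $\mathcal{S}$ identifies images of formal joins exactly according to the semilattice-enrichment congruence and nothing more. Soundness already shows each rule of $\mathcal{S}$ is valid for the distributivity and semilattice laws, so if $c_1 + \cdots + c_k$ and $d_1 + \cdots + d_\ell$ have $\mathcal{S}$-equivalent images then the two joins are forced equal in any semilattice-enriched PROP receiving $\MdaCospans$; since by Lemma~\ref{lemma:2functor_equiv} the source is the \emph{free} such enrichment, equality already holds in $\MdaCospans^{+}$. The one point requiring care is that the isomorphism relation on extended cospans (Definition~\ref{def:iso}) absorbs \emph{only} the commutativity of $+$ and the reordering of interface vertices within a consistent component, and no further identifications --- which is exactly what its formulation guarantees, matching the remark that commutativity need not appear as a rewrite in $\mathcal{S}$.

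The main obstacle is the normal-form theorem underpinning fullness: one must show that exhaustively applying the distributivity, associativity, and idempotence rewrites always terminates in a flat join, and that these e-box manipulations never violate the MDA or well-typedness conditions. Establishing termination together with invariance of well-typedness under the distributivity rewrites --- which duplicate material across components and reshuffle the internal interfaces --- is the delicate part. Faithfulness, by contrast, is essentially formal once Soundness and the freeness of Lemma~\ref{lemma:2functor_equiv} are in hand.
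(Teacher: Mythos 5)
Your functor is the paper's functor: the paper likewise sends a canonical join $f_1 + \cdots + f_n$ of pairwise-distinct MDA cospans to $\mathrm{E}(f_1) + \cdots + \mathrm{E}(f_n)$, where $\mathrm{E}$ is the e-box-free embedding of Remark~\ref{remark:embedding_functor} and $+$ is the hierarchical-edge join of Figure~\ref{fig:A+B}; your adjunction transposition is just a more abstract packaging of the same data. Essential surjectivity is handled identically, and your fullness plan coincides with the paper's, which invokes Lemma~\ref{lemma:normal_form} --- exactly your flat-join normal form (the paper states it without proof; your induction on nesting depth is a reasonable sketch of what is missing).

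The genuine gap is faithfulness. Your inference --- soundness validates each rule of $\mathcal{S}$, hence $\mathcal{S}$-equivalent images force the joins to be equal in any semilattice-enriched PROP receiving $\MdaCospans$, hence equal in the free enrichment --- runs soundness in the wrong direction. Proposition~\ref{prop:soundness} says that equations derivable from the enrichment axioms yield $\mathcal{S}$-equivalent images; faithfulness needs the converse: an arbitrary zig-zag of EDPOI rewrites between the images must be lifted back to an equational derivation between the joins. This is not formal, because intermediate e-hypergraphs in the rewrite sequence need not be images of formal joins at all, and a single DPO step is a local replacement that corresponds to an equation applied in a context only after one decomposes the ambient e-hypergraph around the match (Lemma~\ref{lemma:decomposition}) and uses uniqueness of boundary complements (Proposition~\ref{prop:boundary_unique}); that lifting is precisely the technical content of the faithfulness half of Theorem~\ref{thm:full-completeness}, not a consequence of freeness. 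Nor can freeness of $\MdaCospans^{+}$ (Lemma~\ref{lemma:2functor_equiv}) substitute for it: freeness classifies functors \emph{out of} $\MdaCospans^{+}$, whereas faithfulness of a given $\Phi$ would require an enriched retraction $\MdaEcospans/\mathcal{S} \to \MdaCospans^{+}$, whose well-definedness on $\mathcal{S}$-equivalence classes is exactly the point in question. The paper instead argues faithfulness combinatorially: for canonical forms $f_1 + \cdots + f_n \neq g_1 + \cdots + g_m$, either $n \neq m$ and the images have different numbers of components, or $n = m$ and $Mf = Mg$ would force a permutation $\sigma$ with $Mf_i = Mg_{\sigma(i)}$, whence $f_i = g_{\sigma(i)}$ by faithfulness of $\mathrm{E}$, contradicting $f \neq g$. (Even this quietly assumes distinct normal forms are never $\mathcal{S}$-equivalent, but that is a much weaker, purely combinatorial property of the rewriting system than the completeness-style statement your argument presupposes.)
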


\begin{theorem}
	\label{thm:completeness_simple}
	We have the following equivalence of $\catname{SLat}$-categories
	\vspace{-2mm}
	\[
		\catname{S}(\Sigma)^{+} \cong \WellTypedMdaEcospans / \mathcal{S}~.
	\vspace{-1mm}
	\]	
	\end{theorem}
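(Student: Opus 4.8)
The plan is to obtain the stated equivalence simply by composing the two immediately preceding lemmas, relying on the transitivity of equivalence of $\catname{SLat}$-categories. The first thing I would note is that $\WellTypedMdaEcospans$ and $\MdaEcospans$ denote the same category of well-typed MDA extended cospans, so the target $\WellTypedMdaEcospans/\mathcal{S}$ of this theorem is literally the target $\MdaEcospans/\mathcal{S}$ appearing in Lemma~\ref{lemma:cospans_plus_equiv}. Thus no separate identification of categories is needed beyond this observation.

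Next I would invoke the two lemmas in sequence. Lemma~\ref{lemma:2functor_equiv} supplies the $\catname{SLat}$-equivalence $\catname{S}(\Sigma)^{+} \cong \MdaCospans^{+}$, itself obtained by transporting the unenriched equivalence $\catname{S}(\Sigma) \cong \MdaCospans$ of Theorem~\ref{thm:prop-equiv} along the free-enrichment 2-functor $\mathcal{F}$; and Lemma~\ref{lemma:cospans_plus_equiv} supplies $\MdaCospans^{+} \cong \MdaEcospans/\mathcal{S}$. The only genuine verification left is that the composite of the two witnessing $\catname{SLat}$-functors is again an equivalence in the enriched sense, i.e.\ full, faithful, and essentially surjective. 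This is routine: the hom-semilattice components $F_{A,B}$ of each witness are isomorphisms in $\catname{SLat}$, and $\catname{SLat}$-isomorphisms compose component-wise to give the fullness and faithfulness of the composite, while essential surjectivity is inherited transitively. Chaining then gives
\[
\catname{S}(\Sigma)^{+} \cong \MdaCospans^{+} \cong \MdaEcospans/\mathcal{S} = \WellTypedMdaEcospans/\mathcal{S},
\]
which is exactly the claim.

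I expect no real obstacle in the present statement itself: the entire mathematical weight sits in the two lemmas rather than in the theorem, so this reduces to a one-line transitivity argument. If there is a subtle point to guard against, it is only making sure the composite functor respects the enriched (hom-semilattice) structure rather than merely the underlying categorical equivalence — but this is immediate since both constituent equivalences are already equivalences of $\catname{SLat}$-categories. The deeper content I would expect to have been discharged beforehand lives in Lemma~\ref{lemma:cospans_plus_equiv}, where one must establish that quotienting the well-typed MDA e-hypergraph cospans by the structural semilattice rewrites $\mathcal{S}$ exactly reconstructs the free semilattice enrichment of $\MdaCospans$, i.e.\ that the e-box and consistency structure together with the equations of Figure~\ref{fig:string-equations} faithfully model the join $+$. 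Granting that, the theorem follows at once.
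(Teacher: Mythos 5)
Your proposal is correct and matches the paper's own (implicit) argument: the theorem is obtained precisely by chaining Lemma~\ref{lemma:2functor_equiv} with Lemma~\ref{lemma:cospans_plus_equiv}, noting that $\MdaEcospans$ and $\WellTypedMdaEcospans$ denote the same category of well-typed MDA extended cospans, so all the mathematical weight sits in those two lemmas. Your added remark that composing $\catname{SLat}$-equivalences yields an $\catname{SLat}$-equivalence is the only verification needed, and it is routine, exactly as you say.
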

\vspace{-2mm}
The benefit of working with the equivalence defined above as compared to the one in Lemma~\ref{lemma:2functor_equiv} is that the former allows us to define EDPOI rewriting more naturally as every morphism is an (extended) cospan.
Finally, we can state the following result.

\begin{theorem}[Full completeness]
\label{thm:full-completeness}
Let $\mathcal{E}$ be a set of equations $l = r$ of $\Sigma$-terms, then we have the following equivalence of $\catname{SLat}$-categories
\vspace{-2mm}
\[
	\catname{S}^{+}(\Sigma, \mathcal {E} ) \cong \WellTypedMdaEcospans / \mathcal{S,E}~.
\vspace{-1mm}
\]
where $\mathcal{E}$ on the right is overloaded as a set of EDPOI rewrite rules containing $\langle \llbracket l \rrbracket, \llbracket r \rrbracket  \rangle$ and $\langle \llbracket r \rrbracket, \llbracket l \rrbracket  \rangle$ for every equation $l = r$.
\end{theorem}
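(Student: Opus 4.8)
The plan is to leverage the equivalence $\catname{S}(\Sigma)^{+} \cong \WellTypedMdaEcospans / \mathcal{S}$ of Theorem~\ref{thm:completeness_simple} and show that it descends to the further quotients by $\mathcal{E}$ on both sides. Since $\catname{S}^{+}(\Sigma, \mathcal{E})$ is $\catname{S}^{+}(\Sigma)$ quotiented by the least $\catname{SLat}$-congruence containing $\mathcal{E}$, and $\WellTypedMdaEcospans/\mathcal{S,E}$ is $\WellTypedMdaEcospans/\mathcal{S}$ further quotiented by the EDPOI rewrites generated by $\mathcal{E}$ (the relation $\Rrightarrow^{*}$ generated by a union being the same whether formed at once or in stages), it suffices to verify that the equivalence of Theorem~\ref{thm:completeness_simple} carries one congruence onto the other. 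Concretely, for any parallel morphisms $f,g$ of $\catname{S}^{+}(\Sigma)$ I would establish that $f$ and $g$ are identified by the enriched congruence generated by $\mathcal{E}$ if and only if $\llbracket f \rrbracket \Rrightarrow^{*}_{\mathcal{E}} \llbracket g \rrbracket$ in $\WellTypedMdaEcospans/\mathcal{S}$. This is precisely a soundness-and-completeness statement for EDPOI rewriting with respect to the monoidal theory, now in the enriched setting.

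For soundness, I would show that each single EDPOI rewrite by a rule $\langle \llbracket l \rrbracket, \llbracket r \rrbracket \rangle$ corresponds to a valid step of the enriched congruence. The rewrite rules arising from $\mathcal{E}$ are \emph{flat}: since $l,r$ are $\Sigma$-terms free of joins, their representations carry no e-boxes, so each rule coincides with an ordinary DPOI rule that may now additionally be matched in a nested context. A top-level match (premise of condition~(6) of Definition~\ref{def:boundary_new}) reduces directly to the soundness of convex DPOI rewriting for $\catname{S}(\Sigma,\mathcal{E})$ established in~\cite{bonchi_string_2022-2}, transported along the equivalence. A nested match (premise of condition~(7)) corresponds to applying the equation $l=r$ within a single consistency component of an e-box, which is a legitimate congruence step precisely because the enriched congruence is closed under the join, via $f \sim g \Rightarrow f + h \sim g + h$.

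For completeness, I would argue that every generator of the enriched congruence is simulable by EDPOI rewrites, proceeding by induction on the derivation of $f \sim_{\mathcal{E}} g$. The enriched congruence generated by $\mathcal{E}$ is built from applications of $\mathcal{E}$ in arbitrary monoidal context together with closure under composition, tensor, and the semilattice join. Applications in a pure SMC context are handled by lifting the completeness of convex DPOI rewriting from~\cite{bonchi_string_2022-2} to the e-hypergraph setting, using that the structural rules $\mathcal{S}$ already absorb all semilattice laws apart from the equation application itself. Applications nested inside e-boxes are realised by nested EDPOI matches, again through condition~(7) of Definition~\ref{def:boundary_new}; here one must check that the internal interfaces of the enclosing e-box are correctly threaded through the deletion and re-gluing pushouts so that the result remains a well-typed MDA e-hypergraph (Definition~\ref{def:monogamy_ehyp}). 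Since $l$ and $r$ share the same external type $n \to m$, substituting one for the other inside a component does not disturb the component-wise input/output counts, preserving well-typedness of the enclosing e-box.

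The main obstacle I anticipate is the nested case: verifying that a convex extended boundary complement exists for a match strictly inside one or more e-boxes, and that all conditions of Definition~\ref{def:boundary_new} are met after threading the e-box's internal interfaces through the two pushouts. In particular, the delicate point is reconciling the possibly \emph{not}-well-typed intermediate object $\mathcal{L}^{\bot}$ of condition~(7) with the requirement that the final rewritten e-hypergraph be well-typed; this hinges on the fact that $\llbracket l \rrbracket$ and $\llbracket r \rrbracket$ possess no strictly internal interfaces of their own, so that gluing $\llbracket r \rrbracket$ into the hole reinstates exactly the interface data that removing $\llbracket l \rrbracket$ deleted. Once this bookkeeping is settled, the equivalence of Theorem~\ref{thm:completeness_simple} descends cleanly to the quotients, yielding the claimed equivalence of $\catname{SLat}$-categories.
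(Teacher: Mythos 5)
Your proposal follows essentially the same route as the paper's own proof: both reduce the theorem to the two-way statement that $f = g$ modulo $\mathcal{E}$ iff $\llbracket f \rrbracket \Rrightarrow^{*}_{\mathcal{E}} \llbracket g \rrbracket$, prove the congruence-to-rewrites direction by induction over string-diagrammatic contexts (empty, $c_1;(\id \otimes -);c_2$, and join), and prove the rewrites-to-congruence direction by case analysis on whether the match is top-level or nested inside an e-box, leaning on Theorem~\ref{thm:completeness_simple}, the results of~\cite{bonchi_string_2022-2}, and conditions (6)/(7) of Definition~\ref{def:boundary_new}. The bookkeeping you flag as the ``main obstacle'' (existence of the extended boundary complement for nested matches, interface threading, and recovering well-typedness after gluing) is exactly what the paper discharges via its decomposition lemma (Lemma~\ref{lemma:decomposition}) and the uniqueness of boundary complements (Proposition~\ref{prop:boundary_unique}).
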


The literal meaning of the above statement is that $f = g$ in $\catname{S}^{+}(\Sigma)$ modulo $\mathcal{E}$ if and only if $\llbracket f \rrbracket \Rrightarrow{}^{*}_{\mathcal{E}} \llbracket g \rrbracket$.

Now we are ready to define DPO (EDPOI) rewriting for e-graphs by following the informal translation in Section~\ref{sec:introduction} that renders a given \textit{canonical} \textit{acyclic} \textit{connected} e-graph as a morphism of $\catname{S}^{+}(\Sigma_{C}, \mathcal{E}_{C})$, or as morphisms of $\catname{MEHypI}(\Sigma_{C}) / \mathcal{S}, \mathcal{E}_{C}$  after applying the functor of Theorem~\ref{thm:full-completeness}.
An e-graph is \emph{canonical} if it directly corresponds to its graphical representation (as shown in~\nameref{sec:introduction}), and \emph{connected} if it contains a single connected component.
\begin{proposition}
The data above define a function $\llbracket - \rrbracket : \catname{E}\text{-}\catname{graph} \to \catname{MEHypI}(\Sigma_{C}) / \mathcal{S}, \mathcal{E}_{C}$ where the latter has a naturally defined EDPOI rewriting.
Moreover, we have that the following diagram commutes
\[
\begin{tikzcd}
		e_1 \arrow[r, "\leadsto"] \arrow[d, shift right=2]                & e_2 \arrow[d]                                        \\
		\llbracket e_1 \rrbracket \arrow[r, "\Rrightarrow^{*}"] \arrow[u] & \llbracket e_{2} \rrbracket \arrow[u, shift right=2]
\end{tikzcd}
\]
\end{proposition}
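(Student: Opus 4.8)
The plan is to first establish that $\llbracket - \rrbracket$ is well-defined on the stated domain, and then, for each atomic e-graph rewrite, to exhibit a corresponding sequence of EDPOI rewrites, from which the commuting square follows. For well-definedness I would make the informal e-class translation of Figure~\ref{fig:e-graph-to-string} precise by recursion on the DAG structure of a canonical, acyclic, connected e-graph $e$: each e-node with children e-classes becomes a generator composed via $(;\!)$ and $\otimes$ with the translations of its children; sharing of subterms is realised by the Cartesian copy map; and each e-class containing more than one e-node is sent to a formal join $f_1 + \cdots + f_n$, i.e.\ a hierarchical e-box. Acyclicity guarantees the recursion terminates and produces a genuine morphism of $\catname{S}^{+}(\Sigma_C, \mathcal{E}_C)$; connectedness guarantees a single root (output); and the Cartesian structure lets us meet the typing constraint of $+$ by discarding unused inputs in each component, exactly as observed in Section~\ref{sec:introduction}. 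Composing with the equivalence of Theorem~\ref{thm:full-completeness} lands the result in $\catname{MEHypI}(\Sigma_C)/\mathcal{S}, \mathcal{E}_C$, which inherits its EDPOI rewriting from $\WellTypedMdaEcospans$ via the quotient; well-definedness up to the chosen isomorphisms holds because the recursion is pinned down exactly up to the SMC and semilattice axioms, which are precisely what has been quotiented out.

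Next I would characterise the e-graph rewrite $\leadsto$. An equality-saturation step applies a rule $l \to r$ of the underlying theory: it e-matches an instance of $l$, adds e-nodes realising $r$ where absent, and unions the e-class of the matched $l$ with that of $r$. The essential point is that this step is \emph{non-destructive}, so $e_2$ retains $l$ alongside $r$ inside a common e-class. Under $\llbracket - \rrbracket$ this means that $\llbracket e_2 \rrbracket$ differs from $\llbracket e_1 \rrbracket$ only by replacing the subdiagram $\llbracket l \rrbracket$, sitting inside some e-box, by the join $\llbracket l \rrbracket + \llbracket r \rrbracket$ in that same e-box.

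The simulation then rests on a single clean observation: the non-destructive union is exactly \emph{idempotence followed by one instance of the theory equation}. Concretely, using the idempotence rule in $\mathcal{S}$ (Figure~\ref{fig:string-equations}) read right-to-left, the e-box containing $\llbracket l \rrbracket$ rewrites to one containing $\llbracket l \rrbracket + \llbracket l \rrbracket$; then applying the EDPOI rule in $\mathcal{E}_C$ induced by $l = r$ to one summand rewrites this to $\llbracket l \rrbracket + \llbracket r \rrbracket$, which is $\llbracket e_2 \rrbracket$. Both are convex EDPOI rewrites in the sense of Definition~\ref{def:dpoi-e}, so $\llbracket e_1 \rrbracket \Rrightarrow^{*}_{\mathcal{S},\mathcal{E}_C} \llbracket e_2 \rrbracket$. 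The upward arrows of the square are supplied by the equivalence of Theorem~\ref{thm:full-completeness} (choosing a canonical e-graph representing each morphism), and commutativity is precisely the assertion that translating-then-rewriting agrees with rewriting-then-translating, which the above construction makes transparent.

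The main obstacle I anticipate is aligning e-matching with convex matching. E-matching matches a pattern against \emph{any} representative of each e-class, whereas an EDPOI match requires a concrete convex sub-e-hypergraph together with a boundary complement satisfying the parent, consistency, and well-typing conditions of Definition~\ref{def:boundary_new}. I would need to verify that an e-match of $l$ always induces such a convex match of $\llbracket l \rrbracket$ --- invoking canonicity and acyclicity to exclude the non-convex and cyclic cases --- and that the union-find merging of e-classes is faithfully reflected by the associativity, commutativity, and idempotence rules of $\mathcal{S}$ absorbed into the e-box structure. Handling shared subterms, where the copy map duplicates a single match, and checking that the boundary complement exists and is unique (Proposition~\ref{prop:boundary_unique}) in each such configuration, is where I expect most of the technical care to be required.
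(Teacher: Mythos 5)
Your construction of $\llbracket - \rrbracket$ coincides with the paper's (which makes the recursion precise as a pseudocode algorithm over e-classes, with \texttt{mkEdge}/\texttt{connect} and discarding playing the role of your copy/delete bookkeeping), and the kernel of your simulation --- reading idempotence right-to-left to duplicate a summand inside an e-box and then applying the EDPOI rule induced by $l=r$ to one copy --- is exactly the paper's starting move. However, your ``single clean observation'' only covers the case where $\sigma(r)$ is freshly added, so that the merge produces one e-box containing $\llbracket l \rrbracket + \llbracket r \rrbracket$. The case the paper's proof actually works through is the other one: $\sigma(r)$ already lives in a \emph{different} e-class, so \textit{merge} must identify two pre-existing e-boxes, and the ensuing upward merging (restoring the congruence invariant, \emph{e.g.}\ fusing the classes of $f(a,b)$ and $f(a,c)$ once $b$ and $c$ are merged) must identify further boxes. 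You attribute this to ``the associativity, commutativity, and idempotence rules of $\mathcal{S}$'', but those rules act \emph{within} a single hierarchical edge and can never fuse two distinct e-boxes. The paper's mechanism is different: apply idempotence in \emph{both} boxes, apply the rule in \emph{both} directions (using that rules induced by equations come in symmetric pairs) so that both boxes contain $\llbracket l \rrbracket + \llbracket r \rrbracket$, and then invoke the copy/delete naturality rules of $\mathcal{E}_{C}$ --- the Cartesian sharing rules --- to collapse the two now-equal boxes into a single shared box. This use of sharing from $\mathcal{E}_{C}$, rather than ACI from $\mathcal{S}$, is the missing idea; without it the simulation of \textit{merge} between existing e-classes, and of upward merging, does not go through.

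A second gap concerns the direction of the square you do not argue. The diagram has vertical arrows both ways, and the paper's underlying lemma is an ``if and only if'': $e_1 \leadsto e_2$ iff $\llbracket e_1 \rrbracket \Rrightarrow^{*} \llbracket e_2 \rrbracket$. Your plan establishes only the forward simulation; the converse --- that EDPOI-rewritability of the translations is reflected by e-graph operations --- is obtained in the paper by comparing normal forms of $\llbracket e_1 \rrbracket$ and $\llbracket e_2 \rrbracket$ (Lemma~\ref{lemma:normal_form}), and some argument of this kind is needed to make the square commute in both directions.
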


Note that the domain of this function is the set of e-graphs over the same signature and equations, which is not a category.
The arrow $\leadsto$ expresses the fact that e-graph $e_2$ is obtainable from e-graph $e_1$ via the application of equations from $\mathcal{E}$; it includes adding nodes and merging e-classes.
Intuitively, the diagram above implies that we can mimic e-graph transformations by DPO rewrites over e-hypergraphs in a sound and complete manner.
We defer a more elaborate analysis to Appendix~\ref{sec:appendix:e-graph-translation}.

\section{Conclusion}
\label{sec:conclusion}

We have seen how e-graphs over algebraic theories can be expressed in terms of Cartesian categories enriched over the category of semilattices. 
In fact, e-graphs are an instance of a more general construction that can account for monoidal theories in terms of SMCs,  similarly enriched.  
Having described a translation of e-graphs into this categorical framework, we also provide a combinatorial representation of morphisms in the free category.
As expected, this representation factors out the structural SMC equations, while remaining sensitive to the complexity-relevant equations induced by the enrichment.  
We then showed our theory to be sound and complete with respect to our categorical semantics.

In the future,  we intend to investigate a number of natural extensions to mathematical treatment of (acyclic) e-graphs developed here.
Some of the extensions that can be considered are: support for `functorial boxes'~\cite{mellies_functorial_2006} (in addition to e-boxes), which have a variety of applications, including the Cartesian-closure required for the implementation of functional programming languages~\cite{ghica-zanassi2023string};
the `spiders' used by the ZX-calculus to model quantum circuits~\cite{coecke_interacting_2011,ZX};
or trace~\cite{joyal_geometry_1991, Hasegawa-traced} which can be used to express feedback in categorical models of digital circuits~\cite{ghica_jung_2017,ghica_compositional_2023} and, indeed in conventional e-graphs to encode infinite equivalence classes.
All these application domains stand to benefit from the use of optimisation techniques, which in turn can take advantage of the appropriate e-graph technique.
In each case above,  we aim to combine our combinatorial representation with existing combinatorial representations of the relevant structure as the hypergraph representation and DPO-rewriting theory of each case is already well-studied independently of any e-graph structure \cite{ghica_rewriting_2023,alvarez-picallo-functorial_2021}.

\section*{Acknowledgement}
The authors are grateful to Paul Blain Levy for insightful discussions on enriched category theory.
This work was supported by EPSRC grants EP/V001612/1 and EP/Y010035/1.

\bibliographystyle{acm}
\bibliography{bibliography.bib}

\appendix
\ifdefined\ONECOLUMN
\section{$\catname{SLat}$}
\else
\subsection{$\catname{SLat}$}
\fi
\label{sec:appendix:slat}

In this section we define the category of semilattices that we use as a base for enrichment throughout the paper.

\begin{definition}[Semilattice]
    A \textit{semilattice} is a set equipped with an operation that we denote as $+$ which is associative, commutative and idempotent.
  \end{definition}
  
  Note that we do not require the existence of a unit for $+$. 
  Semilattices that satisfy this extra requirement are sometimes called \textit{bounded}, i.e., they are idempotent commutative monoids.
  
  \begin{definition}[Semilattice homomorphism]
  
  A homomorphism between two semilattices $S_{1}$ and $S_{2}$ is a map $h$ that respects $+$.
  That is, for all $s,s' \in S_{1}$, $h(s +_{S_{1}} s') = h(s) +_{S_{2}} h(s')$.
  \end{definition}
  
  \begin{definition}[Category of semilattices]
    
  Semilattices with their respective homomorphisms form a category that we denote $\catname{SLat}$.
  \end{definition}
  
  \begin{proposition}
    $\catname{SLat}$ is a closed symmetric monoidal category.
  \end{proposition}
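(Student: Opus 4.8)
The plan is to exhibit all of the required structure explicitly and then verify the coherence axioms, exploiting the fact that a semilattice is precisely an algebra for a \emph{commutative} algebraic theory: the unique operation $+$ is itself a homomorphism with respect to $+$, since $(a+b)+(c+d) = (a+c)+(b+d)$ follows from associativity and commutativity, and there are no nullary operations (the semilattices are unbounded). This is exactly the setting in which the category of algebras is symmetric monoidal closed with tensor classifying bi-homomorphisms, so one route is simply to cite the general result for commutative (entropic) varieties. I would instead give the concrete construction. First, the internal hom: for semilattices $A,B$ set $[A,B] := \catname{SLat}(A,B)$ with the pointwise operation $(f+g)(a) := f(a)+g(a)$. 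A short calculation using commutativity and associativity of $+$ in $B$ shows $f+g$ is again a homomorphism, and idempotency of $B$ makes the pointwise operation idempotent, so $[A,B]$ is a semilattice.

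Next, the tensor product. Call $\phi : A \times B \to C$ a \emph{bi-homomorphism} if it is a homomorphism separately in each argument, that is $\phi(a+a',b) = \phi(a,b)+\phi(a',b)$ and $\phi(a,b+b') = \phi(a,b)+\phi(a,b')$. I would construct $A \otimes B$ as the free semilattice on the underlying set $A \times B$ (concretely, the non-empty finite subsets of $A \times B$ under union) quotiented by the congruence generated by these two bilinearity relations, equipped with the canonical map $A \times B \to A \otimes B$, $(a,b) \mapsto a \otimes b$. Since a homomorphism out of the free semilattice on a set is the same as a function out of that set, and the congruence cuts this down to exactly the bi-homomorphisms, $A \otimes B$ represents the functor $C \mapsto \{\text{bi-homomorphisms } A \times B \to C\}$: homomorphisms $A \otimes B \to C$ correspond naturally and bijectively to bi-homomorphisms $A \times B \to C$.

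Closedness then follows by currying. A bi-homomorphism $A \times B \to C$ is the same datum as a homomorphism $A \to [B,C]$, sending $a$ to $\phi(a,-)$, which is a homomorphism $B \to C$ because $\phi$ is a homomorphism in its second argument, while the assignment $a \mapsto \phi(a,-)$ is itself a homomorphism into the pointwise semilattice $[B,C]$ precisely because $\phi$ is a homomorphism in its first argument. Composing with the universal property of $\otimes$ yields a natural isomorphism $\catname{SLat}(A \otimes B, C) \cong \catname{SLat}(A, [B,C])$, i.e. $-\otimes B \dashv [B,-]$. For the unit I would take the one-element semilattice $I = \{*\}$: a bi-homomorphism $I \times A \to C$ is just a homomorphism $A \to C$, the $I$-side condition $\phi(*,a) = \phi(*+*,a) = \phi(*,a)+\phi(*,a)$ being automatic by idempotency, which gives the left unitor $I \otimes A \cong A$; the right unitor and the symmetry $A \otimes B \cong B \otimes A$ (swapping the two arguments of a bi-homomorphism) arise in the same way, and the associator $(A \otimes B)\otimes C \cong A \otimes (B \otimes C)$ comes from both sides representing tri-homomorphisms $A \times B \times C \to (-)$.

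Finally, the coherence conditions, namely the pentagon, the triangle, and the hexagon for symmetry, are identities between canonical maps induced by these universal properties, so they may be checked on the generating elements $a \otimes b$ (and $a \otimes b \otimes c$), where each reduces to a trivial rearrangement. The main obstacle is the careful construction and verification of the tensor product: showing the bilinearity congruence is well defined and that the quotient genuinely has the claimed universal property (in particular that distinct generators are not inadvertently identified), together with handling the absence of a unit for $+$, which is precisely why the one-element semilattice, rather than the initial (empty) one, must serve as the $\otimes$-unit. Once the universal property of $\otimes$ is established, closedness, symmetry, and coherence are formal consequences.
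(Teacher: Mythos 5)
Your proposal is correct, and the underlying data is exactly that of the paper: the tensor product is the same quotient of pairs by bilinearity relations (the paper phrases it as pairs quotiented by the semilattice axioms plus the two relations $(s_1+s_1',s_2)\equiv(s_1,s_2)+(s_1',s_2)$ and its mirror), the internal hom is the same pointwise semilattice of homomorphisms, and the unit is the same one-element semilattice. Where you genuinely diverge is in the organization of the verification. The paper proceeds by direct computation: it checks $S\otimes I\cong S$ by hand, declares the symmetry, associators and unitors to be ``obvious morphisms,'' and justifies closedness only by observing that $\catname{SLat}(A,B)$ carries a pointwise semilattice structure --- it never exhibits the adjunction $\catname{SLat}(A\otimes B,C)\cong\catname{SLat}(A,[B,C])$, which is the actual content of monoidal closure. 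You instead establish the universal property of $\otimes$ as the classifier of bi-homomorphisms and derive everything from it: closedness by currying, symmetry by swapping arguments, associativity by both sides representing tri-homomorphisms, and coherence by checking on generators $a\otimes b$. This buys a proof in which the delicate points (well-definedness of the quotient, the fact that $I$ must be the singleton rather than the empty semilattice in the absence of a unit for $+$, the idempotency argument making $\phi(*,a)$ automatically homomorphic in the $I$-variable) are isolated in one lemma, after which the monoidal-closed axioms are formal; the paper's route is shorter on the page but leaves precisely the adjunction step, and hence closedness itself, unproved. Your remark that one could alternatively cite the general result for commutative (entropic) algebraic theories is also a legitimate shortcut the paper does not take.
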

  \begin{proof}
    The tensor product of two semilattices $S_{1}$ and $S_{2}$ is defined as follows.
    $S_{1} \otimes S_{2}$ consists of pairs $(s_1,s_2)$ $s_{1} \in S_{1}$, $s_{2} \in S_{2}$ quotiented by commutativity, idempotence and associativity and additionally by the following relations
    \begin{itemize}
      \item $(s_{1} +_{S_{1}} s_{1}',s_{2}) \equiv (s_{1},s_{2}) +_{S_{1} \otimes S_{2}} (s_{1}',s_{2})$
      \item $(s_{1}, s_{2} +_{S_{2}} s_{2}') \equiv (s_{1},s_{2}) +_{S_{1} \otimes S_{2}} (s_{1}',s_{2})$
    \end{itemize}
  
    Note that this construction does not identify $(s_1,s_2) +_{S_1 \otimes S_2} (s_1',s_2')$ and $(s_1 +_{S_1} s_1', s_2 +_{S_2} s_2')$.
    The unit for this tensor product is $I = \{*\}$ --- a one-element semilattice.
    Clearly $S \otimes I \cong S$ by mapping $(s,*) \mapsto s$ for all $s \in S$ and vice versa.
    The symmetry, associators and unitors are then obvious morphisms.
    Finally, the category is closed since the set of homomorpisms between two semilattices is a semilattice by defining $(f + g)(x)$ as $f(x) + g(x)$.
    $f + g$ is a homomorphism since $(f + g)(x+y) = f(x+y) + g(x+y) = f(x) + f(y) + g(x) + g(y) = f(x) + g(x) + f(y) + g(y) = f(x+y) + g(x+y)$.
  \end{proof}
  
  This makes $\catname{SLat}$ a suitable base for enrichment.

  \begin{definition}
    Forgetful functor $U : \catname{SLat} \to \catname{Set}$ is given by $\catname{SLat}(I_{\catname{SLat}, -}) : \catname{SLat} \to \catname{Set}$.
    \end{definition}
    
    Intuitively, the above functor returns the underlying set of a given semilattice $S$ as each morphism from $\{*\} \to S$ picks out an element of $S$.
    
    \begin{proposition}[Special case of Proposition 6.4.6~\cite{Borceux_1994}]
      The forgetful functor $U : \catname{SLat} \to \catname{Set}$ has a left adjoint free functor $F : \catname{Set} \to \catname{SLat}$.
    \end{proposition}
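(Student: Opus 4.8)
The plan is to construct the left adjoint $F$ explicitly and verify its universal property directly, rather than invoke an adjoint functor theorem. First I would define, for each set $X$, the semilattice $F(X)$ to be the set of \emph{non-empty} finite subsets of $X$, equipped with union $\cup$ as the operation $+$. Since union is associative, commutative and idempotent, this is indeed an object of $\catname{SLat}$; the restriction to non-empty subsets is essential, since we work with unbounded semilattices and so must not introduce a unit for $+$. On morphisms, $F$ would send $h : X \to Y$ to the homomorphism taking a finite subset $A$ to its direct image $h(A)$, which is again non-empty and finite, and functoriality is immediate.

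Next I would exhibit the unit of the adjunction as the family $\eta_X : X \to U(F(X))$ given by $x \mapsto \{x\}$. The heart of the argument is then the universal property: for any semilattice $S$ and any function $f : X \to U(S)$, I would define $\hat{f} : F(X) \to S$ by $\{x_1, \ldots, x_n\} \mapsto f(x_1) + \cdots + f(x_n)$. I expect the key step to be checking that $\hat{f}$ is well-defined: because $+$ in $S$ is associative, commutative and idempotent, the value assigned to a finite set does not depend on the order in which its elements are enumerated nor on any repetition, so the map out of $F(X)$ is unambiguous. This is the one place where all three semilattice axioms are genuinely required, and it is the main (if mild) obstacle.

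It then remains to verify that $\hat{f}$ is a homomorphism and is the unique factorisation of $f$ through $\eta_X$. For the homomorphism property I would compute $\hat{f}(A \cup B) = \hat{f}(A) + \hat{f}(B)$, where idempotence absorbs any elements shared between $A$ and $B$. For uniqueness, I would observe that any homomorphism $g : F(X) \to S$ satisfying $U(g) \circ \eta_X = f$ must have $g(\{x\}) = f(x)$, and hence, since every element of $F(X)$ is a finite join of singletons $\{x_1\} \cup \cdots \cup \{x_n\}$, the homomorphism condition forces $g = \hat{f}$. Finally, $U(\hat{f}) \circ \eta_X = f$ holds by construction, establishing the natural bijection $\catname{SLat}(F(X), S) \cong \catname{Set}(X, U(S))$, natural in both arguments, and therefore $F \dashv U$.
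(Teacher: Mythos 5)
Your proposal is correct, but it takes a genuinely different route from the paper. You construct $F(X)$ concretely as the semilattice of non-empty finite subsets of $X$ under union, exhibit the unit $x \mapsto \{x\}$, and verify the universal property by hand (well-definedness of $\hat{f}$ via associativity, commutativity and idempotence; the homomorphism law via idempotence absorbing overlaps; uniqueness via generation by singletons) — all of these steps are sound, and your insistence on non-emptiness is exactly the right move given that the paper works with unbounded semilattices. The paper instead defines $F(A) = \coprod_{A} I_{\catname{SLat}}$, a coproduct of copies of the one-element semilattice, and obtains the adjunction abstractly from the chain $\catname{SLat}(\coprod_{A} I_{\catname{SLat}}, B) \cong \prod_{A}\catname{SLat}(I_{\catname{SLat}}, B) \cong \catname{Set}(A, U(B))$, using that the hom-functor sends colimits to limits and that $U = \catname{SLat}(I_{\catname{SLat}}, -)$. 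The two constructions yield isomorphic semilattices (formal joins of generators quotiented by the semilattice laws are precisely non-empty finite subsets), so this is a difference of presentation rather than substance. What your approach buys is an elementary, self-contained argument and an explicit normal form for elements of $F(X)$ — which is, incidentally, what the paper later exploits when it writes morphisms of $\MdaCospans^{+}$ canonically as $f_1 + \cdots + f_n$ with distinct summands. What the paper's approach buys is that the same colimit bookkeeping immediately yields the strong monoidality of $F$ (namely $F(I) \cong I_{\catname{SLat}}$ and $F(A \times B) \cong F(A) \otimes F(B)$, via preservation of colimits by $- \otimes X$), which is needed for the free-enrichment 2-adjunction $\mathcal{F} \dashv \mathcal{U}$ that the proposition is ultimately in service of; on your route this monoidality would require a separate verification.
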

    \begin{proof}
      The functor $F$ is defined by letting $F(A) = \coprod_{A} I_{\catname{SLat}}$ where the latter is a coproduct which is a `free' product of semilattices defined as follows.
      The elements of $S_{1} \coprod S_{2}$ are sequences $s_{1} + s_{2} + \ldots + s_{n}$ where each $s_{i}$ is either from $S_{1}$ or $S_{2}$ quotiented by all relations in $S_{1}$ and $S_{2}$ and by the necessary equations to turn it into a semilattice.
      The adjunction is then given by the following natural isomorphisms
      \begin{align*}
      \catname{SLat}(\coprod_{A}(I_{\catname{SLat}}), B) &\cong \prod_{A}(\catname{SLat}(I_{\catname{SLat}}, B))\\
                                                         &\cong \catname{Set}(A,\catname{SLat}(I_{\catname{SLat}}, B))\\
                                                         &\cong \catname{Set}(A, U(B))
      \end{align*}
      The first isomorphism is given by the fact that hom-functor makes limits into colimits in its first argument, the second being given by the property that $|\catname{Set}(A,B)| = |B|^{|A|} = |\underbrace{B \times \ldots \times B}_{|A|}|$, and the last isomorphism is given by the definition of $U$.
      Furthermore, we have, 
      \[
      F(I) \cong I_{\catname{SLat}}
      \]
      and 
      \begin{align*}
      F(A) \otimes F(B) &\cong (\coprod_{A} I_{\catname{SLat}}) \otimes (\coprod_{B} I_{\catname{SLat}})\\
            &\cong \coprod_{A} (I_{\catname{SLat}} \otimes \coprod_{B} I_{\catname{SLat}})\\
            &\cong \coprod_{A} (\coprod_{B} (I_{\catname{SLat}} \otimes I_{\catname{SLat}}))\\
            &\cong \coprod_{A} (\coprod_{B} I_{\catname{SLat}})\\
            &\cong \coprod_{A \times B} I_{\catname{SLat}}\\
            &\cong F(A \times B)
      \end{align*}
    In the above we used the fact that functors $- \otimes X$ and $X \otimes -$ preserve colimits as they are both left-adjoint by symmetry and monoidal closedness of $\catname{SLat}$.
    \end{proof}

    \begin{definition}
        Given two $\catname{SLat}$-functors $F,G : \mathbb{C} \to \mathbb{D}$ a $\catname{SLat}$-natural transformation is family of morphisms indexed by objects in $\mathbb{C}$, $\alpha_{A} : I_{\catname{SLat}} \to \mathbb{D}(FA,GA)$ in $\catname{SLat}$ such that the following diagram commutes
        \[
        \adjustbox{width=\linewidth}{
        \begin{tikzcd}
          & {\mathbb{C}(A,A')\otimes I_{\mathcal{V}}} && {\mathbb{D}(FA,FA') \otimes \mathbb{D}(FA',GA')} \\
          {\mathbb{C}(A,A')} &&&& {\mathbb{D}(FA,GA')} \\
          & {I_{\mathcal{V}} \otimes \mathbb{C}(A,A')} && {\mathbb{D}(FA,GA) \otimes \mathbb{D}(GA,GA')}
          \arrow["{G_{A,A'}\otimes \alpha_{A'}}"', from=1-2, to=1-4]
          \arrow["{c_{FA,FA',GA'}}"', from=1-4, to=2-5]
          \arrow["{r^{-1}}", from=2-1, to=1-2]
          \arrow["{l^{-1}}", from=2-1, to=3-2]
          \arrow["{\alpha_{A} \otimes F_{A,A'}}", from=3-2, to=3-4]
          \arrow["{c_{FA,GA,GA'}}", from=3-4, to=2-5]
      \end{tikzcd}}
      \]
      \end{definition}

      Intuitively, in the case of free enrichment $\alpha_{A}$ picks a usual natural transformation from $\mathbb{D}(FA,GA)$.
      
      \begin{example}
        Consider a copy natural transformation $\triangle_{A} : A \to A \times A$ of $\catname{S}(\Sigma_{C},\mathcal{E}_{C})$.
        The later Cartesian PROP is lifted to a Cartesian $\catname{SLat}$-PROP where the naturality condition can be traced component-wise where ${*}$ is a one element semilattice:
      
      \[\adjustbox{width=\linewidth}{
      \begin{tikzcd}  
          & {f + g \otimes *} && {\triangle_{A} \otimes ((f+g) \times (f+g))}\\
          f + g &&&& {\triangle_{A'};((f + g) \times (f+g)) = (f + g);\triangle_{A'}}\\
          & {* \otimes f + g} && {f + g \otimes \triangle_{A'}}
          \arrow[from=1-2, to=1-4]
          \arrow[from=1-4, to=2-5]
          \arrow[from=2-1, to=1-2]
          \arrow[from=2-1, to=3-2]
          \arrow[from=3-2, to=3-4]
          \arrow[from=3-4, to=2-5]
      \end{tikzcd}}
      \]
      
      which is the usual naturality condition.
      
      \end{example}
      
\ifdefined\ONECOLUMN
\section{Proofs for Section \ref{sec:e-hypergraphs}: E-hypergraphs}
\else
\subsection{Proofs for Section \ref{sec:e-hypergraphs}: E-hypergraphs}
\fi
\label{sec:appendix:pushout}

In this section we define the conditions under which the pushout in $\catname{EHyp}(\Sigma)$ exists, and we explicitly construct such a pushout.
We also show the uniqueness of a boundary pushout complement.
These condition will play a crucial role in showing that for each enrichment equation in $\textbf{PROP}^{+}(\Sigma)$ there is a sequence of rewrites in $\WellTypedMdaEcospans$.

\begin{proposition}
    Two cospans in $\catname{Hyp}(\Sigma)$ (\textit{i.e.}, morphisms in $\HypI{\Sigma}$) $n \xrightarrow{f} \mathcal{G} \xleftarrow{g} m$ and $n \xrightarrow{f'} \mathcal{G}' \xleftarrow{g'} m$ are equal if and only if
    the following two cospans  $\catname{EHyp}(\Sigma)$ (\textit{i.e.} morphisms in $\Ecospans$)
    \[
    n \xrightarrow{f_{ext}} n \xrightarrow{f_{int}} \mathcal{G} \xleftarrow{g_{int}} m \xleftarrow{g_{ext}} m
    \]
    \[
        n \xrightarrow{f_{ext}'} n \xrightarrow{f_{int}'} \mathcal{G} \xleftarrow{g_{int}'} m \xleftarrow{g_{ext}'} m    
    \]
    such that $f = f_{ext};f_{int}, f'=f_{ext}';f_{int}'$ (respectively, $g = g_{ext};g_{int}, g' = g_{ext}';g_{int}'$) are equal.
\end{proposition}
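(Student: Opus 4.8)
The plan is to exhibit the evident \emph{embedding} of ordinary cospans into extended cospans as faithful, i.e.\ as both preserving and reflecting equality of morphisms. Concretely, the assignment sends $n \xrightarrow{f} \mathcal G \xleftarrow{g} m$ to $n \xrightarrow{\id} n \xrightarrow{f} \mathcal G \xleftarrow{g} m \xleftarrow{\id} m$, where $\mathcal G$ is regarded as an e-hypergraph with empty child and consistency relations. First I would check this lands in $\Ecospans$: every vertex and edge of $\mathcal G$ is top-level, so the image of $f_{ext};f_{int}=f$ consists of top-level vertices, and the strictly internal interface $n'\setminus f_{ext}(n)=n\setminus n$ is empty, so the side conditions of the extended-cospan definition hold vacuously (and dually for outputs).

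For the forward implication, suppose the two ordinary cospans are equal, witnessed by an isomorphism $\beta:\mathcal G\to\mathcal G'$ with $f;\beta=f'$ and $g;\beta=g'$. I would lift $\beta$ to an isomorphism of the associated extended cospans in the sense of Definition~\ref{def:iso} by taking $\alpha=\id_n$ and $\gamma=\id_m$. The required squares commute because $f_{ext}=f_{ext}'=\id_n$ and $f_{int};\beta=f;\beta=f'=f_{int}'$ (dually on the output side), and $\beta$ is a legitimate e-hypergraph isomorphism since the child and consistency relations it must respect are both empty. The order-preservation clause of Definition~\ref{def:iso} is vacuous here: as $\consistency$ is empty on the plain hypergraph $\mathcal G$, the relation $R$ is empty and its reflexive closure partitions $n'=n$ into singletons, within which order preservation is automatic.

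The backward implication is where the genuinely delicate point lies, since the isomorphism of Definition~\ref{def:iso} is a priori \emph{weaker} than ordinary cospan isomorphism: it permits $\alpha$ and $\gamma$ to permute internal-interface vertices, constrained only by order preservation within each consistency block. I would argue that this extra freedom collapses on the image of the embedding. Given an extended-cospan isomorphism $(\alpha,\beta,\gamma)$, commutativity over the external input interface forces $f_{ext};\alpha=f_{ext}'$, i.e.\ $\id_n;\alpha=\id_n$, whence $\alpha=\id_n$, and dually $\gamma=\id_m$ from the external output interface. The remaining commutation then reads $f_{int};\beta=f_{int}'$ and $g_{int};\beta=g_{int}'$, that is $f;\beta=f'$ and $g;\beta=g'$, so $\beta$ exhibits the two ordinary cospans as equal.

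The main obstacle is thus conceptual rather than computational: one must be sure that the relaxed notion of Definition~\ref{def:iso} does not identify strictly more ordinary cospans than the usual notion. The observation that resolves this is that fixing the external interfaces—so that $f_{ext}$ and $g_{ext}$ are identities—pins $\alpha$ and $\gamma$ to the identity, eliminating the reordering freedom entirely; the order-preservation clause and the consistency partition never become active because the carrier carries no hierarchical structure. I would note in closing that this faithfulness is precisely what is needed to transport the equivalence $\catname{S}(\Sigma)\cong\MdaCospans$ of Theorem~\ref{thm:prop-equiv} into the enriched setting.
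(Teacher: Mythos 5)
Your strategy --- unfold equality of cospans as isomorphism and pass isomorphisms back and forth, using that a plain hypergraph carries empty child and consistency relations, so that every hypergraph isomorphism is an e-hypergraph isomorphism and the order-preservation clause of Definition~\ref{def:iso} trivialises into singleton blocks --- is exactly the paper's, and your treatment of that clause is in fact more explicit than the paper's own proof, which never mentions it.

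The gap is one of generality. The proposition does not fix the factorisations: it asserts the equivalence for \emph{any} extended cospans satisfying $f = f_{ext};f_{int}$ and $f' = f_{ext}';f_{int}'$ (and similarly for $g$, $g'$), where $f_{ext}, f_{ext}' : n \to n$ are only required to be monomorphisms --- hence, $n$ being a finite discrete e-hypergraph, arbitrary bijections, not necessarily identities and not necessarily equal to one another. You instead build in $f_{ext} = f_{ext}' = \id_n$ from the outset, so your backward direction, whose crux is that $\id_n;\alpha = \id_n$ pins $\alpha = \id_n$, does not apply to the statement as written; for general factorisations the triangle $f_{ext};\alpha = f_{ext}'$ pins $\alpha = f_{ext}^{-1};f_{ext}'$ instead. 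This generality is not idle: the embedding functor of Remark~\ref{remark:embedding_functor} is defined by choosing \emph{some} factorisation of each cospan, and the proposition in its general form is what guarantees the result is independent of that choice. The paper handles it by observing that $f_{ext}$ and $f_{ext}'$ are injective endomaps of $n$ and hence invertible, taking the interface isomorphisms to be $f_{ext}^{-1};f_{ext}'$ and $g_{ext}^{-1};g_{ext}'$, and then verifying the squares by precomposing with $f_{ext}$ and cancelling it as an epimorphism (it is surjective). Your argument repairs along the same lines --- or, alternatively, by first showing that every factorisation of $f$ yields an extended cospan isomorphic to your identity-factorisation one (interface isomorphisms $f_{ext}$ and $g_{ext}$, carrier isomorphism $\id_{\mathcal G}$, order preservation again vacuous since the consistency relation is empty) and then concluding by transitivity of equality --- but as written it establishes only the special case.
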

This essentially means that $\catname{\HypI{\Sigma}}$ faithfully embeds into $\Ecospans$. 
\begin{proof}
    Recall that cospans are equal when they are isomorphic.
    The proposition means that the following cospans are isomorphic (where $\alpha$ is an isomorphism)
    \[\begin{tikzcd}
            & \mathcal{G} \arrow[dd, "\alpha"] &                                                            \\
    n \arrow[ru, "f", bend left] \arrow[rd, "f'"', bend right] &                                  & m \arrow[lu, "g"', bend right] \arrow[ld, "g'", bend left] \\
            & \mathcal{G}'                     &                                                           
    \end{tikzcd}
    \]
    if and only if the following cospans are isomorphic
    \[
        \begin{tikzcd}
            & n \arrow[r, "f_{int}"] \arrow[dd, "\beta"] & \mathcal{G} \arrow[dd, "\alpha"] & m \arrow[l, "g_{int}"'] \arrow[dd, "\gamma"] &                                                                        \\
n \arrow[ru, "f_{ext}", bend left] \arrow[rd, "f_{ext}'"', bend right] &                                            &                                  &                                              & m \arrow[lu, "g_{ext}"', bend right] \arrow[ld, "g_{ext}'", bend left] \\
            & n \arrow[r, "f_{int}'"']                    & \mathcal{G}'                     & m \arrow[l, "g_{int}'"]                      &                                                                       
\end{tikzcd}    
    \]
    such that $f = f_{ext};f_{int}, f'=f_{ext}';f_{int}'$ (respectively, $g = g_{ext};g_{int}, g' = g_{ext}';g_{int}'$).
    The direction from right to left is straightforward: since the bottom diagram commutes $f;\alpha = f_{int};f_{ext};\alpha = f_{int}';f_{ext}' = f'$ and similarly for $g$.
    
    Let's show the direction from left to right.
    Note that $f_{ext}$ and $f_{ext}'$ have inverses since they are injective and surjective (the injection is per the definition and surjection is because domain and codomain are equal).
    Then we need $f_{ext};\beta = f_{ext}'$ and we define $\beta = f_{ext}^{-1};f_{ext}'$.
    Similarly, $\beta^{-1} = f_{ext}'^{-1};f_{ext}$. 
    We can check that $\beta^{-1};\beta = f_{ext}'^{-1};f_{ext};f_{ext}^{-1};f_{ext}' = f_{ext}'^{-1};id;f_{ext}' = f_{ext}'^{-1};f_{ext}' = id$.
    Then we have that $f_{ext};\beta;f_{int}' = f_{ext};f_{ext}^{-1};f_{ext}';f_{int}' = f_{ext}';f_{int}' = f' = f;\alpha = f_{ext};f_{int};\alpha$.
    Since $f_{ext}$ is surjective, we have $\beta;f_{int}' = f_{int};\alpha$.
    Analogously for $\beta^{-1}$ and the right-hand sides of the diagrams.
\end{proof}

\ifdefined\ONECOLUMN
\section{Existence of pushout}
\else
\subsection{Existence of pushout}
\fi

In constructing the pushout the functional versions of e-hypergraph relations will be useful.
\begin{remark}
    Both $<$ and $\consistency$ can be considered as (partial) functions defined on $V_{\mathcal{F}} + E_{\mathcal{F}}$, \textit{i.e.}, on the coproduct of vertices and edges.
    To make things well-typed, we will use corresponding coproduct injections $\iota_{V} : {V_{\mathcal{F}}} \to V_{\mathcal{F}} + E_{\mathcal{F}}$ and $\iota_{E} : {E_{\mathcal{F}}} \to V_{\mathcal{F}} + E_{\mathcal{F}}$ when passing either a vertex or an edge into these functions.
    For example, an immediate successor of a vertex $x$ can be written functionally as $<_{\mathcal{F}}^{\mu}(\iota_{V_{\mathcal{F}}}(x))$.
\end{remark}

\begin{remark}
    Using functional notation we can reformulate the notion of a homomorphism between two e-hypergraphs in the following way.
\end{remark}
\begin{definition}
        \label{def:e-homo-2}    
        A \emph{homomorphism} $\phi: \mathcal{F} \to \mathcal{G}$ of e-hypergraphs $\mathcal{F},\mathcal{G}$ is a pair of functions $\phi_V : V_{\mathcal{F}} \to V_{\mathcal{G}}, \phi_E : E_{\mathcal{F}} \to E_{\mathcal{G}}$ such that
        
        \begin{enumerate}
            \item $\phi$ is hypergraph homomorphism.
            
            \item When $x$ is not a top-level vertex, 
                \[
                \phi_{E}(<_{\mathcal{F}}^{\mu}(\iota_{V_{\mathcal{F}}}(x))) = <_{\mathcal{G}}^{\mu}(\phi_{V};\iota_{V_{\mathcal{G}}}(x))
                \]
                and
                \[
                \phi_{E}(<_{\mathcal{F}}^{\mu}(\iota_{E_{\mathcal{F}}}(x))) = <_{\mathcal{G}}^{\mu}(\phi_{E};\iota_{E_{\mathcal{G}}}(x))  
                \] when $x$ is a not top-level edge.
                \item
            When $x \in E_{\mathcal{F}}$
            \[
                [\phi_{V};\iota_{V_{\mathcal{G}}}, \phi_{E};\iota_{E_{\mathcal{G}}} ]^{*}(\consistency_{\mathcal{F}}(\iota_{E_{\mathcal{F}}}(x)))
                \subseteq
                \consistency_{\mathcal{G}}(\phi_{E};\iota_{E_{\mathcal{G}}}(x))
            \]
            where $\phi_{V};\iota_{V_{\mathcal{G}}} : V_{\mathcal{F}} \to V_{\mathcal{G}} + E_{\mathcal{G}}$, and similarly for $\phi_{E};\iota_{E_\mathcal{G}}$ so that $[\phi_{V};\iota_{V_{\mathcal{G}}}, \phi_{E};\iota_{E_{\mathcal{G}}}] : V_{\mathcal{F}} + E_{\mathcal{F}} \to  V_{\mathcal{G}} + E_{\mathcal{G}}$.
            \item When $x \in V_{\mathcal{F}}$
            \[
                [\phi_{V};\iota_{V_{\mathcal{G}}}, \phi_{E};\iota_{E_{\mathcal{G}}}]^{*}(\consistency_{\mathcal{F}}(\iota_{V_{\mathcal{F}}}(x)))
                \subseteq
                \consistency_{\mathcal{G}}(\phi_{V};\iota_{V_{\mathcal{G}}}(x)).
            \]
            \end{enumerate}
\end{definition}

\begin{theorem}[Existence of pushouts in $\catname{EHyp}(\Sigma)$]
\label{th:existence_of_pushouts}
Consider the following span in $\catname{EHyp}(\Sigma)$
\[\begin{tikzcd}
	Z && X \\
	\\
	Y
	\arrow["f", from=1-1, to=1-3]
	\arrow["g"', from=1-1, to=3-1]
\end{tikzcd}\]
such that
\begin{enumerate}
\label{pushout:assumptions}
    \item $Z$ is a \textit{discrete} e-hypergraph.
    \item \label{assumption:equal_predecessors} $[f_{V}(v_i)) = [f_{V}(v_j))$ and $[g_{V}(v_i)) = [g_{V}(v_j))$ for all $v_{i},v_{j}$ in $V_{Z}$.
    \item \label{assumption:non_ambiguous_predecessors} If $[f_{V}(v)) \not = \varnothing$ then $[g_{V}(v)) = \varnothing$ and if $[g_{V}(v)) \not = \varnothing$ then $[f_{V}(v)) = \varnothing$.
    \item $\consistency(f_{V}(v_i)) = \consistency(f_{V}(v_j))$ and $\consistency(g_{V}(v_i)) = \consistency(g_{V}(v_j))$ for all $v_i,v_j$ in $V_{Z}$.
\end{enumerate}    
then the pushout $X +_{f,g} Y$ exists.
\end{theorem}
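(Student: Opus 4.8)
The plan is to build the pushout explicitly on top of the pushout of the underlying hypergraphs, and then to equip the resulting object with a child relation and a consistency relation induced from those of $X$ and $Y$. The four assumptions are precisely what is needed to guarantee that this induced extra structure is well-defined and satisfies the axioms of Definition~\ref{def:e-homo}, after which the universal property is routine.

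First I would form the pushout $P$ of the underlying unlabelled hypergraphs, which exists because $\catname{Hyp}(\Sigma)$ has all finite colimits. Its vertex set is the set-pushout $V_X +_{V_Z} V_Y$, and since $Z$ is discrete ($E_Z = \varnothing$) its edge set is simply the disjoint union $E_X + E_Y$; the maps $s_P$, $t_P$ and the labelling $l_P$ are then determined by the universal property, and I would write $\kappa_X : X \to P$ and $\kappa_Y : Y \to P$ for the two injections. The child relation $<_P$ on $V_P + E_P$ is defined as the image of $<_X$ and $<_Y$ along these injections, i.e.\ $\kappa_X(e) <_P \kappa_X(x)$ whenever $e <_X x$, and likewise for $Y$. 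The key observation making this well-behaved is that only hierarchical edges may be parents, so vertices are never parents; since $Z$ is discrete, the only identified elements are vertices, and therefore any order-chain $a <_P b <_P c$ must lie entirely inside the image of $X$ or entirely inside the image of $Y$ (a shared middle element $b$ would be an identified vertex, which cannot serve as the parent required by $b <_P c$). Hence $<_P$ is transitive and is a strict partial order.

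The delicate point is condition (2) of Definition~\ref{def:e-homo}, that each element has at most one immediate parent. A conflict could only arise at an identified vertex $v$ coming from some $v_Z \in V_Z$, which might inherit one parent from $X$ via $f$ and another from $Y$ via $g$; assumption (3) (non-ambiguous predecessors) forbids exactly this, by forcing at most one of $[f_V(v_Z))$, $[g_V(v_Z))$ to be non-empty, while assumption (2) (equal predecessors) guarantees the assignment is independent of the chosen representative of the identified class. The remaining e-hypergraph conditions are inherited pointwise from $X$ and $Y$. I would then define $\consistency_P$ as the connectivity-closure of the union of the relations transported from $\consistency_X$ and $\consistency_Y$; assumption (4) ensures the consistency classes assigned to identified vertices agree across the two legs, so that $\consistency_P$ is a genuine union of equivalence relations indexed by the (merged) parents, and the non-triviality requirement $\consistency_p \neq (E_p + V_p) \times (E_p + V_p)$ survives because it holds in $X$ and $Y$ and the parent sets are never enlarged.

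Finally I would verify the universal property. The injections $\kappa_X, \kappa_Y$ are e-hypergraph homomorphisms by construction, and they agree on $Z$. Given any cocone $(W, p, q)$ with $p \circ f = q \circ g$, the underlying set-pushout yields a unique pair of functions $u_V, u_E$ commuting with the injections; I would check that $u = (u_V, u_E)$ preserves immediate parents and consistency, which is immediate since every relation instance in $P$ is the image of one in $X$ or $Y$, and $p, q$ already preserve these. Uniqueness follows from uniqueness at the level of sets. The main obstacle is not this last step, which is essentially bookkeeping, but rather verifying that the induced child relation really is an e-hypergraph structure — in particular the at-most-one-immediate-parent condition — and it is exactly here that assumptions (2) and (3) do the essential work, with assumption (4) playing the analogous role for the consistency relation.
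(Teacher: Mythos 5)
There is a genuine gap, and it sits exactly where the real work of this theorem lies. You define the child relation on the pushout as merely the transported image of $<_X$ and $<_Y$ along the two injections. That object fails axiom (4) of Definition~\ref{def:e-homo}: if $v \in s(e)$ (or $v \in t(e)$) then $e' <^{\mu} e$ iff $e' <^{\mu} v$, i.e.\ connected elements must share the same immediate parent. Consider the case, explicitly permitted by assumption (3), where the image of $f$ lies \emph{inside} a hierarchical edge $h$ of $X$ (so $[f_V(v)) \neq \varnothing$) while the image of $g$ is top-level in $Y$. After gluing, the vertices and edges of $Y$ attached to the identified vertices are connected by paths to elements whose immediate parent is $h$, yet under your definition they remain parentless; so your candidate $P$ is not an e-hypergraph at all, and consequently the cocone maps you need for the universal property do not even exist in $\catname{EHyp}(\Sigma)$. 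This is not a corner case: it is precisely the situation that arises when a DPO rewrite glues a replacement graph into a hole nested inside an e-box (condition (7) of the extended boundary complement), which is the main reason the theorem is needed.

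The paper's construction differs from yours exactly here: after quotienting the coproduct, the immediate-parent function is defined by \emph{propagation along undirected paths} --- a top-level element with no pre-image in $V_Z$ inherits the parent of any element it is connected to that has one --- and the consistency relation is built analogously (an auxiliary relation $\consistency^{\hashtag}$ defined by the same path-based case analysis, followed by a closure). Most of the paper's proof consists of checking that this propagated structure is well-defined (independent of the chosen path endpoint, which uses assumptions (2)--(4)), that the quotient maps remain homomorphisms, and that the mediating map to an arbitrary cocone preserves the propagated parents and consistency --- the last point using that the cocone target is itself an e-hypergraph, so connectivity forces the corresponding identifications there. Two further claims in your write-up fail for the same reason: it is not true that ``every relation instance in $P$ is the image of one in $X$ or $Y$'' (the propagated instances are not), and it is not true that ``the parent sets are never enlarged'' (the whole point is that children of a hierarchical edge can be acquired from the other leg of the span); non-totality of each $\consistency_p$ instead has to be argued from assumption (4), as the paper does in a separate lemma.
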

\begin{proof}
    Consider the diagram below.
    \[
        \adjustbox{scale=1.25}{
            \begin{tikzcd}
            Z \arrow[r, "f"] \arrow[d, "g"']                                   & X \arrow[d, "\sfrac{\iota_1}{\sim R}"] \arrow[rdd, "j_1", bend left] &   \\
            Y \arrow[r, "\sfrac{\iota_2}{\sim R}"] \arrow[rrd, "j_2"', bend right] & \sfrac{X+Y}{\sim R} \arrow[rd, "u"]                              &   \\
                                                                            &                                                                  & Q
            \end{tikzcd}}
    \]
    Then, the pushout of e-hypergraphs $X$ and $Y$ is computed in two steps.
    First, a coproduct of $X+Y$ is computed which is 
    \[
        X + Y = \{V_{X} + V_{Y}, E_{X} + E_{Y}, s_{X+Y}, t_{X+Y}, \consistency_{X+Y}, <_{X+Y} \}
    \]
    where $s_{X+Y} : E_{X} + E_{Y} \to (V_{X} + V_{Y})^{*}$ which can be defined as a copairing $[s'_{X}, s'_{Y}] : E_{X} + E_{Y} \to (V_{X} + V_{Y})^{*}$ and $s'_{X} : E_{X} \to (V_{X} + V_{Y})^{*}$, $s'_{Y} : E_{Y} \to (V_{X} + V_{Y})^{*}$ defined as $s'_{X} = s_{X};\iota_{1,V}^{*}$ and $s'_{Y} = s_{Y};\iota_{2,V}^{*}$ where $\iota_1,\iota_2$ are corresponding coproduct injections, similarly for $t_{X+Y}, <_{X+Y}, \consistency_{X+Y}$.
    We will omit labels as they are irrelevant to pushout construction.
    Consider relations
    \ifdefined\ONECOLUMN
    \begin{align*}
            S_{V} &= \;\{
            (x_i,y_j) \in (V_{X} + V_{Y}) \times (V_{X} + V_{Y})\; |
            \exists z \in V_{Z} \; . \; x_i = f_{V};\iota_{1,V}(z) \text{ and }\\
            &\qquad y_j = g_{V};\iota_{2,V}(z) \text{ where $x_i \in V_{X}$ and $y_j \in V_{Y}$ }
            \}\\
            &\cup \;\{
                (y_j,x_j) \in (V_{X} + V_{Y}) \times (V_{X} + V_{Y})\; |
          \exists z \in V_{Z} \; . \; x_i = f_{V};\iota_{1,V}(z) \text{ and }\\
          &\qquad y_j = g_{V};\iota_{2,V}(z) \text{ where $x_i \in V_{X}$ and $y_j \in V_{Y}$ }
        \}\\
        &\cup \;\{(x,x)\;\text{where}\; x \in V_{X} + V_{Y}\}\\
        S_{E} &= \;\{(x,x) \text{ where } x \in E_{X} + E_{Y}\}
    \end{align*}
    \else 
    \begin{align*}    
    S_{V} = \{&
          \;(x_i,y_j) \in (V_{X} + V_{Y}) \times (V_{X} + V_{Y})\; | \\
          &\;\exists z \in V_{Z} \; . \; x_i = f_{V};\iota_{1,V}(z) \text{ and } y_j = g_{V};\iota_{2,V}(z)\\
          &\;\text{ where $x_i \in V_{X}$ and $y_j \in V_{Y}$ }\}\\
          \cup&\\
          \{&
          \;(y_j,x_j) \in (V_{X} + V_{Y}) \times (V_{X} + V_{Y})\; | \\
          &\;\exists z \in V_{Z} \; . \; x_i = f_{V};\iota_{1,V}(z) \text{ and } y_j = g_{V};\iota_{2,V}(z)\\
          &\;\text{ where $x_i \in V_{X}$ and $y_j \in V_{Y}$ }\}\\
          \cup&\\
          \{
          &\;(x,x)\;\text{where}\; x \in V_{X} + V_{Y}\}\\
        \\
    S_{E} &= \{(x,x) \text{ where } x \in E_{X} + E_{Y}\}
    \end{align*}
    \fi
and let relations $R_{V},R_{E}$ be their transitive closures respectively.

We then quotient the set vertices and edges in $X + Y$ by $R_{V}$ and $R_{E}$ respectively.
\ifdefined \ONECOLUMN
\begin{align*}
    \sfrac{X + Y}{\sim (R_{V},R_{E})} = \{
        &\sfrac{V_{X} + V_{Y}}{\sim (R_{V},R_{E})}, \sfrac{E_{X} + E_{Y}}{\sim (R_{V},R_{E})}, \sfrac{s_{X+Y}}{\sim (R_{V},R_{E})},\\
        &\sfrac{t_{X,Y}}{\sim (R_{V},R_{E})}, \consistency_{\sfrac{X+Y}{\sim (R_{V},R_{E})}}, <_{\sfrac{X + Y}{\sim (R_{V},R_{E})}}\}    
\end{align*}
\else
    \begin{align*}
        \sfrac{X + Y}{\sim (R_{V},R_{E})} = (&
            \sfrac{V_{X} + V_{Y}}{\sim (R_{V},R_{E})},\\
            &\sfrac{E_{X} + E_{Y}}{\sim (R_{V},R_{E})},\\
            &\sfrac{s_{X+Y}}{\sim (R_{V},R_{E})},\\
            &\sfrac{t_{X,Y}}{\sim (R_{V},R_{E})},\\
            &\consistency_{\sfrac{X+Y}{\sim (R_{V},R_{E})}},\\
            &<_{\sfrac{X + Y}{\sim (R_{V},R_{E})}})    
    \end{align*}
\fi
We will refer to $\sim (R_{V},R_{E})$ just as $\sim$, \textit{e.g.}, by writing $\sfrac{V_{X} + V_{Y}}{\sim}$ and concrete relation will be clear from the context.
Where necessary we will refer to $S_{V}$ and $S_{R}$ as $\sim_{S}$.
We have 
\[
    \sfrac{s_{X+Y}}{\sim} : \sfrac{E_{X} + E_{Y}}{\sim} \to (\sfrac{V_{X} + V_{Y}}{\sim})^{*}
\]
and there is an obvious surjective function $[]_{V} : (V_{X} + V_{Y}) \to (\sfrac{V_{X} + V_{Y}}{\sim})$ that maps elements to their equivalence classes and $[]_{V}^{*}$ is its extension to sequences.
There is also $[] : E_{X} + E_{Y} \to \sfrac{E_{X} + E_{Y}}{\sim}$ and we will omit subscripts as the correct type will be clear from the argument.
We then define 
\[
    \sfrac{s_{X+Y}}{\sim}([e]) = s_{X+Y};[]^{*}(e) = [s_{X+Y}(e)]^{*}
\]
We will also use subscripts when it is important to tell if an element of $E_{X} + E_{Y}$ has a pre-image in either $E_{X}$ or $E_{Y}$ by writing $e_{x}$ or $e_{y}$.
Similarly, we will write $v_{x}$ to refer to a vertex with a pre-image in $V_{X}$.
Likewise, 
\[
    t_{\sfrac{X+Y}{\sim}}([e]) = [t_{X+Y}(e)]^{*}
\]

These definitions make $([]_{V},[]_{E})$ automatically a homomorphism with respect to source and target maps.
These maps are also automatically well-defined functions since $[e_1] = [e_2]$ implies $e_1 = e_2$ since the mappings for edges are mono.


Recall that $<$ is essentially a transitive closure of $<^{\mu}$ and hence we can first define 
    \[<_{\sfrac{X+Y}{\sim}}^{\mu}(\iota_{\sfrac{E_{X} + E_{Y}}{\sim}}[e])
    \] and 
    \[
        <_{\sfrac{X+Y}{\sim}}^{\mu}(\iota_{\sfrac{V_{X} + V_{Y}}{\sim}}[u])
    \] where $\iota_*$ are injections into $\sfrac{V_{X} + V_{Y}}{\sim} + \sfrac{E_{X} + E_{Y}}{\sim}$.
    We will consider several cases.
    First, assume that $u$ is an element of $V_{X+Y}$ and
    \begin{enumerate}
        \item If there exists $v$ such that $<_{X+Y}^{\mu}(\iota_{V_{X} + V_{Y}}(v))$ is defined and $u \sim v$
              we let
              \[
                <_{\sfrac{X+Y}{\sim}}^{\mu}(\iota_{\sfrac{V_{X} + V_{Y}}{\sim}}([u])) = [<_{X+Y}^{\mu}(\iota_{V_{X} + V_{Y}}(v))]
              \]
        \item \label{def:child_respects_connectivity} $u$ has no pre-image in $V_{Z}$ and $<_{X+Y}^{\mu}(\iota_{V_{X} + V_{Y}}(u))$ is undefined (\textit{i.e.}, $[u) = \varnothing$).
              If there exists $v$ such that $<_{X+Y}^{\mu}(\iota_{V_{X} + V_{Y}}(v)) = e'$ and such that there is an \textit{undirected} path from $[u]$ to $[v]$, then we define
              \[ 
                <_{\sfrac{X+Y}{\sim}}^{\mu}(\iota_{\sfrac{V_{X} + V_{Y}}{\sim}}[u]) = <_{\sfrac{X+Y}{\sim}}^{\mu}(\iota_{\sfrac{V_{X} + V_{Y}}{\sim}}[v])
              \]
    \end{enumerate}
    Note that the existence of a path in the definition (2) implies that there exists $v' \sim v$ such that there is a path from $u$ to $v'$ and furthermore $v \not = v'$.
    If $v' = v$ then it would mean that $[u) \not = \varnothing$ since $[v) \not = \varnothing$ and there is a path from $u$ to $v$.
    There is no case when $u$ has a pre-image in $V_{Z}$ and yet $<_{X+Y}^{\mu}(\iota_{V_{X} + V_{Y}}(u))$ is undefined and there exists $v$ such that $<_{X+Y}^{\mu}(\iota_{V_{X} + V_{Y}}(v)) = e$ and there is a path from $[u]$ to $[v]$.
    The existence of a path would mean that there is $u' \sim u$ and $v' \sim v$ such that there is a path from $u'$ to $v'$ and since $u$ has a pre-image in $V_{Z}$, $u = f_{V};\iota_{1,V}(z_1)$.
    $v' \sim v$ means that both $v'$ and $v$ have a pre-image in $V_{Z}$ (unless $v' = v$ in which case $[u') \not = \varnothing$ as there is a path from $u'$ to $v'$) and since $u = f_{V};\iota_{1,V}(z_1)$ and $[u) = \varnothing$ $f_{V};\iota_{1,V}(z) = \varnothing$ for all $z$.
    Since $[v) \not = \varnothing$ and $v' \sim v$, $v$ is necessarily in the image of $g_{V};\iota_{2,V}$ and for all $z$ $[g_{V};\iota_{2,V}(z)) \not = \varnothing$ as well as for some $z'$ such that $u'' = g_{V};\iota_{2,V}(z')$ and $u'' \sim_{S} u$ which would mean that the case (1) is applicable to $u$.
    Similarly, when $u = g_{V};\iota_{2,V}(z)$.
    
    Otherwise, we leave $<_{\sfrac{X+Y}{\sim}}^{\mu}(\iota_{\sfrac{V_{X} + V_{Y}}{\sim}}([u]))$ undefined.
    Now, assume that $e$ is an element of $E_{X+Y}$.
    Consider two cases.
    \begin{enumerate}
    \item  $<_{X+Y}^{\mu}(\iota_{E_{X} + E_{Y}}(e)) = e'$.
            Then we define
        \[
            <_{\sfrac{X+Y}{\sim}}^{\mu}(\iota_{\sfrac{E_{X} + E_{Y}}{\sim}}[e]) = [<_{X+Y}^{\mu}(\iota_{E_{X} + E_{Y}}(e))]
        \]
    \item $[e) = \varnothing$ and there exists $v$ such that $<_{X+Y}^{\mu}(\iota_{V_{X} + V_{Y}}(v)) = e'$ and such that there is an undirected path from $[e]$ to $[v]$, we define 
        \[
            <_{\sfrac{X+Y}{\sim}}^{\mu}(\iota_{\sfrac{E_{X} + E_{Y}}{\sim}}[e]) = <_{\sfrac{X+Y}{\sim}}^{\mu}(\iota_{\sfrac{V_{X} + V_{Y}}{\sim}}[v])
        \] 
    \end{enumerate}
    Otherwise we leave $<_{\sfrac{X+Y}{\sim}}^{\mu}(\iota_{E_{X} + E_{Y}}{\sim}([e]))$ undefined.
    Clearly, all the cases above are disjoint.

    Let's show that the definition does not depend on a particular choice of $v$ in (1).
    Suppose there exist $v_1$ and $v_2$ such that $v_1 \not = v_2$ and $[v_1) \not = \varnothing$ and $[v_2) \not = \varnothing$ and such that $u \sim v_1$ and $u \sim v_2$.
    Then, it must be the case that $<_{X+Y}^{\mu}(\iota_{V_{X} + V_{Y}}(v_1)) = <_{X+Y}^{\mu}(\iota_{V_{X} + V_{Y}}(v_2))$.
    Consider cases.
    \begin{itemize}
        \item $u = v_1$ and $u \sim_{S} v_2$. Then there exists $z \in V_{Z}$ such that $u = f_{V};\iota_{1,V}(z) = v_1$ and $v_2 = g_{V};\iota_{2,V}(z)$.
              Since $[v_1) \not = \varnothing$, it must be the case that $[v_2) = \varnothing$ according to assumption \ref{assumption:non_ambiguous_predecessors} and we get a contraction to $[v_2) \not = \varnothing$.
        \item $u = v_1$ and $u \sim v_2$. The latter implies that $u$ has a pre-image in $V_{Z}$ and so does $v_2$.
              For both $[v_1) \not = \varnothing$ and $[v_2) \not = \varnothing$ they should be both in the image of $f_{V};\iota_{1,V}$ or $g_{V};\iota_{2,V}$ and according to assumption \ref{assumption:equal_predecessors} their sets of predecessors must be equal.
        \item $u \sim_{S} v_1$ and $u \sim_{S} v_2$. This implies $u = f_{V};\iota_{1,V}(z_1) = f_{V};\iota_{1,V}(z_2)$ and $v_{1} = g_{V};\iota_{2,V}(z_1)$ and $v_{2} = g_{V};\iota_{2,V}(z_2)$.
              By assumption \ref{assumption:equal_predecessors} it must be the case that $[g_{V}(z_1)) = [g_{V}(z_2))$ which implies that $[v_1) = [v_2)$.
              The case when $u = g_{V};\iota_{2,V}(z_1)$ is symmetric.
        \item $u \sim v_1$ and $u \sim v_2$. Same as above, by assumption \ref{assumption:non_ambiguous_predecessors} both $v_1$ and $v_2$ should be in the image of $f_{V};\iota_{1,V}$ or $g_{V};\iota_{2,V}$ and by assumption \ref{assumption:equal_predecessors} $[v_1) = [v_2)$.
    \end{itemize}

    Let's show that the definition does not depend on a particular choice of $v$ in (2).
    Suppose, $[u) = \varnothing$ and there exist $v_1$ and $v_2$ such that $<_{X+Y}^{\mu}(v_1) = e_1$ and $<_{X+Y}^{\mu}(v_2) = e_2$ and $e_1 \not = e_2$ and such that there is a path from $[u]$ to $[v_1]$ and from $[u]$ to $[v_2]$.
    The existence of a path from $[u]$ to $[v_1]$ means that there is $v_1' \sim v_1$ such that there is a path from $u'$ to $v_1'$ and similarly for $v_2$ and $v_2'$.
    We will proceed by case analysis.
    \begin{itemize}
            \item If $v_1 = v_1'$ or $v_2 = v_2'$ it means that there is a path from $u$ to $v_1$ or from $u$ to $v_2$ and since $[v_1) \not = \varnothing$ as well as $[v_2) \not = \varnothing$ it must be the case that $[u) \not = \varnothing$ which contradicts the assumption that $[u) = \varnothing$.
            \item Suppose $v_1 \sim_{S} v_1'$ which means there exists $z_1 \in V_{Z}$ such that $v_1 = f_{V};\iota_{1,V}(z_1)$ and $v_1' = g_{V};\iota_{2,V}(z_1)$ and suppose $v_2 \sim_{S} v_2'$ which further means there exists $z_2 \in V_{Z}$ such that $v_2 = f_{V};\iota_{1,V}(z_2)$ and $v_2' = g_{V};\iota_{2,V}(z_2)$.
                  Then we have
                  \begin{align*}
                    <_{X+Y}^{\mu}(\iota_{V_{X} + V_{Y}}(v_1)) &= <_{X+Y}^{\mu}(f_{V};\iota_{1,V};\iota_{V_{X} + V_{Y}}(z_1))\\
                                                              &= \iota_{1,E}(<_{X}^{\mu}(f_{V};\iota_{V_{X}}(z_1)))
                \end{align*}
                  and
                  \begin{align*}
                    <_{X+Y}^{\mu}(\iota_{V_{X} + V_{Y}}(v_2)) &= <_{X+Y}^{\mu}(f_{V};\iota_{1,V};\iota_{V_{X} + V_{Y}}(z_2))\\
                                                              &= \iota_{1,E}(<_{X}^{\mu}(f_{V};\iota_{V_{X}}(z_2)))
                \end{align*}
                and $e_1 \not = e_2$ implies $<_{X}^{\mu}(f_{V};\iota_{V_{X}}(z_1)) \not = <_{X}^{\mu}(f_{V};\iota_{V_{X}}(z_2))$ which contradicts the assumption \ref{assumption:equal_predecessors}.
                The case if $v_2 = g_{V};\iota_{2,V}(z_2)$ and $v_1 = f_{V};\iota_{1,V}(z_1)$ ultimately contradicts the assumption \ref{assumption:non_ambiguous_predecessors} as it entails that both
                \[
                    <_{X}^{\mu}(f_{V};\iota_{V_{X}}(z_1)) \not = \varnothing
                \]
                and
                \[
                    <_{Y}^{\mu}(g_{V};\iota_{V_{Y}}(z_2)) \not = \varnothing
                \]
                The cases when $v_1 = g_{V};\iota_{2,V}(z_1)$ and $v_2 = f_{V};\iota_{1,V}(z_2)$, and $v_1 = g_{V};\iota_{2,V}(z_1)$ and $v_2 = g_{V};\iota_{2,V}(z_2)$ are analogous.
            \item Suppose $v_1 \sim_{S} v_1'$ and $v_2 \sim v_2'$ via $w = (x_1, \ldots, x_n)$ such that there is a path from $u$ to $v_1'$ and from $u$ to $x_n = v_2'$.
                  The mere fact that $v_2 \sim v_2'$ implies that $v_2$ should have a pre-image in $V_{Z}$ and the same holds for $v_1$.
                  By the same reasoning as above it must be the case that $<_{X+Y}^{\mu}(\iota_{V_{X} + V_{Y}}(v_1)) = <_{X+Y}^{\mu}(\iota_{V_{X} + V_{Y}}(v_2))$.
            \item The case $v_1 \sim v_1'$ via $w = (x_1, \ldots, x_n)$ and $v_2 \sim_{S} v_1$ is symmetric to the above.
            \item The case when $v_1 \sim v_1'$ and $v_2 \sim v_2'$ is analogous.
    \end{itemize}

    Let's check that this is a well-defined function.
    Suppose $v_1, v_2$ are vertices and $v_1 \sim v_2$, then $<_{\sfrac{X+Y}{\sim}}^{\mu}(\iota_{\sfrac{V_{X}+V_{Y}}{\sim}}[v_1]) = <_{\sfrac{X+Y}{\sim}}^{\mu}(\iota_{\sfrac{V_{X} + V_{Y}}{\sim}}[v_2])$.
    \begin{itemize}
        \item If $[v_2) \not = \varnothing$, then, by definition we have
                \begin{align*}
                    <_{\sfrac{X+Y}{\sim}}^{\mu}(\iota_{\sfrac{V_{X}+V_{Y}}{\sim}}[v_1]) &= [<_{X+Y}^{\mu}(\iota_{V_{X} + V_{Y}}(v_2))]\\
                     &= \text{ as $v_2 \sim v_2$ and $[v_2) \not = \varnothing$ }\\
                     &= <_{\sfrac{X+Y}{\sim}}^{\mu}(\iota_{\sfrac{V_{X}+V_{Y}}{\sim}}[v_2])
                \end{align*}
        \item The case when $[v_1) \not = \varnothing$ is symmetric.
        \item If both $[v_2) = \varnothing$ and $[v_1) = \varnothing$ but there exists $[v_3) \not = \varnothing$ such that $v_1 \sim v_3$, then
        \begin{align*}
            <_{\sfrac{X+Y}{\sim}}^{\mu}(\iota_{\sfrac{V_{X}+V_{Y}}{\sim}}[v_1]) &= [<_{X+Y}^{\mu}(\iota_{V_{X} + V_{Y}}(v_3))]\\
             &= \text{ as $v_3 \sim v_3$ and $[v_3) \not = \varnothing$ }\\
             &= <_{\sfrac{X+Y}{\sim}}^{\mu}(\iota_{\sfrac{V_{X}+V_{Y}}{\sim}}[v_3])\\
             &= \text{ as $v_2 \sim v_1 \sim v_3$}\\
             &= <_{\sfrac{X+Y}{\sim}}^{\mu}(\iota_{\sfrac{V_{X}+V_{Y}}{\sim}}[v_2])
        \end{align*}
    \end{itemize}
    We do not need to check well-defined-ness for edges as $e_1 \sim e_2$ just implies $e_1 = e_2$ for edges.

    We also need to make sure that $([]_{V},[]_{E})$ is homomorphic with respect to $<^{\mu}$.
    That is, we need to check if $[v) \not = \varnothing$, then $[<_{X+Y}^{\mu}(\iota_{V_{X} + V_{Y}}(v))] = <_{\sfrac{X+Y}{\sim}}^{\mu}(\iota_{\sfrac{V_{X} + V_{Y}}{\sim}}([v]))$.
    Since $v \sim v$ this is homomorphic by definition.
    Similarly for edges.
    
    Then $<_{\sfrac{X+Y}{\sim}} : V_{\sfrac{X+Y}{\sim}} + E_{\sfrac{X+Y}{\sim}} \to (E_{\sfrac{X+Y}{\sim}})^{*}$ is defined as a transitive closure of $<_{\sfrac{X+Y}{\sim}}^{\mu}$.
        
    Let's now define $\consistency_{\sfrac{X+Y}{\sim}}$.
    We can consider the consistency relation from the coproduct as a function 
    \[
        \consistency_{X+Y} : (V_{X} + V_{Y}) + (E_{X} + E_{Y}) \to 2^{(V_{X} + V_{Y}) + (E_{X} + E_{Y})}
    \]
    Quotienting the values of the function gives us 
    \[
        \consistency_{X+Y}' : V_{X} + V_{Y} + E_{X} + E_{Y} \to 2^{(\sfrac{V_{X} + V_{Y}}{\sim} + \sfrac{(E_{X} + E_{Y})}{\sim})}
     \]
    which is essentially $[ []_{V};\iota_{\sfrac{V_{X} + V_{Y}}{\sim}}, []_{E};\iota_{\sfrac{E_{X} + E_{Y}}{\sim}}]^{*}$ (a copairing extended to sequences that we will further denote as $[ []_{V}^{\consistency} []_{E}^{\consistency}]$) applied to the return value of $\consistency_{X+Y}$.
    We then first define an auxiliary relation $\consistency^{\hashtag}$ similarly to how $<^{\mu}_{\sfrac{X+Y}{\sim}}$ was defined. We begin with defining $\consistency^{\hashtag}_{\sfrac{X+Y}{\sim}}$ for vertices.

    \begin{enumerate}
        \item If there exists $v$ such that $\consistency_{X+Y}(\iota_{V_{X} + V_{Y}}(v)) \not = \varnothing$ and $u \sim v$, we let
              \ifdefined \ONECOLUMN
              \[
                \consistency_{\sfrac{X+Y}{\sim}}^{\hashtag}(\iota_{\sfrac{V_{X} + V_{Y}}{\sim}}([u]))
                =
                [[]_{V}^{\consistency},[]_{E}^{\consistency}]^{*}(\consistency_{X+Y}(\iota_{V_{X} + V_{Y}}(v)))
              \]
              \else
              \begin{align*}
                &\consistency_{\sfrac{X+Y}{\sim}}^{\hashtag}(\iota_{\sfrac{V_{X} + V_{Y}}{\sim}}([u]))\\
                &=\\
                &[[]_{V}^{\consistency},[]_{E}^{\consistency}]^{*}(\consistency_{X+Y}(\iota_{V_{X} + V_{Y}}(v)))
            \end{align*}
            \fi
        \item $u$ has no pre-image in $V_{Z}$ and $\consistency_{X+Y}(\iota_{V_{X} + V_{Y}}(u)) = \varnothing$ and there exists $v$ such that $\consistency_{X+Y}(\iota_{V_{X} + V_{Y}}(v)) \not = \varnothing$ and such that there is an undirected path from $[u]$ to $[v]$.
        Then we let
        \[
            \consistency_{\sfrac{X+Y}{\sim}}^{\hashtag}(\iota_{\sfrac{V_{X} + V_{Y}}{\sim}}([u])) = \consistency_{\sfrac{X+Y}{\sim}}^{\hashtag}(\iota_{\sfrac{V_{X} + V_{Y}}{\sim}}([v]))
        \]
    \end{enumerate}

    Next we define $\consistency_{\sfrac{X+Y}{\sim}}^{\hashtag}$ for edges.

    \begin{enumerate}
        \item If $\consistency_{X+Y}(\iota_{E_{X} + E_{Y}}(e)) \not = \varnothing$, then
                \begin{align*}
                    \consistency_{\sfrac{X+Y}{\sim}}^{\hashtag}(\iota_{\sfrac{E_{X} + E_{Y}}{\sim}}([e]_{E})) =
                    [[]_{V}^{\consistency}, []_{E}^{\consistency}]^{*}(\consistency_{X+Y}(\iota_{E_{X} + E_{Y}}(e)))
                \end{align*}
        \item $\consistency_{X+Y}(\iota_{E_{X} + E_{Y}}(e)) = \varnothing$ and there exists $v$ such that $\consistency_{X+Y}(\iota_{V_{X} + V_{Y}}(v)) \not = \varnothing$ and such that there is an undirected path from $[e]$ to $[v]$
        then
        \[
            \consistency_{\sfrac{X+Y}{\sim}}^{\hashtag}(\iota_{\sfrac{E_{X} + E_{Y}}{\sim}}([e]_{E})) = \consistency_{\sfrac{X+Y}{\sim}}^{\hashtag}(\iota_{V_{X} + V_{Y}}([v]))
        \]
    \end{enumerate}
    
    The well-definedness of this construction follows by the same argument as the well-definedness of $<_{\sfrac{X+Y}{\sim}}^{\mu}$.
    Then we define
    \ifdefined \ONECOLUMN
    \[
        \consistency_{\sfrac{X+Y}{\sim}}(\iota_{\sfrac{V_{X} + V_{Y}}{\sim}}([v])) \qquad \text{and} \qquad \consistency_{\sfrac{X+Y}{\sim}}(\iota_{\sfrac{E_{X} + E_{Y}}{\sim}}([e]))
    \]
    \else
    \[
        \consistency_{\sfrac{X+Y}{\sim}}(\iota_{\sfrac{V_{X} + V_{Y}}{\sim}}([v]))
    \]
    and
    \[
        \consistency_{\sfrac{X+Y}{\sim}}(\iota_{\sfrac{E_{X} + E_{Y}}{\sim}}([e]))
    \]
    \fi

    as closures of $\consistency^{\hashtag}_{\sfrac{X+Y}{\sim}}$ as below.
    \[
      (\consistency^{\hashtag}_{\sfrac{X+Y}{\sim}}(\iota_{\sfrac{*}{\sim}}([x])))^{c}
    \]
    where $c$ denotes a closure and $\iota_{\sfrac{*}{\sim}}$ is $\iota_{\sfrac{V_{X} + V_{Y}}{\sim}}$ or $\iota_{\sfrac{E_{X} + E_{Y}}{\sim}}$ depending on whether $[x_i]$ comes from $V_{X} + V_{Y}$ or $E_{X} + E_{Y}$, is the smallest set such that
    \begin{itemize}
        \item $[x] \in (\consistency^{\hashtag}_{\sfrac{X+Y}{\sim}}(\iota_{\sfrac{*}{\sim}}([x])))^{c}$ if $\consistency^{\hashtag}_{\sfrac{X+Y}{\sim}}(\iota_{\sfrac{*}{\sim}}([x])) \not = \varnothing$
        \item if $[y] \in \consistency^{\hashtag}_{\sfrac{X+Y}{\sim}}(\iota_{\sfrac{*}{\sim}}([x]))$ then 
        \[
            [x] \in (\consistency^{\hashtag}_{\sfrac{X+Y}{\sim}}(\iota_{\sfrac{*}{\sim}}([y])))^{c}
        \]
        \item for any sequence $([x_1], \ldots, [x_n])$ such that
        \ifdefined \ONECOLUMN
        \[[x_i] \in \consistency^{\hashtag}_{\sfrac{X+Y}{\sim}}(\iota_{\sfrac{*}{\sim}}([x_{i+1}])) \qquad \text{or} \qquad [x_{i+1}] \in \consistency^{\hashtag}_{\sfrac{X+Y}{\sim}}(\iota_{\sfrac{*}{\sim}}([x_{i}]))\]
        \else
        \[
            [x_i] \in \consistency^{\hashtag}_{\sfrac{X+Y}{\sim}}(\iota_{\sfrac{*}{\sim}}([x_{i+1}]))
        \] or 
        \[
            [x_{i+1}] \in \consistency^{\hashtag}_{\sfrac{X+Y}{\sim}}(\iota_{\sfrac{*}{\sim}}([x_{i}]))
        \]
        \fi
         for $i < n$ both
        \ifdefined \ONECOLUMN
        \[
            [x_1] \in (\consistency^{\hashtag}_{\sfrac{X+Y}{\sim}}(\iota_{\sfrac{*}{\sim}}([x_{n}])))^{c} \qquad \text{and} \qquad [x_n] \in (\consistency^{\hashtag}_{\sfrac{X+Y}{\sim}}(\iota_{\sfrac{*}{\sim}}([x_{1}])))^{c}
        \]
        \else
         \[
            [x_1] \in (\consistency^{\hashtag}_{\sfrac{X+Y}{\sim}}(\iota_{\sfrac{*}{\sim}}([x_{n}])))^{c}
        \]
        and
        \[
            [x_n] \in (\consistency^{\hashtag}_{\sfrac{X+Y}{\sim}}(\iota_{\sfrac{*}{\sim}}([x_{1}])))^{c}
        \]
        \fi
    \end{itemize}
    Let's check that $([]_{V},[]_{E})$ is a homomorphism with respect to $\consistency$.
    We need to check, if
    \ifdefined \ONECOLUMN
    \[
        [[]_{V};\iota_{\sfrac{V_{X} + V_{Y}}{\sim}},[]_{E};\iota_{\sfrac{E_{X} + E_{Y}}{\sim}}]^{*}(\consistency_{X+Y}(\iota_{V_{X} + V_{Y}}(v)))
        \subseteq
        \consistency_{\sfrac{X+Y}{\sim}}(\iota_{\sfrac{V_{X} + V_{Y}}{\sim}}([v]))
    \]
    \else
    \begin{align*}
        &[[]_{V};\iota_{\sfrac{V_{X} + V_{Y}}{\sim}},[]_{E};\iota_{\sfrac{E_{X} + E_{Y}}{\sim}}]^{*}(\consistency_{X+Y}(\iota_{V_{X} + V_{Y}}(v)))\\
        &\subseteq\\
        &\consistency_{\sfrac{X+Y}{\sim}}(\iota_{\sfrac{V_{X} + V_{Y}}{\sim}}([v]))
    \end{align*}
    \fi

    \begin{enumerate}
        \item The first case is when $\consistency_{X+Y}(\iota_{V_{X} + V_{Y}}(v)) = \varnothing$.
              Then the property trivially holds as the empty set is a subset of any set.
        \item Suppose $\consistency_{X+Y}(\iota_{V_{X} + V_{Y}}(v)) \not = \varnothing$.
              Then, 
              \[
                \consistency_{\sfrac{X+Y}{\sim}}^{\hashtag}(\iota_{\sfrac{V_{X} + V_{Y}}{\sim}}([v])) = [[]_{V}^{\consistency},[]_{E}^{\consistency}]^{*}(\consistency_{X+Y}(\iota_{V_{X} + V_{Y}})(v))
              \]
              since $v \sim v$,
              and hence
              \ifdefined \ONECOLUMN
              \begin{align*}
                [[]_{V}^{\consistency},[]_{E}^{\consistency}]^{*}(\consistency_{X+Y}(\iota_{V_{X} + V_{Y}})(v))
                &\subseteq
                \consistency_{\sfrac{X+Y}{\sim}}^{\hashtag}(\iota_{\sfrac{V_{X} + V_{Y}}{\sim}}([v]))\\
                &\subseteq
                (\consistency_{\sfrac{X+Y}{\sim}}^{\hashtag}(\iota_{\sfrac{V_{X} + V_{Y}}{\sim}}([v])))^{c} = \consistency_{\sfrac{X+Y}{\sim}}(\iota_{\sfrac{V_{X} + V_{Y}}{\sim}}([v]))\\
              \end{align*}
              \else
              \begin{align*}
                &[[]_{V}^{\consistency},[]_{E}^{\consistency}]^{*}(\consistency_{X+Y}(\iota_{V_{X} + V_{Y}})(v))\\
                &\subseteq\\
                &\consistency_{\sfrac{X+Y}{\sim}}^{\hashtag}(\iota_{\sfrac{V_{X} + V_{Y}}{\sim}}([v]))\\
                &\subseteq\\
                &(\consistency_{\sfrac{X+Y}{\sim}}^{\hashtag}(\iota_{\sfrac{V_{X} + V_{Y}}{\sim}}([v])))^{c} = \consistency_{\sfrac{X+Y}{\sim}}(\iota_{\sfrac{V_{X} + V_{Y}}{\sim}}([v]))\\
              \end{align*}
              \fi
              by recalling that 
              \[
              [[]_{V}^{\consistency}, []_{E}^{\consistency}] = [[]_{V};\iota_{\sfrac{V_{X} + V_{Y}}{\sim}}, []_{E};\iota_{\sfrac{E_{X} + E_{Y}}{\sim}}]
              \].
    \end{enumerate}

    The cases for edges are analogous.

    So far we have shown that relations $<_{\sfrac{X+Y}{\sim}}$ and $\consistency_{\sfrac{X+Y}{\sim}}$ are well-defined and $([]_{V},[]_{E})$ is a homomorphism.
    By the Proposition~\ref{prop:pushout_is_e_hypergraph} the constructed $\sfrac{X+Y}{\sim}$ is an e-hypergraph.
    To show that it is a pushout we need to check the universal property.
    Let $u_{V}([\iota_{1,V}(x)]) = j_{1,V}(x)$ and $u_{V}([\iota_{2,V}(y)]) = j_2(y)$.
    We will need to check that this a well-defined function as well as that it is indeed a homomorphism.

    Let's say that $v_1 \sim v_2$ then we need to show that $u_{V}([v_1]) = u_{V}([v_2])$.
    $v_1 \sim v_2$ means there is a sequence $w = (x_1, \ldots, x_n)$ such that $v_1 = x_1$, $v_2 = x_n$ and for all $i < n$ $x_i \sim_{S} x_{i+1}$. We will proceed by induction on the length of $w$.
    \begin{itemize}
        \item $|w| = 2$, then $v_1 \sim_{S} v_2$ and there exists $z \in V_{Z}$ such that $v_1 = f_{V};\iota_{1,V})(z)$ and $v_2 = g_{V};\iota_{2,V}(z)$ and we have
        \[
            u_{V}([v_1]) = u_{V}([f_{V};\iota_{1,V}(z)]) = j_1({f_{V}(z)})
        \] 
        and 
        \[
            u_{V}([v_2]) = u_{V}([g_{V};\iota_{2,V}(z)]) = j_{2}(g_{V}(z))
        \] 
        which are equal by commutativity.
        \item Assume if $v_1 \sim v_2$ via $w : |w| \leq n$ then $u_{V}([v_1]) = u_{V}([v_2])$.
        \item Suppose $v_1 \sim v_2$ via $w = (x_1, \ldots, x_n, x_{n+1})$.
              By hypothesis, we know that $u_{V}([v_1]) = u_{V}([x_1]) = u_{V}([x_n])$ and that $u_{V}([x_n]) = u_{V}([x_{n+1}]) = u_{V}([v_2])$ and hence $u_{V}([v_1]) = u_{V}([v_2])$.
    \end{itemize}
    Because $v_1 \sim t_2$, it means that there exists $z \in V_{Z}$ such that $t_1 = f_{V};\iota_{1,V}(z)$ and $t_2 = g_{V};\iota_{2,V}(z)$.
    Then, we have
    \[
    u_{V}([t_1]) = u_{V}([f_{V};\iota_{1,V}(z)]) = j_1({f_{V}(z)})
    \] and 
    \[
    u_{V}([t_2]) = u_{V}([g_{V};\iota_{2,V}(z)]) = j_{2}(g_{V}(z))
    \] which are equal by commutativity.

    Let's show that such $(u_{V},u_{E})$ is unique.
    Assume that there is $u'_{V}$ such that $j_{1,V}(v) = u'_{V}([\iota_{1,V}(v)])$ and $j_{2,V}(v) = u'_{V}([\iota_{2,V}(v)])$.
    Then we have $u_{V}[\iota_{1,V}(v)] = j_{1,V}(v) = u'_{V}([\iota_{1,V}(v)])$ for all $v \in V_{X}$ (similarly for $v \in V_{Y}$) and hence $u_{V} = u'_{V}$.
    Similarly for $u_{E}$.

    Let's now check if $u = (u_{V},u_{E})$ is a homomorphism.
    \begin{enumerate}
        \item The first condition to check is if $u_{V}^{*}(s_{\sfrac{X+Y}{\sim}}([e])) = s_{Q}(u_{E}([e]))$. 
              By construction of $s_{\sfrac{X+Y}{\sim}}$ we have
              \[
                u_{V}^{*}(s_{\sfrac{X+Y}{\sim}}([e])) = u_{V}^{*}([s_{X+Y}(e)]^{*})
              \]
              Suppose $e = \iota_{E_{X}}(e')$ and then
              \begin{align*}
                u_{V}^{*}([s_{X+Y}(e)]^{*}) &= u_{V}^{*}([s_{X+Y}(\iota_{1,E}(e'))]^{*})\\
                                            &= u_{V}^{*}([(\iota_{1,V}^{*}(s_{X}(e')))]^{*})
              \end{align*}
              Since $j_1$ and $j_2$ are homomorphisms, it must be the case that $s_{Q}(j_{1,E}(e')) = j_{1,V}^{*}(s_{X}(e'))$.
              By definition, 
              \begin{align*}                
                u_{V}^{*}([(\iota_{1,V}^{*}(s_{X}(e')))]^{*}) &= j_{1,V}^{*}((s_{X}(e')))\\ 
                &= s_{Q}(j_{1,E}(e'))\\
                &= s_{Q}(u_{E}([\iota_{1,E}(e')]))\\
                &= s_{Q}(u_{E}([e]))
            \end{align*}
            Similarly for $e = \iota_{E_{Y}}(e')$.
        \item The case for targets is analogous.
        \item The next condition to check is if $[[v]) \not = \varnothing$ then 
    \[
        <^{\mu}_{Q}(\iota_{V_{Q}}(u_{V}([v]))) = u_{E}(<^{\mu}_{\sfrac{X+Y}{\sim}}(\iota_{\sfrac{V_{X} + V_{Y}}{\sim}}[v]))
    \]
    We need to consider a few cases.
    \begin{itemize}
        \item Suppose $<_{\sfrac{X+Y}{\sim}}^{\mu}(\iota_{\sfrac{V_{X} + V_{Y}}{\sim}}([v])) = [<_{X+Y}^{\mu}(\iota_{V_{X} + V_{Y}}(u))]$ for $u \sim v$.
              Then assume $u = \iota_{1,V}(u_{x})$ and then 
              \begin{align*}
                [<_{X+Y}^{\mu}(\iota_{V_{X} + V_{Y}}(u))] &= [<_{X+Y}^{\mu}(\iota_{V_{X} + V_{Y}}(\iota_{1,V}(u_{x})))]\\
                &= [\iota_{1,E}(<_{X}^{\mu}(\iota_{V_{X}}(u_{x})))]
            \end{align*}
            and
            \begin{align*}
                u_{E}([\iota_{1,E}(<_{X}^{\mu}(\iota_{V_{X}}(u_{x})))]) &= j_{1,E}(<_{X}^{\mu}(\iota_{V_{X}}(u_{x})))\\
                &= <_{Q}^{\mu}(\iota_{V_{Q}}(j_{1,V}(u_{x})))\\
                &= <_{Q}^{\mu}(\iota_{V_{Q}}(u([\iota_{1,V}(u_{x})])))\\
                &= <_{Q}^{\mu}(\iota_{V_{Q}}(u([v])))
            \end{align*}
            and we have
            \[
                <^{\mu}_{Q}(\iota_{V_{Q}}(u_{V}([v]))) = u_{E}(<^{\mu}_{\sfrac{X+Y}{\sim}}(\iota_{\sfrac{V_{X} + V_{Y}}{\sim}}[v]))
            \]
            The case when $u = \iota_{2,V}(u_{y})$ is symmetric.
        \item Suppose $<_{\sfrac{X+Y}{\sim}}^{\mu}(\iota_{\sfrac{V_{X} + V_{Y}}{\sim}}([v])) = <_{\sfrac{X+Y}{\sim}}^{\mu}(\iota_{\sfrac{V_{X} + V_{Y}}{\sim}}([v']))$ such that there is a path from $[v]$ to $[v']$ and $v$ has no pre-image in $V_{Z}$.
              By the argument above, we know that
              \[
                <^{\mu}_{Q}(\iota_{V_{Q}}(u_{V}([v']))) = u_{E}(<^{\mu}_{\sfrac{X+Y}{\sim}}(\iota_{\sfrac{V_{X} + V_{Y}}{\sim}}[v']))
              \]
              and by definition
              \[
                u_{E}(<^{\mu}_{\sfrac{X+Y}{\sim}}(\iota_{\sfrac{V_{X} + V_{Y}}{\sim}}[v'])) = u_{E}(<^{\mu}_{\sfrac{X+Y}{\sim}}(\iota_{\sfrac{V_{X} + V_{Y}}{\sim}}[v]))
              \]
              since $(u_{V},u_{E})$ preserves sources and targets and there is a path from $[v]$ to $[v']$, there is a path from $u_{V}([v])$ to $u_{V}([v'])$ and because $Q$ is an e-hypergraph
              \[
                <^{\mu}_{Q}(\iota_{V_{Q}}(u_{V}([v']))) = <^{\mu}_{Q}(\iota_{V_{Q}}(u_{V}([v])))
              \]
              and finally
              \[
                <^{\mu}_{Q}(\iota_{V_{Q}}(u_{V}([v]))) = u_{E}(<^{\mu}_{\sfrac{X+Y}{\sim}}(\iota_{\sfrac{V_{X} + V_{Y}}{\sim}}[v]))
              \]
        \end{itemize}
    The cases for edges are analogous.
    \item The last condition to check is if
    \begin{align*}
    &[u_{V};\iota_{V_{Q}}, u_{E};\iota_{E_{Q}}]^{*}(\consistency_{\sfrac{X+Y}{\sim}}(\iota_{\sfrac{V_{X} + V_{Y}}{\sim}}([v])))\\
    &\;\subseteq\\
    &\consistency_{Q}(u_{V};\iota_{V_{Q}}([v]))
    \end{align*}
    Recall that
    \[
      \consistency_{\sfrac{X+Y}{\sim}}(\iota_{\sfrac{V_{X} + V_{Y}}{\sim}}([v])) = (\consistency^{\hashtag}_{\sfrac{X+Y}{\sim}}(\iota_{\sfrac{V_{X} + V_{Y}}{\sim}}([v])))^{c}
    \]
    and first we will check if
    \begin{align*}
      &[u_{V};\iota_{V_{Q}}, u_{E};\iota_{E_{Q}}]^{*}(\consistency_{\sfrac{X+Y}{\sim}}^{\hashtag}(\iota_{\sfrac{V_{X} + V_{Y}}{\sim}}([v])))\\
      &\;\subseteq\\
      &\consistency_{Q}(u_{V};\iota_{V_{Q}}([v]))
    \end{align*}
    Consider cases
    \begin{itemize}
        \item If $\consistency_{\sfrac{X+Y}{\sim}}^{\hashtag}(\iota_{\sfrac{V_{X} + V_{Y}}{\sim}}([v])) = \varnothing$ then the inclusion is trivial and hence we assume that
        \[
            \consistency_{\sfrac{X+Y}{\sim}}^{\hashtag}(\iota_{\sfrac{V_{X} + V_{Y}}{\sim}}([v])) \not = \varnothing
        \]
        \item Suppose $\consistency_{X+Y}(\iota_{V_{X} + V_{Y}}(u)) \not = \varnothing$ and $v \sim u$ and then
        \[
            \consistency_{\sfrac{X+Y}{\sim}}^{\hashtag}(\iota_{\sfrac{V_{X} + V_{Y}}{\sim}}([v])) = [[]_{V}^{\consistency},[]_{E}^{\consistency}]^{*}(\consistency_{X+Y}(\iota_{V_{X}+V_{Y}}(u)))
        \]
        Assume $u = \iota_{1,V}(u_{x})$, then
        \ifdefined \ONECOLUMN
        \begin{align*}
            [[]_{V}^{\consistency},[]_{E}^{\consistency}]^{*}(\consistency_{X+Y}(\iota_{V_{X}+V_{Y}}(u)))
            &=
            [[]_{V}^{\consistency},[]_{E}^{\consistency}]^{*}(\consistency_{X+Y}(\iota_{V_{X}+V_{Y}}(\iota_{1,V}(u_{x}))))\\
            &=
            [\iota_{1,V};[]_{V}^{\consistency},\iota_{1,E};[]_{E}^{\consistency}]^{*}(\consistency_{X}(\iota_{V_{X}}(u_{x})))
        \end{align*}
        \else
        \begin{align*}
            &[[]_{V}^{\consistency},[]_{E}^{\consistency}]^{*}(\consistency_{X+Y}(\iota_{V_{X}+V_{Y}}(u)))\\
            &=\\
            &[[]_{V}^{\consistency},[]_{E}^{\consistency}]^{*}(\consistency_{X+Y}(\iota_{V_{X}+V_{Y}}(\iota_{1,V}(u_{x}))))\\
            &=\\
            &[\iota_{1,V};[]_{V}^{\consistency},\iota_{1,E};[]_{E}^{\consistency}]^{*}(\consistency_{X}(\iota_{V_{X}}(u_{x})))
        \end{align*}
        \fi

        and
        \ifdefined \ONECOLUMN
        \begin{align*}
            [u_{V};\iota_{V_{Q}}, u_{E};\iota_{E_{Q}}]^{*}(\consistency_{\sfrac{X+Y}{\sim}}^{\hashtag}(\iota_{\sfrac{V_{X} + V_{Y}}{\sim}}([v])))
            &=
            [\iota_{1,V};[]_{V};u_{V};\iota_{V_{Q}},\iota_{1,E};[]_{E};u_{E};\iota_{Q_{E}}]^{*}(\consistency_{X}(\iota_{V_{X}}(u_{x})))\\
            &=
            [j_{1,V};\iota_{V_{Q}},j_{1,E};\iota_{Q_{E}}]^{*}(\consistency_{X}(\iota_{V_{X}}(u_{x})))\\
            &\subseteq
            \consistency_{Q}(\iota_{V_{Q}}(j_{1,V}(u_{x})))\\
            &=
            \consistency_{Q}(\iota_{V_{Q}}(u_{V}[\iota_{1,V}(u_{x})]))\\
            &=
            \consistency_{Q}(\iota_{V_{Q}}(u_{V}[u]))\\
            &=
            \consistency_{Q}(\iota_{V_{Q}}(u_{V}[v]))
        \end{align*}
        \else
        \begin{align*}
            &[u_{V};\iota_{V_{Q}}, u_{E};\iota_{E_{Q}}]^{*}(\consistency_{\sfrac{X+Y}{\sim}}^{\hashtag}(\iota_{\sfrac{V_{X} + V_{Y}}{\sim}}([v])))\\
            &=\\
            &[\iota_{1,V};[]_{V};u_{V};\iota_{V_{Q}},\iota_{1,E};[]_{E};u_{E};\iota_{Q_{E}}]^{*}(\consistency_{X}(\iota_{V_{X}}(u_{x})))\\
            &=\\
            &[j_{1,V};\iota_{V_{Q}},j_{1,E};\iota_{Q_{E}}]^{*}(\consistency_{X}(\iota_{V_{X}}(u_{x})))\\
            &\subseteq\\
            &\consistency_{Q}(\iota_{V_{Q}}(j_{1,V}(u_{x})))\\
            &=\\
            &\consistency_{Q}(\iota_{V_{Q}}(u_{V}[\iota_{1,V}(u_{x})]))\\
            &=\\
            &\consistency_{Q}(\iota_{V_{Q}}(u_{V}[u]))\\
            &=\\
            &\consistency_{Q}(\iota_{V_{Q}}(u_{V}[v]))
        \end{align*}
        \fi
        The case when $u = \iota_{2,V}(u_{y})$ is symmetric.
        \item Suppose that $v$ has no pre-image in $V_{Z}$ and $\consistency_{X+Y}(\iota_{V_{X} + V_{Y}}(v)) = \varnothing$ and there exists $u$ such that $\consistency_{X+Y}(\iota_{V_{X} + V_{Y}}(u)) \not = \varnothing$ and such that there is an undirected path from $[v]$ to $[u]$ and then
        \[
            \consistency_{\sfrac{X+Y}{\sim}}^{\hashtag}(\iota_{\sfrac{V_{X} + V_{Y}}{\sim}}([v])) = \consistency_{\sfrac{X+Y}{\sim}}^{\hashtag}(\iota_{\sfrac{V_{X} + V_{Y}}{\sim}}([u]))    
        \]
        By the argument above it follows that
        \begin{align*}
            &[u_{V};\iota_{Q_{V}},u_{E};\iota_{Q_{E}}]^{*}(\consistency_{\sfrac{X+Y}{\sim}}^{\hashtag}(\iota_{\sfrac{V_{X} + V_{Y}}{\sim}}([u])))\\
            &\;\subseteq\\
            &\consistency_{Q}(\iota_{V_{Q}}(u_{V}([u])))
        \end{align*}
        \end{itemize}
       Since $(u_{V},u_{E})$ preserves paths, there is a path from $u_{V}([u])$ to $u_{V}([v])$ and since $Q$ is an e-hypergraph
       \ifdefined \ONECOLUMN
       \begin{align*}
        \consistency_{\sfrac{X+Y}{\sim}}^{\hashtag}(\iota_{\sfrac{V_{X} + V_{Y}}{\sim}}([v]))
        &=
        \consistency_{\sfrac{X+Y}{\sim}}^{\hashtag}(\iota_{\sfrac{V_{X} + V_{Y}}{\sim}}([u]))\\
        &\subseteq
        \consistency_{Q}(\iota_{V_{Q}}(u_{V}([u])))\\
        &=
        &\consistency_{Q}(\iota_{V_{Q}}(u_{V}([v])))
       \end{align*}
       \else
       \begin{align*}
        &\consistency_{\sfrac{X+Y}{\sim}}^{\hashtag}(\iota_{\sfrac{V_{X} + V_{Y}}{\sim}}([v]))\\
        &=\\
        &\consistency_{\sfrac{X+Y}{\sim}}^{\hashtag}(\iota_{\sfrac{V_{X} + V_{Y}}{\sim}}([u]))\\
        &\subseteq\\
        &\consistency_{Q}(\iota_{V_{Q}}(u_{V}([u])))\\
        &=\\
        &\consistency_{Q}(\iota_{V_{Q}}(u_{V}([v])))
    \end{align*}
    \fi
       So far we have shown that
       \begin{align*}
        &[u_{V};\iota_{V_{Q}}, u_{E};\iota_{E_{Q}}]^{*}(\consistency_{\sfrac{X+Y}{\sim}}^{\hashtag}(\iota_{\sfrac{V_{X} + V_{Y}}{\sim}}([v])))\\
        &\;\subseteq\\
        & \consistency_{Q}(u_{V};\iota_{V_{Q}}([v]))
       \end{align*}
       and
       \begin{align*}
        &[u_{V};\iota_{V_{Q}}, u_{E};\iota_{E_{Q}}]^{*}(\consistency_{\sfrac{X+Y}{\sim}}^{\hashtag}(\iota_{\sfrac{E_{X} + E_{Y}}{\sim}}([e])))\\
        &\;\subseteq\\
        &\consistency_{Q}(u_{E};\iota_{E_{Q}}([e]))
       \end{align*} follows by the same argument.
       Then, to check if
       \begin{align*}
        &[u_{V};\iota_{V_{Q}}, u_{E};\iota_{E_{Q}}]^{*}(\consistency_{\sfrac{X+Y}{\sim}}(\iota_{\sfrac{V_{X} + V_{Y}}{\sim}}([v])))\\
        &\; \subseteq\\
        &\consistency_{Q}(u_{V};\iota_{V_{Q}}([v]))
       \end{align*}
       we need to verify if
       \begin{itemize}
        \item $[u_{V};\iota_{V_{Q}},u_{E};\iota_{E_{Q}}]^{*}(\{[v]\}) \subseteq \consistency_{Q}(u_{V};\iota_{V_{Q}}([v]))$.
              This holds because $\consistency_{Q}$ is reflexive.
        \item $\iota_{\sfrac{V_{X} + V_{Y}}{\sim}}([u]) \in \consistency_{\sfrac{X+Y}{\sim}}^{\hashtag}(\iota_{\sfrac{V_{X} + V_{Y}}{\sim}}([v]))$ then 
        \[
            [u_{V};\iota_{V_{Q}}, u_{E};\iota_{E_{Q}}]^{*}(\{[v]\}) \in \consistency_{Q}(u_{V};\iota_{V_{Q}}([u]))
        \]
              The first bit implies that $u_{V};\iota_{V_{Q}}([u]) \in \consistency_{Q}(u_{V};\iota_{V_{Q}}([v]))$ and since $\consistency_{Q}$ is symmetric it must be the case that $u_{V_{Q}};\iota_{V_{Q}}[v] \in \consistency_{Q}(u_{V};\iota_{V_{Q}}([u]))$
        \item There exists a sequence $w = (x_1, \ldots, x_n)$ such that 
        \[
            \iota_{\sfrac{V_{X} + V_{Y}}{\sim}}([x_i]) \in \consistency_{\sfrac{X+Y}{\sim}}^{\hashtag}(\iota_{\sfrac{V_{X} + V_{Y}}{\sim}}([x_{i+1}]))
        \] 
        
        or 
        \[
            \iota_{\sfrac{V_{X} + V_{Y}}{\sim}}([x_{i+1}]) \in \consistency_{\sfrac{X+Y}{\sim}}^{\hashtag}(\iota_{\sfrac{V_{X} + V_{Y}}{\sim}}([x_{i}]))
        \]
        for $i < n$ (or similarly for edges), then $u_{V};\iota_{V_{Q}}([x_1]) \in \consistency_{Q}(u_{V};\iota_{V_{Q}}([x_n]))$.
        \begin{itemize}
            \item Base case is when $w = (x_1,x_2)$.
                  Then the statement is trivial as 
                  \begin{align*}
                  &[u_{V};\iota_{V_{Q}},u_{E};\iota_{E_{Q}}]^{*}(\consistency_{\sfrac{X+Y}{\sim}}^{\hashtag}(\iota_{\sfrac{V_{X} + V_{Y}}{\sim}}([x_2])))\\
                  &\subseteq\\
                  &\consistency_{Q}(u_{V};\iota_{V_{Q}}([x_2]))
                  \end{align*}
            \item Suppose the statement is true for all $w$ such that $|w| \leq n$.
            \item Consider a sequence $w = (x_1, \ldots, x_{n}, x_{n+1})$.
                  By assumption, we know that $u_{V};\iota_{V_{Q}}([x_1]) \in \consistency_{Q}(u_{V};\iota_{V_{Q}}([x_n]))$.
                  The fact that $\iota_{\sfrac{V_{X} + V_{Y}}{\sim}}([x_{n}]) \in \consistency_{\sfrac{X+Y}{\sim}}^{\hashtag}(\iota_{\sfrac{V_{X} + V_{Y}}{\sim}}([x_{n+1}]))$ (or its symmetric counterpart) implies that $u_{V};\iota_{V_{Q}}([x_{n}]) \in \consistency_{Q}(u_{V};\iota_{V_{Q}}([x_{n+1}]))$ and by the transitivity and symmetry of $\consistency_{Q}$ it follows that
                \[
                    u_{V};\iota_{V_{Q}}([x_1]) \in \consistency_{Q}(u_{V};\iota_{V_{Q}}([x_{n+1}]))
                \]
                and
                \[
                    u_{V};\iota_{V_{Q}}([x_{n+1}]) \in \consistency_{Q}(u_{V};\iota_{V_{Q}}([x_{1}]))
                \]
        \end{itemize}
       \end{itemize}
       Therefore, we have
       \begin{align*}
        &[u_{V};\iota_{V_{Q}}, u_{E};\iota_{E_{Q}}]^{*}(\consistency_{\sfrac{X+Y}{\sim}}(\iota_{\sfrac{V_{X} + V_{Y}}{\sim}}([v])))\\
        &\;\subseteq\\
        &\consistency_{Q}(u_{V};\iota_{V_{Q}}([v]))
       \end{align*}
       and similarly for edges. Which concludes that $(u_{V},u_{E})$ is a homomorphism.
    \end{enumerate}
\end{proof}

\begin{remark}
    The first assumption in~\ref{pushout:assumptions} can be weakened by allowing $E_{Z}$ to contain unlabelled edges with no inputs and outputs such that:
    \begin{itemize}
        \item $[f_{E}(e_{i})) = [f_{E}(e_{j}))$ and $[g_{E}(e_{i})) = [g_{E}(e_{j}))$ for all $e_{i},e_{j} \in E_{Z}$ 
        \item If $[f_{E}(e)) \not = \varnothing$ then $[g_{E}(e)) = \varnothing$ and if $[g_{E}(e)) \not = \varnothing$ then $[f_{E}(e)) = \varnothing$.
        \item $\consistency(f_{E}(e_{i})) = \consistency(f_{E}(e_{j}))$ and $\consistency(g_{E}(e_i)) = \consistency(g_{E}(e_j))$ for all $e_i,e_j$ in $E_{Z}$.
    \end{itemize}
\end{remark}

\begin{remark}
    If in a span $Y \xleftarrow{g} Z \xrightarrow{f} X$ of morphisms in $\catname{EHyp}(\Sigma)$ the morphisms $f$ and $g$ are monos, then the arrows $f'$ and $g'$ of the pushout $Y \xrightarrow{g'} X +_{f,g} Y \xleftarrow{f'} X$ are also monos.
\end{remark}
This is because $[]_{V}$ identifies two vertices $v_1$ and $v_2$ only if there exists a vertex $z$ in $V_{Z}$ such that $f_{V};\iota_{1,V}(z) = v_1$ and $g_{V};\iota_{2,V}(z) = v_2$ and therefore $f' = \iota_{1};[]$ and $g' = \iota_{2};[]$ are monos.
Similarly for edges.

\begin{figure}
    \[
        \scalebox{0.5}{
        \tikzfig{figures/pushout_example}
        }
\]
\captionsetup{belowskip=-3ex}
\caption{Pushout in $\catname{EHyp}(\Sigma)$}
\label{fig:pushout_example}
\end{figure}

\begin{example}
    Consider a diagram in Figure~\ref{fig:pushout_example} where the subscripts define the corresponding morphisms.
    E-hypergraph $Q$ is not a pushout as there is no homomorphism $u$ that would complete the diagram.
    In particular, there is no $(u_{V},u_{E})$ such that 
    \[
        [u_{V};\iota_{P_{V}},u_{E};\iota_{P_{E}}]^{*}(\consistency_{Q}(\iota_{V_{Q}}(q_1))) \subseteq \consistency_{P}(u_{V};\iota_{P_{V}}(q_1))
    \]
    If we let $u_{V}(i_{1,V}(v)) = j_{1,V}(v)$ for all $v \in V_{X}$ and $u_{V}(i_{2,V}(v)) = j_{2,V}(v)$ and similarly for edges.
    Then 
    \begin{align*}
        &[u_{V};\iota_{P_{V}}, u_{E};\iota_{P_{E}}]^{*}(\consistency_{Q}(\iota_{V_{Q}}(q_1)))\\
        &=\\
        &[u_{V};\iota_{P_{V}}, u_{E};\iota_{P_{E}}]^{*}(\{q_1, q_2, q_3, q_4, f, h, g\})\\
        &=\\
        &[\iota_{P_{V}}(j_{1,V}(x_1)), \iota_{P_{V}}(j_{1,V}(x_2)), \iota_{P_{V}}(j_{1,V}(x_3)),\\
         &\; \iota_{P_{V}}(j_{1,V}(x_4)), \iota_{P_{E}}(j_{1,E}(f)), \iota_{P_{E}}(j_{1,E}(g)), \iota_{P_{V}}(j_{2,V}(y_1)),\\
         &\; \iota_{P_{V}}(j_{2,V}(y_2)), \iota_{P_{E}}(j_{2,E}(h))]\\
        &=\\
        &[p_1, p_2, p_3, p_4, f, h, g]\\
        &\not \subseteq\\
        &\consistency_{P}(u_{V};\iota_{P_{V}}(q_1))\\
        &=\\
        &\consistency_{P}(j_{1,V};\iota_{P_{V}}(x_1))\\
        &=\\
        &\consistency_{P}(\iota_{P_{V}}(p_1))\\
        &=\\
        &[p_1, f, h, p_2]
    \end{align*}
\end{example}



\begin{lemma}
\label{lemma:child_irreflexive}
    $<_{\sfrac{X+Y}{\sim}}^{\mu}$ is irreflexive.
\end{lemma}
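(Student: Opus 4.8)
The plan is to argue by contradiction: suppose $<^{\mu}_{\sfrac{X+Y}{\sim}}([x]) = [x]$ for some $[x]$, and trace through the two defining cases of $<^{\mu}_{\sfrac{X+Y}{\sim}}$ to reach an absurdity. First I would reduce the problem to edges. Since $<^{\mu}_{\sfrac{X+Y}{\sim}}$ always takes values in edge-classes (immediate parents are hierarchical edges), and vertex-classes and edge-classes are disjoint, a reflexive instance can only occur when $[x] = [e]$ for an edge $e$. Moreover, because $S_{E}$, and hence its transitive closure $R_{E}$, is the diagonal, distinct edges are never identified, so $[e_{1}] = [e_{2}]$ forces $e_{1} = e_{2}$. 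It therefore suffices to rule out $<^{\mu}_{\sfrac{X+Y}{\sim}}([e]) = [e]$ for every edge $e$.

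The direct case (edge case (1) in the construction), where $<^{\mu}_{X+Y}(e) = e'$ and $<^{\mu}_{\sfrac{X+Y}{\sim}}([e]) = [e']$, is routine: $[e'] = [e]$ gives $e' = e$, hence $e <^{\mu}_{X+Y} e$, contradicting irreflexivity of $<_{X+Y}$. The latter holds because $X+Y$ is a coproduct of e-hypergraphs, and the child order on a coproduct is the disjoint union of two strict (hence irreflexive) partial orders.

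The remaining, genuinely delicate, case is the connectivity case (edge case (2)): $[e)_{X+Y} = \varnothing$, there is a vertex $v$ with $<^{\mu}_{X+Y}(v)$ defined and an undirected path from $[e]$ to $[v]$, and $<^{\mu}_{\sfrac{X+Y}{\sim}}([e]) := <^{\mu}_{\sfrac{X+Y}{\sim}}([v])$. Since $v$ already has a parent in $X+Y$, vertex case (1) gives $<^{\mu}_{\sfrac{X+Y}{\sim}}([v]) = [<^{\mu}_{X+Y}(v)]$, so the assumed self-loop forces $<^{\mu}_{X+Y}(v) = e$; that is, in $X+Y$ the top-level hierarchical edge $e$ is the immediate parent of $v$. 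I would then derive a contradiction from the coexistence of (a) $e$ being the parent of $v$, (b) $e$ being top-level, and (c) an undirected path from $[e]$ to $[v]$ in the quotient. The key structural fact is that condition (4) of Definition~\ref{def:e-homo} isolates the contents of a box from its exterior: every hyperedge incident to an interior vertex shares its parent and is therefore itself interior, while the incidences $s(e), t(e)$ of $e$ lie at $e$'s own level (exterior to its box). Hence in $X+Y$ no undirected path of hypergraph incidences connects $e$ to an interior vertex of its box.

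Finally, I must argue that passing to the quotient cannot manufacture such a boundary-crossing path. A new path can only appear through a class merging representatives from $X$ and $Y$, and assumptions \ref{assumption:equal_predecessors}--\ref{assumption:non_ambiguous_predecessors} guarantee this merging is level-coherent: all $Z$-images share a common parent-set, and a vertex interior in one component is glued only to a top-level vertex of the other, so the identification occurs exactly at the box interface and does not splice interior incidences to exterior ones. Consequently no undirected path from $[e]$ to an interior vertex of box $e$ can exist in the quotient either, contradicting (c). The main obstacle is precisely this last step --- controlling how the connectivity-driven reassignment of parents interacts with the gluing --- which requires combining the full strength of assumptions \ref{assumption:equal_predecessors}--\ref{assumption:non_ambiguous_predecessors} with condition (4); the direct case is comparatively trivial.
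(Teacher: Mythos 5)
Your proposal is correct and follows essentially the same route as the paper's own proof: reduce to edges (edge classes are singletons because $S_{E}$ is the diagonal), dispose of the direct case by irreflexivity of $<^{\mu}_{X+Y}$ on the coproduct, and in the connectivity case derive $e = <^{\mu}_{X+Y}(v)$ and contradict it using condition (4) of Definition~\ref{def:e-homo} together with assumptions (\ref{assumption:equal_predecessors})--(\ref{assumption:non_ambiguous_predecessors}). The only difference is organizational: the paper lifts the quotient path to a path from $e$ to some $v' \sim v$ in $X+Y$ and case-splits on whether $v' = v$ (transferring parenthood from $v$ to $v'$ via assumption (\ref{assumption:equal_predecessors}) when it is not), whereas you argue directly that the gluing is level-coherent and hence no quotient path from $[e]$ into the interior of its own box can exist --- the same structural facts, packaged as path non-existence rather than as a contradiction along a lifted path.
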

\begin{proof}
    Suppose there exist $[e]$ such that $[e] = <_{\sfrac{X+Y}{\sim}}^{\mu}(\iota_{\sfrac{E_{X} + E_{Y}}{\sim}}([e]))$.
    Then we have two cases.
    \begin{itemize}
      \item $<_{\sfrac{X+Y}{\sim}}^{\mu}(\iota_{\sfrac{E_{X} + E_{Y}}{\sim}}([e])) = [<_{X+Y}^{\mu}(\iota_{E_{X} + E_{Y}}(e))] = [e]$
      This would mean that $<_{X+Y}^{\mu}(\iota_{E_{X} + E_{Y}}(e)) = e$ as $[]_{E}$ is injective, which would mean that the relation $<_{X+Y}^{\mu}$ is not irreflexive because $e$ is the predecessor of $e$.
      \item Suppose $<_{\sfrac{X+Y}{\sim}}^{\mu}(\iota_{\sfrac{E_{X} + E_{Y}}{\sim}}([e])) = <_{\sfrac{X+Y}{\sim}}^{\mu}(\iota_{\sfrac{V_{X} + V_{Y}}{\sim}}([v])) = [<_{X+Y}^{\mu}(\iota_{V_{X} + V_{Y}}(v))]$ such that $<_{X+Y}^{\mu}(\iota_{V_{X} + V_{Y}}(v))$ is defined and there is a path from $[e]$ to $[v]$.
            This implies
            \[
                [e] = [<_{X+Y}^{\mu}(\iota_{V_{X} + V_{Y}}(v))]
            \]
            and
            \[
               e = <_{X+Y}^{\mu}(\iota_{V_{X} + V_{Y}}(v))
            \]
            The existence of a path from $[e]$ to $[v]$ means there is a path from $e$ to $v'$ such that $v' \sim v$.
            \begin{itemize}
                \item If $v' = v$, then there is a path from $e$ to $v$ which contradicts $e = <_{X+Y}^{\mu}(\iota_{V_{X} + V_{Y}}(v))$
                \item Consider the case when $v' \sim v$.
                      $e = <_{X+Y}^{\mu}(\iota_{V_{X} + V_{Y}}(v))$ implies that both $e$ and $v$ are in the image of $(\iota_{1,V},\iota_{1,E})$ or $(\iota_{2,V},\iota_{2,E})$.
                      Suppose $e = \iota_{1,E}(e_{x})$ and $v = \iota_{1,V}(v_{x})$.
                      Since $v' \sim v$, $v_{x} = f_{V}(z_1)$. 
                      And since there is a path from $v'$ to $e$, $v' = \iota_{1,V}(v'_{x})$ and $v'_{x} = f_{V}(z_2)$.
                      This further implies
                      \begin{align*}
                        <_{X+Y}^{\mu}(\iota_{V_{X} + V_{Y}}(v)) &= \iota_{1,E}(<_{X}^{\mu}(\iota_{V_{X}}(v_{x})))\\
                                                                &= \iota_{1,E}(<_{X}^{\mu}(\iota_{V_{X}}(f_{V}(z_1))))\\
                                                                &= \iota_{1,E}(<_{X}^{\mu}(\iota_{V_{X}}(f_{V}(z_2))))\\
                                                                &= \iota_{1,E}(e_{x})
                    \end{align*}
                      and
                      \[
                        e_{x} = <_{X}(\iota_{V_{X}}(f_{V}(z_2))) = <_{X}(\iota_{V_{X}}(v_{x}'))
                      \]
                      which contradicts that there is a path from $e_{x}$ to $v_{x}'$.
                      The case when $e = \iota_{2,E}(e_{y})$ and $v = \iota_{2,V}(v_{y})$ is symmetric.
            \end{itemize}
    \end{itemize}                  
      We do not need to check this requirement for vertices because a vertex can not be a predecessor.
\end{proof}

\begin{lemma}
\label{lemma:child_assymetric}
    $<_{\sfrac{X+Y}{\sim}}^{\mu}$ is asymmetric.
\end{lemma}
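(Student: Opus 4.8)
The plan is to argue by contradiction, mirroring the structure of the proof of Lemma~\ref{lemma:child_irreflexive}. Suppose there are classes with $[a] = {<_{\sfrac{X+Y}{\sim}}^{\mu}}([b])$ and $[b] = {<_{\sfrac{X+Y}{\sim}}^{\mu}}([a])$. First I would observe that every value of ${<_{\sfrac{X+Y}{\sim}}^{\mu}}$ is an \emph{edge} class: in each defining clause the output is either $[{<_{X+Y}^{\mu}}(-)]$ or is copied from the value taken on a vertex, which is again of this form, and ${<_{X+Y}^{\mu}}$ lands in edges. Hence both $[a]$ and $[b]$ are edge classes; since $e_1 \sim e_2$ forces $e_1 = e_2$ for edges, I may identify them with genuine edges $e_1,e_2 \in E_{X+Y}$, and the injectivity of $[]_E$ lets me transport equalities of classes back to equalities of edges.

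Next I would split on which clause of the definition produces each of the two parent values. When both $[e_1] = {<_{\sfrac{X+Y}{\sim}}^{\mu}}([e_2])$ and $[e_2] = {<_{\sfrac{X+Y}{\sim}}^{\mu}}([e_1])$ arise from the direct clause (case (1)), injectivity of $[]_E$ gives ${<_{X+Y}^{\mu}}(e_2) = e_1$ and ${<_{X+Y}^{\mu}}(e_1) = e_2$, so $e_1 <_{X+Y} e_2$ and $e_2 <_{X+Y} e_1$. This contradicts the fact that $<_{X+Y}$, being the side-by-side combination of the strict partial orders $<_X$ and $<_Y$, is asymmetric. This is the base case.

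The remaining cases are those in which at least one parent is assigned through the connectivity clause (case (2)), which requires the relevant edge to be top-level in $X+Y$ and connected by an undirected path to a vertex carrying a genuine predecessor. Here I would unfold each such assignment to its genuine witness, namely a vertex $v$ with ${<_{X+Y}^{\mu}}(v)$ defined together with $v' \sim v$ realising the class equality ${<_{\sfrac{X+Y}{\sim}}^{\mu}}([v]) = [{<_{X+Y}^{\mu}}(v')]$, and track for each representative whether it lies in the image of $\iota_1$ or of $\iota_2$, exactly as in the well-definedness arguments for ${<_{\sfrac{X+Y}{\sim}}^{\mu}}$ above. Using condition (4) on the child relation (an edge shares its immediate parent with its incident source and target vertices), together with the hypotheses that identified vertices have equal predecessor sets (\ref{assumption:equal_predecessors}) and that predecessors are non-ambiguous across the two legs (\ref{assumption:non_ambiguous_predecessors}), I would reduce each configuration of paths and representatives to a genuine cycle $p <_{X+Y} \cdots <_{X+Y} p$ in the underlying order, again contradicting the asymmetry of $<_{X+Y}$.

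The main obstacle is precisely this last family of cases: when one or both parents are obtained via the connectivity clause, the mutual-parent situation in the quotient need not occur literally in $X+Y$, and one must chase the identification $\sim$ and the undirected paths back into the coproduct to expose an actual order cycle. The bookkeeping, distinguishing the $\iota_1$- and $\iota_2$-preimages of each edge and vertex and showing that the connectivity closure cannot simultaneously nest $e_1$ below $e_2$ and $e_2$ below $e_1$, is the technical heart, and it is where assumptions (\ref{assumption:equal_predecessors}) and (\ref{assumption:non_ambiguous_predecessors}) are essential, just as they were for establishing that ${<_{\sfrac{X+Y}{\sim}}^{\mu}}$ is well defined.
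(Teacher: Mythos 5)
Your proposal matches the paper's proof essentially step for step: both restrict attention to edges (since every value of the parent map is a hierarchical-edge class and $\sim$ is trivial on edges), split on which defining clause produces each of the two parents, settle the direct--direct case by asymmetry of $<_{X+Y}^{\mu}$ in the coproduct, and resolve the connectivity-clause cases by tracing representatives through the coproduct injections and assumptions~\ref{assumption:equal_predecessors} and~\ref{assumption:non_ambiguous_predecessors} until a genuine order cycle appears in $X+Y$. The only cosmetic difference is that in the paper the case where both parents come from the connectivity clause ends in a violation of irreflexivity, i.e.\ $e_1 = <_{X+Y}^{\mu}(\iota_{E_{X}+E_{Y}}(e_1))$, rather than of asymmetry, which is immaterial.
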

\begin{proof}
    Let $e_1,e_2$ be edges. 
    Then, if $[e_1] = <_{\sfrac{X+Y}{\sim}}^{\mu}(\iota_{\sfrac{E_{X} + E_{Y}}{\sim}}([e_2]))$ it must not be the case that $[e_2] = <_{\sfrac{X+Y}{\sim}}^{\mu}(\iota_{\sfrac{E_{X} + E_{Y}}{\sim}}([e_1]))$.
    For $[e_1] = <_{\sfrac{X+Y}{\sim}}^{\mu}(\iota_{\sfrac{E_{X} + E_{Y}}{\sim}}([e_2]))$ we have two cases.
    \begin{itemize}
      \item 
      \[
        [e_1] = [<_{X+Y}^{\mu}(\iota_{E_{X} + E_{Y}}(e_2))]
      \]
      which implies $e_1 = <_{X+Y}^{\mu}(\iota_{E_{X} + E_{Y}}(e_2))$.
      Having at the same time $[e_2] = <_{\sfrac{X+Y}{\sim}}^{\mu}(\iota_{\sfrac{E_{X} + E_{Y}}{\sim}}([e_1]))$ gives us also two cases.
      \begin{itemize}
        \item $[e_2] = [<_{X+Y}^{\mu}(\iota_{E_{X} + E_{Y}}(e_1))]$ and $e_2 = <_{X+Y}^{\mu}(\iota_{E_{X} + E_{Y}}(e_1))$ which implies that $e_1$ is the predecessor of $e_2$ and vice versa which contradicts e-hypergraphness of $X+Y$.
        \item
            \[
                [e_2] = [<_{X+Y}^{\mu}(\iota_{V_{X} + V_{Y}}(v))]
            \]
            such that $[e_1) = \varnothing$ and there is a path from $[e_1]$ to $[v]$, i.e. a path from $e_1$ to $v'$ such that $v' \sim v$.
            \begin{itemize}
                \item $v' \sim v$ and there is a path from $e_1$ to $v'$. 
                Suppose $e_2 = \iota_{1,E}(e_{2,x})$ and $v = \iota_{1,V}(v_{x})$.
                Since $e_1$ is the predecessor of $e_2$, the former should also be in the image of $\iota_{1,E}$, i.e. $e_1 = \iota_{1,E}(e_{1,x})$, and since there is a path from $e_1$ to $v'$, $v' = \iota_{1,V}(v'_{x})$.
                $v' \sim v$ implies $v_{x} = f_{V}(z_1)$ and $v'_{x} = f_{V}(z_2)$.
                Then,
                \begin{align*}
                    <_{X+Y}^{\mu}(\iota_{V_{X} + V_{Y}}(v)) &= \iota_{1,E}(<_{X}^{\mu}(\iota_{V_{X}}(v_{x})))\\
                                                            &= \iota_{1,E}(<_{X}^{\mu}(\iota_{V_{X}}(f_{V}(z_1))))\\
                                                            &= \iota_{1,E}(<_{X}^{\mu}(\iota_{V_{X}}(f_{V}(z_2))))\\
                                                            &= \iota_{1,E}(<_{X}^{\mu}(\iota_{V_{X}}(v_{x}')))\\
                                                            &= <_{X+Y}^{\mu}(\iota_{V_{X} + V_{Y}}(v'))\\
                                                            &= <_{X+Y}^{\mu}(\iota_{E_{X} + E_{Y}}(e_1))
                \end{align*}
                which contradicts $e_1 = <_{X+Y}^{\mu}(\iota_{E_{X} + E_{Y}}(e_2))$.
                The case when $e_2 = \iota_{2,E}(e_{2,y})$ and $v = \iota_{2,V}(v_{y})$ is symmetric.
            \end{itemize}
        \end{itemize}
            \item The other case is when          
            \[
                [e_1] = [<_{X+Y}^{\mu}(\iota_{V_{X} + V_{Y}}(v_1))]
            \]
            such that there is a path from $e_2$ to $v'_1$ and $v_1 \sim v'_1$ and $[e_2) = \varnothing$.
            \begin{itemize}
                \item The case when $[e_2] = [<_{X+Y}^{\mu}(\iota_{E_{X} + E_{Y}}(e_1))]$ is analogous to the last case above.
                \item Suppose 
                \[
                    [e_2] = [<_{X+Y}^{\mu}(\iota_{V_{X} + V_{Y}}(v_2))]
                \] such that there is a path from $e_1$ to $v'_2$ and $v_2 \sim v'_2$ and $[e_1) = \varnothing$.
                    Suppose $e_1 = \iota_{1,E}(e_{1,x})$ and $v_1 = \iota_{1,V}(v_{1,x})$ which implies $v'_2 = \iota_{1,V}(v'_{2,x})$.
                    Since $v_1 \sim v_1'$, $v_{1,x} = f_{V}(z_1)$ and similarly $v'_{2,x} = f_{V}(z_2)$.
                    \begin{align*}
                        e_1 &= <_{X+Y}^{\mu}(\iota_{V_{X} + V_{Y}}(v_1))\\
                            &= \iota_{1,E}(<_{X}^{\mu}(\iota_{V_{X}}(v_{1,x})))\\
                            &= \iota_{1,E}(<_{X}^{\mu}(\iota_{V_{X}}(f_{V}(z_1))))\\
                            &= \iota_{1,E}(<_{X}^{\mu}(\iota_{V_{X}}(f_{V}(z_2))))\\
                            &= \iota_{1,E}(<_{X}^{\mu}(\iota_{V_{X}}(v'_{2,x})))\\
                            &= <_{X+Y}^{\mu}(\iota_{V_{X} + V_{Y}}(v'_2))\\
                            &= <_{X+Y}^{\mu}(\iota_{E_{X} + E_{Y}}(e_1))
                    \end{align*}
                    And we get a contradiction to irreflexivity of $<_{X+Y}^{\mu}$.
            \end{itemize}
      \end{itemize} 
    Once again, we do not need to check for vertices as a vertex can not be on the left-hand side of the relation.
\end{proof}

\begin{lemma}
\label{lemma:child_respect}
    $<_{\sfrac{X+Y}{\sim}}^{\mu}$ respects connectivity.
\end{lemma}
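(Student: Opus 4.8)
The plan is to verify that $<^{\mu}_{\sfrac{X+Y}{\sim}}$ satisfies condition~(4) of Definition~\ref{def:e-homo}: for every edge $[e]$ and every vertex $[v]$ with $[v] \in s_{\sfrac{X+Y}{\sim}}([e])$ (the target case $[v]\in t_{\sfrac{X+Y}{\sim}}([e])$ being symmetric), the partial functions agree, i.e.\ $<^{\mu}_{\sfrac{X+Y}{\sim}}([v])$ and $<^{\mu}_{\sfrac{X+Y}{\sim}}([e])$ are both undefined, or both defined and equal. First I would reduce to representatives: by the construction of the quotient source map, $s_{\sfrac{X+Y}{\sim}}([e]) = [s_{X+Y}(e)]^{*}$, so there is a genuine source $v \in s_{X+Y}(e)$ representing the given class, and since $<^{\mu}_{\sfrac{X+Y}{\sim}}$ is well-defined on classes it suffices to argue with this $v$. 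The coproduct $X+Y$ respects connectivity, because $X$ and $Y$ do and all of $s,t,<^{\mu}$ are defined componentwise through the injections $\iota_{1},\iota_{2}$; hence $<^{\mu}_{X+Y}(v)$ and $<^{\mu}_{X+Y}(e)$ already agree. This splits the argument into two cases.

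In the \emph{defined} case, $<^{\mu}_{X+Y}(v) = <^{\mu}_{X+Y}(e) = e_{p}$ for a common parent $e_{p}$ in $X+Y$. Then the direct clauses apply on both sides: the edge clause~(1) gives $<^{\mu}_{\sfrac{X+Y}{\sim}}([e]) = [e_{p}]$, and the vertex clause~(1), taking the representative $v$ itself (which has defined parent $e_{p}$ and satisfies $v \sim v$), gives $<^{\mu}_{\sfrac{X+Y}{\sim}}([v]) = [e_{p}]$, so the values coincide.

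The hard case is when $<^{\mu}_{X+Y}(v)$ and $<^{\mu}_{X+Y}(e)$ are both \emph{undefined}, i.e.\ $v$ and $e$ are parentless in $X+Y$. Here I would exploit two facts. First, $[v]$ and $[e]$ are adjacent in the quotient, as they share the incidence $[v] \in s_{\sfrac{X+Y}{\sim}}([e])$, so there is an undirected path of length one between them, and any undirected path from one element to a parent-carrying vertex extends to such a path from the other. Second, the only identifications the quotient makes across the two components are through $Z$, and assumptions~\ref{assumption:equal_predecessors} and~\ref{assumption:non_ambiguous_predecessors} force the parent behaviour of $Z$-images to be uniform: either all $f_{V}$-images share one common nonempty immediate parent while all $g_{V}$-images are parentless, or symmetrically, or all $Z$-images are parentless. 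Indeed, if both common parent sets were nonempty, any single $z$ would violate~\ref{assumption:non_ambiguous_predecessors}. This trichotomy is exactly what rules out a ``mixed'' gluing in which $[e]$ reaches a parent-carrying vertex through $Z$ while its neighbour $[v]$ does not. Concretely: if $v$ is a $Z$-image whose partner carries a parent, vertex clause~(1) assigns $[v]$ that partner's parent, and since $[e]$ is adjacent to $[v]$ the length-one path makes edge clause~(2) assign $[e]$ the same value; if instead no $Z$-image reachable from the component of $e$ carries a parent, then no parent-carrying vertex is reachable from either $[e]$ or $[v]$, so both path-clauses fail and both sides are undefined.

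The main obstacle will be this last case: reconciling the value that vertex clause~(1) assigns to a glued vertex with the value that edge clause~(2) propagates along paths, and in particular checking that the restriction ``$u$ has no pre-image in $V_{Z}$'' in the vertex path-clause never leaves $[v]$ undefined while $[e]$ is defined. I expect this to go through precisely because of the uniformity enforced by assumptions~\ref{assumption:equal_predecessors} and~\ref{assumption:non_ambiguous_predecessors}, so that the remaining work is a careful but routine enumeration of how each side's parent is produced, using the already-established well-definedness of $<^{\mu}_{\sfrac{X+Y}{\sim}}$ together with the length-one path between $[v]$ and $[e]$.
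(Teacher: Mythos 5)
Your plan is correct, and when the final enumeration is filled in it proves the lemma, but it organizes the argument differently from the paper, and the comparison is instructive. The paper proves the two implications of the ``iff'' separately, casing directly on which clause of the quotient definition produces $<^{\mu}_{\sfrac{X+Y}{\sim}}([e_1])$ (respectively of $[v]$), then sub-casing on how the chosen representative relates to the class ($v'=v$ versus $v'\sim v$ through $Z$), invoking assumptions~\ref{assumption:equal_predecessors} and~\ref{assumption:non_ambiguous_predecessors} in-line in every branch. You instead split once, at the coproduct level, on whether the representative source $v\in s_{X+Y}(e)$ and $e$ have parents in $X+Y$ --- legitimate because $X+Y$ itself respects connectivity, so the two stand or fall together --- which collapses all of the paper's ``defined'' branches into two one-line applications of clause~(1); and you isolate the content of the span assumptions as a trichotomy (all $f_V$-images share one parent and all $g_V$-images are parentless, or symmetrically, or every $Z$-image is parentless), which is exactly the uniformity the paper re-derives piecemeal in several branches. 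What your version buys is a shorter, better-structured proof; what the paper's buys is that every branch is actually discharged, and here your explicit ``Concretely'' dichotomy is not exhaustive as stated: it omits the case where $v$ is \emph{not} a $Z$-image (so its class is a singleton and vertex clause~(1) cannot fire) yet a parent-carrying vertex is reachable from the component of $[e]$ through some \emph{other} gluing point; there both sides must be defined via their respective path clauses, and equality of the two resulting values rests on the independence-of-choice (well-definedness) arguments, not on the trichotomy. Your general remarks --- paths extend across the incidence $[v]\in s_{\sfrac{X+Y}{\sim}}([e])$, together with well-definedness --- do cover this branch, but in a complete write-up it needs to be spelled out explicitly, as the paper does.
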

\begin{proof}
        We need to show that if $[v] \in s_{\sfrac{X+Y}{\sim}}([e_1])$ and $[e_2] = <_{\sfrac{X+Y}{\sim}}^{\mu}(\iota_{\sfrac{E_{X} + E_{Y}}{\sim}}([e_1]))$ then $[e_2] = <_{\sfrac{X+Y}{\sim}}^{\mu}(\iota_{\sfrac{V_{X} + V_{Y}}{\sim}}[v])$.
        We have two cases.
        \begin{itemize}
            \item \[
            [e_2] = [<_{X+Y}^{\mu}(\iota_{V_{X} + V_{Y}}(e_1))]
            \]
            By definition, $s_{\sfrac{X+Y}{\sim}}([e_1]) = [s_{X+Y}(e_1)]^{*}$ and $[v] \in [s_{X+Y}(e_1)]^{*}$ means there exists $v' \sim v$ such that $v' \in s_{X+Y}(e_1)$.
            \begin{itemize}
                \item $v' = v$ and then $v \in s_{X+Y}(e_1)$ and therefore $e_2 = <_{X+Y}^{\mu}(\iota_{V_{X} + V_{Y}}(v))$ and $[e_2] = [<_{X+Y}^{\mu}(\iota_{V_{X} + V_{Y}}(v))]$.
                \item $v' \sim v$ and then 
                \begin{align*}
                    [e_2] &= [<_{X+Y}^{\mu}(\iota_{V_{X} + V_{Y}}(v'))]\\
                          &= <_{\sfrac{X+Y}{\sim}}^{\mu}(\iota_{\sfrac{V_{X} + V_{Y}}{\sim}}([v']))\\
                          &= <_{\sfrac{X+Y}{\sim}}^{\mu}(\iota_{\sfrac{V_{X} + V_{Y}}{\sim}}([v]))
                \end{align*}
            \end{itemize}
            \item \[
                [e_2] = [<_{X+Y}^{\mu}(\iota_{V_{X} + V_{Y}}(v_1))]    
            \]
            such that there is $v_1' \sim v_1$ and there is a path from $e_1$ to $v_1'$ and $[e_1) = \varnothing$.
            $[v] \in [s_{X+Y}(e_1)]^{*}$ means there exists $v' \sim v$ such that $v' \in s_{X+Y}(e_1)$.
            \begin{itemize}
                \item $v' = v$ and then $v \in s_{X+Y}(e_1)$. Since there is a path from $e_1$ to $v_1'$, there is also a path from $v'$ to $v_1'$ and by definition
                \[
                    <_{\sfrac{X+Y}{\sim}}^{\mu}(\iota_{\sfrac{V_{X} + V_{Y}}{\sim}}([v])) = <_{\sfrac{X+Y}{\sim}}^{\mu}(\iota_{\sfrac{V_{X} + V_{Y}}{\sim}}([v_1])) = [e_2]
                \]
                \item $v' \sim v$ and then there is a path from $v'_1$ to $v'$. 
                Suppose $v_1 = f_{V};\iota_{1,V}(z_1)$ and since $v_1' \in s_{X+Y}(e_1)$, $[e_1) = \varnothing$, both $v_1'$ and $v'$ should be in the image of $g_{V};\iota_{2,V}$.
                Since $[v_1) \not = \varnothing$ it must be the case that for all $z$ $[f_{V}(z)) \not = \varnothing$ as well as for some $v'' \sim v'$ such that $v'' = f_{V};\iota_{1,V}(z'')$.
                Clearly such $v''$ exists.
                It is either $v$, if $v = f_{V};\iota_{1,V}(z'')$, or some $x_i$ in $(x_1, \ldots, x_n)$ such that $x_1 = v'$ and $x_n = v$.
                \begin{align*}
                    [e_2] &= [<_{X+Y}^{\mu}(\iota_{V_{X} + V_{Y}}(v_1))]\\
                          &= [\iota_{1,E}(<_{X}(\iota_{V_{X}}(f_{V}(z_1))))]\\
                          &= [\iota_{1,E}(<_{X}(\iota_{V_{X}}(f_{V}(z''))))]\\
                          &= [<_{X+Y}(\iota_{V_{X} + V_{Y}}(f_{V};\iota_{1,V}(z'')))]\\
                          &= [<_{X+Y}(\iota_{V_{X} + V_{Y}}(v''))]\\
                          &= <_{\sfrac{X+Y}{\sim}}^{\mu}(\iota_{\sfrac{V_{X} + V_{Y}}{\sim}}([v'']))\\
                          &= <_{\sfrac{X+Y}{\sim}}^{\mu}(\iota_{\sfrac{V_{X} + V_{Y}}{\sim}}([v]))
                \end{align*}
                The case when $v_1 = g_{V};\iota_{2,V}(z_1)$ is symmetric.
            \end{itemize}
            \end{itemize}
            \item Finally, we need to check that if $[v] \in s([e_1])$ and $[e_2] = <_{\sfrac{X+Y}{\sim}}^{\mu}(\iota_{\sfrac{V_{X} + V_{Y}}{\sim}}([v]))$ then $[e_2] = <_{\sfrac{X+Y}{\sim}}^{\mu}(\iota_{\sfrac{E_{X} + E_{Y}}{\sim}}([e_1]))$.
            Once again, $[v] \in s_{\sfrac{X+Y}{\sim}}([e_1])$ means that there exists $v' \in s_{X+Y}(e_1)$ such that $v' \sim v$.
            \begin{itemize}
                \item \[
                [e_2] = [<_{X+Y}^{\mu}(\iota_{V_{X} + V_{Y}}(v''))]
                \] such that $v'' \sim v$ and $[v'') \not = \varnothing$.
                \begin{itemize}
                    \item If $v' = v$ and $v'' = v$ then $v \in s_{X+Y}(e_1)$ and since $[v'') = [v) \not = \varnothing$ and $v \in s_{X+Y}(e_1)$, it must be the case that
                    $<_{X+Y}^{\mu}(\iota_{E_{X} + E_{Y}}(e_1)) = <_{X+Y}^{\mu}(\iota_{E_{X} + E_{Y}}(v))$ and further
                    \begin{align*}
                    <_{\sfrac{X+Y}{\sim}}^{\mu}(\iota_{\sfrac{E_{X} + E_{Y}}{\sim}}([e_1])) &= [<_{X+Y}^{\mu}(\iota_{E_{X} + E_{Y}}(e_1))]\\
                                                                                            &= [<_{X+Y}^{\mu}(\iota_{E_{X} + E_{Y}}(v))]\\
                                                                                            &= <_{\sfrac{X+Y}{\sim}}^{\mu}(\iota_{\sfrac{V_{X} + V_{Y}}{\sim}}([v]))\\
                                                                                            &= [e_2]
                    \end{align*}
                    \item If $v' = v $ and $v'' \sim v$ then $v \in s_{X+Y}(e_1)$.
                          Suppose $v'' = f_{V};\iota_{1,V}(z'')$.
                          Consider cases.
                          \begin{itemize}
                            \item $v = f_{V};\iota_{1,V}(z)$, then $f_{V}(z) = f_{V}(z'') \not = \varnothing$ and $[e_1) = [v) = [v'')$
                            \item $v = g_{V};\iota_{2,V}(z)$, then $g_{V}(z) = \varnothing$ and $[e_1) = \varnothing$.
                                  Since there is a path from $v$ to $e_1$ and $[v'') \not = \varnothing$ and $v'' \sim v$, by definition,
                                  \begin{align*}
                                    <_{\sfrac{X+Y}{\sim}}^{\mu}(\iota_{\sfrac{E_{X} + E_{Y}}{\sim}}([e_1])) &= <_{\sfrac{X+Y}{\sim}}^{\mu}(\iota_{\sfrac{V_{X} + V_{Y}}{\sim}}([v'']))\\
                                    &= [e_2]
                                  \end{align*}
                          \end{itemize}
                          Case when $v'' = g_{V};\iota_{2,V}(z'')$ is symmetric.
                    \item If $v' \sim v$ and $v'' = v$ then $v' \in s_{X+Y}(e_1)$. This implies that $[v) \not = \varnothing$
                          \begin{itemize}
                            \item If $v = f_{V};\iota_{1,V}(z)$ and $v' = f_{V};\iota_{1,V}(z')$ then $[e_1) = [v') = [v) \not = \varnothing $ since $[f_{V}(z')) = [f_{V}(z))$.
                                  That is,
                                  \begin{align*}
                                    <_{\sfrac{X+Y}{\sim}}^{\mu}(\iota_{\sfrac{E_{X} + E_{Y}}{\sim}}([e_1])) &= [<_{X+Y}^{\mu}(\iota_{E_{X} + E_{Y}})(e_1)]\\
                                                                                                    &= [<_{X+Y}^{\mu}(\iota_{V_{X} + V_{Y}})(v)]\\
                                                                                                    &= [<_{X+Y}^{\mu}(\iota_{V_{X} + V_{Y}})(v'')]\\
                                                                                                    &= [e_2]
                                  \end{align*}
                            \item $v = f_{V};\iota_{1,V}(z)$ and $v' = g_{V};\iota_{2,V}(z')$ and then $[g_{V}(z')) = \varnothing$ and $[e_1) = \varnothing$.
                                  By definition,
                                  \begin{align*}
                                  <_{\sfrac{X+Y}{\sim}}^{\mu}(\iota_{\sfrac{E_{X} + E_{Y}}{\sim}}([e_1])) &= <_{\sfrac{X+Y}{\sim}}^{\mu}(\iota_{\sfrac{V_{X} + V_{Y}}{\sim}}([v]))\\
                                                                                                          &= <_{\sfrac{X+Y}{\sim}}^{\mu}(\iota_{\sfrac{V_{X} + V_{Y}}{\sim}}([v'']))\\
                                                                                                          &= [e_2]
                                  \end{align*}
                          \end{itemize}
                          The case when $v = g_{V};\iota_{1,V}(z)$ is symmetric.
                    \item If $v' \sim v$ and $v'' \sim v$ then $v' \in s_{X+Y}{e_1}$, $[v'') \not = \varnothing$ and $v'' \sim v'$.
                    \begin{itemize}
                        \item If $v'' = f_{V};\iota_{1,V}(z'')$ and $v' = f_{V};\iota_{2,V}(z')$ then $[e_1) = [v') = [v'')$ and
                            \begin{align*}
                                    <_{\sfrac{X+Y}{\sim}}^{\mu}(\iota_{\sfrac{E_{X} + E_{Y}}{\sim}}([e_1])) &= [<_{X+Y}^{\mu}(\iota_{E_{X} + E_{Y}})(e_1)]\\
                                                                                                    &= [<_{X+Y}^{\mu}(\iota_{V_{X} + V_{Y}})(v')]\\
                                                                                                    &= [<_{X+Y}^{\mu}(\iota_{V_{X} + V_{Y}})(v'')]\\
                                                                                                    &= [e_2]
                                  \end{align*}
                        \item If $v'' = f_{V};\iota_{1,V}(z'')$ and $v' = g_{V};\iota_{2,V}(z')$ then $[g_{V}(z')) = \varnothing$ and $[e_1) = \varnothing$. 
                        Then
                        \begin{align*}
                            <_{\sfrac{X+Y}{\sim}}^{\mu}(\iota_{\sfrac{E_{X} + E_{Y}}{\sim}}([e_1])) &= <_{\sfrac{X+Y}{\sim}}^{\mu}(\iota_{\sfrac{V_{X} + V_{Y}}{\sim}}([v'']))\\
                                                                                                    &= [e_2]
                        \end{align*}
                        The case when $v'' = g_{V};\iota_{2,V}(z'')$ is symmetric.
                    \end{itemize}
                \end{itemize}
                \item Now suppose that 
                \[
                [e_2] = <_{\sfrac{X+Y}{\sim}}^{\mu}(\iota_{\sfrac{V_{X} + V_{Y}}{\sim}}([v'']))
                \]
                such that $[v'') \not = \varnothing$, $[v) = \varnothing$ and there is a path from $[v]$ to $[v'']$, i.e. there is a path from $v$ to $v'$ and $v' \sim v''$.
                In its turn $[v] \in s_{\sfrac{X + Y}{\sim}}([e_1])$ implies $v_1 \in s_{X+Y}(e_1)$ and $v \sim v_1$.
                As per the comment below~\ref{def:child_respects_connectivity}, necessarily $v = v_1$ as $v$ should not have a pre-image in $V_{Z}$.
                This implies that $v \in s_{X+Y}(e_1)$ and that $[e_1) = \varnothing$.
                As there is a path from $v$ to $v'$, there is a path from $e_1$ to $v'$ and by definition, since $v'' \sim v'$ and $[v'') \not = \varnothing$,
                \begin{align*}
                <_{\sfrac{X+Y}{\sim}}^{\mu}(\iota_{\sfrac{E_{X} + E_{Y}}{\sim}}([e_1])) &= <_{\sfrac{X+Y}{\sim}}^{\mu}(\iota_{\sfrac{V_{X} + V_{Y}}{\sim}}([v'']))\\
                 &\;= [e_2]
                \end{align*}
            \end{itemize}    
\end{proof}

\begin{lemma}
\label{lemma:closures_of_equal_sets_are_equal}
If
\[
\consistency_{\sfrac{X+Y}{\sim}}^{\hashtag}(\iota_{\sfrac{*}{\sim}}([x_1])) = \consistency_{\sfrac{X+Y}{\sim}}^{\hashtag}(\iota_{\sfrac{*}{\sim}}([x_2]))
\]
where $\iota_{\sfrac{*}{\sim}}$ is either $\iota_{\sfrac{V_{X} + V_{Y}}{\sim}}$ or $\iota_{\sfrac{E_{X} + E_{Y}}{\sim}}$ depending on whether $[x_i]$ is in $V_{X} + V_{Y}$ or $E_{X} + E_{Y}$.
Then
\[
\consistency_{\sfrac{X+Y}{\sim}}(\iota_{\sfrac{*}{\sim}}([x_1])) = \consistency_{\sfrac{X+Y}{\sim}}(\iota_{\sfrac{*}{\sim}}([x_2]))
\]
That is, closures of both sides are equal.
\end{lemma}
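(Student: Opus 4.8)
The plan is to recognize the closure $\consistency_{\sfrac{X+Y}{\sim}}(\iota_{\sfrac{*}{\sim}}([x])) = (\consistency^{\hashtag}_{\sfrac{X+Y}{\sim}}(\iota_{\sfrac{*}{\sim}}([x])))^{c}$ as the connected component of $[x]$ under the symmetric adjacency relation induced by $\consistency^{\hashtag}$, and then to observe that equal $\consistency^{\hashtag}$-values force the two base points into the same component. Throughout I would abbreviate $\consistency^{\hashtag}_{\sfrac{X+Y}{\sim}}(\iota_{\sfrac{*}{\sim}}([x]))$ by $\consistency^{\hashtag}([x])$ and its closure by $\consistency([x])$, treating the vertex and edge cases uniformly (they are verbatim identical up to the injection used). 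I would say that $[a]$ and $[b]$ are \emph{$\consistency^{\hashtag}$-adjacent} when $[a] \in \consistency^{\hashtag}([b])$ or $[b] \in \consistency^{\hashtag}([a])$, and speak of \emph{$\consistency^{\hashtag}$-paths} as finite sequences of classes in which consecutive entries are $\consistency^{\hashtag}$-adjacent; this relation is symmetric by construction.

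First I would read off the three defining clauses of the closure to obtain the characterization: when $\consistency^{\hashtag}([x]) \neq \varnothing$, the set $\consistency([x])$ contains $[x]$ itself (first clause) together with every class reachable from $[x]$ by a $\consistency^{\hashtag}$-path (third clause), and by minimality it contains nothing else. Hence $\consistency([x])$ is exactly the $\consistency^{\hashtag}$-connected component of $[x]$. The second clause is then just the length-two instance of the third, so no separate handling is needed.

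Next I would dispatch the degenerate case $\consistency^{\hashtag}([x_1]) = \consistency^{\hashtag}([x_2]) = \varnothing$, claiming both closures are empty and hence equal. Suppose some class were forced into $\consistency([x_1])$. The first clause cannot be responsible since its premise fails, so the third must, yielding a $\consistency^{\hashtag}$-path out of $[x_1]$ and in particular a $\consistency^{\hashtag}$-adjacency $[x_1]$--$[y]$. Since $\consistency^{\hashtag}([x_1]) = \varnothing$, this forces $[x_1] \in \consistency^{\hashtag}([y])$. But, unwinding the definition of $\consistency^{\hashtag}$ (with its recursive case 2 grounded in case 1) and using that $\consistency_{X+Y}$ is a union of equivalence relations and therefore symmetric, membership of $[x_1]$ in some $\consistency^{\hashtag}([y])$ entails that $\consistency_{X+Y}$ is non-empty on a representative of $[x_1]$; applying case 1 to that representative gives $\consistency^{\hashtag}([x_1]) \neq \varnothing$, a contradiction. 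Thus $\consistency([x_1]) = \varnothing = \consistency([x_2])$.

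Finally, for the principal case $\consistency^{\hashtag}([x_1]) = \consistency^{\hashtag}([x_2]) \neq \varnothing$, I would pick any $[y]$ in this common set. Then $[y] \in \consistency^{\hashtag}([x_1])$ and $[y] \in \consistency^{\hashtag}([x_2])$ make $[x_1]$ and $[x_2]$ each $\consistency^{\hashtag}$-adjacent to $[y]$, so by Step~1 they lie in one connected component and their closures coincide. Explicitly, given $[z] \in \consistency([x_1])$ there is a $\consistency^{\hashtag}$-path from $[x_1]$ to $[z]$; prefixing the two-edge path $[x_2]$--$[y]$--$[x_1]$ produces a $\consistency^{\hashtag}$-path from $[x_2]$ to $[z]$, whence the third closure clause gives $[z] \in \consistency([x_2])$, and the reverse inclusion is symmetric. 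I expect the main obstacle to be precisely the degenerate case: one must rule out that a class with empty $\consistency^{\hashtag}$ nevertheless receives an incoming $\consistency^{\hashtag}$-adjacency, and this is exactly where the symmetry of $\consistency_{X+Y}$, together with the well-definedness of $\consistency^{\hashtag}$ on $\sim$-classes established earlier and the grounding of the recursive clause of $\consistency^{\hashtag}$ in its base clause, is indispensable.
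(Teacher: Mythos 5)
Your proof is correct and follows essentially the same route as the paper's: both hinge on picking a common witness $[y] \in \consistency^{\hashtag}([x_1]) = \consistency^{\hashtag}([x_2])$ and then extending any generating sequence for one closure by the two-step path through $[y]$, so that the third closure clause forces membership in the other closure (your ``connected component'' framing is just a repackaging of this). Your explicit handling of the degenerate case $\consistency^{\hashtag}([x_1]) = \consistency^{\hashtag}([x_2]) = \varnothing$ is a small improvement over the paper, whose step ``there exists some $[y]$'' tacitly assumes non-emptiness; your symmetry argument showing that an incoming adjacency would force $\consistency^{\hashtag}([x_1]) \neq \varnothing$ closes that gap.
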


\begin{proof}
    To check that the sets after closures are equal, we need to check if every element of the set on the left is also in the set on the right and vice versa.
    Apart from $\consistency_{\sfrac{X+Y}{\sim}}^{\hashtag}(\iota_{\sfrac{*}{\sim}}([x_1]))$ the set on the left also contains
    \begin{itemize}
        \item $[x_1]$ as per the reflexivity.
        So we need to check if $[x_1] \in \consistency_{\sfrac{X+Y}{\sim}}(\iota_{\sfrac{*}{\sim}}([x_2]))$.
              Because 
              \[
              \consistency_{\sfrac{X+Y}{\sim}}^{\hashtag}(\iota_{\sfrac{*}{\sim}}([x_1])) = \consistency_{\sfrac{X+Y}{\sim}}^{\hashtag}(\iota_{\sfrac{*}{\sim}}([x_2]))
              \]
              there exists some $[y]$ such that 
              \[
              [y] \in \consistency_{\sfrac{X+Y}{\sim}}^{\hashtag}(\iota_{\sfrac{*}{\sim}}([x_1]))
              \]
               and
            \[
                [y] \in \consistency_{\sfrac{X+Y}{\sim}}^{\hashtag}(\iota_{\sfrac{*}{\sim}}([x_2]))
            \]
            that is, we have a sequence $([x_1],[y],[x_2])$ and by transitivity
               \[
               [x_1] \in \consistency_{\sfrac{X+Y}{\sim}}(\iota_{\sfrac{*}{\sim}}([x_2]))
               \]
                and
            \[
            [x_2] \in \consistency_{\sfrac{X+Y}{\sim}}(\iota_{\sfrac{*}{\sim}}([x_1]))
            \]
        \item $[y]$ for all $[y]$ such that $[x_1] \in \consistency_{\sfrac{X+Y}{\sim}}^{\hashtag}(\iota_{\sfrac{*}{\sim}}([y]))$.
              Let's check if $[y] \in \consistency_{\sfrac{X+Y}{\sim}}(\iota_{\sfrac{*}{\sim}}([x_2]))$.
              As above, there exists $[z]$ such that
              \[
              [z] \in \consistency_{\sfrac{X+Y}{\sim}}^{\hashtag}(\iota_{\sfrac{*}{\sim}}([x_1]))
              \]
               and
              \[
                [z] \in \consistency_{\sfrac{X+Y}{\sim}}^{\hashtag}(\iota_{\sfrac{*}{\sim}}([x_2]))
              \]
              By a sequence $([y],[x_1],[z],[x_2])$ we get that
              \[
                [y] \in \consistency_{\sfrac{X+Y}{\sim}}^{\hashtag}(\iota_{\sfrac{*}{\sim}}([x_2]))
              \]
              Similarly for $[y]$ such that $[x_2] \in \consistency_{\sfrac{X+Y}{\sim}}^{\hashtag}(\iota_{\sfrac{*}{\sim}}([y]))$.
        \item For any sequence $([x_1],\ldots,[x_n])$ such that $[x_i] \in \consistency_{\sfrac{X+Y}{\sim}}^{\hashtag}(\iota_{\sfrac{*}{\sim}}([x_{i+1}]))$ or $[x_{i+1}] \in \consistency_{\sfrac{X+Y}{\sim}}(\iota_{\sfrac{*}{\sim}}([x_{i}]))$ for $i < n$ the set on the left contains $[x_n]$.
              Let's check if $[x_n] \in \consistency_{\sfrac{X+Y}{\sim}}(\iota_{\sfrac{*}{\sim}}([x_2]))$.
              Since there exists $[y]$ such that $[y] \in \consistency_{\sfrac{X+Y}{\sim}}^{\hashtag}(\iota_{\sfrac{*}{\sim}}([x_1]))$ and $[y] \in \consistency_{\sfrac{X+Y}{\sim}}^{\hashtag}(\iota_{\sfrac{*}{\sim}}([x_2]))$ we can extend the sequence to $([x_2],[y],[x_1],\ldots,[x_n])$ and hence $[x_n] \in \consistency_{\sfrac{X+Y}{\sim}}(\iota_{\sfrac{*}{\sim}}([x_2]))$.
              Similarly, given a sequence $([x_2],\ldots, [x_n])$ and extending it to $([x_1],[y],[x_2],\ldots,[x_n])$ we get $[x_n] \in \consistency_{\sfrac{X+Y}{\sim}}(\iota_{\sfrac{*}{\sim}}([x_1]))$.
    \end{itemize}
\end{proof}

\begin{lemma}
\label{lemma:consistency_respect}
$\consistency_{\sfrac{X+Y}{\sim}}$ respects connectivity.
\end{lemma}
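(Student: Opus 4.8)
The plan is to verify that the relation $\consistency_{\sfrac{X+Y}{\sim}}$ satisfies the connectivity-closure clause of Definition~\ref{def:e-homo}: whenever $[v]$ is a source or target of an edge $[e_1]$ and the two share a common immediate parent, then $[v] \in \consistency_{\sfrac{X+Y}{\sim}}(\iota_{\sfrac{E_X+E_Y}{\sim}}([e_1]))$. By Lemma~\ref{lemma:child_respect} the shared-parent hypothesis is automatic as soon as $[v]\in s_{\sfrac{X+Y}{\sim}}([e_1])$ and $[e_1]$ is not top-level (and if $[e_1]$ is top-level the clause imposes nothing, since $\consistency$ is reflexive only on non-top-level elements), so it suffices to treat that case; the target case and the symmetric role of $Y$ are handled identically. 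First I would unfold $s_{\sfrac{X+Y}{\sim}}([e_1]) = [s_{X+Y}(e_1)]^{*}$ to obtain a representative $v' \in s_{X+Y}(e_1)$ with $v'\sim v$, and, assuming without loss of generality that $e_1 = \iota_{1,E}(e_x)$ comes from $X$, write $v' = \iota_{1,V}(v'_x)$ with $v'_x \in s_X(e_x)$.

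The argument then splits along the two clauses defining $\consistency^{\hashtag}_{\sfrac{X+Y}{\sim}}$, mirroring the structure of Lemma~\ref{lemma:child_respect}. In the direct case, where $\consistency_{X+Y}(\iota_{E_X+E_Y}(e_1)) \neq \varnothing$ --- equivalently $e_x$ is not top-level in $X$, so its parent survives into $X+Y$ --- I would use that $X$ is a genuine e-hypergraph: its child relation respects connectivity (condition (4) of Definition~\ref{def:e-homo}) and its consistency relation is closed under connectivity, so $v'_x \in s_X(e_x)$ together with the common parent yields $v'_x \consistency_X e_x$. Pushing this forward through $[\,[]_{V}^{\consistency}, []_{E}^{\consistency}\,]^{*}$ gives $[v'] \in \consistency^{\hashtag}_{\sfrac{X+Y}{\sim}}(\iota_{\sfrac{E_X+E_Y}{\sim}}([e_1]))$, and since $v \sim v'$ forces $[v]=[v']$, the element $[v]$ lies in the $\consistency^{\hashtag}$-set of $[e_1]$, hence in its closure $\consistency_{\sfrac{X+Y}{\sim}}([e_1])$.

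The delicate case, which I expect to be the main obstacle, is when $e_1$ is top-level in $X+Y$ but acquires a parent in $\sfrac{X+Y}{\sim}$ through the path-based clause, so $\consistency_{X+Y}(\iota(e_1)) = \varnothing$ and $\consistency^{\hashtag}_{\sfrac{X+Y}{\sim}}([e_1]) = \consistency^{\hashtag}_{\sfrac{X+Y}{\sim}}([w])$ for some $w$ with $\consistency_{X+Y}(\iota(w))\neq\varnothing$ and an undirected path from $[e_1]$ to $[w]$. Here I would prepend $[e_1]$ to that path using $[v] \in s([e_1])$, producing an undirected path from $[v]$ to $[w]$; the remaining task is to check that $[v]$ itself falls under the same path-based clause --- that is, that $v$ has no preimage in $V_Z$ and $\consistency_{X+Y}(\iota(v)) = \varnothing$ --- so that $\consistency^{\hashtag}_{\sfrac{X+Y}{\sim}}([v]) = \consistency^{\hashtag}_{\sfrac{X+Y}{\sim}}([w]) = \consistency^{\hashtag}_{\sfrac{X+Y}{\sim}}([e_1])$. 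This forces the same fiddly sub-analysis as in Lemma~\ref{lemma:child_respect}, distinguishing $v'=v$ from $v' \sim v$ strictly and appealing to the pushout assumptions~\ref{pushout:assumptions} (equal and non-ambiguous predecessors) to exclude a spurious parent of $v$. Once the two $\consistency^{\hashtag}$-sets are shown equal, Lemma~\ref{lemma:closures_of_equal_sets_are_equal} upgrades this to equality of their closures, yielding $[v] \in \consistency_{\sfrac{X+Y}{\sim}}([e_1])$ and completing the argument.
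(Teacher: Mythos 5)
Your skeleton --- reduce to the case where $[e_1]$ is non-top-level via Lemma~\ref{lemma:child_respect}, split on which clause of $\consistency^{\hashtag}_{\sfrac{X+Y}{\sim}}$ governs $[e_1]$, push the connectivity-closure of $\consistency_{X}$ through the quotient in the direct case, and finish with Lemma~\ref{lemma:closures_of_equal_sets_are_equal} --- is essentially the paper's, and your direct case is sound. The gap is in the path-based case. You propose to show that $[v]$ falls under the \emph{same} path-based clause, i.e.\ that $v$ has no preimage in $V_{Z}$ and $\consistency_{X+Y}(\iota_{V_{X}+V_{Y}}(v)) = \varnothing$, using the pushout assumptions ``to exclude a spurious parent of $v$''. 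That exclusion is impossible: the non-ambiguous-predecessors assumption of Theorem~\ref{th:existence_of_pushouts} only forbids $f_{V}(z)$ and $g_{V}(z)$ from \emph{both} having parents, so nothing prevents $v' = f_{V};\iota_{1,V}(z)$ from being a top-level source of the top-level edge $e_{1}$ in the $X$-component while the identified vertex $v'' = g_{V};\iota_{2,V}(z)$ lies inside a hierarchical edge of $Y$. This is not a pathology to be ruled out --- it is precisely the situation created when cospan composition glues one e-hypergraph into an e-box of the other, which is what the path-based clauses exist to handle. In that situation a member of the class $[v]$ has nonempty consistency in $X+Y$, so $[v]$ receives its $\consistency^{\hashtag}$-set from clause (1), not clause (2); note this can happen even when your chosen representative satisfies $v = v'$, so your case split on representatives ($v = v'$ versus $v \sim v'$ strictly) does not even detect it.

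The sub-case is provable, but by a different move, which is what the paper's proof does in its branches with $v_1'' \sim v_1$: since $[v''] = [v] \in s_{\sfrac{X+Y}{\sim}}([e_1])$, there is an undirected path from $[e_1]$ to $[v'']$, so $v''$ is itself a legitimate witness for the path-based clause applied to $[e_1]$; well-definedness of that clause (established in the pushout construction) gives $\consistency^{\hashtag}_{\sfrac{X+Y}{\sim}}([e_1]) = \consistency^{\hashtag}_{\sfrac{X+Y}{\sim}}([v''])$, while clause (1) gives $\consistency^{\hashtag}_{\sfrac{X+Y}{\sim}}([v]) = \consistency^{\hashtag}_{\sfrac{X+Y}{\sim}}([v''])$; Lemma~\ref{lemma:closures_of_equal_sets_are_equal} then transfers this equality to the closures, and reflexivity of the closure yields $[v] \consistency_{\sfrac{X+Y}{\sim}} [e_1]$. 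You would need to add this branch (and its symmetric variants, which the paper discharges using the equal-predecessors and equal-consistency assumptions of~\ref{pushout:assumptions}) rather than argue that it cannot arise.
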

\begin{proof}
    Suppose $[v_1] \in s_{\sfrac{X+Y}{\sim}}([e_1])$ and $[[v_1]) = [[e_1]) \not = \varnothing$, then it must be the case that 
    \[
    \consistency_{\sfrac{X+Y}{\sim}}(\iota_{\sfrac{V_{X} + V_{Y}}{\sim}}([v_1])) = \consistency_{\sfrac{X+Y}{\sim}}(\iota_{\sfrac{E_{X} + E_{Y}}{\sim}}([e_1]))
    \]
    By the lemma above, to prove this, it is sufficient to show that

    \[
    \consistency_{\sfrac{X+Y}{\sim}}^{\hashtag}(\iota_{\sfrac{V_{X} + V_{Y}}{\sim}}([v_1])) = \consistency_{\sfrac{X+Y}{\sim}}^{\hashtag}(\iota_{\sfrac{E_{X} + E_{Y}}{\sim}}([e_1]))
    \]

    $[v_1] \in s_{\sfrac{X+Y}{\sim}}([e_1])$ means there exists $v_1' \in s_{X+Y}(e_1)$ such that $v_1 \sim v_1'$.
    We then have two cases.
    \begin{itemize}
      \item $v_1 = v_1'$ and then $v_1 \in s_{X+Y}(e_1)$.
      \begin{itemize}
         \item $<_{\sfrac{X+Y}{\sim}}^{\mu}(\iota_{\sfrac{V_{X} + V_{Y}}{\sim}}([v_1])) = [<_{X+Y}^{\mu}(\iota_{V_{X} + V_{Y}}(v_1''))]$ such that $v_1'' \sim v_1$ and $[v_1'') \not = \varnothing$.
            \begin{itemize}
                \item $v_1'' = v_1$ and then $[v_1) \not = \varnothing$ and $[e_1) \not = \varnothing$.
                      Furthermore, necessarily $\consistency_{X+Y}(\iota_{E_{X} + E_{Y}}(e_1)) = \consistency_{X+Y}(\iota_{V_{X} + V_{Y}}(v_1))$.
                      By applying $([]_{V},[]_{E})$ we get
                      \ifdefined \ONECOLUMN
                      \begin{align*}
                        \text{by functionality}\\
                          [[]_{V};\iota_{\sfrac{V_{X} + V_{Y}}{\sim}},[]_{E};\iota_{\sfrac{E_{X} + E_{Y}}{\sim}}]^{*}(\consistency_{X+Y}(\iota_{E_{X} + E_{Y}}(e_1))) &=
                          [[]_{V};\iota_{\sfrac{V_{X} + V_{Y}}{\sim}},[]_{E};\iota_{\sfrac{E_{X} + E_{Y}}{\sim}}]^{*}(\consistency_{X+Y}(\iota_{V_{X} + V_{Y}}(v_1)))\\
                          \text{by unfolding the definition}\\
                          \consistency_{\sfrac{X+Y}{\sim}}^{\hashtag}(\iota_{\sfrac{E_{X} + E_{Y}}{\sim}}([e_1])) &= \consistency_{\sfrac{X+Y}{\sim}}^{\hashtag}(\iota_{\sfrac{V_{X} + V_{Y}}{\sim}}([v_1]))\\
                          \end{align*} 
                      \else
                      \begin{align*}
                      \text{by functionality}\\
                        [[]_{V};\iota_{\sfrac{V_{X} + V_{Y}}{\sim}},[]_{E};\iota_{\sfrac{E_{X} + E_{Y}}{\sim}}]^{*}(\consistency_{X+Y}(\iota_{E_{X} + E_{Y}}(e_1))) &=\\
                        [[]_{V};\iota_{\sfrac{V_{X} + V_{Y}}{\sim}},[]_{E};\iota_{\sfrac{E_{X} + E_{Y}}{\sim}}]^{*}(\consistency_{X+Y}(\iota_{V_{X} + V_{Y}}(v_1))) &\\
                        \text{by unfolding the definition}\\
                        \consistency_{\sfrac{X+Y}{\sim}}^{\hashtag}(\iota_{\sfrac{E_{X} + E_{Y}}{\sim}}([e_1])) &=\\
                        \consistency_{\sfrac{X+Y}{\sim}}^{\hashtag}(\iota_{\sfrac{V_{X} + V_{Y}}{\sim}}([v_1]))&
                        \end{align*} 
                      \fi
                \item $v_1'' \sim v_1$. If $v_1'' = f_{V};\iota_{1,V}(z_1'')$ and $v_1 = f_{V};\iota_{1,V}(z_1)$ we get
                    \begin{align*}
                        \consistency_{\sfrac{X+Y}{\sim}}^{\hashtag}&(\iota_{\sfrac{V_{X} + V_{Y}}{\sim}}([v_1])) =\\
                                                                                                                &= [[]_{V}^{\consistency},[]_{E}^{\consistency}]^{*}(\consistency_{X+Y}(\iota_{V_{X} + V_{Y}}(v'')))\\
                                                                                                                &= [[]_{V}^{\consistency},[]_{E}^{\consistency}]^{*}(\consistency_{X+Y}(\iota_{V_{X} + V_{Y}}(f_{V};\iota_{1,V}(z_1''))))\\
                                                                                                                &= [[]_{V}^{\consistency},[]_{E}^{\consistency}]^{*}(\consistency_{X+Y}(\iota_{V_{X} + V_{Y}}(f_{V};\iota_{1,V}(z_1))))\\
                                                                                                                &= [[]_{V}^{\consistency},[]_{E}^{\consistency}]^{*}(\consistency_{X+Y}(\iota_{E_{X} + E_{Y}}(e_1)))\\
                                                                                                                &= \consistency_{\sfrac{X+Y}{\sim}}^{\hashtag}(\iota_{\sfrac{E_{X} + E_{Y}}{\sim}}([e_1]))
                    \end{align*}
                    If $v_1'' = f_{V};\iota_{1,V}(z_1'')$ and $v_1 = g_{V};\iota_{1,V}(z_1)$ then $[v_1) = [e_1) = \varnothing$ and 
                    \[
                        \consistency_{X+Y}(\iota_{V_{X} + V_{Y}}(v_1)) = \consistency_{X+Y}(\iota_{E_{X} + E_{Y}}(e_1)) = \varnothing
                    \]
                    \begin{align*}
                    \consistency_{\sfrac{X+Y}{\sim}}^{\hashtag}(\iota_{\sfrac{E_{X} + E_{Y}}{\sim}}([e_1])) &= \consistency_{\sfrac{X+Y}{\sim}}^{\hashtag}(\iota_{\sfrac{V_{X} + V_{Y}}{\sim}}([v_1'']))\\
                                                                                                            &= \consistency_{\sfrac{X+Y}{\sim}}^{\hashtag}(\iota_{\sfrac{V_{X} + V_{Y}}{\sim}}([v_1]))
                    \end{align*}
                \end{itemize}
                \item $<_{\sfrac{X+Y}{\sim}}^{\mu}(\iota_{\sfrac{V_{X} + V_{Y}}{\sim}}([v_1])) = <_{\sfrac{X+Y}{\sim}}^{\mu}(\iota_{\sfrac{V_{X} + V_{Y}}{\sim}}([v_2]))$ such that $[v_2) \not = \varnothing$ and there is a path from $[v_1]$ to $[v_2]$. 
                By definition, this implies $[v_1) = \varnothing$ and $[e_1) = \varnothing$.
                Again, by definition, we have
                \begin{align*}
                    \consistency_{\sfrac{X+Y}{\sim}}^{\hashtag}(\iota_{\sfrac{V_{X} + V_{Y}}{\sim}}([v_1])) &= \consistency_{\sfrac{X+Y}{\sim}}^{\hashtag}(\iota_{\sfrac{V_{X} + V_{Y}}{\sim}}([v_2]))\\
                                                                                                    &= \consistency_{\sfrac{X+Y}{\sim}}^{\hashtag}(\iota_{\sfrac{E_{X} + E_{Y}}{\sim}}([e_1]))
                \end{align*}
                \end{itemize}
                \item $v_1 \sim v_1'$ and then $v_1' \in s_{X+Y}(e_1)$.
                \begin{itemize}
                    \item $<_{\sfrac{X+Y}{\sim}}^{\mu}(\iota_{\sfrac{V_{X} + V_{Y}}{\sim}}([v_1])) = [<_{X+Y}^{\mu}(\iota_{V_{X} + V_{Y}}(v_1''))]$ such that $v_1 \sim v_1''$ and $[v_1'') \not = \varnothing$.
                    \begin{itemize}
                        \item $v_1 = v_1''$ and then $[v_1) \not = \varnothing$.
                              Suppose $v_1 = f_{V};\iota_{1,V}(z_1)$ and $v_1' = f_{V};\iota_{1,V}(z_1')$. Then $[v_1') = [e_1) \not = \varnothing$ and 
                            \begin{align*}
                              \consistency_{\sfrac{X+Y}{\sim}}^{\hashtag}&(\iota_{\sfrac{V_{X} + V_{Y}}{\sim}}[v_1']) =\\
                              &\;= [[]_{V}^{\hashtag},[]_{E}^{\hashtag}]^{*}(\consistency_{X+Y}(\iota_{V_{X} + V_{Y}}(v_1')))\\
                                                                                                                     &\;= [[]_{V}^{\hashtag},[]_{E}^{\hashtag}]^{*}(\consistency_{X+Y}(\iota_{E_{X} + E_{Y}}(e_1)))\\
                                                                                                                     &\;= \consistency_{\sfrac{X+Y}{\sim}}^{\hashtag}(\iota_{\sfrac{E_{X} + E_{Y}}{\sim}}([e_1]))\\
                                                                                                                     &\;= \consistency_{\sfrac{X+Y}{\sim}}^{\hashtag}(\iota_{\sfrac{V_{X} + V_{Y}}{\sim}}[v_1])
                            \end{align*}
                            Suppose $v_1' = g_{V};\iota_{2,V}(z_1')$ and then $[v_1') = [e_1) = \varnothing$.
                            By definition,
                            \begin{align*}
                                &\consistency_{\sfrac{X+Y}{\sim}}^{\hashtag}(\iota_{\sfrac{V_{X} + V_{Y}}{\sim}}[v_1']) =\\
                                &\;[[]_{V}^{\hashtag},[]_{E}^{\hashtag}]^{*}(\consistency_{X+Y}(\iota_{V_{X} + V_{Y}}(v_1)))
                            \end{align*}
                            and since there is a path from $e_1$ to $v_1'$ and $v_1' \sim v_1$
                            \begin{align*}
                                &\consistency_{\sfrac{X+Y}{\sim}}^{\hashtag}(\iota_{\sfrac{E_{X} + E_{Y}}{\sim}}[e_1]) =\\
                                &\;[[]_{V}^{\hashtag},[]_{E}^{\hashtag}]^{*}(\consistency_{X+Y}(\iota_{V_{X} + V_{Y}}(v_1)))
                            \end{align*}
                            The cases when $v_1 = g_{V};\iota_{2,V}(z_1)$ is symmetric.
                        \item $v_1 \sim v_1''$ and then $v_1' \sim v_1''$. Suppose $v_1'' = f_{V};\iota_{1,V}(z_1'')$.
                              If $v_1'$ is in the image of $f_{V};\iota_{1,V}$, then
                              \begin{align*}
                                \consistency_{X+Y}(\iota_{V_{X} + V_{Y}}(v_1')) &= \consistency_{X+Y}(\iota_{V_{X} + V_{Y}}(v_1''))\\ 
                                                                               &= \consistency_{X+Y}(\iota_{V_{X} + V_{Y}}(e_1'))
                              \end{align*}
                              and
                              \begin{align*}
                                \consistency_{\sfrac{X+Y}{\sim}}^{\hashtag}(\iota_{\sfrac{V_{X} + V_{Y}}{\sim}}[v_1]) &=
                                \consistency_{\sfrac{X+Y}{\sim}}^{\hashtag}(\iota_{\sfrac{V_{X} + V_{Y}}{\sim}}[v_1'])\\ &= \consistency_{\sfrac{X+Y}{\sim}}^{\hashtag}(\iota_{\sfrac{E_{X} + E_{Y}}{\sim}}[e_1])
                              \end{align*}
                              If $v_1'$ is in the image of $g_{V};\iota_{2,V}$, then $\consistency_{X+Y}(\iota_{V_{X} + V_{Y}}(v_1')) = \consistency_{X+Y}(\iota_{E_{X} + E_{Y}}(e_1)) = \varnothing$ and by definition as there is a path from $[e_1]$ to $[v_1'']$
                              \begin{align*}
                                \consistency_{\sfrac{X+Y}{\sim}}^{\hashtag}&(\iota_{\sfrac{E_{X} + E_{Y}}{\sim}}([e_1])) =\\
                                &\consistency_{\sfrac{X+Y}{\sim}}^{\hashtag}(\iota_{\sfrac{V_{X} + V_{Y}}{\sim}}([v_1'']))\\ 
                                                                                                                        &= \consistency_{\sfrac{X+Y}{\sim}}^{\hashtag}(\iota_{\sfrac{V_{X} + V_{Y}}{\sim}}([v_1]))
                              \end{align*}
                              The cases when $v_1'' = g_{V};\iota_{2,V}(z_1'')$ are symmetric.
                    \end{itemize}
                \end{itemize}
     \end{itemize}
The case when $[v] \in t([e])$ is symmetric.
\end{proof}

\begin{lemma}
\label{lemma:consistency_child}
    $\consistency_{\sfrac{X+Y}{\sim}}$ is defined only for vertices and edges that are not top-level.
\end{lemma}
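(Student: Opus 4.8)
The plan is to show that whenever $\consistency_{\sfrac{X+Y}{\sim}}(\iota([x]))$ is non-empty, the class $[x]$ is not top-level, i.e.\ that $<^{\mu}_{\sfrac{X+Y}{\sim}}(\iota([x]))$ is defined. Since $\consistency_{\sfrac{X+Y}{\sim}}$ is obtained as the reflexive/symmetric/transitive closure $(\consistency^{\hashtag}_{\sfrac{X+Y}{\sim}})^{c}$, a non-empty value at $[x]$ is always witnessed by a $\consistency^{\hashtag}$-chain passing through $[x]$. Thus it suffices to prove two local facts about the auxiliary relation and then propagate them along the closure: (A) if $\consistency^{\hashtag}_{\sfrac{X+Y}{\sim}}(\iota([x])) \neq \varnothing$ then $[x]$ is not top-level; and (B) if $[x]$ is not top-level and $[y] \in \consistency^{\hashtag}_{\sfrac{X+Y}{\sim}}(\iota([x]))$ then $[y]$ is not top-level.

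For (A) I would argue by the two defining cases for vertices (edges are analogous, using that $E_{Z} = \varnothing$ makes the ``no pre-image in $V_{Z}$'' hypothesis automatic). In case (1) there is $v \sim x$ with $\consistency_{X+Y}(\iota_{V_{X}+V_{Y}}(v)) \neq \varnothing$. Because $X+Y$ is the coproduct of the e-hypergraphs $X$ and $Y$, its consistency relation is inherited as a union of equivalence relations on children-sets, hence reflexive exactly on non-top-level elements; so $v$ is not top-level in $X+Y$ and $[v) \neq \varnothing$. The first clause of the definition of $<^{\mu}_{\sfrac{X+Y}{\sim}}$ (applied with $v \sim v$) then makes $<^{\mu}_{\sfrac{X+Y}{\sim}}(\iota([v]))$ defined, and since $[x] = [v]$ we are done. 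In case (2) we have $\consistency_{X+Y}(\iota_{V_{X}+V_{Y}}(x)) = \varnothing$, which by the same reflexivity forces $[x) = \varnothing$ in $X+Y$, together with an undirected path from $[x]$ to a vertex $v$ with $\consistency_{X+Y}(\iota_{V_{X}+V_{Y}}(v)) \neq \varnothing$ (hence $[v) \neq \varnothing$ as before). These are precisely the hypotheses of the second clause of the definition of $<^{\mu}_{\sfrac{X+Y}{\sim}}$, so $<^{\mu}_{\sfrac{X+Y}{\sim}}(\iota([x])) = <^{\mu}_{\sfrac{X+Y}{\sim}}(\iota([v]))$ is defined and $[x]$ is not top-level.

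For (B), note that every element $[y] \in \consistency^{\hashtag}_{\sfrac{X+Y}{\sim}}(\iota([x]))$ is the quotient image $[w]$ of some $w$ with $v \consistency_{X+Y} w$, where $v$ is the witness from case (1) (case (2) reduces to case (1) via the path). Since $v \consistency_{X+Y} w$ means $w$ shares the parent of $v$, the element $w$ is non-top-level in $X+Y$, so $[w) \neq \varnothing$ and, by the first clause of the $<^{\mu}_{\sfrac{X+Y}{\sim}}$ definition, $[y] = [w]$ is not top-level. Combining (A) and (B): any pair related by $\consistency^{\hashtag}$ consists of two non-top-level classes, and the closure adds $[x]$ to its own support only through such relations; a straightforward induction along the defining chains of the closure then shows that $\consistency_{\sfrac{X+Y}{\sim}}(\iota([x])) \neq \varnothing$ entails $[x]$ non-top-level. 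The edge cases repeat this verbatim.

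The main obstacle will be the careful bookkeeping in case (2): I must ensure that the vanishing of $\consistency_{X+Y}(\iota(x))$ really does certify $[x) = \varnothing$ (this is exactly where reflexivity of consistency on non-top-level elements is used) and that the undirected-path witness consumed by the definition of $<^{\mu}_{\sfrac{X+Y}{\sim}}$ is the same $v$ that witnesses non-emptiness of $\consistency^{\hashtag}$, so that the two reductions land on a common representative. Once (A) and (B) are established, the closure argument is routine.
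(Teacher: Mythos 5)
Your proof is correct and follows essentially the same route as the paper's: reduce non-emptiness of the closed relation $\consistency_{\sfrac{X+Y}{\sim}}$ to non-emptiness of the auxiliary $\consistency^{\hashtag}_{\sfrac{X+Y}{\sim}}$, then observe that its two defining clauses line up exactly with the two clauses of $<^{\mu}_{\sfrac{X+Y}{\sim}}$, using that in an e-hypergraph an element has non-empty consistency precisely when it has a parent. Your facts (A)/(B) merely make explicit the closure-reduction step that the paper asserts without proof (and you omit the converse direction, which the paper sketches but the lemma as stated does not require), so the two arguments coincide in substance.
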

\begin{proof}
  $\consistency_{\sfrac{X+Y}{\sim}}(\iota_{\sfrac{*}{\sim}}([x])) \not = \varnothing$ if and only if $<_{\sfrac{X+Y}{\sim}}^{\mu}(\iota_{\sfrac{*}{\sim}}([x]))$ is defined.
  Let's first show from left to right. $\consistency_{\sfrac{X+Y}{\sim}}(\iota_{\sfrac{*}{\sim}}([x])) \not = \varnothing$ implies $\consistency_{\sfrac{X+Y}{\sim}}^{\hashtag}(\iota_{\sfrac{*}{\sim}}([x])) \not = \varnothing$.
  \begin{itemize}
    \item $\consistency_{\sfrac{X+Y}{\sim}}^{\hashtag}(\iota_{\sfrac{*}{\sim}}([x])) = [[]_{V}^{\consistency},[]_{E}^{\consistency}]^{*}(\iota_{*}{\sim}(x'))$ such that $x \sim x'$.
          By e-hypergraph-ness of $X+Y$ this implies that $[x') \not = \varnothing$ and hence $<_{\sfrac{X+Y}{\sim}}^{\hashtag}(\iota_{\sfrac{*}{\sim}}([x'])) = <_{\sfrac{X+Y}{\sim}}^{\hashtag}(\iota_{\sfrac{*}{\sim}}([x])) = [<_{X+Y}^{\mu}(\iota_{*}(x))]$
    \item $\consistency_{\sfrac{X+Y}{\sim}}^{\hashtag}(\iota_{\sfrac{*}{\sim}}([x])) = \consistency_{\sfrac{X+Y}{\sim}}^{\hashtag}(\iota_{\sfrac{V_{X} + V_{Y}}{\sim}}([v]))$ such that there is a path from $[x]$ to $[v]$ and
          $\consistency_{X+Y}(\iota_{V_{X} + V_{Y}}(v)) \not = \varnothing$. This implies that $[v) \not = \varnothing$ and by definition
          \[
            <_{\sfrac{X+Y}{\sim}}^{\mu}(\iota_{\sfrac{*}{\sim}}([x])) = <_{\sfrac{X+Y}{\sim}}^{\mu}(\iota_{\sfrac{*}{\sim}}([v]))  
          \]
  \end{itemize}
  The right-to-left direction is analogous.
\end{proof}

\begin{lemma}
$\consistency_{\sfrac{X+Y}{\sim}}^{p}$ is not total.
\end{lemma}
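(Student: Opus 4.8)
The plan is to show that for every hierarchical edge $p$ of $\sfrac{X+Y}{\sim}$ the induced equivalence $\consistency_{\sfrac{X+Y}{\sim}}^{p}$ on the children of $p$ is a \emph{proper} subrelation of $(E_p + V_p) \times (E_p + V_p)$, which is exactly the condition demanded of a consistency relation in Definition~\ref{def:e-homo}. First I would reduce to a single parent: since $Z$ is discrete and the relation $S_E$ used to build $\sim$ is the identity on edges, no two edges of $X+Y$ are identified, so every hierarchical edge $p$ of the pushout is the class $[p_0]$ of a unique hierarchical edge $p_0$ of $X+Y$; without loss of generality $p_0 \in E_X$, the case $p_0 \in E_Y$ being symmetric. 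Because $\sfrac{X+Y}{\sim}$ is an e-hypergraph (established by the preceding lemmas) and the quotient map $([]_V,[]_E)$ preserves $<^{\mu}$, every child $a$ of $p_0$ in $X$ yields a child $[a]$ of $p$.

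Since $X$ is itself an e-hypergraph, its consistency relation satisfies $\consistency_X^{p_0} \neq (E_{p_0}+V_{p_0}) \times (E_{p_0}+V_{p_0})$; hence I can pick two children $a,b$ of $p_0$ lying in distinct consistency components $C_a \neq C_b$ of $p_0$ in $X$. The goal is then to prove $\neg\,([a]\consistency_{\sfrac{X+Y}{\sim}}^{p}[b])$. As $a,b$ are children of $p_0$ and each $\consistency_X^{p_0}$ is reflexive, both $\consistency_{X+Y}(\iota_{V_X+V_Y}(a))$ and $\consistency_{X+Y}(\iota_{V_X+V_Y}(b))$ are nonempty, so the first clause in the definition of $\consistency_{\sfrac{X+Y}{\sim}}^{\hashtag}$ applies and computes $\consistency_{\sfrac{X+Y}{\sim}}^{\hashtag}([a])$ and $\consistency_{\sfrac{X+Y}{\sim}}^{\hashtag}([b])$ as the quotient images $[C_a]$ and $[C_b]$ of the two $X$-components. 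It then remains to verify that neither the identifications made by $\sim$ nor the subsequent connectivity- and equivalence-closures used to pass from $\consistency^{\hashtag}$ to $\consistency$ ever connect $[C_a]$ to $[C_b]$.

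The key leverage is the consistency hypothesis on the span in Theorem~\ref{th:existence_of_pushouts}: all vertices $f_V(z)$ ($z \in V_Z$) share one consistency class, and likewise all $g_V(z)$. Consequently every gluing identification $f_V(z)\sim g_V(z)$ meets the children of $p_0$ in at most the single component containing the common class of the $f_V(z)$; in particular no identification can relate an element of $C_a$ to an element of $C_b$, so $[C_a]$ and $[C_b]$ remain disjoint after quotienting. For the closures I would use that each component of $\consistency_X^{p_0}$ is \emph{closed under connectivity} (Definition~\ref{def:e-homo}) together with e-hypergraph condition~(4), which forces the endpoints of an edge to share the edge's parent: this confines any undirected path between children of $p_0$ to a single component, so the connectivity clause of $\consistency^{\hashtag}$ cannot bridge $C_a$ and $C_b$, and the final symmetric--transitive closure keeps them separate as well. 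Hence $[a]$ and $[b]$ witness $\consistency_{\sfrac{X+Y}{\sim}}^{p} \neq (E_p+V_p)\times(E_p+V_p)$.

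The main obstacle I anticipate is precisely this last step --- ruling out a closure chain $[a]=[x_1],\ldots,[x_n]=[b]$. The argument has to combine three ingredients carefully: the uniformity of the gluing vertices (to confine all $\sim$-identifications to one component), the connectivity-closedness of components (to confine the path-based extensions of $\consistency^{\hashtag}$), and the earlier well-definedness analysis of $\consistency^{\hashtag}$ (to ensure no element of $C_a$ is assigned the class of $C_b$ through an intervening originally-top-level vertex). In the non-monic case one must additionally check that the distinct children $a,b$ of $p_0$ are not themselves identified by $\sim$, which again follows from the uniform-gluing hypothesis, since any such identification would have to route through the single shared component.
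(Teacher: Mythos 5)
Your proposal is correct and rests on exactly the same idea as the paper's (one-sentence) proof: assumption (4) of Theorem~\ref{th:existence_of_pushouts} forces all gluing vertices of the span into a single consistency component, so the quotient construction can never merge two distinct consistency components under a hierarchical edge. Your expansion --- uniqueness of edge preimages under $\sim$, choosing witnesses $a,b$ in distinct components of $p_0$, and checking that neither the identifications nor the connectivity/transitive closures can bridge them --- is precisely the detail the paper leaves implicit, and it is consistent with the pushout construction.
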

\begin{proof}
The property holds as we require that the arrows of the span for the pushout do not map vertices to different components and therefore they can not be merged into one component.
\end{proof}

\begin{proposition}[$\sfrac{X+Y}{\sim}$ is an e-hypergraph]
\label{prop:pushout_is_e_hypergraph}
For this we need to check that the defined relations --- $<_{\sfrac{X+Y}{\sim}}$ and $\consistency_{\sfrac{X+Y}{\sim}}$ --- satisfy all the needed properties as per Definition~\ref{def:e-homo}.
\end{proposition}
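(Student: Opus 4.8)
The plan is to assemble the preceding lemmas and discharge each clause of Definition~\ref{def:e-homo} in turn for the candidate structure $\sfrac{X+Y}{\sim}$, whose underlying hypergraph $(V,E,s,t)$ was already obtained as the pushout of the underlying hypergraphs during the construction. It therefore remains only to certify that the two relations $<_{\sfrac{X+Y}{\sim}}$ and $\consistency_{\sfrac{X+Y}{\sim}}$ defined above genuinely satisfy the axioms of a child relation and of a consistency relation.

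For the child relation, recall that $<_{\sfrac{X+Y}{\sim}}$ is defined as the transitive closure of $<^{\mu}_{\sfrac{X+Y}{\sim}}$, so transitivity is immediate. Irreflexivity and asymmetry of the immediate relation are exactly Lemmas~\ref{lemma:child_irreflexive} and~\ref{lemma:child_assymetric}, and condition~(4) of Definition~\ref{def:e-homo} --- that $<^{\mu}$ respects connectivity --- is Lemma~\ref{lemma:child_respect}. Condition~(2), that each element has at most one immediate parent, is precisely the statement that $<^{\mu}_{\sfrac{X+Y}{\sim}}$ is a well-defined partial function, which was verified during the pushout construction. For condition~(1), I would observe that every value $<^{\mu}_{\sfrac{X+Y}{\sim}}(\iota([x]))$ is, by each clause of its definition, of the form $[<^{\mu}_{X+Y}(\iota(y))]$ for some $y$, hence the image under $[-]_{E}$ of a hierarchical edge of $X+Y$, which stays hierarchical in the quotient since the value $\bot$ is inherited by the label-forgetting quotient; condition~(3) follows dually, since the construction only assigns parents to orphaned elements and never strips children from a box edge, so a maximal (childless) edge of the quotient lifts to a maximal edge of $X$ or $Y$ and is thus labelled.

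The one point not covered verbatim by a lemma is that $<_{\sfrac{X+Y}{\sim}}$, the \emph{transitive closure}, is itself irreflexive --- equivalently, that $<^{\mu}_{\sfrac{X+Y}{\sim}}$ is acyclic. I expect this to be the main obstacle, and the argument I would give runs as follows. Any nontrivial cycle in $<^{\mu}$ would consist entirely of parents, hence entirely of edges. By assumption~(1) of~\ref{pushout:assumptions} the apex $Z$ is discrete, so no edges are identified by $\sim$ and the restriction of $<^{\mu}_{\sfrac{X+Y}{\sim}}$ to edges coincides with the disjoint union of the edge-child relations of $X$ and $Y$. Since $X$ and $Y$ are e-hypergraphs their child relations are strict partial orders, so this disjoint union is acyclic; combined with Lemmas~\ref{lemma:child_irreflexive} and~\ref{lemma:child_assymetric}, which already exclude loops of length one and two, this rules out cycles of every length and yields irreflexivity of the closure.

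Finally, for the consistency relation I would check that $\consistency_{\sfrac{X+Y}{\sim}}$ is a union of equivalence relations $\consistency_p$, one on each parent class $\{x \mid p <^{\mu} x\}$, closed under connectivity. That each restriction is reflexive, symmetric and transitive is built into the closure $(\consistency^{\hashtag}_{\sfrac{X+Y}{\sim}})^{c}$ by its three defining clauses; that distinct parent classes are never related, and that only non-top-level elements lie in the support, is Lemma~\ref{lemma:consistency_child} (tying the support of $\consistency$ to definedness of $<^{\mu}$); closure under connectivity is Lemma~\ref{lemma:consistency_respect}; and the requirement $\consistency_p \neq (E_p+V_p)\times(E_p+V_p)$ is the non-totality lemma proved just above, which rests on the hypothesis that the span legs do not map a vertex of $Z$ into two different components. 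Collecting these observations discharges every clause of Definition~\ref{def:e-homo}, so $\sfrac{X+Y}{\sim}$ is an e-hypergraph.
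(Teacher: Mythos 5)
Your overall route is the same as the paper's: the paper's proof of this proposition is a one-line appeal to the preceding lemmas, and your clause-by-clause matching (Lemmas~\ref{lemma:child_irreflexive} and~\ref{lemma:child_assymetric} for loops of length one and two, Lemma~\ref{lemma:child_respect} for condition~(4), well-definedness of $<^{\mu}_{\sfrac{X+Y}{\sim}}$ for condition~(2), the closure construction together with Lemmas~\ref{lemma:consistency_child},~\ref{lemma:consistency_respect} and the non-totality lemma for the consistency clauses) is accurate, as are your lifting arguments for conditions~(1) and~(3). You are also right to flag that irreflexivity of the transitive closure --- equivalently, acyclicity of $<^{\mu}_{\sfrac{X+Y}{\sim}}$ --- is the one property required by Definition~\ref{def:e-homo} (``$<$ is a strict partial order'') that the stated lemmas do not cover, since they only exclude cycles of length one and two.

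However, the argument you supply for that missing piece rests on a false claim. It is not true that the restriction of $<^{\mu}_{\sfrac{X+Y}{\sim}}$ to edges coincides with the disjoint union of the edge-child relations of $X$ and $Y$. Discreteness of $Z$ does ensure that no edges are identified, but clause~(2) of the definition of $<^{\mu}_{\sfrac{X+Y}{\sim}}$ on edges creates genuinely new parent pairs: an edge $e$ with $[e) = \varnothing$ in $X+Y$ acquires the parent $<^{\mu}_{\sfrac{X+Y}{\sim}}(\iota_{\sfrac{V_X+V_Y}{\sim}}([v]))$ whenever some vertex $v$ with a parent becomes joined to $[e]$ by an undirected path after gluing. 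This is the entire purpose of that clause: when a parentless subgraph of $X$ is glued along $Z$ into a nested context of $Y$, its edges inherit the ambient hierarchical edge of $Y$ as parent --- a pair lying in neither $X$ nor $Y$. Consequently your reduction of acyclicity to acyclicity of $X$ and $Y$ collapses, and with it the conclusion that the two low-length lemmas ``rule out cycles of every length'' (note also that if the relation really were the disjoint union, you would not need those lemmas at all). A correct argument must handle cycles containing inherited pairs: any such cycle necessarily involves parent assignments crossing between $X$ and $Y$, and one then derives a contradiction from condition~(4) of Definition~\ref{def:e-homo} applied inside $X$ and inside $Y$ (adjacency-connected elements share their immediate parent) together with assumptions~(\ref{assumption:equal_predecessors}) and~(\ref{assumption:non_ambiguous_predecessors}) of Theorem~\ref{th:existence_of_pushouts}, exactly as in the length-two case analysed in Lemma~\ref{lemma:child_assymetric}. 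As it stands, this step is a genuine gap in your proof.
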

\begin{proof}
    This proposition then follows by applying the lemmas above.
\end{proof}

We will then discuss the uniqueness of the corresponding boundary pushout complement.



\begin{proposition}[Proposition~\ref{prop:boundary_unique} restatement]
\label{prop:boundary_complement}
    The boundary complement in~\ref{def:boundary_new} when exists is unique.
\end{proposition}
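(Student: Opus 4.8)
The plan is to show that any two extended boundary complements for a fixed mono $m : \mathcal{L} \to \mathcal{G}$ are isomorphic as pushout complements, by reducing uniqueness to the underlying hypergraph and then recovering the child relation $<$ and consistency relation $\consistency$ as forced restrictions from $\mathcal{G}$. First I would observe that all the relevant legs are monos: the interface map $i+j \to \mathcal{L}$ is mono by the MDA conditions of Definition~\ref{def:monogamy_ehyp}, and $[c_1,c_2] : i+j \to \mathcal{L}^{\bot}$ is mono by condition~(2) of Definition~\ref{def:boundary_new}. Hence the span $\mathcal{L} \leftarrow i+j \to \mathcal{L}^{\bot}$ has two mono legs, so by the remark following Theorem~\ref{th:existence_of_pushouts} the remaining pushout leg $\mathcal{L}^{\bot} \to \mathcal{G}$ is also mono. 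Consequently any boundary complement embeds as a sub-e-hypergraph of $\mathcal{G}$, and it suffices to pin down its image together with its two relations.

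At the level of the underlying hypergraph I would use that pushout complements along monos are unique up to isomorphism, which holds because $\catname{Hyp}(\Sigma)$ is (essentially) a presheaf category and hence adhesive. Concretely, the vertex and edge sets of $\mathcal{L}^{\bot}$ are forced to be those of $\mathcal{G}$ with the strict interior of $m(\mathcal{L})$ removed — that is, the image under $m$ of the elements of $\mathcal{L}$ having no preimage in $i+j$ — with $s$, $t$ and $l$ inherited by restriction from $\mathcal{G}$. The down-closedness of $m(\mathcal{L})$ in condition~(1) guarantees that this excision removes entire hierarchical subtrees, so that no child is orphaned and the restriction is well-defined as an e-hypergraph.

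The real work lies in recovering $<$ and $\consistency$ on $\mathcal{L}^{\bot}$. Since $\mathcal{L}^{\bot} \to \mathcal{G}$ is a mono homomorphism, both relations are contained in the restrictions of $<_{\mathcal{G}}$ and $\consistency_{\mathcal{G}}$ to the image. For the reverse containment I would appeal to the explicit pushout construction of Theorem~\ref{th:existence_of_pushouts}, where $<_{\mathcal{G}}$ and $\consistency_{\mathcal{G}}$ arise as closures generated from $<_{\mathcal{L}}, <_{\mathcal{L}^{\bot}}$ and $\consistency_{\mathcal{L}}, \consistency_{\mathcal{L}^{\bot}}$ respectively. If either relation on $\mathcal{L}^{\bot}$ were strictly smaller than the corresponding restriction from $\mathcal{G}$, the closure produced by the pushout would fail to reproduce all of $<_{\mathcal{G}}$ (respectively $\consistency_{\mathcal{G}}$) on the complement of $m(\mathcal{L})$, contradicting that the square is a pushout onto $\mathcal{G}$; here convexity in condition~(1) and the matching-parent and matching-component requirements of conditions~(3)--(4) are exactly what prevent the closure from gluing across the hole in an unintended way. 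Hence both relations coincide with the restrictions and are uniquely determined.

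Combining these observations, the underlying hypergraph, the child relation and the consistency relation of $\mathcal{L}^{\bot}$ are all forced as restrictions of $\mathcal{G}$ on a uniquely determined carrier, and the maps $i+j \to \mathcal{L}^{\bot} \to \mathcal{G}$ are thereby determined, so any two boundary complements are isomorphic. I expect the third step to be the main obstacle: unlike the bare hypergraph case, $\catname{EHyp}(\Sigma)$ is not obviously adhesive, so uniqueness of the relational data cannot simply be quoted and must instead be extracted by hand from the closure-based pushout construction, carefully using down-closedness and convexity to control how the hierarchical and consistency structure of $\mathcal{G}$ interacts with the excised region.
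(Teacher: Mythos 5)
Your high-level strategy (pin down the underlying hypergraph first, then argue the relations $<$ and $\consistency$ are forced) parallels the paper's proof, but two of your key steps do not go through as stated. First, your opening claim that the interface map $i+j \to \mathcal{L}$ is mono ``by the MDA conditions'' is false: Definition~\ref{def:monogamy_ehyp} only makes $f_{int}$ and $g_{int}$ individually mono, and a vertex of in-degree and out-degree $0$ (a bare wire, as in any identity cospan) lies in the image of \emph{both}, so the copairing need not be mono. This matters twice over. It is exactly the hypothesis you need to conclude that $\mathcal{L}^{\bot} \to \mathcal{G}$ is mono, and it is exactly the hypothesis under which pushout complements in an adhesive category such as $\catname{Hyp}(\Sigma)$ are unique --- with a non-mono rule leg, uniqueness genuinely fails (take $\mathcal{L}$ a single vertex, $i+j$ two vertices both mapped to it, and $\mathcal{G}$ a single vertex: both the one-vertex and the two-vertex complements complete the pushout). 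What restores uniqueness in this setting is not adhesivity but the extra boundary conditions: the paper's proof uses condition~(2) of Definition~\ref{def:boundary_new} ($[c_1,c_2]$ mono) together with $m$ mono to force the vertex and edge sets, and then uses the MDA requirements of conditions~(6)--(7) (inputs of the complement have in-degree $0$, etc.) to force $s$ and $t$. Your adhesivity shortcut silently excludes precisely the cases where those conditions are doing the work.

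Second, your argument for the relations points in the wrong direction. You claim a strictly smaller $<_{\mathcal{L}^{\bot}}$ would make the closure ``fail to reproduce all of $<_{\mathcal{G}}$''; but the closure construction of Theorem~\ref{th:existence_of_pushouts} is specifically designed to \emph{add} parent and consistency data to orphaned elements that are connected to elements carrying such data, so a complement with missing relations can perfectly well generate a commuting square over the same carrier $\mathcal{G}$. The actual obstruction is the universal property: the paper shows that if some $v \in \mathcal{L}^{\bot}$ had $[v) = \varnothing$ while $[g_V(v)) \neq \varnothing$, one could build a modified codomain $\mathcal{G}'$ (strip the parents off the image of the offending component), check that both $\mathcal{L}$ and $\mathcal{L}^{\bot}$ still map to it --- homomorphisms only \emph{preserve} parents, they need not reflect them --- and then observe that no mediating morphism $\mathcal{G} \to \mathcal{G}'$ can exist, contradicting that the square is a pushout; the same device handles $\consistency$. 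So the relational step cannot be carried out by the ``closure fails to close'' argument you sketch; it needs this universal-property argument, and your conditions~(1), (3)--(4) enter there in a different role than the one you assign them (they guarantee $\mathcal{G}'$ and the cocone into it are legitimate, not that the closure is injective on relations).
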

\begin{proof}
        The pushout in~\ref{def:boundary_new} satisfies the assumptions in Definition~\ref{pushout:assumptions} (note that since $i \to i' \to \mathcal{L} \xleftarrow{} j' \xleftarrow{} j$ is an MDA cospan, the image of $i + j$ in $\mathcal{L}$ consists of top-level vertices only) and therefore is isomorphic to the pushout constructed in~\ref{pushout:assumptions} and by construction should also be a pushout on the underlying sets of vertices and hyperedges.
        In particular,
    \[
        \tikzfig{figures/pushout_verticies}
    \]
    \[
        \tikzfig{figures/pushout_edges}
    \]
    where $V$s and $E$s are sets of vertices and hyperedges respectively, marked squares are pushouts and $\mathcal{L}^{\bot}$s are pushout complements and where $E_{0}$ is either empty or contains edges with no inputs and outputs only.
    Since $m$ and $[c_1,c_2]$ are monos, the first square can be rewritten as follows.
\[\begin{tikzcd}
	{a + y} && {i + j + x} && {i+j} \\
	\\
	{a + y + z} &&&& {\begin{matrix} i + j + w \\ \cong \\ i + j + z \end{matrix}}
	\arrow["{h_{V,ext}}"', from=1-5, to=1-3]
	\arrow["{h_{V,int}}"', from=1-3, to=1-1]
	\arrow["{m_{V}}"', from=1-1, to=3-1]
	\arrow["{g_{V}}", from=3-5, to=3-1]
	\arrow["c"', from=1-5, to=3-5]
\end{tikzcd}\]
    Where $y$ is the image of $i + j$. Clearly, $i + j + z$ is a pushout complement, as computing the pushout identifies $i + j$ with its image within $y$ and leaves $z$ as is.
    Suppose there is another pushout complement $i + j + w$.
    Because $a + y + w$ also yields a pushout it must be the case that $a + y + w \cong a + y + z$ and therefore $w \cong z$ and pushout complement for vertices is $i + j + z$ up to isomorphism.
    The same reasoning applies to show that $E_{\mathcal{L}^{\bot}}$ is unique and hence the sets of edges and vertices are uniquely determined in the pushout complement.
    One can observe that $g_{V} = h_{V,ext};h_{V,int} + id_{n}$ and that $h_{V,ext}$ is mono and $h_{V,int} = [h_1,h_2]$ where $h_1,h_2$ are mono.
    To show that pushout complement is unique we next need to show that source and target maps are unique.
    Suppose, on contrary, that there exist
    \[
    \mathcal{L}_{1}^{\bot} = \langle V_{\mathcal{L}^{\bot}}, E_{\mathcal{L}^{\bot}}, s_{1} \rangle \qquad \mathcal{L}_{2}^{\bot} = \langle V_{\mathcal{L}^{\bot}}, E_{\mathcal{L}^{\bot}}, s_{2} \rangle
    \]

    such that there is $e \in E_{\mathcal{L}^{\bot}}$ and $s_{1}(e) \not = s_{2}(e)$.
    As $g$ is a homomorphism, it must be the case that $g^{*}_{V}(s_{1}(e)) = g^{*}_{V}(s_{2}(e)) = s_{\mathcal{G}}(g(e))$, or
    $(h_{V,ext};[h_{1},h_{2}] + id_{n})(s_{1}(e)) = (h_{V,ext};[h_{1},h_{2}] + id_{n})(s_{2}(e))$.
    For the latter to hold when $s_{1}(e) \not = s_{2}(e)$ it must be the case that $h_{V,ext};[h_{1},h_{2}]$ identifies $v_{1} \in s_{1}(e)$ with $v_{2} \in s_{2}(e)$ and necessarily $v_{1}$ and $v_{2}$ are in the image of $c$.
    This can only happen if $v_1 \in i$ and $v_2 \in j$, \textit{i.e.} they belong to input and output interfaces respectively.
    However, this contradicts the fact that $v_2 \in s_2(e)$ as a vertex which is a source of some edge can not be in the output interface as per the (6) and (7) conditions of the boundary complement~\ref{def:boundary_new} the image of $j$ in $\mathcal{L}^{\bot}$ should consist of vertices of in-degree 0.
    Similarly for $t_{\mathcal{L}^{\bot}}$.

    
    We also need to show that $<$ and $\consistency$ are unique.
    Suppose that there exist
    \[
        \mathcal{L}^{\bot}_{1} = (V_{\mathcal{L}^{\bot}}, E_{\mathcal{L}^{\bot}}, s,t, <_{1})
    \qquad
    \text{and}
    \qquad
        \mathcal{L}^{\bot}_{2} = (V_{\mathcal{L}^{\bot}},E_{\mathcal{L}^{\bot}},s,t, <_{2})
    \]
    Because $g$ is homomorphism, it must be the case that 
    \[
        g_{E}(<_{1}(\iota_{V_{\mathcal{L}^{\bot}}}(v))) = <_{\mathcal{G}}(\iota_{V_{\mathcal{G}}}(g_{V}(v))) = g_{E}(<_{2}(\iota_{V_{\mathcal{L}^{\bot}}}(v)))
    \]
    which implies $<_{2}(\iota_{V_{\mathcal{L}^{\bot}}}(v)) = <_{1}(\iota_{V_{\mathcal{L}^{\bot}}}(v))$ for all $v$ such that $[v) \not = \varnothing$ since $g_{E}$ only identifies edges with no successors.
    Similarly for edges.
    We now show that if $[v) = \varnothing$ for $v \in V_{\mathcal{L}^{\bot}}$ then necessarily $[g_{V}(v)) = \varnothing$.
    Suppose $[v) = \varnothing$ and $<_{\mathcal{G}}(\iota_{V_{\mathcal{G}}}(g_{V}(v))) = e$ and there is a connected component that contains $v$ and such that for all $v'$ and $e'$ in this component $[v') = \varnothing = [e')$.
    Consider $\mathcal{G'}$ which is obtained from $\mathcal{G}$ by making $[g_{V}(v')) = [g_{E}(e')) = \varnothing$ for $v', e'$ above.
    Clearly there is a morphism from $\mathcal{L}^{\bot}$ to $\mathcal{G}'$, i.e. the morphism $g' = (g_{V},g_{E})$ that has the same action on edges and vertices as the morphism $g$ does, because $v'$ such that $[v') = \varnothing$ can be mapped to $g_{V}(v')$ such that $[g_{V}(v')) \not = \varnothing$.
    The only case when such a morphism may not exist is when $[v'') \not = \varnothing$, $[v') = \varnothing$ and $g_{V}(v'') = g_{V}(v')$.
    This implies that both $v''$ and $v'$ have a pre-image in $i + j$ and according to the definition of a boundary complement it must be $[[c_1,c_2](z_1)) = [[c_1,c_2](z_2))$ for all $z_1, z_2 \in V_{i + j}$.
    For the same reason there is also a morphism from $\mathcal{L}$ to $\mathcal{G}'$.
    The only case when such a morphism may not exist is when $v'' \in V_{\mathcal{L}}$ such that $[v'') \not = \varnothing$ is mapped to $m_{V}(v'') = g_{V}(v')$ such that $[v') = \varnothing$ which implies $v'$ and $v''$ share a pre-image in $i + j$.
    This leads to a contradiction as the image of $i + j$ in $\mathcal{L}$ should consist of top-level vertices exclusively.
    This construction results in a commutative diagram below
    \[
        \begin{tikzcd}
            & \mathcal{L} \arrow[d, "m"'] \arrow[ldd, "m'"', bend right=49] & i'+j' \arrow[l] & i+j \arrow[d] \arrow[l]                                           \\
            & \mathcal{G}                                                   &                 & \mathcal{L}^{\bot} \arrow[ll, "g"'] \arrow[llld, "g'", bend left] \\
\mathcal{G}' &                                                               &                 &                                                                  
\end{tikzcd}
    \]
    that further results in a contradiction since there is no morphism from $\mathcal{G}$ to $\mathcal{G'}$ because for any such morphism $u$ it must be the case that $[u_{V}(g_{V}(v'))) \not = \varnothing = [g'_{V}(v')) = \varnothing$ which would mean that $\mathcal{G}$ is not a pushout.
    This means that either both $<_{1}(\iota_{*}(x))$ and $<_{2}(\iota_{*}(x))$ are undefined for a given $x$ or $<_{1}(\iota_{*}(x)) = <_{2}(\iota_{*}(x))$ which concludes that the relations are equal and $\mathcal{L}_{1}^{\bot} = \mathcal{L}_{2}^{\bot}$.
    The same reasoning applies when we consider $<_{1}(\iota_{E_{\mathcal{L}^{\bot}}}(e))$ and $<_{2}(\iota_{E_{\mathcal{L}^{\bot}}}(e))$.

    Finally, let's suppose there exist
    \[
        \mathcal{L}^{\bot}_{1} = (V_{\mathcal{L}^{\bot}}, E_{\mathcal{L}^{\bot}}, s,t, <, \consistency_{1})
    \quad
    \text{and}
    \quad
        \mathcal{L}^{\bot}_{2} = (V_{\mathcal{L}^{\bot}},V_{\mathcal{L}^{\bot}},s,t,<, \consistency_{2})
    \]
    Because $g$ is a homomorphism, we have
    \[
        [~g_{V};\iota_{V_{\mathcal{G}}}, g_{E};\iota_{E_{\mathcal{G}}}~]^{*}(\consistency_{\mathcal{L}^{\bot}_{1}}(\iota_{V_{\mathcal{L}^{\bot}}}(v))) \subseteq \consistency_{\mathcal{G}}(g_{V};\iota_{V_{\mathcal{G}}}(v))
    \]
    \[
        [~g_{V};\iota_{V_{G}}, g_{E};\iota_{E_{\mathcal{G}}}]^{*}(\consistency_{\mathcal{L}^{\bot}_{2}}(\iota_{V_{\mathcal{L}^{\bot}}}(v))) \subseteq \consistency_{\mathcal{G}}(g_{V};\iota_{V_{\mathcal{G}}}(v))
    \]
    \begin{itemize}
        \item If $\consistency_{\mathcal{G}}(g_{V};\iota_{V_{\mathcal{G}}}(v)) = \varnothing$ then both left-hand sides should be $\varnothing$ and therefore the relations are equal.
        \item Suppose that $\consistency_{\mathcal{G}}(g_{V};\iota_{V_{\mathcal{G}}}(v)) \not = \varnothing$ which further means that $[g_{V};\iota_{V_{\mathcal{G}}}(v)) \not = \varnothing$.
              $\consistency_{1} \not = \consistency_{2}$ implies there exist $v, v' \in \mathcal{L}^{\bot}$ such that $v \consistency_{1} v'$ and $v \not \consistency_{2} v'$ for $[v) = [v') \not = \varnothing$.
              Then there are connected components $C_1, C_2$ in $\mathcal{L}_{2}^{\bot}$ such that for all $x \in C_1$ and $y \in C_2$ $x \not \consistency_{2} y$ and $v \in C_1$, $v' \in C_2$.
              Consider $g_2 = (g_{V},g_{E}) : \mathcal{L}_2^{\bot} \to \mathcal{G}$ and construct $\mathcal{G'}$ that is obtained from $\mathcal{G}$ by making $g(x) \not \consistency g(y)$ for $x \in C_1$ and $y \in C_2$.
              There is a morphism $g_2' = g_2 : \mathcal{L}_2^{\bot} \to \mathcal{G}'$. Like above, the only case when such morphism may not exist is when for $v' \not \consistency v$ $g_{V}(v) = g_{V}(v')$.
              This implies $v, v'$ have a pre-image in $i + j$ and by definition it must be $v' \consistency v$.
              Similarly, there is a morphism $m' : \mathcal{L} \to \mathcal{G}'$. The only case when it may not exist is when for $v_1, v_1' \in \mathcal{L}$ such that $v_1 \consistency v_1'$, $m_{V}(v_1) = g_{V}(v_2)$ and $m_{V}(v_1') = g_{V}(v_2')$ where $v_2 \in C_1$ and $v_2' \in C_2$.
              It implies $v_1, v_1'$ have pre-image in $i + j$ and by definition of a boundary-complement the image of $i + j$ should consist of top-level vertices exclusively.
              This results in the diagram below there is no morphism $u : G \to G'$, as the existence of such a morphism implies $u(g(x)) = u(g'(x))$, $u(g(y)) = u(g'(y))$ for $x \in C_1$ and $y \in C_2$ and $u(g(x)) \consistency u(g(y))$ and $u(g'(x)) \not \consistency u(g'(y))$.
              This contradicts the fact that $\mathcal{G}$ is a pushout.
              Hence, $\consistency_{1} = \consistency_{2}$ and $\mathcal{L}_{2}^{\bot} = \mathcal{L}_{1}^{\bot}$.
    \end{itemize}
    Ultimately, we have shown that $\mathcal{L}^{\bot}$ is unique up to isomorphism.
    The same reasoning applies when we consider edges instead of vertices.
\end{proof}

Practically, the above proposition means that for any given match $m$ there is only one possible result of rewriting.



\ifdefined\ONECOLUMN
\section{Proofs for Section \ref{sec:soundness-and-completeness}: Soundness and Completeness}
\else
\subsection{Proofs for Section \ref{sec:soundness-and-completeness}: Soundness and Completeness}
\fi
\label{sec:appendix:detailed_proofs}
In this section we will elaborate on soundness and completeness of our translation $\llbracket - \rrbracket : \catname{S}^{+}(\Sigma, \mathcal{E}) \to \WellTypedMdaEcospans / \mathcal{S}, \mathcal{E}$.
First we recite the interpretation function $[]$ on $\Sigma$-terms as defined in~\cite{bonchi_string_2022-2} that induces a functor $\catname{S}(\Sigma) \to \MdaCospans$ and eventually a functor $\catname{S}^{+}(\Sigma) \to \MdaCospans^{+}$.
The base cases for interpretation function $[-] : \textbf{SMT}(\Sigma) \to \MdaCospans$ is given in the Figure~\ref{fig:base_cases}.
Then it is defined inductively by letting $[f \otimes g] \Coloneqq [f] \otimes [g]$ and $[f;g] \Coloneqq [f];[g]$ and by freeness induces the corresponding functor.

\begin{remark}\label{remark:embedding_functor}
    There is a faithful identity-on-objects functor $E: \HypI{\Sigma} \to \Ecospans$ which maps a cospan $n \xrightarrow{f} \mathcal{G} \xleftarrow{g} m$ to an extended cospan $n \xrightarrow{f_{ext}} n \xrightarrow{f_{int}} \mathcal{G} \xleftarrow{g_{int}} m \xleftarrow{g_{ext}} m$ such that $f_{ext};f_{int} = f$ and $g_{ext};g_{int} = g$.
\end{remark}

\begin{lemma}
\label{lemma:normal_form}
For each cospan $f$ in ${\WellTypedMdaEcospans}/{\mathcal{S}}$ there is a \textit{normal form} such that 
\[
	f = f_1 + \ldots + f_n
\]
	such that each $f_i$ contains no hierarchical edges,  and for all $i \neq j$ we have $f_i \neq f_j$.
\end{lemma}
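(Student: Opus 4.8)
The plan is to prove the lemma by induction on the number $N$ of hierarchical edges (those labelled $\bot$) occurring in the carrier of $f$, using the rewrite rules of $\mathcal{S}$ to push every e-box out to the top level and then to collapse repeated summands. For $N = 0$ the carrier is e-box-free and $f = f_1$ is already in normal form, the distinctness condition being vacuous. The two ingredients I would rely on are: (i) the distributivity rules of $\mathcal{S}$ — the EDPOI incarnations of $f;(g+h)=f;g+f;h$, $(f+g);h=f;h+g;h$ and their $\otimes$-counterparts — which let a join be moved past an adjacent factor; and (ii) the associativity and idempotence rules of $\mathcal{S}$, which flatten nested joins and delete duplicate summands.

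For the inductive step with $N \geq 1$, I would choose a top-level hierarchical edge $e$, i.e. one with $[e) = \varnothing$. Its children are partitioned by $\consistency_{e}$ into components $C_1,\dots,C_k$, and since $\consistency_{e}$ is required to be non-total we have $k \geq 2$, so $e$ genuinely encodes a join. Treating $e$ as a single generator of type $|s(e)| \to |t(e)|$, the carrier becomes an e-box-free MDA hypergraph, so by Theorem~\ref{thm:prop-equiv} it is a composite under $;$ and $\otimes$ of generators and symmetries in which $e$ occurs exactly once, as a factor $g_1 + \dots + g_k$ sitting in some surrounding context. Applying the distributivity rules of $\mathcal{S}$ repeatedly to move this join outward past every surrounding factor yields $f \Rrightarrow^{*}_{\mathcal{S}} f^{(1)} + \dots + f^{(k)}$, where $f^{(i)}$ is $f$ with $e$ replaced by the inlined component $C_i$. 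Each $f^{(i)}$ has strictly fewer hierarchical edges than $f$, since it omits $e$ itself together with all hierarchical edges contained in the components $C_j$ with $j \neq i$.

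By the induction hypothesis each $f^{(i)}$ has a normal form $f^{(i)} = f^{(i)}_1 + \dots + f^{(i)}_{n_i}$ with e-box-free, pairwise distinct summands. Using the associativity rule of $\mathcal{S}$ to remove the intermediate bracketing, $f$ is then $\mathcal{S}$-equal in $\WellTypedMdaEcospans/\mathcal{S}$ to the single top-level sum $\sum_{i,l} f^{(i)}_l$ of e-box-free cospans. Finally I would invoke idempotence: whenever two summands are equal — and for e-box-free cospans equality modulo $\mathcal{S}$ coincides with isomorphism of cospans, since no distributivity or idempotence rule can match an e-box-free diagram — the idempotence rule deletes one of the two corresponding components of the top-level e-box. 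Iterating yields a sum of pairwise distinct, e-box-free cospans, the required normal form. Commutativity, needed to identify reorderings of summands, is absorbed into the notion of isomorphic extended cospan of Definition~\ref{def:iso} and so requires no rewrite.

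The step I expect to be the main obstacle is realising ``pull $e$ out to the top level'' as an honest finite sequence of convex EDPOI rewrites. Each distributivity step must be presented as a boundary-complement rewrite matching $e$ together with one adjacent factor, so one must check that such a match is convex, that the required boundary complement exists and is unique (Proposition~\ref{prop:boundary_unique}), and that monogamy, directed acyclicity and well-typing are preserved throughout — including the interaction with the discard and copy structure that each join-summand uses to absorb its unused inputs. One must also exhibit a terminating measure, for instance the size of the context lying strictly outside $e$, guaranteeing that $e$ can always be moved one layer outward until it becomes the outermost operation. This combinatorial bookkeeping, rather than the high-level induction, is where the real work lies.
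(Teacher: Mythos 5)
The paper itself states Lemma~\ref{lemma:normal_form} \emph{without proof} --- it is invoked in the proof of Lemma~\ref{lemma:cospans_plus_equiv} but never discharged --- so there is no paper proof to compare against, and your proposal must be judged on its own terms. Your overall strategy is the natural one and is essentially sound: induct on the number of hierarchical edges, pick a parentless ($[e) = \varnothing$) e-box, pull it out through its context using the distributivity rules of $\mathcal{S}$, apply the induction hypothesis to each resulting summand $f^{(i)}$ (your termination measure is correct, since $f^{(i)}$ loses $e$ and every box nested in the discarded components), then flatten with associativity and deduplicate.

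There is, however, one step whose justification is wrong. You claim that for e-box-free cospans, equality modulo $\mathcal{S}$ coincides with isomorphism ``since no distributivity or idempotence rule can match an e-box-free diagram.'' But the rules of $\mathcal{S}$ come in pairs, one for each orientation of each equation of Figure~\ref{fig:string-equations}; in particular the expansion direction of idempotence, $f \Rrightarrow f + f$, matches any e-box-free diagram, so the premise is false. Worse, the conclusion you want (box-free equality mod $\mathcal{S}$ equals isomorphism) is, in this paper's development, exactly what the faithfulness half of Lemma~\ref{lemma:cospans_plus_equiv} delivers --- and that proof cites the present normal-form lemma --- so invoking it here risks circularity. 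The repair is simple and avoids the claim entirely: Proposition~\ref{prop:soundness} (which precedes this lemma) already makes $\WellTypedMdaEcospans/\mathcal{S}$ semilattice-enriched, so whenever two summands are equal \emph{in the quotient}, associativity, commutativity and $a + a = a$ in the hom-semilattice let you delete one of them; iterating yields pairwise-distinct summands without ever deciding when two box-free cospans are isomorphic. (Alternatively, a rewrite sequence witnessing $f_i = f_j$ can be replayed inside the corresponding box components, making them isomorphic, after which the idempotence rule applies as an honest EDPOI step.) Finally, the EDPOI-level bookkeeping you flag --- convexity of each distributivity match, existence of boundary complements, preservation of MDA-ness and well-typing --- is genuinely left open in your sketch; since the paper offers no proof at all, your outline is already more than the paper provides, but that verification is where a complete proof would have to do its work.
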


\begin{figure*}
\[
\adjustbox{width=\textwidth}{
\tikzfig{figures/interpretation}
}
\]
\caption{Base cases for $[-] : \textbf{SMT}(\Sigma) \to \MdaCospans$}
\label{fig:base_cases}
\end{figure*}

\begin{lemma}[Lemma~\ref{lemma:cospans_plus_equiv} restatement]

    We have the following equivalence of $\catname{SLat}$-categories
    \[
    \MdaCospans^{+} \cong \MdaEcospans/\mathcal{S}~.
    \]
    \end{lemma}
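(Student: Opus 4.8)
The plan is to build an identity-on-objects $\catname{SLat}$-functor $\Phi \colon \MdaCospans^{+} \to \MdaEcospans/\mathcal{S}$ and show each component $\Phi_{n,m}$ is an isomorphism of semilattices; together with being identity-on-objects (hence essentially surjective) this yields the equivalence by the definition of an $\catname{SLat}$-equivalence. To define $\Phi$ economically I would avoid verifying functoriality by hand and instead invoke the 2-adjunction $\mathcal{F} \dashv \mathcal{U} \colon \catname{SLat}\text{--}\catname{Cat} \to \catname{Cat}$. By Proposition~\ref{prop:soundness}, $\MdaEcospans/\mathcal{S}$ is a semilattice-enriched PROP, hence an object of $\catname{SLat}\text{--}\catname{Cat}$. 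The faithful identity-on-objects embedding $E$ of Remark~\ref{remark:embedding_functor}, restricted to the MDA subcategory, sends each MDA cospan to a \emph{hierarchical-edge-free} well-typed MDA extended cospan (the well-typing condition being vacuous in the absence of e-boxes); composing with the quotient gives an ordinary functor $E' \colon \MdaCospans \to \mathcal{U}(\MdaEcospans/\mathcal{S})$. By the universal property of $\mathcal{F}$ this extends uniquely to an $\catname{SLat}$-functor $\Phi \colon \mathcal{F}(\MdaCospans) = \MdaCospans^{+} \to \MdaEcospans/\mathcal{S}$, acting on a formal join as $\Phi(f_1 + \cdots + f_k) = E'(f_1) + \cdots + E'(f_k)$, where the right-hand $+$ is the e-box join of Figure~\ref{fig:A+B}.

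Because $\Phi$ arises from the free-enrichment adjunction, it automatically preserves composition, identities, the monoidal product and the join $+$; in particular it is genuinely an $\catname{SLat}$-functor, and it is identity-on-objects since $E$ is and free enrichment does not alter objects. This secures essential surjectivity, so it remains only to show each $\Phi_{n,m}$ is a bijection, i.e. that $\Phi$ is full and faithful.

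For this I would lean on the normal-form Lemma~\ref{lemma:normal_form}: every morphism of $\MdaEcospans/\mathcal{S}$ equals a join $h_1 + \cdots + h_n$ of pairwise-distinct hierarchical-edge-free cospans, unique up to reordering of summands. The key observation is that a hierarchical-edge-free well-typed MDA extended cospan has empty strictly internal interfaces — with no e-boxes every vertex is top-level — so its internal and external interfaces coincide and it is exactly $E(f)$ for a unique MDA cospan $f$; thus $E'$ restricts to a bijection between MDA cospans and hierarchical-edge-free summands. Fullness follows: given $h_1 + \cdots + h_n$ in normal form, each $h_i = E'(f_i)$, so the morphism is $\Phi(f_1 + \cdots + f_n)$. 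Faithfulness follows too: an element of $\MdaCospans^{+}(n,m)$ is a finite nonempty set $\{f_1,\dots,f_k\}$ of distinct MDA cospans (the free semilattice absorbing associativity, commutativity and idempotence), and $\Phi$ sends it to the normal form $E'(f_1) + \cdots + E'(f_k)$; uniqueness of normal forms then forces two joins with equal image to have equal summand sets, so $\Phi_{n,m}$ is injective. Preservation of $+$ makes each $\Phi_{n,m}$ a semilattice homomorphism, and being bijective it is an iso in $\catname{SLat}$, completing the equivalence.

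The main obstacle I anticipate is not any of these steps individually but justifying the existence and uniqueness underlying Lemma~\ref{lemma:normal_form}, on which both fullness and faithfulness rest: one must verify that the structural rewrites $\mathcal{S}$ — distributivity of $+$ over $(;)$ and $\otimes$, together with associativity and idempotence of the e-box join — suffice to push every well-typed MDA e-cospan to a hierarchical-edge-free join, and that no two genuinely different such joins are identified modulo $\mathcal{S}$. Establishing this non-collapse amounts to a confluence argument for $\Rrightarrow^{*}_{\mathcal{S}}$ on normal forms, requiring careful bookkeeping of the child and consistency structure under the pushout-based e-cospan composition; this is where the bulk of the technical work lies.
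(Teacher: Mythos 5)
Your proposal is correct and takes essentially the same route as the paper: the paper's functor $M$ acts exactly as your $\Phi$ (componentwise application of the embedding $E$ to joins of hierarchical-edge-free cospans), with fullness obtained from the normal-form Lemma~\ref{lemma:normal_form} and faithfulness from faithfulness of $E$ plus the fact that distinct normal forms are not identified modulo $\mathcal{S}$; your use of the free-enrichment 2-adjunction to obtain functoriality is only a cosmetic repackaging of the paper's ``$\catname{SLat}$-functor by construction'' remark. The obstacle you flag --- proving that $\Rrightarrow^{*}_{\mathcal{S}}$ does not collapse genuinely different joins, i.e.\ uniqueness of normal forms --- is indeed the load-bearing step, and the paper likewise relies on it implicitly (in the permutation step of its faithfulness argument) without spelling out a confluence proof.
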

    \begin{proof}
    We will denote the semilattice-enriched functor giving this equivalence as $M$.
    Observe that in $\MdaCospans^{+}$ we can consider only canonical morphisms of the form $f_{1} + \ldots + f_{n}$ where each of $f_{i}$ is an MDA cospan (in the sense of Definition~\ref{def:monogamy_hyp}) and for $i \not = j$, $f_{i} \not = f_{j}$.
    Then, $M$ is defined as follows.
    \begin{itemize}
    \item It maps each discrete hypergraph in $\MdaCospans^{+}$ to a corresponding discrete e-hypergraph in $\MdaEcospans/\mathcal{S}$.
    \item Every morphism $f_{1} + \ldots + f_{n}$ is mapped to $\text{E}(f_{1}) + \ldots + \text{E}(f_{n})$ where $\text{E}$ is the faithful functor induced by Remark~\ref{remark:embedding_functor}.
          This makes $M$ into $\catname{SLat}$-functor by construction.
    \end{itemize}
    To show the functor $M$ is faithful, suppose $f = f_{1} + \ldots + f_{n} \not = g_{1} + \ldots + g_{m} = g$.
    Then it must be that $Mf \not = Mg$.
    If $n \not = m$ then, immediately $Mf \not = Mg$ as $Mf$ and $Mg$ contain different number of non-isomorphic components.
    If $n = m$ then assume for a contradiction that $Mf = Mg$. 
    There exists a permutation $\sigma$, such that for every index $i$, $Mf_{i} = Mg_{\sigma(i)}$.
    Thus, by $\text{E}$ we have that $f_{i} = g_{\sigma(i)}$, implying $f = g$ which is a contradiction.
    Therefore, $Mf \not = Mg$ as required.
    
    To show that the functor is full, consider a morphism $f = f_{1} + \ldots + f_{n}$ in $\WellTypedMdaEcospans / \mathcal{S}$ (by Lemma~\ref{lemma:normal_form}). 
    As each of $f_{i}$ does not contain any hierarchical edges, they are in the image of $\text{E}$, which for each $f_{i}$ gives us $g_{i}$, such that $f_{i} = Mg_{i}$.
    This gives us a morphism $g = g_{1} + \ldots + {g_{n}}$ in $\MdaCospans^{+}$ such that $Mg = f$.
    This concludes that the functor is full.
    As it is also surjective on objects, it makes it into an equivalence.
\end{proof}

\begin{figure}
    \[
    \scalebox{0.75}{
        \tikzfig{figures/contexts}
    }
    \]
    \caption{String diagrammatic contexts}
    \label{fig:string_contexts}
\end{figure}

\begin{remark}
    The construction in Figure~\ref{fig:A+B} naturally extends to $k$ operands.
    Given a join of $k$ cospans of e-hypegraphs
    \begin{align*}
        n_1 \xrightarrow{} n_1' \xrightarrow{} &\mathcal{F}_1 \xleftarrow{} m_1' \xleftarrow{} m_1\\
        &\;+\\
        &\vdotswithin{+}\\
        &\;+\\
        n_k \xrightarrow{} n_k' \xrightarrow{} &\mathcal{F}_k \xleftarrow{} m_k' \xleftarrow{} m_k
    \end{align*}
    we will denote the carrier of the resulting cospan as
    \[
    \mathcal{F}_1 \; \hat{+}\; \ldots\; \hat{+}\; \mathcal{F}_{k}
    \]
    to distinguish it from the coproduct.
\end{remark}

\begin{remark}
    Below we will also occasionally conflate the cospan resulting from $\llbracket - \rrbracket$ with its carrier which will be clear from the context.
    For example, if $\llbracket f \rrbracket = n \to n' \to \mathcal{F} \xleftarrow{} m' \xleftarrow{} m$, then we will also use $\llbracket f \rrbracket$ to denote $\mathcal{F}$.
\end{remark}

\begin{theorem}[Theorem~\ref{thm:full-completeness} restatement]
    \label{proof:appendix:soundness}
    Let $\mathcal{E}$ be a set of equations $l = r$ of $\Sigma$-terms, then we have the following equivalence of $\catname{SLat}$-categories
    \[
        \catname{S}^{+}(\Sigma, \mathcal {E} ) \cong \WellTypedMdaEcospans / \mathcal{S,E}~.
    \]
    where $\mathcal{E}$ on the right is overloaded as a set of EDPOI rewrite rules containing $\langle \llbracket l \rrbracket, \llbracket r \rrbracket  \rangle$ and $\langle \llbracket r \rrbracket, \llbracket l \rrbracket  \rangle$ for every equation $l = r$.
\end{theorem}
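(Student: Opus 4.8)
The plan is to prove the equivalence by factoring it through $\MdaCospans^{+}/\mathcal{E}$, mirroring the two-step proof of Theorem~\ref{thm:completeness_simple} (which composed Lemma~\ref{lemma:2functor_equiv} and Lemma~\ref{lemma:cospans_plus_equiv}) but carrying the theory $\mathcal{E}$ through each stage. Concretely, I would establish
\[
\catname{S}^{+}(\Sigma, \mathcal{E}) \cong \MdaCospans^{+}/\mathcal{E} \cong \WellTypedMdaEcospans/\mathcal{S}, \mathcal{E},
\]
where on $\MdaCospans^{+}$ the symbol $\mathcal{E}$ denotes the $\catname{SLat}$-enriched congruence generated by the convex DPOI rewrites induced by $\mathcal{E}$ (applied inside formal joins). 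The first equivalence is the enriched analogue of Lemma~\ref{lemma:2functor_equiv}, the second of Lemma~\ref{lemma:cospans_plus_equiv}.

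For the first equivalence I would start from the soundness and completeness of plain convex DPOI rewriting \cite{bonchi_string_2022-2}, which gives an equivalence $\catname{S}(\Sigma, \mathcal{E}) \cong \MdaCospans/\mathcal{E}$ of PROPs (here $\MdaCospans/\mathcal{E}$ is $\MdaCospans$ quotiented by convex DPOI rewriting, cf.\ Definition~\ref{def:convex_dpo}). Applying the free-enrichment 2-functor $\mathcal{F}$ of Proposition~6.4.3~\cite{Borceux_1994} --- which, being a left 2-adjoint, preserves equivalences and the relevant reflexive-coequaliser quotients --- yields
\[
\catname{S}^{+}(\Sigma, \mathcal{E}) = \mathcal{F}(\catname{S}(\Sigma, \mathcal{E})) \cong \mathcal{F}(\MdaCospans/\mathcal{E}) \cong \MdaCospans^{+}/\mathcal{E}.
\]
The one thing to verify is that $\mathcal{F}$ sends the plain congruence generated by $\mathcal{E}$ to the enriched congruence generated by $\mathcal{E}$; this is the statement that a free semilattice on a quotient set is the quotient of the free semilattice, holding because $F \dashv U$ makes $F$ cocontinuous.

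The heart of the proof is the second equivalence, obtained by showing that the $\catname{SLat}$-functor $M$ of Lemma~\ref{lemma:cospans_plus_equiv} descends to the $\mathcal{E}$-quotients: for normal-form morphisms $f = f_1 + \cdots + f_n$ and $g = g_1 + \cdots + g_m$ of $\MdaCospans^{+}$ (each $f_i,g_j$ a plain MDA cospan, by Lemma~\ref{lemma:normal_form}), I want $f \equiv_{\mathcal{E}} g$ if and only if $Mf \Rrightarrow^{*}_{\mathcal{E}} Mg$ in $\MdaEcospans/\mathcal{S}$. Since $M$ sends $f$ to the join $\llbracket f_1 \rrbracket + \cdots + \llbracket f_n \rrbracket$ of hierarchical-edge-free e-cospans, the crucial lemma is that an EDPOI rewrite by a rule $\langle \llbracket l \rrbracket, \llbracket r \rrbracket\rangle$ --- whose left-hand side contains no hierarchical edges --- necessarily matches entirely within a single consistent component of the e-box structure, where it realises exactly the underlying plain convex DPOI rewrite $f_i \rightsquigarrow f_i'$ while leaving the other components untouched. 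Granting this, $\Rrightarrow^{*}_{\mathcal{E}}$ is component-wise plain convex DPOI rewriting, which by \cite{bonchi_string_2022-2} generates the $\equiv_{\mathcal{E}}$-class of each component; together with $\mathcal{S}$ (supplying associativity, commutativity and idempotence, so equal components may be merged and components duplicated before rewriting) this matches $\equiv_{\mathcal{E}}$ on $\MdaCospans^{+}$, giving faithfulness and fullness of the descended functor. Composing the two equivalences and unwinding the definitions yields the stated literal meaning.

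I expect the main obstacle to be this localisation lemma: establishing that a match of a hierarchical-edge-free rule lands in a single e-box component, and that the extended boundary complement there coincides with the plain boundary complement. This is precisely where the hierarchical/consistency bookkeeping of e-hypergraphs interacts with the DPOI machinery, and it must be argued carefully from the convex/down-closed and parent/consistency conditions of Definition~\ref{def:boundary_new}, its uniqueness (Proposition~\ref{prop:boundary_unique}), and the closure of $\consistency$ under connectivity --- the latter being what forbids a convex match of a connected plain diagram from straddling a partition line of $\consistency$.
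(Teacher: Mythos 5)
Your overall architecture --- factoring the equivalence as $\catname{S}^{+}(\Sigma,\mathcal{E}) \cong \MdaCospans^{+}/\mathcal{E} \cong \WellTypedMdaEcospans/\mathcal{S},\mathcal{E}$ --- is a genuinely different route from the paper's, which works directly with the interpretation functor $\llbracket - \rrbracket$, proving well-definedness by induction on string-diagrammatic contexts and faithfulness by induction on the nesting level of the occurrence of the match, using the decomposition lemma (Lemma~\ref{lemma:decomposition}) and uniqueness of boundary complements (Proposition~\ref{prop:boundary_unique}). Your first equivalence (applying the free-enrichment 2-functor to the soundness/completeness result of~\cite{bonchi_string_2022-2} and checking that $\mathcal{F}$, being a left 2-adjoint, commutes with the quotients) is plausible and routine.

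However, there is a genuine gap in the faithfulness half of your second equivalence. The quotient $\WellTypedMdaEcospans/\mathcal{S},\mathcal{E}$ is defined by the reflexive, symmetric, transitive closure of \emph{interleaved} $\mathcal{S}$- and $\mathcal{E}$-rewrites, so a sequence witnessing $Mf \Rrightarrow^{*}_{\mathcal{S},\mathcal{E}} Mg$ passes through intermediate cospans that need not lie in the image of $M$: the rules of $\mathcal{S}$ comprise not only associativity, commutativity and idempotence (as your parenthetical suggests) but also the distributivity equations $f;(g+h)=f;g+f;h$ and $f\otimes(g+h)=f\otimes g+f\otimes h$ of Figure~\ref{fig:string-equations}, which create and destroy nested hierarchical edges. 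Consequently an $\mathcal{E}$-rewrite in the middle of such a sequence may fire at arbitrary nesting depth inside a box structure, where the consistent component containing the match is itself an e-hypergraph with hierarchical edges rather than a plain MDA cospan, and your localisation lemma (``the rewrite realises exactly the underlying plain convex DPOI rewrite $f_i \rightsquigarrow f_i'$ of a normal-form component'') does not apply as stated. To close the gap you need either (i) a standardisation result permuting every $\mathcal{E}$-step past the $\mathcal{S}$-steps so that it fires on a normal form, or (ii) an invariant argument showing that every single step, at any depth, preserves the $\equiv_{\mathcal{E}}$-class of the normal form --- noting that normalisation duplicates contexts, so one nested $\mathcal{E}$-step can correspond to several component-wise plain rewrites. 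Either option is essentially the paper's induction on the level of the occurrence combined with Lemma~\ref{lemma:decomposition}, which is precisely the technical work your sketch leaves out.
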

\begin{proof}
    The statement boils down to showing that the $\catname{SLat}$ functor defining the equivalence is well-defined, full and faithful as it is obviously surjective.
    Proof of fullness follows the same argument as in the case of $\llbracket - \rrbracket: \catname{S}^{+}(\Sigma) \to \WellTypedMdaEcospans / \mathcal{S}$, so we proceed with showing well-definedness and faithfulness.     
    The former is equivalent to showing that if $f = g$ in $\catname{S}^{+}(\Sigma)$ modulo $\mathcal{E}$ then $\llbracket f\rrbracket = \llbracket g \rrbracket$ in $\WellTypedMdaEcospans / \mathcal{S,E}~.$ or, equivalently, $\llbracket f \rrbracket \Rrightarrow_{\mathcal{E}}^{*} \llbracket g \rrbracket$ in $\WellTypedMdaEcospans / \mathcal{S}$.
    $f = g$ implies that either
    \begin{itemize}
        \item $f = \mathcal{C}[l] = g$ for some morphism $l \in \catname{S}^{+}(\Sigma)$;
        \item $f = \mathcal{C}[l]$ and $g = \mathcal{C}[r]$ for an equation $l = r$ or $r = l$ from $\mathcal{E}$;
        \item there is a sequence $w = (x_1, \ldots, x_n)$ such that $x_1 = f$ and $x_n = g$ and $x_i = x_{i+1}$ or $x_{i+1} = x_{i}$ in the sense above for $i < n$.
    \end{itemize}
    where $\mathcal{C}$ is a context defined as per the Figure~\ref{fig:string_contexts} and $\mathcal{C}[l]$ is defined by replacing the occurrence of
    \scalebox{0.5}{
    \begin{tikzpicture}[tikzfig]
        \begin{pgfonlayer}{nodelayer}
            \node [style=empty diag] (0) at (-0.5, 0.5) {};
        \end{pgfonlayer}
    \end{tikzpicture}
    }
    in $\mathcal{C}$ with a string diagram $l$.
    Let's prove the statement by induction on $\mathcal{C}$ when $|w| = 2$.
    \begin{itemize}
        \item If $\mathcal{C}$ is empty, then either $f = g$ syntactically and $\llbracket f \rrbracket = \llbracket g \rrbracket$ or $f = l = r = g$ for some $l = r$ or $r = l$.
              The latter trivially yields the EDPO diagram below
              \ifdefined\ONECOLUMN
              \[
              \adjustbox{scale=0.8}{
                \begin{tikzcd}
                    {\mathcal{L}} & {i'+j'} & {i+j} & {i''+j''} & {\mathcal{R}} \\
                    \\
                    {\mathcal{L}} && {i+j} && {\mathcal{R}} \\
                    & {i'+j'} && {i''+j''} \\
                    && {i+j}
                    \arrow[from=1-1, to=3-1]
                    \arrow[from=1-2, to=1-1]
                    \arrow[from=1-3, to=1-2]
                    \arrow[from=1-3, to=1-4]
                    \arrow[from=1-3, to=3-3]
                    \arrow[from=1-4, to=1-5]
                    \arrow[from=1-5, to=3-5]
                    \arrow["\lrcorner"{pos=0.05, rotate=90, description}, phantom, from=3-1, to=1-2]
                    \arrow[from=3-3, to=3-1]
                    \arrow[from=3-3, to=3-5]
                    \arrow["\lrcorner"{pos=0.05, rotate=180, description}, phantom, from=3-5, to=1-4]
                    \arrow[from=4-2, to=3-1]
                    \arrow[from=4-4, to=3-5]
                    \arrow[from=5-3, to=3-3]
                    \arrow[from=5-3, to=4-2]
                    \arrow[from=5-3, to=4-4]
                \end{tikzcd}
              }
              \]
              \else
              \[
                \begin{tikzcd}
                    {\mathcal{L}} & {i'+j'} & {i+j} & {i''+j''} & {\mathcal{R}} \\
                    \\
                    {\mathcal{L}} && {i+j} && {\mathcal{R}} \\
                    & {i'+j'} && {i''+j''} \\
                    && {i+j}
                    \arrow[from=1-1, to=3-1]
                    \arrow[from=1-2, to=1-1]
                    \arrow[from=1-3, to=1-2]
                    \arrow[from=1-3, to=1-4]
                    \arrow[from=1-3, to=3-3]
                    \arrow[from=1-4, to=1-5]
                    \arrow[from=1-5, to=3-5]
                    \arrow["\lrcorner"{pos=0.05, rotate=90, description}, phantom, from=3-1, to=1-2]
                    \arrow[from=3-3, to=3-1]
                    \arrow[from=3-3, to=3-5]
                    \arrow["\lrcorner"{pos=0.05, rotate=180, description}, phantom, from=3-5, to=1-4]
                    \arrow[from=4-2, to=3-1]
                    \arrow[from=4-4, to=3-5]
                    \arrow[from=5-3, to=3-3]
                    \arrow[from=5-3, to=4-2]
                    \arrow[from=5-3, to=4-4]
                \end{tikzcd}
              \]
              \fi
              where $i \to i' \to \mathcal{L} \xleftarrow{} j' \xleftarrow{} j = \llbracket l \rrbracket = \llbracket f \rrbracket$ and $i \to i'' \to \mathcal{R} \xleftarrow{} j'' \xleftarrow{} j = \llbracket r \rrbracket = \llbracket g \rrbracket$ and $i + j \to i + j \to \mathcal{L}$ is a boundary complement as per~\ref{def:boundary_new}.
              Hence, $\llbracket f \rrbracket \Rrightarrow{}_{\langle \mathcal{L}, \mathcal{R} \rangle} \llbracket g \rrbracket$.
              \item Suppose 
              \[
                \scalebox{0.75}{\tikzfig{figures/contexts_2}}
              \]
              and by inductive hypothesis $\llbracket \mathcal{C}[l] \rrbracket \Rrightarrow{} \llbracket \mathcal{C}[r] \rrbracket$ which results in the top half of the diagram below.
              
              \ifdefined\ONECOLUMN
              \[
              \adjustbox{width=0.7\linewidth}{
              \begin{tikzcd}
                {\mathcal{L}} & {i'+j'} & {i+j} & {i''+j''} & {\mathcal{R}} \\
                \\
                {\llbracket \mathcal{C}[l]\rrbracket} && {\mathcal{L}_1^{\bot}} && {\llbracket \mathcal{C}[r]\rrbracket} \\
                & {n_1'+m_1'} & {n_1+m_1} & {n_1''+m_1''} \\
                {\llbracket f \rrbracket} && {\mathcal{L}_2^{\bot}} && {\llbracket g \rrbracket} \\
                & {n_2' + m_2'} && {n_2''+m_2''} \\
                && {n_2 + m_2}
                \arrow[from=1-1, to=3-1]
                \arrow[from=1-2, to=1-1]
                \arrow[from=1-3, to=1-2]
                \arrow[from=1-3, to=1-4]
                \arrow[from=1-3, to=3-3]
                \arrow[from=1-4, to=1-5]
                \arrow[from=1-5, to=3-5]
                \arrow["\lrcorner"{pos=0.025, rotate=90, description}, draw=none, from=3-1, to=1-2]
                \arrow[from=3-1, to=5-1]
                \arrow[from=3-3, to=3-1]
                \arrow[from=3-3, to=3-5]
                \arrow["\lrcorner"{pos=0.025, rotate=180, description}, draw=none, from=3-5, to=1-4]
                \arrow[from=3-5, to=5-5]
                \arrow[from=4-2, to=3-1]
                \arrow[from=4-3, to=3-3]
                \arrow[from=4-3, to=4-2]
                \arrow[from=4-3, to=4-4]
                \arrow[from=4-3, to=5-3]
                \arrow[from=4-4, to=3-5]
                \arrow["\lrcorner"{pos=0.025, rotate=90, description}, draw=none, from=5-1, to=4-2]
                \arrow[from=5-3, to=5-1]
                \arrow[from=5-3, to=5-5]
                \arrow["\lrcorner"{pos=0.025, rotate=180, description}, draw=none, from=5-5, to=4-4]
                \arrow[from=6-2, to=5-1]
                \arrow[from=6-4, to=5-5]
                \arrow[from=7-3, to=5-3]
                \arrow[from=7-3, to=6-2]
                \arrow[from=7-3, to=6-4]
            \end{tikzcd}}
            \]
              \else
              \[
              \adjustbox{width=\linewidth}{
              \begin{tikzcd}
                {\mathcal{L}} & {i'+j'} & {i+j} & {i''+j''} & {\mathcal{R}} \\
                \\
                {\llbracket \mathcal{C}[l]\rrbracket} && {\mathcal{L}_1^{\bot}} && {\llbracket \mathcal{C}[r]\rrbracket} \\
                & {n_1'+m_1'} & {n_1+m_1} & {n_1''+m_1''} \\
                {\llbracket f \rrbracket} && {\mathcal{L}_2^{\bot}} && {\llbracket g \rrbracket} \\
                & {n_2' + m_2'} && {n_2''+m_2''} \\
                && {n_2 + m_2}
                \arrow[from=1-1, to=3-1]
                \arrow[from=1-2, to=1-1]
                \arrow[from=1-3, to=1-2]
                \arrow[from=1-3, to=1-4]
                \arrow[from=1-3, to=3-3]
                \arrow[from=1-4, to=1-5]
                \arrow[from=1-5, to=3-5]
                \arrow["\lrcorner"{pos=0.025, rotate=90, description}, draw=none, from=3-1, to=1-2]
                \arrow[from=3-1, to=5-1]
                \arrow[from=3-3, to=3-1]
                \arrow[from=3-3, to=3-5]
                \arrow["\lrcorner"{pos=0.025, rotate=180, description}, draw=none, from=3-5, to=1-4]
                \arrow[from=3-5, to=5-5]
                \arrow[from=4-2, to=3-1]
                \arrow[from=4-3, to=3-3]
                \arrow[from=4-3, to=4-2]
                \arrow[from=4-3, to=4-4]
                \arrow[from=4-3, to=5-3]
                \arrow[from=4-4, to=3-5]
                \arrow["\lrcorner"{pos=0.025, rotate=90, description}, draw=none, from=5-1, to=4-2]
                \arrow[from=5-3, to=5-1]
                \arrow[from=5-3, to=5-5]
                \arrow["\lrcorner"{pos=0.025, rotate=180, description}, draw=none, from=5-5, to=4-4]
                \arrow[from=6-2, to=5-1]
                \arrow[from=6-4, to=5-5]
                \arrow[from=7-3, to=5-3]
                \arrow[from=7-3, to=6-2]
                \arrow[from=7-3, to=6-4]
            \end{tikzcd}}
            \]
            \fi
            Clearly the occurrence of $\llbracket \mathcal{C}[l] \rrbracket$ in $\llbracket f \rrbracket$ is convex and down-closed and by following the argument in Theorem 35~\cite{bonchi_string_2022-2}, there exists $n_1 + m_1 \xrightarrow{} n_1 + m_1 \xrightarrow{} \mathcal{L}_2^{\bot} \xleftarrow{} n_3 + m_3 \xleftarrow{} n_2 + m_2$ such that
            \ifdefined \ONECOLUMN
            \begin{align*}
                \llbracket f \rrbracket &= \;
                (0 \to 0 \to \llbracket C[l] \rrbracket \xleftarrow{} n_1' + m_1' \xleftarrow{} n_1 + m_1)
                ;
                (n_1 + m_1 \xrightarrow{} n_1 + m_1 \xrightarrow{} \mathcal{L}_2^{\bot} \xleftarrow{} n_3 + m_3 \xleftarrow{} n_2 + m_2)\\
                \llbracket g \rrbracket &= \;
                (0 \to 0 \to \llbracket C[r] \rrbracket \xleftarrow{} n_1' + m_1' \xleftarrow{} n_1 + m_1)
                ;
                (n_1 + m_1 \xrightarrow{} n_1 + m_1 \xrightarrow{} \mathcal{L}_2^{\bot} \xleftarrow{} n_3 + m_3 \xleftarrow{} n_2 + m_2)
                \end{align*}
            \else
            \begin{align*}
            \llbracket f \rrbracket = \;
            &(0 \to 0 \to \llbracket C[l] \rrbracket \xleftarrow{} n_1' + m_1' \xleftarrow{} n_1 + m_1)\\
            &;\\
            &(n_1 + m_1 \xrightarrow{} n_1 + m_1 \xrightarrow{} \mathcal{L}_2^{\bot} \xleftarrow{} n_3 + m_3 \xleftarrow{} n_2 + m_2)\\
            \llbracket g \rrbracket = \;
            &(0 \to 0 \to \llbracket C[r] \rrbracket \xleftarrow{} n_1' + m_1' \xleftarrow{} n_1 + m_1)\\
            &;\\
            &(n_1 + m_1 \xrightarrow{} n_1 + m_1 \xrightarrow{} \mathcal{L}_2^{\bot} \xleftarrow{} n_3 + m_3 \xleftarrow{} n_2 + m_2)
            \end{align*}
            \fi
            and
            \[
                n_2' = n_3 + (n_1' \setminus n_1) \qquad m_2' = m_3 + (m_1' \setminus m_1)
            \]
            and
            \tikzcdset{row sep/mysizerow/.initial=0.3ex}
            \tikzcdset{column sep/mysizecol/.initial=0.1ex}
            \[\
\begin{tikzcd}[cramped, row sep=mysizerow, column sep = mysizecol]
                n_2 + m_1\hspace{-2em} &&&&&&&& \hspace{-2em}m_2 + n_1\\
                & \hspace{-2em} n_2' \setminus (n_1' \setminus n_1) + m_1 &\; & & \mathcal{L}_{2}^{\bot} &\; && m'_2 \setminus (m_1' \setminus m_1) + n_1\hspace{-2em} &\\
                \arrow[from=1-1, to=2-2]
                \arrow[from=1-9, to=2-8]
                \arrow[from=2-2, to=2-5]
                \arrow[from=2-8, to=2-5]
            \end{tikzcd}
\]
            is an MDA cospan.
            By denoting the pushout of $\mathcal{L}_1^{\bot} \xleftarrow{} n_1 + m_1 \xrightarrow{} \mathcal{L}_{2}^{\bot}$ with $\mathcal{L}_3^{\bot}$,
            one can observe that $\llbracket f \rrbracket$ is the pushout object of $\mathcal{L} \xleftarrow{} i+j \xrightarrow{} \mathcal{L}_{3}^{\bot}$ as per the diagram below
            \[\begin{tikzcd}
                && {n_1+m_1} && {\mathcal{L}^{\bot}_2} \\
                \\
                {i+j} && {\mathcal{L}^{\bot}_1} && {\mathcal{L}^{\bot}_3} \\
                \\
                {\mathcal{L}} && {\mathcal{C}[l]} && {\llbracket f \rrbracket}
                \arrow[from=1-3, to=1-5]
                \arrow[from=1-3, to=3-3]
                \arrow[from=1-5, to=3-5]
                \arrow[from=3-1, to=3-3]
                \arrow[from=3-1, to=5-1]
                \arrow[from=3-3, to=3-5]
                \arrow[from=3-3, to=5-3]
                \arrow["\lrcorner"{description, pos=0.025, rotate=180}, draw=none, from=3-5, to=1-3]
                \arrow[from=3-5, to=5-5]
                \arrow[from=5-1, to=5-3]
                \arrow["\lrcorner"{description, pos=0.025, rotate=180}, draw=none, from=5-3, to=3-1]
                \arrow[from=5-3, to=5-5]
                \arrow["\lrcorner"{description, pos=0.025, rotate=180}, draw=none, from=5-5, to=3-3]
                \arrow["\dagger", draw=none, from=5-5, to=3-3]
            \end{tikzcd}
            \]
            where the square marked with $\dagger$ is a pushout because of pushout pasting law.
            The corestriction of $n_1 + m_1 \xrightarrow{[f_1,f_2]} \mathcal{L}_1^{\bot}$ to the image of $i + j \xrightarrow{[g_1,g_2]} \mathcal{L}_{1}^{\bot}$ is mono: suppose it is not mono, then there exist $z_1$, $z_2$, $z_3$ such that $f_1(z_1) = f_2(z_2) = g_1(z_3)$, then $[f_2,g_1]$ would not be mono, similarly if $f_1(z_1) = f_2(z_2) = g_2(z_3)$.\
            By construction of the pushout the restriction of $\mathcal{L}_{1}^{\bot} \to \mathcal{L}_{3}^{\bot}$ is mono and hence the arrow $i + j \to \mathcal{L}_{3}^{\bot}$ is mono.
            This implies that $\llbracket f \rrbracket \Rrightarrow_{\langle \mathcal{L},\mathcal{R} \rangle} \llbracket g \rrbracket$.
            \item Consider the case when
            \[
            \hspace{-1em}\scalebox{0.75}{\tikzfig{figures/contexts_3}}    
            \]
            By hypothesis, there exists a diagram
            \ifdefined\ONECOLUMN
            \[
                \adjustbox{width=0.7\linewidth}{
                \begin{tikzcd}
                    {\mathcal{L}} & {i'+j'} & {i+j} & {i''+j''} & {\mathcal{R}} \\
                    \\
                    {\llbracket\mathcal{C}[l]\rrbracket} && {\mathcal{L}_1^{\bot}} && {\llbracket\mathcal{C}[r]\rrbracket} \\
                    & {n'+m'} && {n''+m''} \\
                    && {n+m}
                    \arrow[from=1-1, to=3-1]
                    \arrow[from=1-2, to=1-1]
                    \arrow[from=1-3, to=1-2]
                    \arrow[from=1-3, to=1-4]
                    \arrow[from=1-3, to=3-3]
                    \arrow[from=1-4, to=1-5]
                    \arrow[from=1-5, to=3-5]
                    \arrow[from=3-3, to=3-1]
                    \arrow[from=3-3, to=3-5]
                    \arrow[from=4-2, to=3-1]
                    \arrow[from=4-4, to=3-5]
                    \arrow[from=5-3, to=3-3]
                    \arrow[from=5-3, to=4-2]
                    \arrow[from=5-3, to=4-4]
                    \arrow["\lrcorner"{description, pos=0.025, rotate=90}, draw=none, from=3-1, to=1-2]
                    \arrow["\lrcorner"{description, pos=0.025, rotate=180}, draw=none, from=3-5, to=1-4]
                \end{tikzcd}}
                \]
            \else
                \[
                \adjustbox{width=\linewidth}{
                \begin{tikzcd}
                    {\mathcal{L}} & {i'+j'} & {i+j} & {i''+j''} & {\mathcal{R}} \\
                    \\
                    {\llbracket\mathcal{C}[l]\rrbracket} && {\mathcal{L}_1^{\bot}} && {\llbracket\mathcal{C}[r]\rrbracket} \\
                    & {n'+m'} && {n''+m''} \\
                    && {n+m}
                    \arrow[from=1-1, to=3-1]
                    \arrow[from=1-2, to=1-1]
                    \arrow[from=1-3, to=1-2]
                    \arrow[from=1-3, to=1-4]
                    \arrow[from=1-3, to=3-3]
                    \arrow[from=1-4, to=1-5]
                    \arrow[from=1-5, to=3-5]
                    \arrow[from=3-3, to=3-1]
                    \arrow[from=3-3, to=3-5]
                    \arrow[from=4-2, to=3-1]
                    \arrow[from=4-4, to=3-5]
                    \arrow[from=5-3, to=3-3]
                    \arrow[from=5-3, to=4-2]
                    \arrow[from=5-3, to=4-4]
                    \arrow["\lrcorner"{description, pos=0.025, rotate=90}, draw=none, from=3-1, to=1-2]
                    \arrow["\lrcorner"{description, pos=0.025, rotate=180}, draw=none, from=3-5, to=1-4]
                \end{tikzcd}}
                \]
                \fi
                which implies the existence of the following diagram where both squares are pushouts by construction and the matching is convex and down-closed.
                \ifdefined \ONECOLUMN
                \[\adjustbox{width=0.95\linewidth}{
                    \begin{tikzcd}
                    {\mathcal{L}} & {i'+j'} & {i+j} & {i''+j''} & {\mathcal{R}} \\
                    \\
                    {\llbracket f_1 \rrbracket \; \hat{+}\; \ldots \hat{+} \; \llbracket\mathcal{C}[l]\rrbracket \; \hat{+} \; \ldots \; \hat{+} \; \llbracket f_n \rrbracket} && {\llbracket f_1 \rrbracket \; \hat{+} \; \ldots \; \hat{+} \; \mathcal{L}_{1}^{\bot} \; \hat{+} \; \ldots \; \hat{+} \; \llbracket f_n \rrbracket} && {\llbracket f_1 \rrbracket \; \hat{+} \; \ldots \; \hat{+} \; \llbracket\mathcal{C}[r]\rrbracket \; \hat{+} \; \ldots \; \hat{+} \; \llbracket f_n \rrbracket} \\
                    & {n_1' + m_1'} && {n_1'' + m_1''} \\
                    && {n + m}
                    \arrow[from=1-1, to=3-1]
                    \arrow[from=1-2, to=1-1]
                    \arrow[from=1-3, to=1-2]
                    \arrow[from=1-3, to=1-4]
                    \arrow[from=1-3, to=3-3]
                    \arrow[from=1-4, to=1-5]
                    \arrow[from=1-5, to=3-5]
                    \arrow["\lrcorner"{description, pos=0.025, rotate=90}, draw=none, from=3-1, to=1-2]
                    \arrow[from=3-3, to=3-1]
                    \arrow[from=3-3, to=3-5]
                    \arrow["\lrcorner"{description, pos=0.025, rotate=180}, draw=none, from=3-5, to=1-4]
                    \arrow[from=4-2, to=3-1]
                    \arrow[from=4-4, to=3-5]
                    \arrow[from=5-3, to=3-3]
                    \arrow[from=5-3, to=4-2]
                    \arrow[from=5-3, to=4-4]
                \end{tikzcd}
                }
                \]
                \else
                \begin{figure*}[hbt!]
                \[\adjustbox{width=\linewidth}{
                    \begin{tikzcd}
                    {\mathcal{L}} & {i'+j'} & {i+j} & {i''+j''} & {\mathcal{R}} \\
                    \\
                    {\llbracket f_1 \rrbracket \; \hat{+}\; \ldots \hat{+} \; \llbracket\mathcal{C}[l]\rrbracket \; \hat{+} \; \ldots \; \hat{+} \; \llbracket f_n \rrbracket} && {\llbracket f_1 \rrbracket \; \hat{+} \; \ldots \; \hat{+} \; \mathcal{L}_{1}^{\bot} \; \hat{+} \; \ldots \; \hat{+} \; \llbracket f_n \rrbracket} && {\llbracket f_1 \rrbracket \; \hat{+} \; \ldots \; \hat{+} \; \llbracket\mathcal{C}[r]\rrbracket \; \hat{+} \; \ldots \; \hat{+} \; \llbracket f_n \rrbracket} \\
                    & {n_1' + m_1'} && {n_1'' + m_1''} \\
                    && {n + m}
                    \arrow[from=1-1, to=3-1]
                    \arrow[from=1-2, to=1-1]
                    \arrow[from=1-3, to=1-2]
                    \arrow[from=1-3, to=1-4]
                    \arrow[from=1-3, to=3-3]
                    \arrow[from=1-4, to=1-5]
                    \arrow[from=1-5, to=3-5]
                    \arrow["\lrcorner"{description, pos=0.025, rotate=90}, draw=none, from=3-1, to=1-2]
                    \arrow[from=3-3, to=3-1]
                    \arrow[from=3-3, to=3-5]
                    \arrow["\lrcorner"{description, pos=0.025, rotate=180}, draw=none, from=3-5, to=1-4]
                    \arrow[from=4-2, to=3-1]
                    \arrow[from=4-4, to=3-5]
                    \arrow[from=5-3, to=3-3]
                    \arrow[from=5-3, to=4-2]
                    \arrow[from=5-3, to=4-4]
                \end{tikzcd}}
                \]
                \caption{$\llbracket f \rrbracket \Rrightarrow{}_{\langle \mathcal{L},\mathcal{R} \rangle} \llbracket g \rrbracket$}
                \label{fig:f_rewrites_to_g_under_plus}
            \end{figure*}
            \fi
                and hence we have $\llbracket f \rrbracket \Rrightarrow_{\langle \mathcal{L}, \mathcal{R} \rangle} \llbracket g \rrbracket$
    \end{itemize}
    The rest of the argument follows by induction on the length of $w$ and ultimately we have shown that if $f = g$ modulo $\mathcal{E}$ then $\llbracket f \rrbracket = \llbracket g \rrbracket$  in $\WellTypedMdaEcospans / \mathcal{S}, \mathcal{E}$.

    Then, faithfulness amounts to showing that if $\llbracket f \rrbracket \Rrightarrow_{\llbracket \mathcal{E} \rrbracket}^{*} \llbracket g \rrbracket$ in $\WellTypedMdaEcospans / \mathcal{S}$ then $f = g$ modulo $\mathcal{E}$ in $\catname{S}^{+}(\Sigma)$.
Consider a rewrite rule $\langle \llbracket l \rrbracket, \llbracket r \rrbracket \rangle = \langle \mathcal{L}, \mathcal{R} \rangle$ such that $\llbracket f \rrbracket \Rrightarrow_{\langle \mathcal{L}, \mathcal{R} \rangle} \llbracket g \rrbracket$ and suppose
\[
	\llbracket f \rrbracket = n \xrightarrow{} n' \xrightarrow{} \mathcal{F} \xleftarrow{} m' \xleftarrow{} m ~.
\]

\begin{figure*}
    \begin{subfigure}{\linewidth}
\begin{align*}
    (k \xrightarrow{id_{k}} k \xrightarrow{id_{k}} &k \xleftarrow{id_{k}} k \xleftarrow{id_{k}} k)\\
    n \xrightarrow{} n_{1}' \xrightarrow{} \mathcal{C}_{1} \xleftarrow{} m_{1}' + i + k \xleftarrow{} i + k; \qquad\qquad\qquad &\otimes \qquad\qquad\qquad ; j + k \xrightarrow{} j + k + n'_{2} \xrightarrow{} \mathcal{C}_{2} \xleftarrow{} m'_{2} \xleftarrow{} m\\
    (i \xrightarrow{} i' \xrightarrow{} &\mathcal{L} \xleftarrow{} j' \xleftarrow{} j)
\end{align*}
\subcaption{\;}
\end{subfigure}
\vspace{1ex}
\begin{subfigure}{\linewidth}
    \begin{align*}
        (k \xrightarrow{id_{k}} k \xrightarrow{id_{k}} &k \xleftarrow{id_{k}} k \xleftarrow{id_{k}} k)\\
        n \xrightarrow{} n_{1}' \xrightarrow{} \mathcal{C}_{1} \xleftarrow{} m_{1}' + i + k \xleftarrow{} i + k; \qquad\qquad\qquad &\otimes \qquad\qquad\qquad ; j + k \xrightarrow{} j + k + n'_{2} \xrightarrow{} \mathcal{C}_{2} \xleftarrow{} m'_{2} \xleftarrow{} m\\
        (i \xrightarrow{} i' \xrightarrow{} &\mathcal{F'} \xleftarrow{} j' \xleftarrow{} j)
    \end{align*}
\subcaption{\;}
\end{subfigure}
\caption{Decomposition}
\label{fig:decomposition_1}
\end{figure*}

This implies the existence of a diagram as in~\ref{def:dpoi-e}.
We will proceed by induction on the level of the occurrence of $\mathcal{L}$ within $\mathcal{F}$.
\begin{itemize}
	\item Suppose the image of $\mathcal{L}$ in $\mathcal{F}$ is top-level.
		  Then, by lemma~\ref{lemma:decomposition}, there exists a decomposition of $\llbracket f \rrbracket$ in terms of MDA cospans as in Figure~\ref{fig:decomposition_1}.

		By fullness of $\llbracket - \rrbracket$ (from Theorem~\ref{thm:completeness_simple}) there exist $c_{1}$ and $c_{2}$ such that 
		\begin{align*}
		&\llbracket c_{1} \rrbracket = n \xrightarrow{} n_{1}' \xrightarrow{} \mathcal{C}_{1} \xleftarrow{} m_{1}' + i + k \xleftarrow{} i + k\\
        &\llbracket c_{2} \rrbracket = j + k \xrightarrow{} j + k + n'_{2} \xrightarrow{} \mathcal{C}_{2} \xleftarrow{} m'_{2} \xleftarrow{} m
        \end{align*}
		By functoriality, $\llbracket f \rrbracket = \llbracket c_{1};(id \otimes l);c_{2} \rrbracket$ and by faithfulness we have that $f = c_{1};(id \otimes l);c_{2}$ to which we can apply $\langle l, r \rangle$ and get $g' = c_{1};(id \otimes r);c_{2}$.
		We then have $\llbracket g' \rrbracket = \llbracket c_{1};(id \otimes r);c_{2} \rrbracket =  \llbracket g \rrbracket$ since the boundary complement is unique (Proposition ~\ref{prop:boundary_unique}) and $g' = g$ by faithfulness.

		\item If the image of $\mathcal{L}$ is not top-level, then either $\llbracket f \rrbracket$ is decomposable as in Figure~\ref{fig:decomposition_2}
		such that $\mathcal{L}$ is a convex down-closed sub-e-hypergraph of $F'$.
		Since the image of $\mathcal{L}$ lies within $\mathcal{F}'$, there is an EDPO rewrite step that turns
		\[
			(i \xrightarrow{} i' \xrightarrow{} \mathcal{F'} \xleftarrow{} j' \xleftarrow{} j)	
		\]
		into
		\[
			(i \xrightarrow{} i'' \xrightarrow{} \mathcal{G'} \xleftarrow{} j'' \xleftarrow{} j)	
		\]
		via $\langle \mathcal{L}, \mathcal{R} \rangle$.
		By fullness and inductive hypothesis, there exist $f'$ and $g'$ such that 
		\begin{align*}
			&\llbracket f' \rrbracket = i \xrightarrow{} i' \xrightarrow{} \mathcal{F'} \xleftarrow{} j' \xleftarrow{} j\\
            &\llbracket g' \rrbracket = (i \xrightarrow{} i'' \xrightarrow{} \mathcal{G'} \xleftarrow{} j'' \xleftarrow{} j)
        \end{align*}
		and that $f' = g'$.
		By decomposition and fullness we have $\llbracket f \rrbracket = \llbracket c_{1};(id \otimes f');c_{2} \rrbracket$ and $f = c_{1};(id \otimes f');c_{2}$ by faithfulness.
		By applying $f' = g'$ we get $f = g'' = c_{1};(id \otimes g');c_{2}$.
		Finally, $\llbracket g'' \rrbracket = \llbracket c_{1};(id \otimes g');c_{2} \rrbracket = \llbracket g \rrbracket$ (by the uniqueness of the boundary complement) and $g'' = g = f$ by faithfulness.

		\item or, $\llbracket f \rrbracket$ is decomposable as
		\begin{align*}
			n_1 \xrightarrow{} n_1' \xrightarrow{} &\mathcal{F}_1 \xleftarrow{} m_1' \xleftarrow{} m_1\\
			&\;+\\
			&\vdotswithin{+}\\
			&\;+\\
			n_p \xrightarrow{} n_p' \xrightarrow{} &\mathcal{F}_p \xleftarrow{} m_p' \xleftarrow{} m_p\\
			&\;+\\
			&\vdotswithin{+}\\
			&\;+\\
			n_k \xrightarrow{} n_k' \xrightarrow{} &\mathcal{F}_k \xleftarrow{} m_k' \xleftarrow{} m_k
		\end{align*}
		such that the image of $\mathcal{L}$ lies within $\mathcal{F}_{p}$.
		By inductive hypothesis we have that $\llbracket f_{p} \rrbracket = n_p \xrightarrow{} n_{p}' \xrightarrow{} \mathcal{F}_{p} \xleftarrow{} m_{p}' \xleftarrow{} m_{p}$ rewrites into $\llbracket g_{p} \rrbracket = n_p \xrightarrow{} n_{p}'' \xrightarrow{} \mathcal{G}_{p} \xleftarrow{} m_{p}'' \xleftarrow{} m_{p}$ such that $f_{p} = g_{p}$.
		By fullness and the fact that $\llbracket - \rrbracket$ is $\catname{SLat}$-enriched, there exist $\llbracket f_1 + \ldots + f_{p} + \ldots + f_{k} \rrbracket = \llbracket f_{1} \rrbracket + \ldots + \llbracket f_{p} \rrbracket + \ldots + \llbracket f_{k} \rrbracket = \llbracket f \rrbracket$.
		By applying $f_{p} = g_{p}$ we get $f = g''$ and $\llbracket g'' \rrbracket = \llbracket f_{1} + \ldots + g_{p} + \ldots f_{k} \rrbracket = \llbracket g \rrbracket$ and $f = g'' = g$.
\end{itemize}
\end{proof}

\begin{lemma}
\label{lemma:decomposition}
    Let $n \xrightarrow{f} n' \xrightarrow{f'} \mathcal{G} \xleftarrow{g'} m' \xleftarrow{g} m$ be an MDA extended cospan.
    If $\mathcal{L}$ is a convex down-closed sub-e-hypergraph of $\mathcal{G}$ such that the immediate predecessors of top-level connected components of $\mathcal{L}$ in $\mathcal{G}$ are either undefined or coincide and pair-wise consistent then either
    \begin{itemize}
    \item there exists a cospan $i \to i' \to \mathcal{G}' \xleftarrow{} j' \xleftarrow{} j$ and $k \in \mathbb{N}$ such that $\mathcal{G}$ is decomposable as
    \begin{align}
        \label{fig:decomposition:1}
    \scalebox{0.6}{
        \tikzfig{figures/G_decomposition_1}
    }
    \end{align}
    \item or, $\mathcal{G}$ is decomposable as
    \begin{align}
        \label{fig:decomposition:2}
    \scalebox{0.6}{
        \tikzfig{figures/G_decomposition_2}
    }
    \end{align}
    \end{itemize}
    such that $\mathcal{L}$ is a convex down-closed sub-e-hypergraph of $\mathcal{G}'$ and all the cospans are MDA.
\end{lemma}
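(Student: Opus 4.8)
The plan is to adapt the convex-subgraph extraction argument used in \cite{bonchi_string_2022-2} for plain MDA hypergraphs to the hierarchical e-hypergraph setting, letting the hypothesis on immediate predecessors drive the case distinction. First I would split on the immediate predecessors of the top-level connected components of $\mathcal{L}$: either they are all undefined, meaning every top-level component of $\mathcal{L}$ is top-level in $\mathcal{G}$ (giving the first bullet, equation \ref{fig:decomposition:1}), or they all coincide with a single hierarchical edge $e$ and the components are pairwise consistent, meaning $\mathcal{L}$ lives entirely inside one $\consistency$-component of $e$ (giving the second bullet, equation \ref{fig:decomposition:2}).

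For the top-level case I would take $\mathcal{G}' = \mathcal{L}$ and construct the context $c_1, c_2$ by a causal cut along the boundary of $\mathcal{L}$. Concretely, since the underlying hypergraph of $\mathcal{G}$ is directed acyclic and $\mathcal{L}$ is convex, the edges of $\mathcal{G}$ outside $\mathcal{L}$ separate into those that may reach an input vertex of $\mathcal{L}$ and the remainder, with convexity guaranteeing no edge of the latter feeds an edge of the former through $\mathcal{L}$, and acyclicity ruling out cycles among the context edges. Monogamy ensures each boundary vertex of $\mathcal{L}$ is incident to exactly one wire, so the input boundary $i'$, the output boundary $j'$, and the set $k$ of wires bypassing $\mathcal{L}$ are well-defined discrete ordered interfaces; assembling the ``upstream'' edges with $i'$ and $k$ yields $\mathcal{C}_1$ and the ``downstream'' edges with $j'$ and $k$ yield $\mathcal{C}_2$. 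Down-closedness of $\mathcal{L}$ guarantees no child of an $\mathcal{L}$-edge is left in the context, and because $\mathcal{L}$ is top-level the child and consistency relations of $\mathcal{G}$ restrict cleanly to the three pieces; I would then verify directly that each resulting extended cospan is MDA and that their composite recovers $\mathcal{G}$.

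For the nested case I would first let $\mathcal{G}'$ be the down-closed convex sub-e-hypergraph obtained from $e$ together with the descendants lying in the $\consistency$-component that contains the components of $\mathcal{L}$. The hypothesis that these components share $e$ and are pairwise consistent is exactly what makes $\mathcal{G}'$ contain all of $\mathcal{L}$ and forces $\mathcal{L}$ to be convex and down-closed inside $\mathcal{G}'$, now at a strictly lower level. Since $\mathcal{G}'$ is itself top-level in $\mathcal{G}$, I would apply the same causal-cut construction, this time extracting $\mathcal{G}'$ rather than $\mathcal{L}$, to obtain $\mathcal{G} = c_1 ; (\id_k \otimes \mathcal{G}') ; c_2$; well-typedness of $e$ (all its components share the same input and output arity) is what guarantees the extracted factor $i \to i' \to \mathcal{G}' \leftarrow j' \leftarrow j$ is a well-typed MDA cospan with the required interfaces.

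The main obstacle I expect is the bookkeeping showing that the three pieces are genuine well-typed MDA e-hypergraphs after the cut and that their composite is isomorphic to $\mathcal{G}$: specifically, that the cut never severs the interior of an e-box and that the interfaces $i'$, $j'$, $k$ respect both monogamy and the partition induced by $\consistency$. Convexity forbids a wire from re-entering $\mathcal{L}$ (or $\mathcal{G}'$) and down-closedness forbids cutting into an e-box, but combining these to certify that the passthrough wires $k$ never cross a hierarchical boundary, and that reassembly via the composition-by-pushout construction yields exactly $\mathcal{G}$, is the delicate part; here I would invoke the pushout-existence conditions of Theorem~\ref{th:existence_of_pushouts} and the uniqueness of boundary complements from Proposition~\ref{prop:boundary_unique} to pin down the composite up to the notion of isomorphism of Definition~\ref{def:iso}.
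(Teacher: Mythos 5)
Your top-level case is essentially the paper's own argument: the paper builds $\mathcal{C}_1$ as the smallest sub-e-hypergraph containing the inputs of $\mathcal{G}$ and every hyperedge with a path into $\mathcal{L}$ (plus successors), takes $\mathcal{C}_2$ to be the rest, observes the three pieces overlap only on vertices $i$, $j$, $k$, and recovers $\mathcal{G}$ as the colimit of the resulting chain of cospans, exactly as in your ``causal cut''. The genuine gap is in the nested case, where your construction diverges from the paper's and breaks down. First, your $\mathcal{G}'$ --- the immediate parent edge $e$ together with only the $\consistency$-component of its children that contains $\mathcal{L}$ --- is not down-closed (it omits the other components of $e$), and it is not even a legal e-hypergraph: the definition of the consistency relation requires $\consistency_{p} \neq (E_{p}+V_{p}) \times (E_{p}+V_{p})$, so every hierarchical edge must carry at least two components. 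Second, your claim that ``$\mathcal{G}'$ is itself top-level in $\mathcal{G}$'' holds only when the nesting depth of $\mathcal{L}$ is exactly one; $e$ may itself lie inside further boxes, and then no causal cut of shape (\ref{fig:decomposition:1}) can isolate it.

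Third, and most structurally, your proof never produces the second alternative (\ref{fig:decomposition:2}), yet it cannot be dispensed with. Composition of extended cospans glues carriers along external interfaces, which consist of top-level vertices, and condition 4 of the e-hypergraph definition (adjacent vertices and edges share the same immediate parent) forces every connected component to live entirely at one level; consequently a composite $c_1 ; (\id_k \otimes \mathcal{G}') ; c_2$ can never place the carrier of $\mathcal{G}'$ \emph{inside} a hierarchical edge of the result. So decompositions of shape (\ref{fig:decomposition:1}) cannot cross a box boundary, and when $\mathcal{G}$ equals the down-closure of a single top-level hierarchical edge the only such decomposition is the trivial one --- the join decomposition (\ref{fig:decomposition:2}) is the unique way to strip a level of nesting, which is precisely what the induction in Theorem~\ref{thm:full-completeness} needs. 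The paper's proof avoids all three problems by working with the \emph{top-level} ancestor $h$ of $e$ rather than $e$ itself: letting $\mathcal{G}''$ be the down-closure of $h$ (all components included), either $\mathcal{G} = \mathcal{G}''$, in which case $\mathcal{G}$ decomposes as the join (\ref{fig:decomposition:2}) and $\mathcal{G}'$ is taken to be the consistency component containing $\mathcal{L}$ (this is where pairwise consistency is used), or $\mathcal{G} \neq \mathcal{G}''$, in which case $\mathcal{G}''$ is a top-level, convex, down-closed subgraph and the case-one cut applies with $\mathcal{G}' = \mathcal{G}''$. You should restructure your nested case along these lines; the appeals to Theorem~\ref{th:existence_of_pushouts} and Proposition~\ref{prop:boundary_unique} are not what is needed here, since the issue is the choice of the subgraph to extract, not the uniqueness of complements.
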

\begin{proof}
    Suppose the image of top-level components of $\mathcal{L}$ consists of top-level edges and vertices.
    Then let $\mathcal{C}_1$ be the smallest e-hypergraph consisting of the input vertices of $\mathcal{G}$ and every hyperedge $h$ (and its successors) that is not in $\mathcal{L}$ but has a path to it.
    Let $\mathcal{C}_{2}$ be the smallest e-hypergraph such that $\mathcal{G} = \mathcal{C}_{1} \cup \mathcal{L} \cup \mathcal{C}_{2}$.
    These e-hypergraphs overlap only on vertices so we define
    \begin{align*}
        i = V_{\mathcal{C}_{1}} \cap V_{\mathcal{L}}\\
        j = V_{\mathcal{C}_{2}} \cap V_{\mathcal{L}}\\
        k = (V_{\mathcal{C}_{1}} \cap V_{\mathcal{C}_{2}}) \setminus V_{\mathcal{L}}
    \end{align*}
    and let $i'$ be the union of $i$ and the vertices of $\mathcal{L}$ in the image of $f'$ and $j'$ be the union of $j$ and the vertices of $\mathcal{L}$ in the image of $g'$.
    Let $n_{1}'$ be pre-image of the co-restriction of $f'$ to $\mathcal{C}_{1}$ and $n_{2}'$ be the pre-image of the co-restriction of $f'$ to $\mathcal{C}_{2}$.
    Similarly, for $m_{1}'$ and $m_{2}'$ with regard to $g'$.
    Then there are cospans
    \begin{align*}
        n \to n_{1}' \to &\mathcal{C}_{1} \xleftarrow{} m_{1}' + k + i \xleftarrow{} k + i\\
        i \to i' \to &\mathcal{L} \xleftarrow{} j' \xleftarrow{} j\\
        k + j \to n_{2}' + k + j \to &\mathcal{C}_{2} \xleftarrow{} m_{2}' \xleftarrow{} m\\
    \end{align*}
    and $n \xrightarrow{f} n' \xrightarrow{f'} \mathcal{G} \xleftarrow{g'} m' \xleftarrow{g} m$ is the co-limit of the following diagram
    \ifdefined\ONECOLUMN
    \[
    n \xrightarrow{} n'_{1} \xrightarrow{} \mathcal{C}_{1} \xleftarrow{} m'_{1} + i + k \xleftarrow{} i+k \xrightarrow{} i' + k \xrightarrow{} \mathcal{L} \xleftarrow{} k + j' \xleftarrow{} k + j \xrightarrow{} n'_2 + k + j \xrightarrow{} \mathcal{C}_{2} \xleftarrow{} m'_2 \xleftarrow{} m
    \]
    \else
    \[
    \adjustbox{width=\linewidth}{
    \begin{tikzcd}
        n \arrow[r] & n'_{1} \arrow[r] & \mathcal{C}_{1} & m'_{1} + i + k \arrow[l]     & i+k \arrow[l] \arrow[r] & i'+k \arrow[rd] &             \\
                    &                  &                 &                          &                         &                 & \mathcal{L} \\
        m \arrow[r] & m'_{2} \arrow[r] & \mathcal{C}_{2} & n'_{2} + k + j \arrow[l] & k+j \arrow[r] \arrow[l] & k+j' \arrow[ru] &            
    \end{tikzcd}
    }
    \]
    \fi
    The two spans identify precisely those nodes from $\mathcal{G}$ that occur in more than one sub-hypergraph, so this amounts to simply taking the union
    \ifdefined\ONECOLUMN
    \begin{align*}
        n \to n_{1}' \cup n_{2}' \cup i' \to \mathcal{C}_{1} \cup \mathcal{C}_{2} \cup \mathcal{L} \xleftarrow{} m_{1}' \cup m_{2}' \cup j' \xleftarrow{} m  = n \xrightarrow{f} n' \xrightarrow{f'} \mathcal{G} \xleftarrow{g'} m' \xleftarrow{g} m
    \end{align*}
    \else
    \begin{align*}
    &n \to n_{1}' \cup n_{2}' \cup i' \to \mathcal{C}_{1} \cup \mathcal{C}_{2} \cup \mathcal{L} \xleftarrow{} m_{1}' \cup m_{2}' \cup j' \xleftarrow{} m  =\\
    &\;= n \xrightarrow{f} n' \xrightarrow{f'} \mathcal{G} \xleftarrow{g'} m' \xleftarrow{g} m
    \end{align*}
    \fi
    If we let $\mathcal{G}' = \mathcal{L}$ we get the decomposition as in~(\ref{fig:decomposition:1}).
    This follows the construction of a similar lemma in~\cite{bonchi_string_2022-1} (Lemma 24). 
    The cospans with $\mathcal{C}_{1}$ and $\mathcal{C}_{2}$ are MDAs because they are closed under successors and predecessors and the cospan with $\mathcal{L}$ is an MDA because it is convex.

    Now suppose that the image of top-level components of $\mathcal{L}$ consist of non top-level edges and vertices $e$ and $v$.
    Take a hyperedge $h$ such that $h \leq e$ and for all $e' \not = h$ such that $e' \leq e$, $h \leq e'$.
    Let $\mathcal{G}''$ be $h$ including all its sources and targets and successors.
    If $\mathcal{G} = \mathcal{G}''$ then it is decomposable as in~(\ref{fig:decomposition:2}) and let $\mathcal{G}'$ be the component that contains the image of $\mathcal{L}$ which exists by pair-wise consistency.
    All $\mathcal{G}_{i}$ (and their respective cospans) are MDA by definitions.
    Otherwise, if $\mathcal{G} \not = \mathcal{G}''$ then $\mathcal{G}''$ is a convex down-closed sub-e-hypergraph of $\mathcal{G}$ and 
    by applying the same construction as above we get~(\ref{fig:decomposition:1}).
\end{proof}

\ifdefined\ONECOLUMN
\section{Additional figures for Section~\ref{sec:introduction}:Introduction}
\else
\subsection{Additional figures for Section~\ref{sec:introduction}:Introduction}
\fi

This section contains the figures with more elaborate transformation of e-string diagrams from the Figure~\ref{fig:e-graph-example}.

\ifdefined\ONECOLUMN
\begin{figure}[htb!]
    \vspace{-3cm}
    \centering
    \[
        \hspace{1.3cm}
        \resizebox{0.8\textwidth}{!}{
        \tikzfig{figures/egraph-translation-step-by-step-b-c}
        }
    \]
    \caption{Example translation from $(b)$ to $(c)$.}
    \label{fig:e-graph-example-b-c}
\end{figure}
\else
\begin{figure*}[htb!]
    \vspace{-3cm}
    \centering
    \[
        \hspace{1.3cm}
        \resizebox{0.8\textwidth}{!}{
        \tikzfig{figures/egraph-translation-step-by-step-b-c}
        }
    \]
    \caption{Example translation from $(b)$ to $(c)$.}
    \label{fig:e-graph-example-b-c}
\end{figure*}
\fi

\ifdefined\ONECOLUMN
\begin{figure}[htb!]
    \[
        \scalebox{0.4}{
        \tikzfig{figures/egraph-translation-step-by-step-a-b}
        }
    \]
    \caption{Example translation from $(a)$ to $(b)$.}
    \label{fig:e-graph-example-a-b}
\end{figure}
\else
\begin{figure*}[htb!]
    \[
        \scalebox{0.5}{
        \tikzfig{figures/egraph-translation-step-by-step-a-b}
        }
    \]
    \caption{Example translation from $(a)$ to $(b)$.}
    \label{fig:e-graph-example-a-b}
\end{figure*}
\fi

\ifdefined\ONECOLUMN
\section{Additional figures for Section~\ref{sec:e-hypergraphs}: E-hypergraphs}
\else
\subsection{Additional figures for Section~\ref{sec:e-hypergraphs}: E-hypergraphs}
\fi
\label{sec:appendix:iso}

First we will explore the notion of isomorphism of two extended cospans in more detail.
Our example will be Figure~\ref{fig:appendix:non-isomorphic-cospans}.
Technically, we do not allow hierarchical edges with no delimiting dashed line, but here, for the purposes of illustration we make an exception to make the example compact.
\begin{example}
    Consider cospans from Figure~\ref{fig:appendix:non-isomorphic-cospans}(a) where the $f_{int}, f_{ext}$ (respectively, $g_{int},g_{ext}$) maps are shown with red arrows.
    The left cospan represents a symmetry placed inside a hierarchical edge while the right one depicts a tensor product of two identities put inside the hierarchical edge.
    The colours of the vertices show the induced partition of the interfaces.
    We will try to build an isomorphism step-by-step.
    To make the diagram commute, clearly, it must be the case that $u_0 \mapsto u_0$ and $u_1 \mapsto u_1$.
    Then, because the isomorphism should preserve the order within the subset of green vertices, it must be the case that $u_2 \mapsto u_2$ and $u_3 \mapsto u_3$.
    Similarly for the output interfaces.
    To further make the diagram commute, we need to map $v_2 \mapsto v_3$ and $v_3 \mapsto v_2$.
    This, however, contradicts the requirement that $g_{int};\gamma(w_2) = g_{int}(w_2)$.
    Hence, the two cospans are not isomorphic.
\end{example}

\begin{example}
    Now consider the cospans from Figure~\ref{fig:appendix:non-isomorphic-cospans}(b).
    Notice how the position of external input interface relative to strict internal interface is different in the second cospan.
    The isomorphism would map $u_0 \mapsto u_2, u_1 \mapsto u_1, u_2 \mapsto u_0, u_3 \mapsto u_3$.
    The other mappings are identities. 
    One can check that the corresponding diagram commutes.
\end{example}

Then we will elaborate on the notion of a boundary complement.

\begin{example}

Consider a pushout square in Figure~\ref{fig:not_boundary_complement} (a) where the labels suggest how arrows map vertices both from the interfaces and between the e-hypergraphs.
While this is a valid pushout square, such pushout complement does not correspond to a rewrite of $\Sigma^{+}$-terms --- as there is no redex for $f \otimes g$ within $(f + g);g$ even modulo SMC equations.
This pushout fails to be a boundary complement by violating conditions (3) and (4) of Definition~\ref{def:boundary_new}.
Now consider the square in subfigure (b): it fails to be a boundary complement by violating conditions (4) and (7).
\end{example}

\begin{example}

Now consider a full DPO square in Figure~\ref{fig:not_boundary_complement} (c).
First note that the pushout complement is a boundary complement as it satisfies all the conditions in Definition~\ref{def:boundary_new}.
In particular, observe that in order to make an extended MDA cospan with the boundary complement as a carrier, strict internal interfaces should be empty.
Indeed, strict internal interfaces of the left-hand side coincide with the ones of the source e-hypergraph.
Then, for the resulting e-hypergraph to form a valid well-typed MDA extended cospan we need to introduce the strict internal interfaces of the right-hand side of the rewrite rule.
\end{example}

\begin{figure*}
    \begin{subfigure}[T]{0.48\textwidth}
    \begin{subfigure}[T]{0.45\textwidth}
        \[
    \scalebox{0.55}{
        \tikzfig{figures/sym_cospan}
    }    
    \]
    \end{subfigure}
    \hfill
    \begin{subfigure}[T]{0.45\textwidth}
        \[
            \scalebox{0.55}{
                \tikzfig{figures/id_x_id_cospan}
            }    
            \]
    \end{subfigure}
    \subcaption{\;}
\end{subfigure}
    \hfill
\begin{subfigure}[T]{0.48\textwidth}
    \begin{subfigure}[T]{0.45\textwidth}
        \[
            \scalebox{0.55}{
                \tikzfig{figures/id_x_id_cospan}
            }    
            \]
    \end{subfigure}
    \hfill
    \begin{subfigure}[T]{0.45\textwidth}
        \[
            \scalebox{0.55}{
                \tikzfig{figures/id_x_id_cospan_2}
            }    
            \]
    \end{subfigure}
    \subcaption{\;}
\end{subfigure}
    \caption{Non-isomorphic cospans (a) and isomorphic cospans (b)}
    \label{fig:appendix:non-isomorphic-cospans}
\end{figure*}

\begin{figure*}
    \begin{subfigure}{0.48\textwidth}
\[
\adjustbox{scale=0.55}{
    \tikzfig{figures/boundary_complement_non_example}
}
\]
\subcaption{Boundary complement non-example}
    \end{subfigure}
\hfill
    \begin{subfigure}{0.48\textwidth}
        \[
        \adjustbox{scale=0.55}{
            \tikzfig{figures/boundary_complement_non_example_2}
        }
        \]
            \subcaption{Boundary complement non-example}
            \end{subfigure}
\begin{subfigure}{\textwidth}
    \[
    \adjustbox{scale=0.55}{
        \tikzfig{figures/boundary_complement_example}
    }
    \]
        \subcaption{Complete DPO square}
        \end{subfigure}
\caption{DPO examples}
\label{fig:not_boundary_complement}
\end{figure*}

\ifdefined\ONECOLUMN
\section{DPO rewriting for arbitrary signatures}
\else
\subsection{DPO rewriting for arbitrary signatures}
\fi
\label{sec:dpo-fix}
Here we will discuss how one can formulate DPO rewrite rules where the left-hand of right-hand sides contain sub-hypergraphs with no inputs and outputs at the outermost level.
First we will illustrate where the restriction in the original formulation of the DPO rewriting as per the Definition~\ref{def:convex_dpo} comes from via several examples: one for Cartesian equations (which are important to express e-graphs as e-hypergraphs) and two for the structural rules we need for $\WellTypedMdaEcospans / \mathcal{S}$ to be a category enriched over $\catname{SLat}$.
Consider a rewrite rule induced by the Cartesian structure $a;! = id_{I}$ for $a : 0 \to 1$ which is rendered string diagrammatically as 
\[
\adjustbox{scale=0.6}{
\tikzfig{figures/cartesian_rewrite_string}
}
\]
which becomes the following extended cospan of e-hypergraphs
\[
\adjustbox{scale=0.6}{
\tikzfig{figures/cartesian_rewrite_hypergraph}
}
\]

Assume we want to perform a rewrite within the e-hypergraph in Figure~\ref{fig:dpo-stuck} (left).
However, we are immediately stuck as there is no matching of the left-hand side of the rule within the graph as there is no pushout complement: the only way we can impose a constraint that an edge shall have a particular predecessor in the pushout is through its adjacent vertices, but the sub-hypergraph for $a;!$ does not have any as demonstrated in the right subfigure.

\begin{figure*}[t!]
    \begin{subfigure}{0.4\linewidth}
    \[
    \adjustbox{scale=0.7}{
    \tikzfig{figures/cartesian_rewrite_source_graph}
    }
    \]
    \end{subfigure}
    \hfill
    \begin{subfigure}{0.55\linewidth}
        \[
        \adjustbox{scale=0.7}{
        \tikzfig{figures/cartesian_rewrite_dpo_stuck}
        }
        \]
    \end{subfigure}
    \caption{DPO rewriting stuck}
    \label{fig:dpo-stuck}
\end{figure*}

\begin{figure*}[!t]
    \centering
    \begin{minipage}[t]{0.48\linewidth}
        \centering
            \begin{subfigure}{\linewidth}
                \[
                \adjustbox{width=0.95\linewidth}{
                \tikzfig{figures/idI_elimination_nested_hypergraph}
                }
                \]
            \end{subfigure}
            \vspace{1em}
            \begin{subfigure}{\linewidth}
                \[
                \adjustbox{width=0.95\linewidth}{
                \tikzfig{figures/idI_introduction_nested_hypergraph}
                }
                \]
            \end{subfigure}
            \caption{Special edge nested introduction (top) and elimination (bottom) rules}
            \label{fig:nested_elimination}
    \end{minipage}%
    \hfill
    \begin{minipage}[t]{0.48\linewidth}
        \centering
            \begin{subfigure}{\linewidth}
                \[
                \adjustbox{width=0.95\linewidth}{
                \tikzfig{figures/idI_introduction_new_component_hypergraph}
                }
                \]
            \end{subfigure}
            \vspace{1em}
                \begin{subfigure}{\linewidth}
                    \[
                    \adjustbox{width=0.95\linewidth}{
                    \tikzfig{figures/idI_elimination_new_component_hypergraph}
                    }
                    \]
                \end{subfigure}
            \caption{New consistent component introduction (top) and elimination (bottom) rules}
            \label{fig:consistent_introduction}        
    \end{minipage}
\end{figure*}
\begin{figure}[t!]
\[
\adjustbox{width=\linewidth}{
    \tikzfig{figures/structural_rewrite_dpo_stuck_fix}
}
\]
\caption{Complete DPO square for free enrichment structural rule}
\label{fig:completed_dpo}
\end{figure}

Another problematic rewrite arises when we want to utilise the structural rule of the free enrichment
\ifdefined\ONECOLUMN
\[
\adjustbox{width=0.6\linewidth}{
\tikzfig{figures/structural_rewrite_problem}
}
\]
\else
\[
\adjustbox{width=\linewidth}{
\tikzfig{figures/structural_rewrite_problem}
}
\]
\fi
if we consider the following DPO square
\ifdefined\ONECOLUMN
\[
\adjustbox{width=0.8\linewidth}{
    \tikzfig{figures/structural_rewrite_dpo_stuck}
}
\]
\else
\[
\adjustbox{width=\linewidth}{
    \tikzfig{figures/structural_rewrite_dpo_stuck}
}
\]
\fi





\begin{definition}

We will define the action of $+$ on two cospans where either of the carriers is the empty e-hypergraph as
\begin{align*}
0 \xrightarrow{} n \xrightarrow{} &\;\mathcal{F} \xleftarrow{} m \xleftarrow{} 0\\
&\;+ \hspace{6em} \Coloneqq \hspace{2em} 0  \xrightarrow{} n \xrightarrow{} \;\mathcal{F} \xleftarrow{} m \xleftarrow{} 0\\
0 \xrightarrow{} 0 \xrightarrow{} &\;\varnothing \xleftarrow{} 0 \xleftarrow{} 0.
\end{align*}
\end{definition}

This is so that we can absorb some equations and to seamlessly introduce what follows.







We will start completing all the above DPO squares by first considering the structural rules.
We will include special unlabelled edges with no inputs and outputs to the interfaces to be able to impose the hierarchical relation on subgraphs with no inputs and outputs and will not impose any restrictions on labelling preservation when mapping these special edges.
That is, the previously mentioned structural rule will become

\ifdefined\ONECOLUMN
\[
\adjustbox{width=0.7\linewidth}{
    \tikzfig{figures/structural_rewrite_problem_fix}
}
\]
\else
\[
\adjustbox{width=\linewidth}{
    \tikzfig{figures/structural_rewrite_problem_fix}
}
\]
\fi

where the labels show how interface edges are mapped to the edges of the corresponding e-hypergraphs.
For example, the edge in the interface labelled $a_1$ is mapped to the outermost edge on the right.
We can now compute the pushout complement and complete the DPO square \ifdefined\ONECOLUMN below \else in Figure~\ref{fig:completed_dpo}. \fi

\ifdefined\ONECOLUMN
\[
\adjustbox{width=0.8\linewidth}{
    \tikzfig{figures/structural_rewrite_dpo_stuck_fix}
}
\]
\else
\fi

Handling deletion or introduction rewrite rules will require the introduction of auxiliary rules.
In particular, two rules for introduction and elimination of these special edges (Figure~\ref{fig:idI_intro_elim}),
\ifdefined\ONECOLUMN
\begin{figure}[h!]
    \begin{subfigure}{0.45\linewidth}
        \[
        \adjustbox{scale=0.7}{
        \tikzfig{figures/idI_elimination_hypergraph}
        }
        \]
    \end{subfigure}
    \hfill
    \begin{subfigure}{0.45\linewidth}
        \[
        \adjustbox{scale=0.7}{
        \tikzfig{figures/idI_introduction_hypergraph}
        }
        \]
    \end{subfigure}
    \end{figure}
\else
\begin{figure}[h!]
\begin{subfigure}{0.45\linewidth}
    \[
    \adjustbox{scale=0.55}{
    \tikzfig{figures/idI_elimination_hypergraph}
    }
    \]
\end{subfigure}
\hfill
\begin{subfigure}{0.45\linewidth}
    \[
    \adjustbox{scale=0.55}{
    \tikzfig{figures/idI_introduction_hypergraph}
    }
    \]
\end{subfigure}
\caption{Special edge flat introduction and elimination}
\label{fig:idI_intro_elim}
\end{figure}
\fi
similar rules to introduce and eliminate in nested contexts (Figure~\ref{fig:nested_elimination});
and, finally, the rules to introduce such edges as new consistent components (Figure~\ref{fig:consistent_introduction}).
Intuitively such edge plays the role of a tensor unit identity that can be inserted in any context.

\ifdefined\ONECOLUMN
\begin{figure}[h!]
    \begin{subfigure}{0.6\linewidth}
        \[
        \adjustbox{width=\linewidth}{
        \tikzfig{figures/idI_introduction_new_component_hypergraph}
        }
        \]
        \subcaption{}
    \end{subfigure}
    \begin{subfigure}{0.6\linewidth}
        \[
        \adjustbox{width=\linewidth}{
        \tikzfig{figures/idI_elimination_new_component_hypergraph}
        }
        \]
        \subcaption{}
    \end{subfigure}
    \caption{New consistent component introduction (a) and elimination (b) rule}
    \label{fig:consistent_introduction}
\end{figure}
\else
\fi


Then the deletion rule is reconstructed as

\ifdefined\ONECOLUMN
\[
\adjustbox{width=0.6\linewidth}{
    \tikzfig{figures/cartesian_rewrite_fix}
}
\]
\else
\[
\adjustbox{width=0.8\linewidth}{
    \tikzfig{figures/cartesian_rewrite_fix}
}
\]
\fi
and the full DPO square becomes

\ifdefined\ONECOLUMN
\[
\adjustbox{width=0.7\linewidth}{
    \tikzfig{figures/cartesian_rewrite_dpo_fix}
}
\]
\else
\begin{figure}[h!]
\[
\adjustbox{width=\linewidth}{
    \tikzfig{figures/cartesian_rewrite_dpo_fix}
}
\]
\caption{Complete DPO square for the deletion rule}
\label{fig:cartesian_deletion_dpo}
\end{figure}
\fi
The target e-hypergraph is not exactly what we wanted, however, but we can use the auxiliary rule to remove this unlabelled edge.

\ifdefined\ONECOLUMN
\[
\adjustbox{width=0.7\linewidth}{
    \tikzfig{figures/cartesian_rewrite_dpo_fix_2}
}
\]
\else
\begin{figure}
\[
\adjustbox{width=\linewidth}{
    \tikzfig{figures/cartesian_rewrite_dpo_fix_2}
}
\]
\caption{Complete DPO square for the deletion rule (after one more rule applied)}
\label{fig:cartesian_dpo_fix}
\end{figure}
\fi


Generally, given an interpretation of a structural rule (where the right-hand side contains a single edge at the outermost level) $\llangle \llbracket l \rrbracket, \llbracket r \rrbracket \rrangle$, for each connected component $C_{i}$ at the outermost level with no inputs and outputs and for each hyperedge with no inputs \textit{or} outputs  in $C_{i}$  we introduce an empty hyperedge into the interfaces which are mapped to the single edge at the outermost level in $\llbracket r \rrbracket$.
Symmetric rules are handled by performing the above in the other direction.

Rules for deletion and introduction from Cartesian structure are interpreted similarly, by mapping all interface edges to a single empty hyperedge on the right (respectively, left) when deleting (respectively, introducing).
For an arbitrary rule $\llangle \llbracket l \rrbracket, \llbracket r \rrbracket \rrangle$ induced by $\mathcal{E}$ consider connected components with no inputs and outputs $C^{l}_{i}$ and $C^{r}_{j}$ for left-hand and right-hand sides respectively.
Let $c^{l}_{1,1}, \ldots, c^{l}_{k,n}$ and $c^{r}_{1,1}, \ldots, c^{r}_{p,m}$ where $c^{l}_{i,j}$ is $i$-th edge with no inputs or outputs in component $C^{l}_{j}$ and similarly for $c^{r}_{i,j}$ and let the interface contain the largest of the two sets.
And let the morphism be any valid morphism from $\{^{l}_{1,1}, \ldots, c^{l}_{k,n}\} \to c^{r}_{1,1}, \ldots, c^{r}_{p,m}$ is the former is larger as a set, or the other way around if the latter is larger.
Graphically, this is expressed as follows

\ifdefined\ONECOLUMN
\[
\adjustbox{width=0.7\linewidth}{
    \tikzfig{figures/rule_transformation_example}
}
\]
\else
\[
\adjustbox{width=\linewidth}{
    \tikzfig{figures/rule_transformation_example}
}
\]
\fi
assuming that $|c^{l}_{1,1}, \ldots, c^{l}_{k,n}| > |c^{r}_{1,1}, \ldots, c^{r}_{p,m}|$ and $f = \{ c^{l}_{1,1} \mapsto c^{l}_{1,1}, \ldots, c^{l}_{k,n} \mapsto c^{l}_{k,n} \}$ and $g$ is any valid mapping of $c^{l}_{1,1}, \ldots, c^{l}_{k,n}$ to $c^{r}_{1,1}, \ldots, c^{r}_{p,m}$.

The Definition~\ref{def:boundary_new} will then need a few additional constraints.
\begin{itemize}
\item For all top-level edges $e_{i}, e_{j}$ of type $0 \to 0$ in $\mathcal{L}$ it must be either $[m(e_{i})) = [m(e_{j})) = \varnothing$ or $m(e_{i}) \consistency m(e_{j})$.
\item For all edges $e_{i}, e_{j}$ of type $0 \to 0$ in $i + j$ it must be either $[[c_1,c_2](e_{i})) = [[c_1,c_2](e_{j})) = \varnothing$ or $[c_1,c_2](e_{i}) \consistency [c_1,c_2](e_{j})$.
\item For all edges $e$ of type $0 \to 0$ in $i + j$ it must be that $l([c_1,c_2](e)) = l(e) = \bot$ and $[c_1,c_2](e)$ must be maximal.
\end{itemize}
Then, when constructing the corresponding extended cospans, such auxiliary edges must be ignored in the interfaces.

\ifdefined\ONECOLUMN
\section{E-graph interpretation in $\catname{MEHypI}(\Sigma_{C}) / \mathcal{S}, \mathcal{E}_{C}$}
\else
\subsection{E-graph interpretation in $\catname{MEHypI}(\Sigma_{C}) / \mathcal{S}, \mathcal{E}_{C}$}
\fi
\label{sec:appendix:e-graph-translation}

In this section we will elaborate on the interpretation of e-graphs as morphisms of $\catname{MEHypI}(\Sigma_{C}) / \mathcal{S}, \mathcal{E}_{C}$.
We will first recite the formal definition of an e-graph.

\begin{figure}

\begin{align*}
    \text{functional symbols}& \hspace{2em} f,g\\
    \text{e-class id}& \hspace{2em} a,b\\
    \text{terms}& \hspace{2em} t \Coloneqq f \;|\; f (t_1, \ldots, t_m) \hspace{2em} &m \geq 1\\
    \text{e-nodes}& \hspace{2em} n \Coloneqq f \;|\; f (a_1, \ldots, a_m) \hspace{2em} &m \geq 1\\
    \text{e-classes}& \hspace{2em} c \Coloneqq \{n_1, \ldots, n_m\}   \hspace{2em} &m \geq 1
\end{align*}
\caption{E-graph components}
\label{fig:e-graph-components}
\end{figure}

\begin{definition}[E-graph~\cite{EggPaper}]
    E-graph is a data-structure that consists of
    \begin{itemize}
        \item components defined in a grammar in Figure~\ref{fig:e-graph-components}
        \item a union find structure $U$ that stores equivalence relation on e-class ids
        \item \textit{e-class} map $M$ that maps e-class ids to e-classes. All equivalent e-class ids map to the same e-class, \textit{i.e.}, if $a \equiv_{id} b$ then $M[a]$ and $M[b]$ are the same set.
              An e-class id $a$ is said to refer to e-class $M[\text{find}(a)]$. NB: $\text{find}()$ returns the canonical representative for the equivalence, i.e. the last predecessor of $a$ in $U$.
        \item The \textit{hashcons} map $H$ that maps e-nodes to e-class ids.
    \end{itemize}
\end{definition}

\begin{definition}[Congruence invariant~\cite{EggPaper}]
Two e-nodes $f(a_{1}, \ldots, a_{n})$ and $f_{b_{1}, \ldots, b_{n}}$ are \textit{congruent}, if $\text{find}(a_{i}) = \text{find}(b_{i})$ for all $i$.
The equivalence relation on e-nodes must be closed under this congruence, that is, two congruent e-nodes must reside in the same e-class.
\end{definition}

\begin{definition}[Hashcons invariant~\cite{EggPaper}]
The hashcons $H$ must map all canonical e-nodes to their e-class ids:
\[
\text{e-node } n \in M[a] \text{ iff } H[\text{canonicalize}(n)] = \text{find}(a)
\]
\end{definition}

\begin{definition}[~\cite{EggPaper}]
A rewriting over an e-graph is given by composition of the following operations
\begin{itemize}
    \item \textit{add} takes an e-node $n$ and:
    \begin{itemize}
        \item if $n$ is present in $H$, return $H[n]$;
        \item else create e-class id $a$ and set $H[n] = a$ such that $M[a] = \{n\}$.
    \end{itemize}
    \item \textit{merge} given two e-class ids $a$ and $b$ unions them in the union find $U$ such that $M[a] = M[b] = M[a] \cup M[b]$.
\end{itemize}
\end{definition}
Then, given a rewrite rule $l \to r$, as, for example, $x * 2 \to x \ll 1$ in Figure~\ref{fig:e-graph-example}, applying it to an e-graph amounts several steps.
First, finding a tuple $(\sigma, c)$ where e-class id $c$ contains $\sigma(l)$ where sigma is the mapping from variables to e-class ids, for example, $c = H[a * 2]$ and $\sigma(l) = (a * 2)$.
Then, adding $\sigma(r)$ to the e-graph and merging $c$ with $add(\sigma(r))$.
The latter step may require additional merges to maintain the invariant.
If two terms $f(a,b)$ and $f(a,c)$ reside in different e-classes, merging e-classes for $a$ and $b$ would require merging e-classes for $f(a,b)$ and $f(a,c)$ as they become congruent as congruent e-nodes must reside in the same e-class.
The process of maintaining the invariants by additional merging is called \textit{upward merging}~\cite{EggPaper}.
We will denote the process of obtaining e-graph $e_2$ from e-graph $e_1$ after applying a rule $l \to r$ (including upward merging) as $e_1 \leadsto e_2$.

\begin{definition}
The pseudocode in Algorithm~\ref{alg:translation} defines the translation function $\llbracket \rrbracket : \catname{E}\text{-}\catname{graph} \to \catname{MEHypI}(\Sigma_{C}) / \mathcal{S}, \mathcal{E}_{C}$ that turns a \textit{canonical} (the one for which the above invariants hold) \textit{acyclic} \textit{connected} e-graph into a cospan of e-hypergraphs.
\end{definition}

For the following e-graph with e-classes shown as dashed boxes it produces the following cospan of e-hypergraphs.
\texttt{connect} and \texttt{mkEdge} are responsible for making things shared using $\triangle$ and discarding unused inputs via $!$.

\begin{figure}
    \begin{subfigure}[C]{0.35\linewidth}
\[
\adjustbox{scale=0.6}{
\tikzfig{figures/e-graph-example_1}
}
\]
\caption{\;}
\end{subfigure}
\hfill
\begin{subfigure}[C]{0.6\linewidth}
    \[
    \adjustbox{scale=0.6}{
    \tikzfig{figures/e-hypergraph-example_1}
    }
    \]
\caption{\;}
\end{subfigure}
\hfill
\caption{E-graph (a) and a corresponding morphism of $\catname{MEHypI}(\Sigma_{C}) / \mathcal{S}, \mathcal{E}_{C}$ (b)}
\label{fig:e-graph-to-cospan}
\end{figure}

\begin{algorithm*}
    \caption{E-graph to e-hypergraph translation}
    \label{alg:translation}
    \begin{algorithmic}[1]
    \Function{e-class-to-HEdge}{e\_class\_id}
    \State e\_nodes $\gets$ [e\_node$_{1}$ $\ldots$ e\_node$_{n}$] \textbf{in} H.keys() \textbf{where} H[e\_node$_{i}$] == M[e\_class\_id]
    \ForAll {e\_node$_{i}$ in e\_nodes}
    \State edge $\gets$ mkEdge(e\_class\_id)
    \Comment {Create a hierarchical edge that represents an e-class}
    \State \textbf{match} e\_node$_{i}$ \textbf{with}
    \State \qquad \textbf{case} f($e_1 \ldots e_{n}$) :
    \State \qquad \qquad \textbf{if} is\_singleton(e\_class\_id) \textbf{then}
        \State \qquad\qquad \hspace{2em} edge $\gets$ mkEdge(f)
        \Comment {Create an edge that represents the generator f $: k \to 1$}
        \State \qquad \qquad \hspace{2em} edges $\gets$ [\Call{e-class-to-HEdge}{$e_1$} $\ldots$ \Call{e-class-to-HEdge}{$e_n$}]
        \State \qquad \qquad \hspace{2em} edge.connect(edges)
        \State \qquad \qquad \hspace{2em} \Return edge
    \State \qquad \qquad \textbf{else}
    \State \qquad \qquad \hspace{2em} graph $\gets$ discard([e\_node$_1$$\ldots$e\_node$_{i-1}$]) $\cup$ mkEdge(f) $\cup$ discard([e\_node$_{i+1}$ $\ldots$ e\_node$_{n}$])
        \State \qquad \qquad \hspace{2em} edges $\gets$ [\Call{e-class-to-HEdge}{$e_1$} $\ldots$ \Call{e-class-to-HEdge}{$e_n$}]
        \State \qquad \qquad \hspace{2em} edge.add(graph)
        \Comment {Add \texttt{graph} as a consistent component into hierarchical edge \texttt{edge}}
        \State \qquad \qquad \hspace{2em} edge.connect(edges)
    \EndFor
    \State \Return edge
    \EndFunction
    \State
    \State graph $\gets \varnothing$ 
    \ForAll{e\_class\_id in H.values()}
    \State graph $\gets$ graph $\cup$ \Call {e-class-to-HEdge}{e\_class\_id})

    \EndFor
    \end{algorithmic}
\end{algorithm*}

\begin{lemma} 
    Two e-classes $e_1$ and $e_2$ are merged if and only if there are rewrite rules that turn every term representable by $e_1$ into $e_2$.
\end{lemma}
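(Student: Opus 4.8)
The plan is to reduce the statement to the full completeness result (Theorem~\ref{thm:full-completeness}) together with the commuting square of the preceding proposition, which between them identify the e-graph transformation relation $\leadsto$ with DPOI rewrite sequences $\Rrightarrow^{*}_{\mathcal{E}_C}$ and hence, via the translation $\llbracket - \rrbracket$, with provable equality in $\catname{S}^{+}(\Sigma_C, \mathcal{E}_C)$. Throughout I would fix a canonical acyclic connected e-graph and read ``$e_1$ and $e_2$ are merged'' as: the e-class ids lie in the same class of the union-find, so $M[\mathrm{find}(e_1)] = M[\mathrm{find}(e_2)]$; and I would read ``rewrite rules that turn every term representable by $e_1$ into $e_2$'' as: for every term $t$ extractable from $e_1$ there is a term $t'$ extractable from $e_2$ with $t =_{\mathcal{E}_C} t'$, and symmetrically, i.e. the two e-classes denote the same equivalence class of $\Sigma_C$-terms modulo $\mathcal{E}_C$.

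For the forward direction I would argue by induction on the number of \emph{merge} operations carried out to reach the state in which $e_1$ and $e_2$ share a class. A merge is triggered either (i) directly when a rule $l \to r$ is applied, where a match $(\sigma, c)$ locates $\sigma(l)$ inside the class $c$ and $c$ is unioned with $\mathrm{add}(\sigma(r))$; here the instance $\sigma(l) \to \sigma(r)$ is literally a rewrite from $\mathcal{E}_C$, witnessing $\sigma(l) =_{\mathcal{E}_C} \sigma(r)$; or (ii) by \emph{upward merging}, which unions $f(\ldots a \ldots)$ with $f(\ldots b \ldots)$ once $a$ and $b$ have been merged. Case (ii) is exactly an application of the congruence rule, which is sound for $=_{\mathcal{E}_C}$ because $\catname{S}(\Sigma_C, \mathcal{E}_C)$ is the quotient by the \emph{least congruence} containing $\mathcal{E}_C$. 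Composing these equalities along the induction shows that every representative of $e_1$ is $\mathcal{E}_C$-equal to one of $e_2$.

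For the backward direction I would use completeness. Assuming every term of $e_1$ is $\mathcal{E}_C$-equal to a term of $e_2$, pick representatives $t_1 \in e_1$ and $t_2 \in e_2$ with $t_1 =_{\mathcal{E}_C} t_2$. By Theorem~\ref{thm:full-completeness} there is a DPOI rewrite sequence $\llbracket t_1 \rrbracket \Rrightarrow^{*}_{\mathcal{E}_C} \llbracket t_2 \rrbracket$, and by the commuting square of the preceding proposition each such rewrite is mirrored by an e-graph transformation $\leadsto$ (a rule application together with its induced upward merges). Running this finite sequence of $\leadsto$-steps from the given e-graph adds the intermediate e-nodes and performs, at each step, the merge that places $\sigma(l)$ and $\sigma(r)$ in a common class; by transitivity of the union-find the classes of $t_1$ and $t_2$, and hence $e_1$ and $e_2$, are ultimately identified.

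The main obstacle I anticipate is making the upward-merging/congruence correspondence airtight: I must check that the congruence closure performed by the e-graph identifies no more than the least congruence generated by $\mathcal{E}_C$ (so the forward direction does not over-merge) yet always suffices to realise any equational proof (so the backward direction terminates in the required merges). This reduces to verifying that the hashcons and congruence invariants coincide exactly with the quotient by the least congruence used to define $\catname{S}(\Sigma_C, \mathcal{E}_C)$, and — since $\llbracket - \rrbracket$ is defined only on canonical acyclic connected e-graphs — confirming that each $\leadsto$-step invoked above preserves these invariants.
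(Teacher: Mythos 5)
The paper states this lemma \emph{without proof} (it sits in the appendix immediately before the lemma relating $\leadsto$ to $\Rrightarrow^{*}$, and only that second lemma carries a proof), so there is no official argument to compare against; your proposal has to stand on its own. Your forward direction does: every union performed by the algorithm is either a direct rule application, giving an instance $\sigma(l) =_{\mathcal{E}_C} \sigma(r)$, or an upward merge, which is an application of the congruence rule, and both are sound for the least congruence defining $\catname{S}(\Sigma_C,\mathcal{E}_C)$; induction on the merge history then yields soundness. That is the standard argument and is correct.

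The backward direction has a genuine gap. Theorem~\ref{thm:full-completeness} gives you a DPO sequence $\llbracket t_1 \rrbracket \Rrightarrow^{*}_{\mathcal{E}} \llbracket t_2 \rrbracket$ between the translations of the \emph{terms} $t_1$ and $t_2$, whereas the commuting square of the proposition relates $\leadsto$-steps on an e-graph $E$ to DPO rewrites of $\llbracket E \rrbracket$, the translation of the \emph{whole e-graph}, in which $t_1$ occurs only folded up inside the shared, hierarchical structure of the class $e_1$. Transferring a rewrite sequence on $\llbracket t_1 \rrbracket$ to a sequence of $\leadsto$-steps on $E$ is exactly the completeness of e-matching (every redex occurring in some term representable by a class can be matched against the e-graph itself), and neither cited result supplies this; as written, \enquote{each such rewrite is mirrored by an e-graph transformation} is an assertion, not a consequence. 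There is also a circularity worry: in the paper's architecture the commuting-square proposition defers its justification to this appendix section, i.e.\ to this very lemma and its successor, so invoking it here inverts the intended dependency. Both problems disappear if you argue the backward direction directly at the e-graph level: an equational proof $t_1 = u_1 = \cdots = t_2$ can be replayed step by step by add/merge operations (each proof step is a rule instance in a context; e-match it, add $\sigma(r)$, merge, and let upward merging close under the context), after which the hashcons and congruence invariants guarantee that the classes representing $t_1$ and $t_2$, hence $e_1$ and $e_2$, coincide.
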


\begin{lemma}
Given a set of equations $\mathcal{E}$ between $\Sigma$-terms that induces a set of rewrite rules $l \to r$ and $r \to l$ for every $l = r \in \mathcal{E}$, $e_1 \leadsto e_2$ if and only if $\llbracket e_1 \rrbracket \Rrightarrow{}^{*} \llbracket e_2 \llbracket$.
\end{lemma}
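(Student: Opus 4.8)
The plan is to route both directions through the full completeness result (Theorem~\ref{thm:full-completeness}), using it as a bridge between the combinatorial $\Rrightarrow^{*}$ relation and equality of morphisms. Concretely, I would instantiate Theorem~\ref{thm:full-completeness} with the signature $\Sigma_{C}$ and the equation set $\mathcal{E}_{C} \cup \mathcal{E}$, so that $\llbracket e_1 \rrbracket \Rrightarrow^{*}_{\mathcal{E}} \llbracket e_2 \rrbracket$ in $\catname{MEHypI}(\Sigma_{C})/\mathcal{S}, \mathcal{E}_{C}$ holds if and only if $\llbracket e_1 \rrbracket = \llbracket e_2 \rrbracket$ as morphisms of $\catname{S}^{+}(\Sigma_{C}, \mathcal{E}_{C} \cup \mathcal{E})$. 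The lemma then reduces to the purely semantic statement that $e_1 \leadsto e_2$ holds if and only if $\llbracket e_1 \rrbracket$ and $\llbracket e_2 \rrbracket$ denote the same morphism modulo $\mathcal{E}$.

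First I would make precise the denotation of Algorithm~\ref{alg:translation}: that $\llbracket e \rrbracket$, read as a morphism of $\catname{S}^{+}(\Sigma_{C}, \mathcal{E}_{C})$, is the formal join of all terms representable by $e$, with the sharing induced by the \texttt{connect} and \texttt{discard} calls. This follows by structural induction on the recursion of \texttt{e-class-to-HEdge}: a singleton e-class translates to the string diagram of its single e-node, while a non-singleton e-class translates to the join of its components inside a hierarchical edge, matching the generator $+$ of Figure~\ref{fig:egraph-strings}. The inductive hypothesis handles the recursive calls on the argument e-classes, and the Cartesian equipment ($\triangle$, $!$) realises the sharing exactly as in the informal translation of Figure~\ref{fig:e-graph-to-string}.

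For the forward direction, a single $\leadsto$-step applies a rule $l \to r$ arising from an equation $l = r \in \mathcal{E}$: it locates a match $(\sigma, c)$, adds $\sigma(r)$, merges $c$ with $\mathrm{add}(\sigma(r))$, and performs upward merging. Each newly added e-node is $\mathcal{E}$-equal to one already present (since $\sigma(r) =_{\mathcal{E}} \sigma(l)$ and $\sigma(l)$ lies in $c$), so by idempotence and the semilattice axioms $\mathcal{S}$ the join denoted by the affected e-class is unchanged modulo $\mathcal{E}$. The merges introduced by upward merging only identify e-nodes that are congruent, i.e.\ built from the same generator applied to already-identified arguments; this is precisely the congruence that equality of morphisms in $\catname{S}^{+}$ satisfies automatically by functoriality of the generators. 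Hence $\llbracket e_1 \rrbracket = \llbracket e_2 \rrbracket$ in $\catname{S}^{+}(\Sigma_{C}, \mathcal{E}_{C} \cup \mathcal{E})$, and full completeness yields the required sequence $\llbracket e_1 \rrbracket \Rrightarrow^{*}_{\mathcal{E}} \llbracket e_2 \rrbracket$, establishing one path around the commuting square.

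For the converse I would use full completeness in the other direction together with the preceding lemma. From $\llbracket e_1 \rrbracket \Rrightarrow^{*}_{\mathcal{E}} \llbracket e_2 \rrbracket$ we obtain $\llbracket e_1 \rrbracket = \llbracket e_2 \rrbracket$ in $\catname{S}^{+}(\Sigma_{C}, \mathcal{E}_{C} \cup \mathcal{E})$, so by the denotation established above the two canonical acyclic connected e-graphs represent the same terms modulo $\mathcal{E}$. The preceding lemma --- that two e-classes are merged exactly when rewriting turns every term of one into the other --- then identifies the merges and additions witnessing this equality with $\leadsto$-steps, giving $e_1 \leadsto e_2$ up to the reflexive--transitive closure matching $\Rrightarrow^{*}$. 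The hard part will be the careful treatment of upward merging: I must verify that the congruence closure maintained by the e-graph engine coincides with the congruence automatically present among morphisms, and that the intermediate e-hypergraphs produced along a $\Rrightarrow^{*}$ sequence can always be realised as images of valid canonical e-graphs --- equivalently, that the simulation can be chosen to factor only through e-graph interpretations. Making this simulation precise, rather than merely matching the two induced equivalence relations, is the delicate point of the argument.
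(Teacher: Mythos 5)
Your proposal is correct in outline, but it takes a genuinely different route from the paper. The paper proves the forward direction by a \emph{direct operational simulation}, presented as a worked example: each e-graph step --- adding $\sigma(r)$, merging it into the matched e-class, and the subsequent upward merging --- is mirrored by an explicit sequence of EDPOI rewrites, namely idempotence rules from $\mathcal{S}$ (unfolding $b = b+b$ and $c = c+c$), the two symmetric rules induced by the equation ($b \to c$ and $c \to b$), and the Cartesian sharing rules from $\mathcal{E}_C$; the converse is dispatched in one line by computing normal forms of $\llbracket e_1 \rrbracket$ and $\llbracket e_2 \rrbracket$ (Lemma~\ref{lemma:normal_form}). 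You instead factor both directions through Theorem~\ref{thm:full-completeness}: you show each $\leadsto$-step preserves the denotation in $\catname{S}^{+}(\Sigma_C, \mathcal{E}_C \cup \mathcal{E})$ (idempotence for adds and merges, functoriality/congruence for upward merging, which is legitimate given the paper's closing remark that the theorem applies to $\Sigma^{+}$-term equations such as $\mathcal{E}_C$), and then let completeness manufacture the rewrite sequence, with soundness collapsing any rewrite sequence back to a semantic equality for the converse. Your route is more modular --- it reuses the main theorem and never exhibits a single DPO derivation --- whereas the paper's simulation is more informative, showing concretely which rules mirror which e-graph operations, which is what the commuting square of the preceding Proposition is really recording. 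Two further observations: the bridging step you identify (that Algorithm~\ref{alg:translation}'s output agrees with the term interpretation of the formal join of representable terms) is genuinely necessary for your argument and absent from the paper, so you are right to make it explicit; and your worry about realising intermediate e-hypergraphs of a $\Rrightarrow^{*}$ sequence as images of canonical e-graphs is a red herring \emph{in your own framework}, since full completeness collapses the entire sequence to one equality --- that concern only arises if one attempts a step-by-step reverse simulation. Your converse ends with essentially the same unargued step as the paper's (matching normal-form components, respectively representable terms, against $\leadsto$-steps), so on that point you are no less rigorous than the source.
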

\begin{proof}

We will prove this by example showing how an addition of an e-node followed by a series of merges can be expressed as a series of DPO rewrites.
Consider an example in Figure~\ref{fig:e-graph-rewrite}.
It shows the rewrite $b \to c$ applied to e-graph that initially encodes a single term $f(a,b) + f(a,c)$.
The first step is $\textit{merge}(b, \textit{add}(c))$ which unions e-classes for $b$ and $c$ by making the latter be the parent of the former, \textit{i.e.}, \text{find}$(b) = c$.
Then it recanonicalizes the e-node for $f(a,b)$ so that it points to the canonical e-class id for $b$.
This makes it possible to further merge two e-classes as they contain congruent e-nodes.
The interpretation of these steps as DPO rewrites is given in Figure~\ref{fig:e-string-rewrite}.
Subfigure $(a)$ is the direct translation of the initial e-graph by using $\llbracket \rrbracket$.
Then we utilize the idempotence twice as $b = b + b$ and $c = c + c$ (subfigure (b)) and use the fact that our rewrite rules are symmetric as they are induced by the set of equations and apply $b \to c$ and $c \to b$ (subfigure (c)) and then use the rules for sharing from the Cartesian structure (subfigures (d)-(e)).
The reverse direction follows by finding normal forms for $\llbracket e_{1} \rrbracket$ and $\llbracket e_{2} \rrbracket$.
\end{proof}

\begin{remark}

Technically, rewrite rules formed by $\Sigma$-terms do not contain any variables as they are not a part of the signature while the rewrite rules for a given e-graph do contain variables as, for example, $x * 2 \to x \ll 1$.
This is however not a limitation, as variables are merely interfaces and we can express the above-mentioned rule as a pair of $\Sigma$-terms $*;id \otimes 2 \to \ll ; id \otimes 1$.
\end{remark}

\begin{remark}
Technically, $\mathcal{E}_{C}$ is not induced by equations between $\Sigma$-terms, but rather between $\Sigma^{+}$-terms as we can apply naturality of copy and delete to boxes as well.
This does not pose any issues as the proof of Theorem~\ref{thm:full-completeness} works for any $\mathcal{E}$ as long as the conditions of Definition~\ref{def:boundary_new} are satisfied.
\end{remark}

\begin{figure}

\[
\adjustbox{width=\linewidth}{
    \tikzfig{figures/e-graph-rewrite-example}
}
\]
\caption{E-graph rewrite example.}
\label{fig:e-graph-rewrite}
\end{figure}

\begin{figure}

    \[
        \adjustbox{width=\linewidth}{
            \tikzfig{figures/e-string-rewrite-example}
        }
        \]
        \caption{String diagram rewrite example.}
        \label{fig:e-string-rewrite}
\end{figure}




\end{document}